\documentclass[10pt, conference, letterpaper]{IEEEtran}
\usepackage{amsmath,amssymb,amsthm}
\usepackage{xcolor}
\usepackage[linesnumbered,algoruled,vlined]{algorithm2e}
\usepackage{xparse} 
\usepackage{array,multirow}
\usepackage{mathabx} 
\usepackage{graphicx}
\usepackage{comment}
\usepackage{url}
\usepackage{relsize}

 \newcolumntype{F}{>{$\displaystyle\,}r<{$}@{\hspace{0.0em}}}
 \newcolumntype{C}{>{$\displaystyle\,}c<{$}@{\hspace{0.0em}}}
 \newcolumntype{B}{>{$\displaystyle\,}r<{$}@{\hspace{0.0em}}}
 \newcolumntype{R}{>{$\displaystyle}r<{$}@{\hspace{0.2em}}}
 \newcolumntype{S}{>{$\displaystyle}r<{$}@{\hspace{0.2em}}}
 \newcolumntype{L}{>{$\displaystyle}l<{$}@{\hspace{0.2em}}}
 \newcolumntype{Q}{>{$\displaystyle}l<{$}@{\hspace{0.3em}}}

 \newcounter{IPnumber}
 \setcounter{IPnumber}{0}
 \newcommand{\tagIt}[1]{\refstepcounter{equation}\textnormal{({\theequation})} \label{#1}}

\newtheorem{theorem}{Theorem}
\newtheorem{lemma}[theorem]{Lemma}
\newtheorem{fact}[theorem]{Fact}
\newtheorem{definition}[theorem]{Definition}

\newtheorem{corollary}[theorem]{Corollary}


\begin{document}

\title{{\smaller[0]{Service Chain and Virtual Network Embeddings:}} \\ {\smaller[0]{Approximations using Randomized Rounding}} }

\author{
\IEEEauthorblockN{Matthias Rost}
\IEEEauthorblockA{TU Berlin, Germany \\
\texttt{mrost@inet-tu-berlin.de}}
\and
\IEEEauthorblockN{Stefan Schmid}
\IEEEauthorblockA{Aalborg University, Denmark \\
\texttt{schmiste@cs.aau.dk}}
}


\date{}

\newcommand{\TODO}[1]{\textcolor{red}{TODO: #1}}

\newcommand\numberthis{\addtocounter{equation}{1}\tag{\theequation}}

\newcommand{\NULL}{\textnormal{\texttt{NULL}}}
\newcommand{\scale}{\ensuremath{\lambda}}


\newcommand{\preals}{\ensuremath{\mathbb{R}_{\geq 0}}}


\newcommand{\requests}{\ensuremath{\mathcal{R}}}
\newcommand{\requestsP}{\ensuremath{\mathcal{R}'}}
\newcommand{\req}{r}
\newcommand{\types}{\ensuremath{\mathcal{T}}}
\newcommand{\type}{\ensuremath{\tau}}
\newcommand{\SVTypes}[1][\type]{\ensuremath{V^{#1}_{S}}}
\newcommand{\SVTypesCycle}[1][C_k]{\ensuremath{V^{#1}_{S,t}}}
\newcommand{\VG}[1][\req]{\ensuremath{G_{#1}}}
\newcommand{\VV}[1][\req]{\ensuremath{V_{#1}}}
\newcommand{\VE}[1][\req]{\ensuremath{E_{#1}}}
\newcommand{\VGbar}[1][\req]{\ensuremath{\bar{G}_{#1}}}
\newcommand{\VVbar}[1][\req]{\ensuremath{\bar{V}_{#1}}}
\newcommand{\VEbar}[1][\req]{\ensuremath{\bar{E}_{#1}}}
\newcommand{\Vstart}[1][\req]{\ensuremath{s_{#1}}}
\newcommand{\Vend}[1][\req]{\ensuremath{t_{#1}}}

\newcommand{\VGext}[1][\req]{\ensuremath{G^{\textnormal{ext}}_{#1}}}
\newcommand{\VGextFlow}[1][\req]{\ensuremath{G^{\textnormal{ext}}_{#1,f}}}
\NewDocumentCommand{\VGP}{O{\req} O{i} O{j}}{\ensuremath{G^{#2,#3}_{#1}}}
\newcommand{\VVext}[1][\req]{\ensuremath{V^{\textnormal{ext}}_{#1}}}
\newcommand{\VEext}[1][\req]{\ensuremath{E^{\textnormal{ext}}_{#1}}}
\newcommand{\VEextHorizontal}[1][\req]{\ensuremath{E^{\textnormal{ext}}_{#1,u,v}}}
\newcommand{\VEextVertical}[1][\req]{\ensuremath{E^{\textnormal{ext}}_{#1,\type,u}}}

\newcommand{\Vsource}[1][\req]{\ensuremath{o^+_{\req}}}
\newcommand{\Vsink}[1][\req]{\ensuremath{o^-_{\req}}}

\newcommand{\VMultiplicity}[1][\req]{\ensuremath{M_{#1}}}
\newcommand{\Vprofit}[1][\req]{\ensuremath{b_{#1}}}
\newcommand{\VprofitMax}{\ensuremath{b_{\max}}}

\newcommand{\Vcap}[1][\req]{\ensuremath{d_{#1}}}
\newcommand{\Vloc}[1][\req]{\ensuremath{l_{#1}}}
\newcommand{\Vtype}[1][\req]{\ensuremath{\tau_{#1}}}


\newcommand{\SG}{\ensuremath{G_S}}
\newcommand{\SR}{\ensuremath{R_{S}}}
\newcommand{\SRV}{\ensuremath{R^V_{S}}}
\newcommand{\SV}{\ensuremath{V_S}}
\newcommand{\SE}{\ensuremath{E_S}}

\newcommand{\Scap}{\ensuremath{d_{S}}}
\newcommand{\ScapType}[1][\type]{\ensuremath{d^{#1}_{\SV}}}
\newcommand{\ScapTypePrime}[1][\type]{\ensuremath{{{d'}^{#1}_{\SV}}}}

\newcommand{\Scost}{\ensuremath{c_{S}}}
\newcommand{\ScostType}[1][\type]{\ensuremath{c^{#1}_{\SV}}}


\newcommand{\map}[1][\req]{\ensuremath{m_{#1}}}
\newcommand{\mapV}[1][\req]{\ensuremath{m^V_{#1}}}
\newcommand{\mapE}[1][\req]{\ensuremath{m^E_{#1}}}

\newcommand{\FeasibleLP}{\ensuremath{\mathcal{F}_{\textnormal{LP}}}}
\newcommand{\FeasibleIP}{\ensuremath{\mathcal{F}_{\textnormal{IP}}}}


\makeatletter
\newcommand{\removelatexerror}{\let\@latex@error\@gobble}
\makeatother

\newcounter{ipCounter}
\NewDocumentEnvironment{IPFormulation}{m}{%
\refstepcounter{ipCounter}
\begin{algorithm}[#1]%
\renewcommand\thealgocf{\arabic{ipCounter}}
}{%
\end{algorithm}
\addtocounter{algocf}{-1}
}

\NewDocumentEnvironment{IPFormulationStar}{m}{%
\refstepcounter{ipCounter}
\begin{algorithm*}[#1]%
\renewcommand\thealgocf{\arabic{ipCounter}}
}{%
\end{algorithm*}
\addtocounter{algocf}{-1}
}



\newcommand{\spaceSolReq}[1][\req]{\ensuremath{\mathcal{M}_{#1}}}
\newcommand{\spaceLP}{\ensuremath{\mathcal{F}^{\textnormal{mcf}}_{\textnormal{LP}}}}
\newcommand{\spaceIP}{\ensuremath{\mathcal{F}^{\textnormal{mcf}}_{\textnormal{IP}}}}
\newcommand{\spaceLPNew}{\ensuremath{\mathcal{F}^{\textnormal{new}}_{\textnormal{LP}}}}
\newcommand{\spaceLPD}{\ensuremath{\mathcal{F}^{\mathcal{D}}_{\textnormal{LP}}}}
\newcommand{\spaceLPDreq}[1][\req]{\ensuremath{\mathcal{F}^{\mathcal{D}}_{\textnormal{LP},#1}}}

\DeclareDocumentCommand{\OptProfit}{}{\ensuremath{\hat{P}}}


\DeclareDocumentCommand{\NodeG}{O{\req} O{\pi}}{\ensuremath{G^N_{#1,#2}}}
\DeclareDocumentCommand{\NodeV}{O{\req} O{\pi}}{\ensuremath{V^N_{#1,#2}}}
\DeclareDocumentCommand{\NodeE}{O{\req} O{\pi}}{\ensuremath{E^N_{#1,#2}}}

\DeclareDocumentCommand{\EdgeG}{O{\req} O{i} O{j} O{u}}{\ensuremath{G^E_{#1,#2,#3,#4}}}
\DeclareDocumentCommand{\EdgeV}{O{\req} O{i} O{j} O{u}}{\ensuremath{V^E_{#1,#2,#3,#4}}}
\DeclareDocumentCommand{\EdgeE}{O{\req} O{i} O{j} O{u}}{\ensuremath{E^E_{#1,#2,#3,#4}}}

\DeclareDocumentCommand{\VESD}{O{\req}}{\ensuremath{\overrightarrow{E}_{#1}}}
\DeclareDocumentCommand{\VEOD}{O{\req}}{\ensuremath{\overleftarrow{E}_{#1}}}
\DeclareDocumentCommand{\VESigmaD}{O{\req} O{\sigma}}{\ensuremath{E_{#1,#2}}}

\DeclareDocumentCommand{\NodeVRange}{O{\req} O{i} O{j}}{\ensuremath{V^N_{#1,#2,#3}}}
\DeclareDocumentCommand{\NodeVRangeRange}{O{\req} O{i} O{j} O{u} O{v}}{\ensuremath{V^N_{#1,#2,#3,#4,#5}}}

\DeclareDocumentCommand{\path}{O{\req} O{k }}{\ensuremath{P_{#1,#2}}}

\DeclareDocumentCommand{\NodePaths}{O{\req}}{\ensuremath{\mathcal{P}_{#1}}}


\DeclareDocumentCommand{\loadV}{O{\req} O{u}}{\ensuremath{l_{#1,#2}}}
\DeclareDocumentCommand{\loadE}{O{\req} O{u} O{v}}{\ensuremath{l_{#1,#2,#3}}}
\DeclareDocumentCommand{\loadX}{O{\req} O{x}}{\ensuremath{l_{#1,#2}}}
\DeclareDocumentCommand{\decomp}{O{\req} O{k}}{\ensuremath{D_{#1}^{#2}}}
\DeclareDocumentCommand{\decompHat}{O{\req} O{k}}{\ensuremath{{\hat{D}}_{#1}^{#2}}}
\DeclareDocumentCommand{\load}{O{\req} O{k}}{\ensuremath{l_{#1}^{#2}}}
\DeclareDocumentCommand{\prob}{O{\req} O{k}}{\ensuremath{f_{#1}^{#2}}}
\DeclareDocumentCommand{\mapping}{O{\req} O{k}}{\ensuremath{m_{#1}^{#2}}}

\DeclareDocumentCommand{\loadHat}{O{\req} O{k}}{\ensuremath{\hat{l}_{#1}^{#2}}}
\DeclareDocumentCommand{\probHat}{O{\req} O{k}}{\ensuremath{\hat{f}_{#1}^{#2}}}
\DeclareDocumentCommand{\mappingHat}{O{\req} O{k}}{\ensuremath{\hat{m}_{#1}^{#2}}}

\DeclareDocumentCommand{\loadHat}{O{\req} O{k}}{\ensuremath{\hat{l}_{#1}^{#2}}}
\DeclareDocumentCommand{\probHat}{O{\req} O{k}}{\ensuremath{\hat{f}_{#1}^{#2}}}
\DeclareDocumentCommand{\mappingHat}{O{\req} O{k}}{\ensuremath{\hat{m}_{#1}^{#2}}}

\DeclareDocumentCommand{\Exp}{}{\ensuremath{\mathbb{E}}}
\DeclareDocumentCommand{\randVarX}{O{\req} O{k}}{\ensuremath{X_{#1}^{#2}}}
\DeclareDocumentCommand{\randVarY}{O{\req}}{\ensuremath{Y_{#1}}}
\DeclareDocumentCommand{\randVarZ}{O{\req}}{\ensuremath{Z_{#1}}}
\DeclareDocumentCommand{\randVarL}{O{x}}{\ensuremath{L_{x}}}
\DeclareDocumentCommand{\randVarLX}{O{\req} O{x} O{y}}{\ensuremath{L_{#1,#2,#3}}}
\DeclareDocumentCommand{\randVarLNode}{O{\req} O{\type} O{u}}{\ensuremath{L_{#1,#2,#3}}}
\DeclareDocumentCommand{\randVarLEdge}{O{\req} O{u} O{v}}{\ensuremath{L_{#1,#2,#3}}}
\DeclareDocumentCommand{\randVarM}{O{\req}}{\ensuremath{M_{#1}}}
\DeclareDocumentCommand{\randVarC}{O{\req}}{\ensuremath{C_{#1}}}

\DeclareDocumentCommand{\ProbVarX}{O{1}}{\ensuremath{\mathbb{P}(\randVarX = #1)}}
\DeclareDocumentCommand{\ProbVarY}{O{1}}{\ensuremath{\mathbb{P}(\randVarY = #1)}}
\DeclareDocumentCommand{\ProbVarZ}{O{1}}{\ensuremath{\mathbb{P}(\randVarZ = #1)}}
\DeclareDocumentCommand{\ProbVarL}{O{1}}{\ensuremath{\mathbb{P}(\randVarL = #1)}}
\DeclareDocumentCommand{\ProbVarM}{O{1}}{\ensuremath{\mathbb{P}(\randVarM = #1)}}
\DeclareDocumentCommand{\ProbVarC}{O{1}}{\ensuremath{\mathbb{P}(\randVarC = #1)}}

\DeclareDocumentCommand{\randVarObjApprox}{}{\ensuremath{Obj_{\textnormal{Alg}}}}
\DeclareDocumentCommand{\optLP}{}{\ensuremath{\textnormal{Opt}_{\textnormal{LP}}}}
\DeclareDocumentCommand{\optLPstd}{}{\ensuremath{\textnormal{Opt}^{\textnormal{mcf}}_{\textnormal{LP}}}}
\DeclareDocumentCommand{\optLPnew}{}{\ensuremath{\textnormal{Opt}^{\textnormal{new}}_{\textnormal{LP}}}}
\DeclareDocumentCommand{\optIP}{}{\ensuremath{\textnormal{Opt}_{\textnormal{IP}}}}

\DeclareDocumentCommand{\WAC}{O{\req}}{\ensuremath{\textnormal{WC}_{\req}}}


\DeclareDocumentCommand{\PotEmbeddings}{O{\req}}{\ensuremath{\mathcal{D}_{#1}}}
\DeclareDocumentCommand{\PotEmbeddingsHat}{O{\req}}{\ensuremath{\hat{\mathcal{D}}_{#1}}}


\DeclareDocumentCommand{\maxLoadX}{O{x}}{\ensuremath{\textnormal{max}^{L,\sum}_{#1}}}
\DeclareDocumentCommand{\maxLoadV}{O{\req} O{\type} O{u}}{\ensuremath{\textnormal{max}^{L}_{#1,#2,#3}}}
\DeclareDocumentCommand{\maxLoadE}{O{\req} O{u} O{v}}{\ensuremath{\textnormal{max}^{L}_{#1,#2,#3}}}
\DeclareDocumentCommand{\maxLoadVSum}{O{\req} O{\type} O{u}}{\ensuremath{\textnormal{max}^{L,\sum}_{#1,#2,#3}}}
\DeclareDocumentCommand{\maxLoadESum}{O{u} O{v}}{\ensuremath{\textnormal{max}^{L,\sum}_{#1,#2}}}

\DeclareDocumentCommand{\DeltaV}{}{\ensuremath{\Delta_V}}
\DeclareDocumentCommand{\DeltaE}{}{\ensuremath{\Delta_E}}


\DeclareDocumentCommand{\VVroot}{O{\req}}{\ensuremath{r_{#1}}}
\DeclareDocumentCommand{\VVpred}{O{\req}}{\ensuremath{\pi_{#1}}}

\newcommand{\VGbfs}[1][\req]{\ensuremath{G^{\textnormal{bfs}}_{#1}}}
\newcommand{\VVbfs}[1][\req]{\ensuremath{V^{\textnormal{bfs}}_{#1}}}
\newcommand{\VEbfs}[1][\req]{\ensuremath{E^{\textnormal{bfs}}_{#1}}}

\DeclareDocumentCommand{\Cycles}{O{\req}}{\ensuremath{\mathcal{C}_{#1}}}
\DeclareDocumentCommand{\Paths}{O{\req}}{\ensuremath{\mathcal{P}_{#1}}}

\DeclareDocumentCommand{\VVcycleSource}{O{\req} O{k}}{\ensuremath{s^{C_{#2}}_{#1}}}
\DeclareDocumentCommand{\VVcycleTarget}{O{\req} O{k}}{\ensuremath{t^{C_{#2}}_{#1}}}
\DeclareDocumentCommand{\VVpathSource}{O{\req} O{k}}{\ensuremath{s^{P_{#2}}_{#1}}}
\DeclareDocumentCommand{\VVpathTarget}{O{\req} O{k}}{\ensuremath{t^{P_{#2}}_{#1}}}

\DeclareDocumentCommand{\VGcycle}{O{\req} O{k}}{\ensuremath{G^{C_{#2}}_{#1}}}
\DeclareDocumentCommand{\VVcycle}{O{\req} O{k}}{\ensuremath{V^{C_{#2}}_{#1}}}
\DeclareDocumentCommand{\VEcycle}{O{\req} O{k}}{\ensuremath{E^{C_{#2}}_{#1}}}
\DeclareDocumentCommand{\VEcycleSame}{O{\req} O{k}}{\ensuremath{\overrightarrow{E}^{C_{#2}}_{#1}}}
\DeclareDocumentCommand{\VEcycleDiff}{O{\req} O{k}}{\ensuremath{\overleftarrow{E}^{C_{#2}}_{#1}}}

\DeclareDocumentCommand{\VGpath}{O{\req} O{k}}{\ensuremath{G^{P_{#2}}_{#1}}}
\DeclareDocumentCommand{\VVpath}{O{\req} O{k}}{\ensuremath{V^{P_{#2}}_{#1}}}
\DeclareDocumentCommand{\VEpath}{O{\req} O{k}}{\ensuremath{E^{P_{#2}}_{#1}}}
\DeclareDocumentCommand{\VEpathSame}{O{\req} O{k}}{\ensuremath{\overrightarrow{E}^{P_{#2}}_{#1}}}
\DeclareDocumentCommand{\VEpathDiff}{O{\req} O{k}}{\ensuremath{\overleftarrow{E}^{P_{#2}}_{#1}}}

\DeclareDocumentCommand{\VEDiff}{O{\req}}{\ensuremath{\overleftarrow{E}^{\textnormal{bfs}}_{#1}}}

\DeclareDocumentCommand{\VEcycles}{O{\req}}{\ensuremath{E^{\mathcal{C}}_{#1}}}
\DeclareDocumentCommand{\VEpaths}{O{\req}}{\ensuremath{E^{\mathcal{P}}_{#1}}}

\DeclareDocumentCommand{\VVcycleSourcesTargets}{O{\req}}{\ensuremath{V^{\mathcal{C},\pm}_{#1}}}
\DeclareDocumentCommand{\VVpathSourcesTargets}{O{\req}}{\ensuremath{V^{\mathcal{P},\pm}_{#1}}}
\DeclareDocumentCommand{\VVSourcesTargets}{O{\req}}{\ensuremath{V^{\pm}_{#1}}}

\DeclareDocumentCommand{\VVcycleSources}{O{\req}}{\ensuremath{V^{\mathcal{C},+}_{#1}}}
\DeclareDocumentCommand{\VVpathSources}{O{\req}}{\ensuremath{V^{\mathcal{P},+}_{#1}}}

\DeclareDocumentCommand{\VVcycleTargets}{O{\req}}{\ensuremath{V^{\mathcal{C},-}_{#1}}}
\DeclareDocumentCommand{\VVpathTargets}{O{\req}}{\ensuremath{V^{\mathcal{P},-}_{#1}}}

\DeclareDocumentCommand{\VEcycleBranchR}{O{\req} O{k}}{\ensuremath{B^{C_{#2}}_{#1,1}}}
\DeclareDocumentCommand{\VEcycleBranchL}{O{\req} O{k}}{\ensuremath{B^{C_{#2}}_{#1,2}}}

\DeclareDocumentCommand{\VGdecomp}{O{\req} O{k}}{\ensuremath{G^{\mathcal{D}}_{#1}}}
\DeclareDocumentCommand{\VVdecomp}{O{\req} O{k}}{\ensuremath{V^{\mathcal{D}}_{#1}}}
\DeclareDocumentCommand{\VEdecomp}{O{\req} O{k}}{\ensuremath{E^{\mathcal{D}}_{#1}}}

\DeclareDocumentCommand{\VVbranching}{O{\req} }{\ensuremath{\mathcal{B}_{#1}}}
\DeclareDocumentCommand{\VVbranchingcycle}{O{\req} O{k}}{\ensuremath{\mathcal{B}^{C_{#2}}_{#1}}}
\DeclareDocumentCommand{\VVbranchingpath}{O{\req} O{k}}{\ensuremath{\mathcal{B}^{P_{k}}_{#1}}}
\DeclareDocumentCommand{\VVjoin}{O{\req} }{\ensuremath{\mathcal{J}_{#1}}}
\DeclareDocumentCommand{\VVaggregation}{O{\req} }{\ensuremath{\mathcal{A}_{#1}}}

\DeclareDocumentCommand{\VGextcycle}{O{\req} O{k}}{\ensuremath{G^{C_{#2}}_{#1,\textnormal{ext}}}}

\DeclareDocumentCommand{\VVextcycle}{O{\req} O{k}}{\ensuremath{V^{C_{#2}}_{#1,\textnormal{ext}}}}
\DeclareDocumentCommand{\VVextcycleSources}{O{\req} O{k}}{\ensuremath{V^{C_{#2}}_{#1,+}}}
\DeclareDocumentCommand{\VVextcycleTargets}{O{\req} O{k}}{\ensuremath{V^{C_{#2}}_{#1,-}}}
\DeclareDocumentCommand{\VVextcycleSubstrate}{O{\req} O{k}}{\ensuremath{V^{C_{#2}}_{#1,S}}}

\DeclareDocumentCommand{\VEextcycle}{O{\req} O{k}}{\ensuremath{E^{C_{#2}}_{#1,\textnormal{ext}}}}
\DeclareDocumentCommand{\VEextcycleSources}{O{\req} O{k}}{\ensuremath{E^{C_{#2}}_{#1,+}}}
\DeclareDocumentCommand{\VEextcycleTargets}{O{\req} O{k}}{\ensuremath{E^{C_{#2}}_{#1,-}}}
\DeclareDocumentCommand{\VEextcycleSubstrate}{O{\req} O{k}}{\ensuremath{E^{C_{#2}}_{#1,S}}}
\DeclareDocumentCommand{\VEextcycleF}{O{\req} O{k}}{\ensuremath{E^{C_{#2}}_{#1,F}}}

\DeclareDocumentCommand{\VGextpath}{O{\req} O{k}}{\ensuremath{G^{P_{#2}}_{#1,\textnormal{ext}}}}

\DeclareDocumentCommand{\VVextpath}{O{\req} O{k}}{\ensuremath{V^{P_{#2}}_{#1,\textnormal{ext}}}}
\DeclareDocumentCommand{\VVextpathSources}{O{\req} O{k}}{\ensuremath{V^{P_{#2}}_{#1,+}}}
\DeclareDocumentCommand{\VVextpathTargets}{O{\req} O{k}}{\ensuremath{V^{P_{#2}}_{#1,-}}}
\DeclareDocumentCommand{\VVextpathSubstrate}{O{\req} O{k}}{\ensuremath{V^{P_{#2}}_{#1,S}}}

\DeclareDocumentCommand{\VEextpath}{O{\req} O{k}}{\ensuremath{E^{P_{#2}}_{#1,\textnormal{ext}}}}
\DeclareDocumentCommand{\VEextpathSources}{O{\req} O{k}}{\ensuremath{E^{P_{#2}}_{#1,+}}}
\DeclareDocumentCommand{\VEextpathTargets}{O{\req} O{k}}{\ensuremath{E^{P_{#2}}_{#1,-}}}
\DeclareDocumentCommand{\VEextpathSubstrate}{O{\req} O{k}}{\ensuremath{E^{P_{#2}}_{#1,S}}}
\DeclareDocumentCommand{\VEextpathF}{O{\req} O{k}}{\ensuremath{E^{P_{#2}}_{#1,F}}}


\DeclareDocumentCommand{\varFlowInput}{O{\req} O{i} O{u}}{\ensuremath{f^+_{#1,#2,#3}}}
\DeclareDocumentCommand{\varFlowOutput}{O{\req} O{i} O{u}}{\ensuremath{f^+_{#1,#2,#3}}}

\DeclareDocumentCommand{\VEextcycleHorizontal}{O{\req} O{k} O{u} O{v}}{\ensuremath{E^{C_{#2}}_{#1,\textnormal{ext},#3,#4}}}
\DeclareDocumentCommand{\VEextpathHorizontal}{O{\req} O{k} O{u} O{v}}{\ensuremath{E^{P_{#2}}_{#1,\textnormal{ext},#3,#4}}}
\DeclareDocumentCommand{\VEextcycleVertical}{O{\req} O{k} O{\type} O{u}}{\ensuremath{E^{C_{#2}}_{#1,\textnormal{ext},#3,#4}}}
\DeclareDocumentCommand{\VEextpathVertical}{O{\req} O{k} O{\type} O{u}}{\ensuremath{E^{P_{#2}}_{#1,\textnormal{ext},#3,#4}}}

\DeclareDocumentCommand{\VEextCGHorizontal}{O{\req} O{u} O{v}}{\ensuremath{E^{\textnormal{ext,SCG}}_{#1,#2,#3}}}
\DeclareDocumentCommand{\VEextCGVertical}{O{\req} O{\type} O{u}}{\ensuremath{E^{\textnormal{ext,SCG}}_{#1,#2,#3}}}

\DeclareDocumentCommand{\VVextCGFlowNodes}{O{\req}}{\ensuremath{V^{\textnormal{ext,SCG}}_{#1,\textnormal{flow}}}}
\DeclareDocumentCommand{\VEextCGFlowEdges}{O{\req}}{\ensuremath{E^{\textnormal{ext,SCG}}_{#1,\textnormal{flow}}}}


\DeclareDocumentCommand{\Queue}{}{\ensuremath{\mathcal{Q}}}
\DeclareDocumentCommand{\QueueC}{}{\ensuremath{\mathcal{Q}_{\mathcal{C}}}}
\DeclareDocumentCommand{\QueueP}{}{\ensuremath{\mathcal{Q}_{\mathcal{P}}}}
\DeclareDocumentCommand{\UsedPaths}{}{\ensuremath{\mathcal{P}}}
\DeclareDocumentCommand{\Variables}{}{\ensuremath{\mathcal{V}}}

\DeclareDocumentCommand{\VGextcycleFlow}{O{\req} O{k}}{\ensuremath{G^{C_{#2}}_{#1,\textnormal{ext},f}}}

\DeclareDocumentCommand{\VGextcycleFlowBranchR}{O{\req} O{k}}{\ensuremath{G^{C_{#2},{B}_1}_{#1,\textnormal{ext},f}}}
\DeclareDocumentCommand{\VVextcycleFlowBranchR}{O{\req} O{k}}{\ensuremath{V^{C_{#2},{B}_1}_{#1,\textnormal{ext},f}}}
\DeclareDocumentCommand{\VEextcycleFlowBranchR}{O{\req} O{k}}{\ensuremath{E^{C_{#2},{B}_1}_{#1,\textnormal{ext},f}}}

\DeclareDocumentCommand{\VGextcycleFlowBranchL}{O{\req} O{k}}{\ensuremath{G^{C_{#2},{B}_2}_{#1,\textnormal{ext},f}}}
\DeclareDocumentCommand{\VVextcycleFlowBranchL}{O{\req} O{k}}{\ensuremath{V^{C_{#2},{B}_2}_{#1,\textnormal{ext},f}}}
\DeclareDocumentCommand{\VEextcycleFlowBranchL}{O{\req} O{k}}{\ensuremath{E^{C_{#2},{B}_2}_{#1,\textnormal{ext},f}}}

\DeclareDocumentCommand{\VGextpathFlow}{O{\req} O{k}}{\ensuremath{G^{P_{#2}}_{#1,\textnormal{ext},f}}}
\DeclareDocumentCommand{\VVextpathFlow}{O{\req} O{k}}{\ensuremath{V^{P_{#2}}_{#1,\textnormal{ext},f}}}
\DeclareDocumentCommand{\VEextpathFlow}{O{\req} O{k}}{\ensuremath{E^{P_{#2}}_{#1,\textnormal{ext},f}}}

\DeclareDocumentCommand{\VVKSource}{O{\req} O{K}}{\ensuremath{s^{K}_{#1}}}
\DeclareDocumentCommand{\VVKTarget}{O{\req} O{K}}{\ensuremath{t^{K}_{#1}}}
\DeclareDocumentCommand{\VVKSourcesTargets}{O{\req}}{\ensuremath{V^{K,\pm}_{#1}}}


\maketitle

\begin{abstract}
The SDN and NFV paradigms enable novel 
network services which can be realized and embedded in a flexible
and rapid manner.
For example, SDN can be used to flexibly steer traffic
from a source to a destination through a sequence of virtualized middleboxes, 
in order to realize so-called service chains. 
The service chain embedding problem consists of
three tasks: admission control, finding suitable locations to allocate the virtualized
middleboxes and computing corresponding routing paths.
This paper considers the offline batch embedding of multiple service chains. Concretely, we consider the objectives of maximizing the profit by embedding an optimal subset of requests or minimizing the costs when all requests need to be embedded. 
Interestingly, while the service chain embedding problem
has recently received much attention, so far, only
non-polynomial time algorithms~(based on integer programming)
as well as heuristics~(which do not provide any formal guarantees)
are known. 
This paper presents the first polynomial time service chain
 approximation algorithms both for the case with admission and without admission control.
Our algorithm is based on a novel extension of the classic linear programming and randomized rounding 
technique, which may be of independent interest. 
In particular, we show that our approach can also be extended to more complex service graphs, containing cycles or sub-chains, hence also
  providing new insights into the classic virtual
  network embedding problem.
\end{abstract}

\section{Introduction}

Computer networks are currently undergoing a phase transition, and
especially the Software-Defined Networking~(SDN)~and
Netwok Function Virtualization~(NFV)~paradigms 
have the potential to overcome the ossification of computer
networks and to introduce interesting new flexiblities
and novel service abstractions such as service chaining.

In a nutshell, in a Software-Defined Network~(SDN), the control over the forwarding
switches in the data plane
is outsourced and consolidated to a logically centralized software in the so-called
control plane. This separation enables faster innovations, as the control plane
can evolve independently from the data plane: software often trumps hardware in terms 
of supported innovation speed.
Moreover, the logically centralized perspective introduced by SDN is natural and attractive, as
many networking tasks~(e.g., routing, spanning tree constructions)~are inherently non-local.
Indeed, a more flexible traffic engineering is considered one of the key benefits
of SDN~\cite{b4,sdx}.
Such routes are not necessarily shortest paths or destination-based,
or
not even loop-free~\cite{flowtags}.
 In particular,  
OpenFlow~\cite{OpenFlow}, the standard SDN protocol today, allows to define routing paths 
which depend on Layer-2, Layer-3 and even Layer-4 header fields. 

Network Function Virtualization~(NFV)~introduces flexibilities in terms
of function and service deployments. Today's computer networks 
rely on a large number of middleboxes~(e.g., NATs, firewalls, WAN optimizers),
typically realized using expensive hardware appliances which are cumbersome to manage.
For example, it is known that the number of middleboxes in enterprise 
networks can be of the same order of magnitude as the number
of routers ~\cite{someone}.
The virtualization of these functions renders the network management
more flexible, and allows to define and quickly deploy novel in-network services~\cite{routebricks,opennf,sosr15,modeling-middleboxes,clickos}.
Virtualized network functions can easily be instantiated 
on the most suitable
network nodes, e.g., running in a virtual machine
on a commodity x86 server.
The transition to NFV is discussed
within standardization groups such as ETSI, and we currently also
witness first
deployments, e.g., TeraStream~\cite{terastream}.

Service chaining~\cite{ewsdn14,stefano-sigc,merlin} is a particularly interesting new service model,
that combines the flexibilities from SDN and NFV.
In a nutshell, a service chain describes a sequence of
network functions which need to be traversed on the way
from a given source~$s$ to a given destination~$t$. 
For example, a service chain could define that
 traffic originating at the source is first steered through
an intrusion detection system
for security,
next through a traffic optimizer,
and only then is routed towards the destination.
While NFV can be used to flexibly allocate network functions,
SDN can be used to steer traffic through them.

\subsection{The Scope and Problem} 

This paper studies the problem of how to algorithmically exploit the flexibilities 
introduced by the SDN+NFV paradigm. 
We attend the service chain embedding problem, which has recently
received much attention. The problem generally consists of three tasks:
(1)~(if possible)~admission control, i.e. selecting and serving
only the most valuable requests,
(2)~the allocation of the virtualized
middleboxes at the optimal locations 
and~(3)~the computation of routing paths via them. 
Assuming that one is allowed to exert admission control, the objective is to maximize the profit, i.e., the prizes collected for embedding service chains. We also study the problem variant, in which a given set of requests must be embedded, i.e. when admission control cannot be exerted. In this variant we consider the natural objective of minimizing the cumulative allocation costs.
The service chain embedding algorithms presented
so far in the literature either have a
non-polynomial runtime~(e.g., are based on integer programming~\cite{mehraghdam2014specifying,stefano-sigc,merlin}),
do not provide any approximation guarantees~\cite{karl-chains},
or ignore important aspects of the problem (such as 
link capacity constraints~\cite{sirocco15}). 

More generally, we also attend to the current trend towards more complex service chains, 
connecting network functions not only in a linear order 
but as arbitrary graphs, i.e. as a kind of \emph{virtual network}. 

\subsection{Our Contributions}
 
This paper makes the following contributions.
We present the first polynomial time algorithms for the (offline)~service chain
embedding problem with and without admission control, which provide provable approximation guarantees.

We also initate the study of approximation algorithms 
for more general service graphs (or ``virtual networks''). In particular, 
we present polynomial time
approximation algorithms for the embedding of service 
\emph{cactus graphs}, which may contain branch sub-chains and even cycles. To this end, we develop a novel Integer Program formulation together with a novel decomposition algorithm, 
enabling the randomized rounding: we prove that known Integer Programming formulations are not applicable. 

\subsection{Technical Novelty}

Our algorithms are based on the well-established  
randomized rounding approach~\cite{Raghavan-Thompson}:
the algorithms use an exact Integer Program, 
for which however we only compute \emph{relaxed, i.e. linear, solutions}, 
in polynomial time.
Given the resulting fractional solution, an approximate
integer solution is derived using randomized rounding, 
in the usual resource augmentation model.

However, while randomized rounding has been studied
intensively and applied successfully in the context of path embeddings~\cite{Raghavan-Thompson},
to the best of our knowledge, besides our own work,
the question of how to extend this approach to service chains~(where
paths need to traverse certain flexible waypoints)~or even more
complex graphs~(such as virtual networks),  has not been explored
yet. 
Moreover, we are not aware of any extensions of the randomized rounding
approach to problems allowing for admission control. 

Indeed, the randomized rounding of more complex graph requests
and the admission control pose some interesting new challenges. 
In particular, the more general setting requires both a novel Integer Programming formulation as well as a novel decomposition approach. 
Indeed, we show that solutions obtained using the standard formulation \cite{vnep,rostSchmidFeldmann2014} may not be decomposable at all,
as the relaxed embedding solutions are \emph{not} 
a linear combination of elementary solutions. Besides the fact that the randomized rounding approach can therefore not be applied, we prove that the relaxation of our novel formulation is indeed provably stronger than the well-known formulation.

\subsection{Organization}

The remainder of this paper is organized as follows.
Section~\ref{sec:model} formally introduces our model.
Section~\ref{sec:decompo} presents the 
Integer Programs and our decomposition method.
Section~\ref{sec:randround} presents our randomized approximation algorithm for the service chain embedding problem with admission control and Section~\ref{sec:approximation-without-admission-control} extends the approximation for the case without admission control.
In Section~\ref{sec:cactus} we derive a novel Integer Program and decomposition approach for approximating service graphs, and show why classic formulations are not sufficient.
Section~\ref{sec:relwork} reviews related work and Section~\ref{sec:conclusion} concludes our work.

\section{Offline Service Chain Embedding Problem}\label{sec:model}

This paper studies the Service Chain Embedding Problem, short SCEP.
Intuitively, a service chain consists of a set of Network Functions~(NFs), 
such as a firewall or a NAT, and routes between these functions. 
We consider the offline setting, where batches of service chains
have to be embedded simultaneously. 
Concretely, we study two problem variants: (1)~SCEP-P where the task is to embed a subset of service chains to maximize the profit and (2)~SCEP-C where \emph{all} given service chains need to be embedded and the objective is to minimize the resource costs. Hence, service chain requests might be attributed with prizes and resources as e.g., link bandwidth 
or processing~(e.g., of a firewall network function)~may come at a certain cost.


\subsection{Definitions \& Formal Model}
\label{sec:problem-definitions}

Given is a substrate network~(the physical network representing
the physical resources)~which is modeled as directed network 
$\SG =(\SV, \SE)$. We assume that the substrate network 
offers a finite set~$\types$ of different network 
functions~(NFs)~at nodes. The set of network function types may contain e.g., `FW'~(firewall), `DPI'~(deep packet inspection), etc. For each such type $\type \in \types$, we use the set $\SVTypes \subseteq \SV$ to denote the subset of substrate nodes that can host this type of 
network function. To simplify notation we introduce the set $\SRV = \{(\type, u)~| \type \in \types, u \in \SVTypes\}$ to denote all \emph{node} resources and denote by $\SR = \SRV \cup \SE$ the set of all \emph{substrate} resources.
Accordingly, the processing capabilities of substrate nodes and the available bandwidth on substrate edges are given by the function $\Scap: \SR \to \preals$. Hence, for each type and substrate node we use a single numerical value to describe the node's processing capability, e.g. given as the maximal throughput in Mbps. 
Additionally, we also allow to reference substrate node locations via types. To this end we introduce for each substrate node $u \in \SV$ the \emph{abstract} type $\textnormal{loc\_u} \in \types$, such that $\SVTypes[\textnormal{loc\_u}] = \{u\}$ and~$\Scap(\textnormal{loc\_u},u)~= \infty$ and $\Scap(\textnormal{loc\_u},v)~= 0$ for nodes $v \in \SV \setminus \{u\}$.

The set of service chain requests is denoted by~$\requests$. 
A request~$\req \in \requests$ is a directed chain graph~$\VG =~(\VV,\VE)$ 
with start node~$\Vstart \in \VV$ and end node~$\Vend \in \VV$. 
Each of these virtual nodes corresponds to a specific network function type 
which is given via the function~$\Vtype : \VV \to \types$. We assume 
that the types of~$\Vstart$ and~$\Vend$ denote specific nodes in the substrate. 
Edges of the service chain represent forwarding paths. 
Since the type for each node is well-defined, we again use consolidated 
capacities or demands~$\Vcap : \VV \cup \VE \to \preals$ for both edges and functions of any type. 
Note that capacities on edges may differ, as for instance, a 
function `WAN optimizier' can compress traffic. 

In the problem variant with admission control, requests~$\req \in \requests$ are attributed with a 
certain profit or benefit~$\Vprofit \in \preals$. On the other hand, costs are defined via $\Scost: \SR \to \preals$. Note that this definition allows to assign different costs for using \emph{the same} network function \emph{on different substrate nodes}. This allows us to model 
scenarios where, e.g., 
a firewall~(`FW')~costs more in terms of management overhead if implemented using a particular hardware appliance, than 
if it is implemented as a virtual machine~(`VM')
on commodity hardware.

We first define the notion of valid mappings, i.e. embeddings that obey the request's function types and connection requirements:

\begin{definition}[Valid Mapping]
\label{def:valid-mapping}
A valid mapping~$\map$ of request~$\req \in \requests$ is a tuple~$(\mapV, \mapE)$ of 
functions. The function~$\mapV : \VV \to \SV$ maps each virtual network functions to 
a single substrate node. The function~$\mapE : \VE \to \mathcal{P}(\SE)$ maps edges between network functions onto paths in the substrate network, such that:
\begin{itemize}
\item All network functions $i \in \VV$ are mapped onto nodes that can host the particular function type. Formally, $\mapV(i) \in  \SVTypes[\Vtype(i)]$ holds for all~$i \in \VV$.
\item The edge mapping $\mapE$ connects the respective network functions using simple paths, i.e. 
given a virtual edge~$(i,j) \in  \VE$ the embedding~$\mapE(i,j)$ is an edge-path~$\langle (v_1,v_2), \dots, (v_{k-1},v_k)~\rangle$ such that~$(v_l,v_{l+1}) \in  \SE$ for~$1 \leq l < k$ and~$v_1 = \mapV(i)$ and~$v_k = \mapV(j)$.
\end{itemize}
\end{definition}

Next we define the notion of a feasible embedding for a set of requests, i.e. an embedding that obeys the network function and edge capacities.

\begin{definition}[Feasible Embedding]
\label{def:feasible-embedding}
A feasible embedding of a subset of requests~$\requestsP \subseteq \requests$ is given by 
valid mappings~$\map =~(\mapV, \mapE)$ for~$\req \in \requestsP$, such that network function and edge capacities are obeyed:
\begin{itemize}
\item For all types~$\type \in \types$ and nodes~$u \in \SVTypes$ holds: 
$\sum_{\req \in \requestsP} \sum_{i \in \VV, \mapV(i)~= u} \Vcap(i)~\leq \Scap(\type, u)$~.\vspace{6pt}
\item For all edges~$(u,v) \in  \SE$ holds: \\$\sum_{\req \in \requestsP} \sum_{(i,j) \in  \VE: (u,v) \in  \mapE(i,j)~} \Vcap(i,j)~\leq \Scap(u,v)$~.
\end{itemize}
\end{definition}

We first define the SCEP variant with admission control whose objective is to maximize the net profit (SCEP-P), i.e. the achieved profit for embedding a subset of requests.  

\begin{definition}[SCEP for Profit Maximization: SCEP-P]~\\[-12pt]
\begin{description}
\label{def:scep-with-admission-control}
\item[Given:] A substrate network~$\SG =~(\SV,\SE)$ and a set of requests~$\requests$ as described above.
\item[Task:] Find a subset~$\requestsP \subseteq \requests$ of requests to embed and a feasible embedding, given by a mapping $\map$ for each request~$\req \in \requestsP$, maximizing the net profit $\sum_{\req \in \requestsP} \Vprofit$.
\end{description}
\end{definition}

In the variant without admission control, i.e. when all given requests must be embedded, we consider the natural objective of minimizing the cumulative cost of all embeddings. Concretely, the cost of the mapping $\map$ of request $\req \in \requests$ is defined as the sum of costs for placing network functions
 plus the number of substrate links along which network bandwidth 
 needs to be reserved, times the (processing or bandwidth)~demand:
\begin{alignat}{1}
\label{eq:cost-definition}
\begin{array}{rl}
c(\map)~= & \sum_{i \in \VV} \Vcap(i)~\cdot \Scost(\Vtype(i), \mapV(i))~+ \\
          & \sum_{(i,j) \in  \VE} \Vcap(i,j)~\sum_{(u,v) \in  \mapE(i,j)} \Scost(u,v)
\end{array}
\end{alignat}

The variant SCEP-C without admission control which asks for minimizing the costs is hence defined as follows.

\begin{definition}[SCEP for Cost Minimization: SCEP-C]~\\[-12pt]
\label{def:scep-without-admission-control}
\begin{description}
\item[Given:] A substrate network~$\SG =~(\SV,\SE)$ and a set of requests~$\requests$ as described above.
\item[Task:] Find a feasible embedding $\map$ for all requests~$\req \in \requests$ of minimal cost~$\sum_{\req \in \requests} c(\map)$.
\end{description}
\end{definition}

\subsection{NP-Hardness}
\label{sec:np-hardness}

Both introduced SCEP variants are strongly NP-hard, i.e. they are hard independently of the parameters as e.g. the capacities. We prove the NP-hardness 
by establishing a connection to multi-commodity flow problems.
Concretely, we present a polynomial time reduction from the Unsplittable Flow (USF)~and the Edge-Disjoint Paths (EDP)~problems~\cite{Guruswami2003473} to the  the respective SCEP variants. Both USF and EDP are defined on a (directed)~graph $G=(V,E)$ with capacities $d : E \to \preals$ on the edges. The task is to route a set of $K$ commodities $(s_k,t_k)$ with demands $d_k \in \preals$ for $1 \leq k \leq K$ from  $s_k \in V$ to $t_k \in V$ along simple paths inside $G$. Concretely, EDP considers the decision problem in which both the edge capacities and the demands are $1$ and the task is to find a feasible routing. The variant of EDP asking for the maximum number of routable commodities was one of Karp's original 21 NP-complete problems and the decision variant was shown to be NP-complete even on series-parallel graphs~\cite{Nishizeki2001177}. In the USF problem, for each commodity an additional benefit $b_k \in \preals$, $1 \leq k \leq K$, is given and the task is to find a selection of commodities to route, such that capacities are not violated and the sum of benefits of the selected commodities is maximized. Solving the USF problem is NP-hard and proven to be hard to approximate within a factor of $|E|^{1/2-\varepsilon}$ for any $\varepsilon > 0$~\cite{Guruswami2003473}.

We will argue in the following that EDP can be reduced to SCEP-C and USF can be reduced to SCEP-P. Both reductions use the same principal idea of expressing the given commodities as requests. Hence, we first describe this construction before discussing the respective reductions. For commodities $(s_k,t_k)$ with $1\leq k \leq K$ a request $\req_k$ consisting only of the two virtual nodes $i_k$ and $j_k$ and the edge $(i_k,j_k)$ is introduced. By setting $s_{\req_k} = i_k$ and $t_{\req_k} = j_k$ and $\Vtype(i_k)~= \textnormal{loc\_}s_{k}$ and $\Vtype(j_k)~= \textnormal{loc\_}t_{k}$, we can enforce that flow of request $\req_k$ originates at $s_k \in V$ and terminates at $t_k \in V$, hence modeling the original commodities. In both reductions presented below, we do not make use of network functions, i.e. $\types = \{\textnormal{loc\_u} | u \in \SV\}$, and accordingly we do not need to specify network function capacities.

Regarding the polynomial time reduction from EDP to SCEP-C, we simply use unitary virtual demands and substrate capacities. As this yields an equivalent formulation of EDP, which is NP-hard, \emph{finding} a feasible solution for SCEP-C is NP-hard. Hence, there cannot exist an approximation algorithm that (always)~finds a feasible solutions within polynomial time unless $P = \mathit{NP}$ or unless capacity violations are allowed. 

Regarding the reduction from USF to SCEP-P, we adopt the demands by setting $\Vcap[{\req_k}](i_k,j_k)~\triangleq d_k$ for $1 \leq k \leq K$, adopt the network capacities via $\Scap(u,v)~\triangleq d(u,v)$ for $(u,v) \in  E$, and setting the profits accordingly $b_{\req_k} \triangleq b_k$ for $1 \leq k \leq K$. It is easy to see, that any solution to this SCEP-P instance also induces a solution to the original USF instance. It follows that SCEP-P is strongly NP-hard.

\subsection{Further Notation}

We generally denote directed graphs by~$G=(V,E)$. We use~$\delta^+_E(u)~:= \{(u,v) \in  E \}$ to denote the outgoing edges of node~$u \in V$ with respect to~$E$ and similarly define~$\delta^-_E(u)~:= \{(v,u) \in  E \}$ to denote the incoming edges. If the set of 
edges~$E$ can be derived from the the context, we often omit stating~$E$ explicitly.
When considering functions on tuples, we often omit the (implicit)~braces around a tuple and write e.g. $f(x,y)$ instead of $f((x,y))$.
Furthermore, when only some specific elements of a tuple are of importance, we write $(x,\cdot) \in  Z$ in favor of $(x,y) \in  Z$.

\section{Decomposing Linear Solutions}\label{sec:decompo}

In this section, we lay the foundation for the approximation algorithms for both SCEP variants by introducing Integer Programming (IP)~formulations to compute optimal embeddings (see Section~\ref{sec:integer-linear-program}). Given the NP-hardness of the respective problems, solving any of the IPs to optimality is not possible within polynomial time (unless $P = \mathit{NP}$). Hence, we consider the linear relaxations of the respective formulations instead, as these naturally represent a conical or convex combination of valid mappings.  We formally show that linear solutions can be decomposed into valid mappings in Section~\ref{sec:decomp-algorithm-synopsis}. Given the ability to decompose solutions, we apply randomized rounding techniques in Sections~\ref{sec:randround} and \ref{sec:approximation-without-admission-control} to obtain tri-criteria approximation algorithms for the respective SCEP variants.

\begin{figure}[t!]
\includegraphics[width=1\columnwidth]{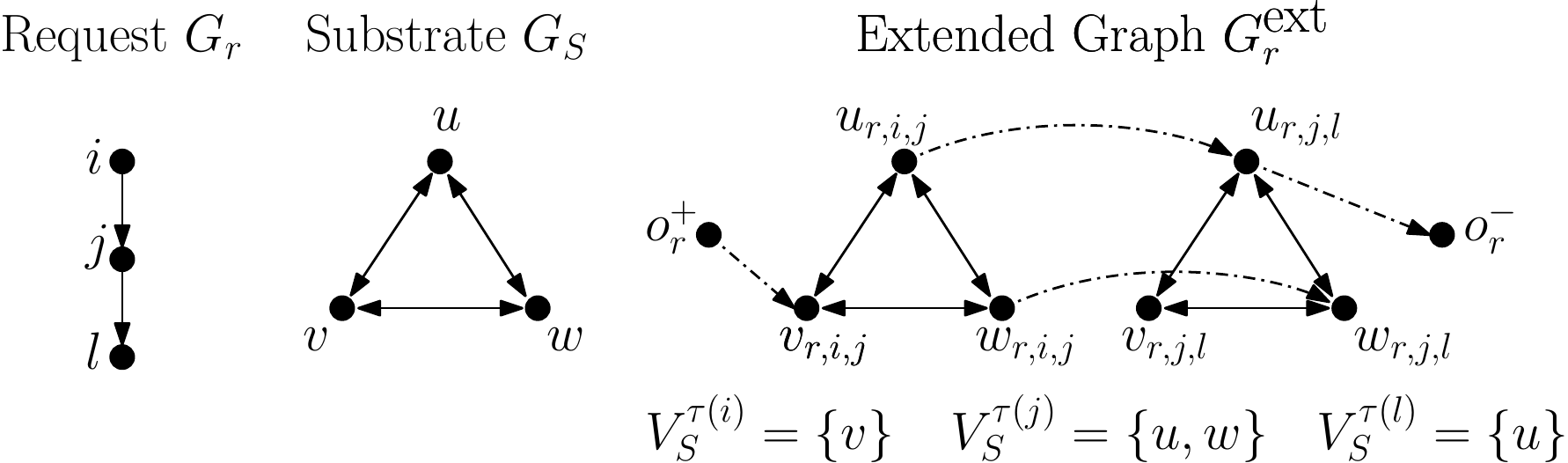}
\caption{Example for the extended graph construction (cf. Definition~\ref{def:extended-graph}).}
\label{fig:extended-graph-linear}
\end{figure}

\subsection{Integer Programming}
\label{sec:integer-linear-program}

To formulate the service chain embedding problems as Integer Programs we employ a flow formulation on a graph construction reminiscent of the one used by Merlin~\cite{merlin}. Concretely, we construct an extended and layered graph consisting of copies of the substrate network together with super sources and sinks. The underlying idea is to model the usage (and potentially the placement)~of network functions by traversing inter-layer edges while intra-layer edges will be used for connecting the respective network functions. Figure~\ref{fig:extended-graph-linear} depicts a simple example of the used graph construction. The request $\req$ consists of the three nodes $i$, $j$, and $l$. Recall that we assume that the start and the end node specify locations in the substrate network (cf. Section~\ref{sec:model}). Hence, in the example the start node $s_\req = i$ and the end node $t_\req = l$ can only be mapped onto the substrate nodes $v$ and $u$ respectively, while the virtual node $j$ may be placed on the substrate nodes $u$ and $w$. Since for each connection of network functions a copy of the substrate network is introduced, the edges between these \emph{layers} naturally represent the utilization of a network function.
Additionally, the extended graph $\VGext$ contains a single super source $\Vsource$ and a super sink $\Vsink$, such that any path from $\Vsource$ to $\Vsink$ represents a valid mapping of the request (cf. Discussion in Section~\ref{sec:decomp-algorithm-synopsis}). Formally, the extended graph \emph{for each request} $r \in \requests$ is introduced as follows.
\begin{definition}[Extended Graph]
\label{def:extended-graph}
Let~$\req \in \requests$ be a request. The extended graph~$\VGext =~(\VVext, \VEext)$ is defined as follows:
\begin{alignat}{1}
\VVext = &  \,\{ \Vsource,\Vsink \} \cup \{u^{i,j}_{\req} | (i,j) \in \VE, u \in \SV \} \\[6pt]
 \VEext = & 
\begin{array}{l}
\{ (u^{i,j}_{\req}, v^{i,j}_{\req}) | (i,j) \in \VE, (u,v) \in \SE \} \cup \\
\{ (\Vsource,u^{\Vstart, j}_{\req}) | (\Vstart,j) \in \VE, u \in \SVTypes[\Vtype(\Vstart)] \} \cup \\
\{ (u^{i,\Vend}_{\req}, \Vsink) | (i,\Vend) \in \VE, u \in \SVTypes[\Vtype(\Vend)] \} \cup \\
 \{ (u^{i,j}_{\req}, u^{j,k}_{\req}) | (i,j), (j,k) \in \VE, u \in \SVTypes[\Vtype(j)] \}
\end{array}
\end{alignat}
We denote by $\VEextHorizontal = \{((u^{i,j}_{\req}, v^{i,j}_{\req}),(i,j)) | (i,j) \in \VE\}$ all copies of the substrate edge $(u,v) \in \SE$ together with the respective virtual edge $(i,j) \in \VE$. Similarly, we denote by $\VEextVertical = \{((u^{i,j}_{\req}, u^{j,k}_{\req}),j) | j \in \VV, \Vtype(j) = \type, (i,j), (j,k) \in \VE\}$ the edges that indicate that node $u \in \SV$ processes flow of network function $j \in \VV$ having type $\type \in \types$.
\end{definition}

Having defined the extended graph as above, we will first discuss our Integer Program~\ref{alg:SCEP-IP} for SCEP-P. We use a single 
variable~$x_{\req} \in \{0,1\}$ per request~$\req$ to indicate whether
the request is to be embedded or not. If~$x_{\req} = 1$, then Constraint~(\ref{alg:SCEP:FlowInduction})~induces a unit flow 
from~$\Vsource$ to $\Vsink$ in the extended graph $\VGext$ using the flow variables 
$f_{\req,e} \in \{0,1\}$ for~$e \in \VEext$. Particularly, Constraint~(\ref{alg:SCEP:FlowPreservation})~states flow preservation at each node, except at the source and the sink.

Constraints~(\ref{alg:SCEP:nodeLoad})~and~(\ref{alg:SCEP:edgeLoad})~
compute the effective load \emph{per request} on the network functions 
and the substrate edges. Towards this end, variables~$l_{\req,x,y} \geq 0$ indicate the load induced by request $r \in \requests$ on resource $(x,y) \in  \SR$. By the construction of the extended graph (see Definition~\ref{def:extended-graph}), 
the sets~$\VEextHorizontal$ and~$\VEextVertical$  actually represent 
a partition of all edges in the extended graph for $\req \in \requests$. Since each 
layer represents a virtual connection~$(i,j) \in \VE$ with a specific load 
$\Vcap(i,j)$ and each edge between layers~$(i,j) \in  \VE$ and~$(j,k) \in  \VE$ 
represents the usage of the network function~$j$ with demand 
$\Vcap(j)$, the unit flow is scaled by the respective demand. Constraint~(\ref{alg:SCEP:capacities})~ensures the feasibility of the embedding (cf.~Definition~\ref{def:feasible-embedding}), i.e., 
the overall amount of used resources does not exceed the offered capacities 
(on network functions as well as on the edges). 

{
 \LinesNotNumbered
 \renewcommand{\arraystretch}{1.5}

 \begin{IPFormulation}{t!}

 \SetAlgorithmName{Integer Program}{}{{}}

 \scalebox{0.93}{

 \newcommand{\spaceIt}{\qquad\quad\quad}
 \newcommand{\miniSpace}{\hspace{1.5pt}}

 \centering
 \hspace{-32pt}
 \noindent
 \begin{tabular}{FRLQB}
  \multicolumn{4}{C}{\textnormal{max~}  \sum \limits_{\req \in \requests} \Vprofit \cdot x_{\req} } & 
  \tagIt{alg:SCEPObj}\\[10pt]
  \sum \limits_{e \in \delta^+(\Vsource)} f_{\req, e} & = & x_{\req} & \forall \req \in \requests, i \in \VV & \tagIt{alg:SCEP:FlowInduction} \\
	\sum \limits_{(u,v)\in \delta^+(u)} f_{\req, e}  & = & \hspace{-12pt} \sum \limits_{(v,u) \in \delta^-(u)} \hspace{-8pt} f_{\req, e}   &  \forall \req \in \requests, u \in \VVext \setminus \{ \Vsource, \Vsink \} & \tagIt{alg:SCEP:FlowPreservation} \\

	\sum \limits_{(e,i)\in \VEextVertical} \hspace{-16pt}\Vcap(i)\cdot f_{\req,e}  & = & l_{\req,\type,u} & \forall \req \in \requests,  (\type,u)\in \SRV & \tagIt{alg:SCEP:nodeLoad} \\
	
	\sum \limits_{	(e,i,j)\in \VEextHorizontal} \hspace{-16pt} \Vcap(i,j)\cdot f_{\req,e} & = & l_{\req,u,v} & \forall \req \in \requests, (u,v)\in \SE & \tagIt{alg:SCEP:edgeLoad} \\

\sum \limits_{\req \in \requests } l_{\req,x,y}& \leq & \Scap(x,y)& \forall (x,y)\in \SR & \tagIt{alg:SCEP:capacities} \\

  x_{\req} & \in & \{0,1\} &  \forall \req \in \requests & \tagIt{alg:SCEP:embedding_variable} \\
  f_{\req,e} & \in & \{0,1\} &  \forall \req \in \requests, e \in \VEext     & \tagIt{alg:SCEP:pathVariable} \\
  l_{\req,x,y} & \geq & 0 &  \forall \req \in \requests, (x,y)\in \SR     & \tagIt{alg:SCEP:VariableLoad} \\
 \end{tabular}
 }
 \caption{SCEP-P}
 \label{alg:SCEP-IP}
 \end{IPFormulation}
 }
 
{

 \newcolumntype{F}{>{$\displaystyle\,}r<{$}@{\hspace{0.0em}}}
 \newcolumntype{C}{>{$\displaystyle\,}c<{$}@{\hspace{0.0em}}}
 \newcolumntype{B}{>{$\displaystyle\,}r<{$}@{\hspace{0.0em}}}
 \newcolumntype{R}{>{$\displaystyle}r<{$}@{\hspace{0.2em}}}
 \newcolumntype{S}{>{$\displaystyle}r<{$}@{\hspace{0.2em}}}
 \newcolumntype{L}{>{$\displaystyle}l<{$}@{\hspace{0.2em}}}
 \newcolumntype{Q}{>{$\displaystyle}l<{$}@{\hspace{0.3em}}}

\LinesNotNumbered
\renewcommand{\arraystretch}{1.5}

\begin{IPFormulation}{t!}

\SetAlgorithmName{Integer Program}{}{{}}

\scalebox{0.94}{

\newcommand{\spaceIt}{\qquad\quad\quad}
\newcommand{\miniSpace}{\hspace{1.5pt}}

\centering
\hspace{-32pt}
\begin{tabular}{FFCLLBB}
\multicolumn{5}{C}{\textnormal{min~}   \sum \limits_{\req \in \requests} \sum \limits_{(x,y) \in   \SR} \Scost(x,y)~\cdot l_{\req,x,y} } & ~ & \tagIt{alg:SCEPObj_without}\\[10pt]
                  
\qquad \qquad \qquad \qquad \qquad &  \textnormal{\ref{alg:SCEP:FlowInduction} - \ref{alg:SCEP:capacities}} & ~\textnormal{and}~  & \textnormal{\ref{alg:SCEP:pathVariable} - \ref{alg:SCEP:VariableLoad}} & & & \\
&  x_{\req}  &=&  1 & \qquad \forall \req \in \requests & & \tagIt{alg:SCEP:embedding_variable_without} \\
\end{tabular}
}
\caption{SCEP-C}
\label{alg:SCEP-IP_without}
\end{IPFormulation}
}

Lastly, the objective function sums up the benefits of embedded requests $\req \in \requests$ for which $x_{\req} = 1$ holds (cf. Definition~\ref{def:scep-with-admission-control}).

In the following, we shortly argue that any feasible solution to IP~\ref{alg:SCEP-IP} induces a feasible solution to SCEP-P (and vice versa). If a request $r \in \requests$ is not embedded, i.e. $x_\req = 0$ holds, then no flow and hence no resource reservations are induced. If on the other hand $x_\req = 1$ holds for $\req \in \requests$, then the flow variables $\{f_{\req,e} | e \in \VEext\}$ induce a unit $\Vsource$-$\Vsink$ flow. By construction, this unit flow must pass through all \emph{layers}, i.e. copies of the substrate network. As previous layers are not reachable from subsequent ones, cycles may only be contained inside a single layer. Hence, network function mappings are uniquely identified by considering the inter-layer flow variables. Thus, there must exist unique nodes at which flow enters and through which the flow leaves each layer. Together with the flow preservation this implies that the respective network functions are connected by the edges \emph{inside} the layers, therefore representing valid mappings.

Considering SCEP-C, we adapt the IP~\ref{alg:SCEP-IP} slightly to obtain the Integer Program~\ref{alg:SCEP-IP_without} for the variant minimizing the costs: (i)~all requests must be embedded by enforcing that $x_\req = 1$ holds for all requests $\req \in \requests$ and (ii)~the objective is changed to minimize the overall resource costs (cf. Equation~\ref{eq:cost-definition}). As the constraints safeguarding the feasibility of solutions are reused, the IP~\ref{alg:SCEP-IP_without} indeed computes optimal solutions for SCEP-C.

While solving Integer Programs~\ref{alg:SCEP-IP} and \ref{alg:SCEP-IP_without} with binary 
variables is computationally hard (cf. Section~\ref{sec:np-hardness}), 
the respective linear relaxations can be computed in polynomial time~\cite{matousek2007understanding}.
Concretely, the linear relaxation is obtained by simply replacing~$\{0,1\}$ with~$[0,1]$ in Constraints~(\ref{alg:SCEP:embedding_variable})~and 
(\ref{alg:SCEP:pathVariable})~respectively. We generally denote the set of feasible solutions to the linear relaxation by~$\FeasibleLP$ and the set of feasible solutions to the respective integer program by~$\FeasibleIP$. We omit the reference to any particular formulation here as it will be clear from the context. We recall the following well-known fact:
\begin{fact}
\label{obs:IP-solutions-are-a-subset-of-LP-solutions}
$\FeasibleIP \subseteq \FeasibleLP$.
\end{fact}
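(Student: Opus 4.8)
The plan is to observe that this inclusion is immediate from the way the linear relaxation was constructed, namely by relaxing only the integrality requirements while leaving every other constraint untouched. Concretely, I would take an arbitrary element of $\FeasibleIP$ and verify membership in $\FeasibleLP$ constraint by constraint.

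First I would recall that, for either formulation (IP~\ref{alg:SCEP-IP} or IP~\ref{alg:SCEP-IP_without}), the associated linear relaxation differs from the integer program \emph{only} in Constraints~(\ref{alg:SCEP:embedding_variable}) and~(\ref{alg:SCEP:pathVariable}): the domains $\{0,1\}$ of the variables $x_\req$ and $f_{\req,e}$ are replaced by the interval $[0,1]$, whereas all remaining constraints---the flow-induction~(\ref{alg:SCEP:FlowInduction}), flow-preservation~(\ref{alg:SCEP:FlowPreservation}), load~(\ref{alg:SCEP:nodeLoad}) and~(\ref{alg:SCEP:edgeLoad}), capacity~(\ref{alg:SCEP:capacities}), and nonnegativity~(\ref{alg:SCEP:VariableLoad}) constraints---are shared verbatim. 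This structural observation is the only ingredient the proof really needs.

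Given any feasible integer solution, each of these shared constraints is satisfied by assumption and hence continues to hold under the relaxation, since they are literally the same constraints. For the two relaxed constraints, the integer solution assigns $x_\req \in \{0,1\}$ and $f_{\req,e} \in \{0,1\}$ for all $\req \in \requests$ and $e \in \VEext$; since $\{0,1\} \subseteq [0,1]$, these assignments also satisfy the relaxed bounds $x_\req \in [0,1]$ and $f_{\req,e} \in [0,1]$. Consequently the solution lies in $\FeasibleLP$, establishing $\FeasibleIP \subseteq \FeasibleLP$.

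I do not expect any genuine obstacle here: the statement is essentially definitional, a direct consequence of the containment $\{0,1\} \subseteq [0,1]$ together with the fact that relaxing (i.e., enlarging the domain of) a constraint can only enlarge the feasible region. The only point worth making explicit is that the argument is uniform across both IP formulations---the capacity-safeguarding constraints are reused identically in IP~\ref{alg:SCEP-IP_without}---which is exactly why the fact may be stated without committing to a particular formulation, as the surrounding text anticipates.
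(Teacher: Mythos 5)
Your proof is correct and matches the paper's intent exactly: the paper states this as a well-known fact without proof, and the argument it implicitly relies on is precisely yours, since the relaxation is defined by replacing $\{0,1\}$ with $[0,1]$ in Constraints~(\ref{alg:SCEP:embedding_variable}) and~(\ref{alg:SCEP:pathVariable}) while keeping all other constraints verbatim. Nothing further is needed.
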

The above fact will e.g. imply that the profit of the optimal linear solution will be higher than the one of the optimal integer solution.

\subsection{Decomposition Algorithm for Linear Solutions}
\label{sec:decomp-algorithm-synopsis}

As discussed above, any binary solution to the formulations~\ref{alg:SCEP-IP} and \ref{alg:SCEP-IP_without} represents a feasible solution to the respective problem variant. However, as we will consider solutions to the respective linear relaxations instead, we shortly discuss how relaxed solutions can be decomposed into conical (SCEP-P)~or convex combinations (SCEP-C)~of valid mappings. Concretely, Algorithm~\ref{alg:decompositionAlgorithm} computes a set of triples $\PotEmbeddings = \{\decomp = (\prob,\mapping,\load)\}_k$, where $\prob \in [0,1]$ denotes the (fractional)~embedding value of the $k$-th decomposition, and $\mapping$ and $\load$ represent the (valid)~mapping and the induced loads on network functions and edges respectively. Importantly, the load function $\load : \SR \to \preals$ represents the cumulative loads, when embedding the request $\req \in \requests $ \emph{fully} according to the $k$-th decomposition.

The pseudocode for our decomposition scheme 
is given in Algorithm~\ref{alg:decompositionAlgorithm}. 
For each request~$\req \in \requests$, a path 
decomposition is performed from $\Vsource$ to $\Vsink$ as long as the outgoing flow from the source $\Vsource$ is larger than 0. Note that the flow variables are an input and originate from the  linear program for SCEP-C or SCEP-P respectively.
We use~$\VGextFlow$ to denote the graph in which an edge $e \in \VEext$ is contained, iff. the flow value along it is greater 0, i.e. for which 
$f_{\req,e} > 0$ holds. Within this graph 
an arbitrary~$\Vsource$-$\Vsink$ path~$P$ is chosen and the minimum available 
`capacity' is stored in~$f^k_{\req}$.

\begin{figure}[tbhp]

\scalebox{0.95}{
\begin{minipage}{1.03\columnwidth}

\begingroup
\removelatexerror

\begin{algorithm*}[H]

\SetKwInOut{Input}{Input}\SetKwInOut{Output}{Output}
\SetKwFunction{LP}{LP}
\SetKwFunction{LP}{LP}
\SetKwFunction{BFS}{BFS}

\newcommand{\SET}{\textbf{set~}}
\newcommand{\APPEND}{\textbf{append~}}
\newcommand{\DEFINE}{\textbf{define~}}
\newcommand{\AND}{\textbf{and~}}
\newcommand{\LET}{\textbf{let~}}
\newcommand{\WITH}{\textbf{with~}}
\newcommand{\COMPUTE}{\textbf{compute~}}
\newcommand{\CHOOSE}{\textbf{choose~}}
\newcommand{\DECOMPOSE}{\textbf{decompose~}}
\newcommand{\FORALL}{\textbf{for all~}}
\newcommand{\FOR}{\textbf{for~}}
\newcommand{\OBTAIN}{\textbf{obtain~}}
\newcommand{\WITHPROBABILITY}{\textbf{with probability~}}

\Input{Substrate~$\SG=(\SV,\SE)$, set of requests~$\requests$,\\ \,~solution~$(\vec{x},\vec{f},\vec{l}) \in  \FeasibleLP$}
\Output{Fractional embeddings~$\PotEmbeddings = \{(\prob,\mapping,\load)\}_k$ \\ \,~for each $\req \in R$}

\For{$\req \in \requests$}
{
	\SET~$\PotEmbeddings  \gets \emptyset $ \AND $k \gets 1$ \\
	\While{$\sum_{e \in \delta^+(\Vsource)} f_{\req,e} > 0$ }
	{
		\CHOOSE $P = \langle \Vsource, \dots, \Vsink \rangle \in \VGextFlow$ \label{alg:decomposition:chooseP}\\
		\SET $f^k_{\req} \gets \min_{e \in P} f_{\req,e}$\\
		\SET $m^k_{\req} = (\mapV,\mapE)~\gets (\emptyset,\emptyset)$ \label{alg:decomposition:startMappingCreation}\\
		\SET $l^k_{\req}(x,y)~\gets 0$ \FORALL $(x,y) \in  \SR $\\
		\For{$i \in \VV$}{ \label{alg:decomp:begin-node-mapping}
			\For{$u \in \SVTypes[\Vtype(i)]$}{
				\uIf{$i = \Vstart$ \textnormal{\textbf{and}} $(\Vsource,u^{\Vstart,\cdot}_{\req}) \in  P$}{
					\SET $\mapV(i)~\gets u$\\
				}
				\uElseIf{$i = \Vend$ \textnormal{\textbf{and}} $(u^{\cdot,i}_{\req},\Vsink) \in  P$}{
					\SET $\mapV(i)~\gets u$ \\
				}
				\uElseIf{$(u^{\cdot,i}_{\req},u^{i, \cdot}_{\req}) \in  P$}{
					\SET $\mapV(i)~\gets u$ \label{alg:decomp:end-node-mapping} \\
				}
			}
			\SET $l^k_{\req}(\Vtype(i),\mapV(i)) \gets l^k_{\req}(\Vtype(i),\mapV(i)) + \Vcap(i)$  \label{alg:decomp:setting-node-load} \\
		}
		\For{$(i,j) \in  \VE$}{
			\SET $\mapE(i,j) \gets \langle (u,v) \in \SE | (u^{i,j}_{\req}, v^{i,j}_{\req}) \in P \rangle$ \label{alg:decomp:extraction-edge-paths}\\
			\For{$(u,v) \in  \mapE(i,j)$}{
				\SET $l^k_{\req}(u,v)~\gets l^k_{\req}(u,v) + \Vcap(i,j)$ \label{alg:decomp:setting-edge-load}\\
			}
		}
		\SET $\PotEmbeddings \gets \PotEmbeddings \cup \{D^k_{\req}\}$ \WITH $D^k_{\req} = (f^k_{\req},m^k_{\req},l^k_{\req})$\\
		\SET $f_{\req,e} \gets f_{\req,e} - f^k_{\req}$ \FORALL $e \in P$  \AND $k \gets k +1$\\
	}
}

\KwRet{$\{\PotEmbeddings | r \in \requests\}$}
\caption{Decomposition Algorithm}
\label{alg:decompositionAlgorithm}
\end{algorithm*}
\endgroup
\end{minipage}
}
\end{figure}

In Lines~\ref{alg:decomp:begin-node-mapping}-\ref{alg:decomp:end-node-mapping} the node mappings are set. 
For all virtual network functions~$i \in \VV$ and all potential substrate nodes $u \in \SVTypes[\Vtype(i)]$, we check whether $u$ hosts $i$ by considering the inter-layer connections contained in $P$.
Besides the trivial cases when $i = \Vsource$ or $i = \Vsink$ holds, the network function $i$ is mapped onto node $u$ iff. edge~$(u_{\req,\cdot,i},u_{\req,i,\cdot})$ is contained 
in~$P$. As $P$ is a directed path from $\Vsource$ to $\Vsink$ and the extended graph does not contain any inter-layer cycles, this mapping is uniquely defined for each found path $P$. 
For the start node~$\Vstart$ of the request~$\req \in \requests$ and the end 
node~$\Vend$ connections from~$\Vsource$ or to~$\Vsink$ are checked respectively.

Concerning the mapping of the virtual edge $(i,j) \in  \VE$, the edges of $P$ used in the substrate edge layer corresponding to the (virtual)~connection $(i,j)$ are extracted in Line~\ref{alg:decomp:extraction-edge-paths}. Note that $P$ is by construction a simple path and hence the constructed edge paths will be simple as well. In Lines~\ref{alg:decomp:setting-node-load} and \ref{alg:decomp:setting-edge-load}, the cumulative load on all physical network functions and edges are computed, that would arise if request $\req \in \requests$ is \emph{fully} embedded according to the $k$-th decomposition.

Lastly, the $k$-th decomposition $D^k_{\req}$, a triple consisting of the fractional embedding value $\prob$, the mapping $\mapping$, and the load $\load$, is added to 
the set of potential embeddings~$\PotEmbeddings$ and the flow variables along $P$ are decreased by $f^k_{\req}$. By decreasing the flow uniformly along $P$, flow preservation with respect to the adapted flow variables is preserved and the next iteration is started by incrementing $k$.

By construction, we obtain the following lemma:
\begin{lemma}
\label{lem:validity-of-kth-decomposition}
Each mapping~$m^k_{\req}$  constructed by Algorithm~\ref{alg:decompositionAlgorithm} in the $k$-th iteration is \emph{valid}.
\end{lemma}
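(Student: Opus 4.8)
The plan is to verify both conditions of Definition~\ref{def:valid-mapping} directly for the mapping $\mapping = (\mapV,\mapE)$ assembled in Lines~\ref{alg:decomp:begin-node-mapping}--\ref{alg:decomp:setting-edge-load}, using the layered structure of the extended graph. Because $\req$ is a chain, I would first fix an enumeration $\Vstart = v_1, v_2, \dots, v_n = \Vend$ of its virtual nodes so that $\VE = \{(v_k,v_{k+1}) \mid 1 \le k < n\}$; the substrate copies associated to these edges form the \emph{layers} $L_1,\dots,L_{n-1}$ of $\VGext$ (Definition~\ref{def:extended-graph}). The key structural fact to establish is that the only inter-layer edges are the vertical edges $(u^{v_k,v_{k+1}}_{\req}, u^{v_{k+1},v_{k+2}}_{\req})$, which lead from $L_k$ to the immediately following layer $L_{k+1}$; together with the observation that $\Vsource$ points only into $L_1$ and $\Vsink$ is reachable only out of $L_{n-1}$, this shows that $\VGext$ has no edges running backward between layers, and hence no inter-layer cycles.

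Since Line~\ref{alg:decomposition:chooseP} picks $P$ as a \emph{simple} directed $\Vsource$--$\Vsink$ path, the forward-only layer ordering forces $P$ to visit $L_1,\dots,L_{n-1}$ as successive contiguous blocks, entering and leaving each layer exactly once. Consequently $P$ contains exactly one source edge $(\Vsource, u^{\Vstart,\cdot}_{\req})$, exactly one sink edge $(u^{\cdot,\Vend}_{\req},\Vsink)$, and for each internal node $v_{k+1}$ exactly one vertical edge $(u^{v_k,v_{k+1}}_{\req}, u^{v_{k+1},v_{k+2}}_{\req})$ realizing the transition $L_k \to L_{k+1}$. For the \textbf{node-mapping condition} I would note that the conditionals in Lines~\ref{alg:decomp:begin-node-mapping}--\ref{alg:decomp:end-node-mapping} only ever execute $\mapV(i) \gets u$ for $u \in \SVTypes[\Vtype(i)]$, so $\mapV(i) \in \SVTypes[\Vtype(i)]$ holds by construction; the ``exactly once'' property guarantees that precisely one such $u$ triggers the assignment, so that $\mapV$ is a well-defined function.

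For the \textbf{edge-mapping condition} I would fix a virtual edge $(i,j) = (v_k,v_{k+1})$ and examine the restriction of $P$ to layer $L_k$. By the entry/exit analysis, $P$ enters $L_k$ at the copy of $\mapV(i)$ --- through the source edge when $i = \Vstart$, and otherwise through the vertical edge from $L_{k-1}$, whose head is exactly $(\mapV(i))^{v_k,v_{k+1}}_{\req}$ --- and symmetrically leaves $L_k$ at the copy of $\mapV(j)$. The horizontal edges of $P$ inside $L_k$ are precisely those gathered in Line~\ref{alg:decomp:extraction-edge-paths}, and each $(u^{i,j}_{\req}, v^{i,j}_{\req})$ corresponds to a substrate edge $(u,v) \in \SE$. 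As the restriction of the simple path $P$ to one layer is itself a simple directed sub-path from entry to exit, $\mapE(i,j)$ is a simple substrate path from $\mapV(i)$ to $\mapV(j)$, which is exactly the second requirement of Definition~\ref{def:valid-mapping}.

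I expect the only genuine obstacle to be a rigorous justification of the ``enters and leaves each layer exactly once'' claim, i.e.\ ruling out that a simple path re-enters a layer or bypasses a transition; this rests entirely on the linear ordering of layers induced by the chain and the absence of backward inter-layer edges. Once that is pinned down, both conditions follow by bookkeeping, and since the argument does not depend on $k$, every decomposition's mapping is valid.
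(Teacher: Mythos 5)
Your proof is correct and follows essentially the same route as the paper, which states the lemma ``by construction'' and justifies it in the surrounding prose via exactly the facts you formalize: the layered structure of $\VGext$ with only forward inter-layer (vertical) edges, hence no inter-layer cycles, so that the simple $\Vsource$--$\Vsink$ path $P$ traverses the layers in order, determines each node mapping through a unique vertical edge, and yields simple intra-layer substrate paths connecting $\mapV(i)$ to $\mapV(j)$. Your write-up merely makes rigorous the ``enters and leaves each layer exactly once'' step that the paper takes for granted.
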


As initially the outgoing flow equals the embedding variable and as flow preservation is preserved after each iteration, the flow in the extended network is fully decomposed by Algorithm~\ref{alg:decompositionAlgorithm}:

\begin{lemma}
\label{lem:sum-of-f-mu}
The decomposition $\PotEmbeddings$ computed in Algorithm~\ref{alg:decompositionAlgorithm} is complete, i.e. $\sum_{D^k_{\req} \in \PotEmbeddings} f^k_{\req} = x_\req$ holds, for $\req \in \requests$.
\end{lemma}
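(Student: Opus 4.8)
The plan is to track the flow leaving the super source $\Vsource$ across the iterations of Algorithm~\ref{alg:decompositionAlgorithm} and to show that this outgoing flow decreases by \emph{exactly} $f^k_\req$ in iteration $k$, so that the sum of the decomposition values telescopes to its initial value. First I would record that initial value: by the flow-induction Constraint~(\ref{alg:SCEP:FlowInduction}), before the loop begins the total source outflow satisfies $\sum_{e \in \delta^+(\Vsource)} f_{\req,e} = x_\req$.

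The core of the argument is a loop invariant stating that throughout the execution the (repeatedly modified) flow $\vec f$ stays nonnegative and continues to satisfy flow conservation at every node of $\VVext \setminus \{\Vsource,\Vsink\}$ (Constraint~(\ref{alg:SCEP:FlowPreservation})). Nonnegativity is immediate because $f^k_\req = \min_{e \in P} f_{\req,e}$, so subtracting it uniformly along $P$ keeps every edge value $\geq 0$. Conservation is preserved because $P$ is a directed $\Vsource$--$\Vsink$ path: at each of its internal nodes it traverses exactly one incoming and one outgoing edge, hence decrementing all edges of $P$ by the common amount $f^k_\req$ changes the in- and out-flow of every on-path internal node equally and leaves off-path nodes untouched.

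Two consequences of this invariant drive the proof. (a)~\emph{Well-definedness:} as long as the source outflow is positive, a strictly-positive-flow $\Vsource$--$\Vsink$ path exists in $\VGextFlow$, so the choice in line~\ref{alg:decomposition:chooseP} is always possible. I would establish this by a reachability/cut argument: let $S$ be the set of vertices reachable from $\Vsource$ along positive-flow edges; by definition no positive-flow edge leaves $S$, so the net flow across the cut $(S,\VVext\setminus S)$ is $\leq 0$, whereas by conservation this net flow equals the positive source outflow, forcing $\Vsink \in S$. (b)~\emph{Exact decrement:} since $\Vsource$ has no incoming edges, a simple path $P$ uses precisely one edge of $\delta^+(\Vsource)$, so decrementing along $P$ reduces the source outflow by exactly $f^k_\req$. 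For termination, observe that the bottleneck edge attaining the minimum is driven to value $0$ and thus disappears from $\VGextFlow$; the number of positive-flow edges strictly decreases each round, so the loop halts after at most $|\VEext|$ iterations, at which point the source outflow equals $0$. Writing $F^{(k)}$ for the source outflow at the start of iteration $k$, consequence (b) gives $F^{(k+1)} = F^{(k)} - f^k_\req$; telescoping from $F^{(1)} = x_\req$ to the terminal value $0$ yields $\sum_{D^k_\req \in \PotEmbeddings} f^k_\req = x_\req$.

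The step I expect to be the main obstacle is the well-definedness claim (a): making rigorous that a positive source outflow, together with the maintained conservation invariant, guarantees a positive-flow path all the way to $\Vsink$ (the layered structure rules out progress being blocked by intra-layer cycles, but the clean justification is the cut argument above). Once the conservation invariant and path existence are in hand, the initial value, the exact per-iteration decrement, and termination are routine bookkeeping that assemble into the telescoping identity.
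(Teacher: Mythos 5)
Your proof is correct and follows essentially the same route as the paper: the paper states this lemma with only a one-sentence justification (initially the source outflow equals $x_\req$ and flow preservation is maintained after each iteration, hence the flow is fully decomposed), which is exactly the skeleton you flesh out. Your additions — the nonnegativity/conservation loop invariant, the cut argument guaranteeing a positive-flow $\Vsource$--$\Vsink$ path, the exact per-iteration decrement, and the termination bound — are precisely the details the paper leaves implicit, and they are all sound.
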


Note that the above lemmas hold independently of whether the linear solutions are computed using IP~\ref{alg:SCEP-IP} or IP~\ref{alg:SCEP-IP_without}. We give two lemmas relating the net profit (for SCEP-P)~and the costs (for SCEP-C)~of the decomposed mappings to the ones computed using the linear relaxations. We state the first without proof as it is a direct corollary of Lemma~\ref{lem:sum-of-f-mu}.

\begin{lemma}
\label{lem:relation-of-net-profit-in-decomposition-and-the-LP-profit}
Let $(\vec{x}, \vec{f}, \vec{l}) \in  \FeasibleLP$ denote a \emph{feasible} solution to the linear relaxation of Integer Program~\ref{alg:SCEP-IP} achieving a net profit of $\hat{B}$ and let $\PotEmbeddings$ denote the respective decompositions of this linear solution for requests $\req \in \requests$ computed by Algorithm~\ref{alg:decompositionAlgorithm}, then the following holds:
\begin{align}
\sum_{\req \in \requests} \sum_{\decomp \in \PotEmbeddings} \prob \cdot \Vprofit = \hat{B}~.
\end{align}
\end{lemma}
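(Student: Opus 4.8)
The plan is to observe that this is essentially bookkeeping: the profit $\Vprofit$ attached to a request is an attribute of $\req$ alone and does not depend on which decomposition $\decomp \in \PotEmbeddings$ we consider, so it can be pulled out of the inner summation, after which Lemma~\ref{lem:sum-of-f-mu} collapses the remaining sum of fractional embedding values into the single embedding variable $x_\req$.

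Concretely, I would first fix a request $\req \in \requests$ and factor the constant profit out of the inner sum, writing $\sum_{\decomp \in \PotEmbeddings} \prob \cdot \Vprofit = \Vprofit \sum_{\decomp \in \PotEmbeddings} \prob$. By the completeness of the decomposition established in Lemma~\ref{lem:sum-of-f-mu}, we have $\sum_{\decomp \in \PotEmbeddings} \prob = x_\req$, so this per-request contribution equals $\Vprofit \cdot x_\req$. Summing over all requests then yields
\begin{align}
\sum_{\req \in \requests} \sum_{\decomp \in \PotEmbeddings} \prob \cdot \Vprofit = \sum_{\req \in \requests} \Vprofit \cdot x_\req,
\end{align}
and the right-hand side is precisely the objective value~(\ref{alg:SCEPObj}) attained by the feasible linear solution $(\vec{x}, \vec{f}, \vec{l})$, which by hypothesis equals the net profit $\hat{B}$. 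This closes the argument.

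There is no genuine obstacle here; the only point worth stating explicitly is that $\Vprofit$ is invariant across the decomposition index $k$, which is what legitimizes pulling it out of the inner sum. Everything else is a direct substitution of Lemma~\ref{lem:sum-of-f-mu} together with the definition of the LP objective, confirming the paper's remark that the statement is an immediate corollary.
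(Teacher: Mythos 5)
Your proof is correct and follows exactly the route the paper intends: the paper states this lemma without proof, remarking only that it is a direct corollary of Lemma~\ref{lem:sum-of-f-mu}, and your argument (factoring the constant $\Vprofit$ out of the inner sum, applying $\sum_{\decomp \in \PotEmbeddings} \prob = x_\req$, and identifying the result with the objective~(\ref{alg:SCEPObj})) is precisely that corollary spelled out.
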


While the above shows that for SCEP-P the decomposition always achieves the same profit as the solution to the linear relaxation of IP~\ref{alg:SCEP-IP}, a similar statement holds for SCEP-C and IP~\ref{alg:SCEP-IP_without}:

\begin{lemma}
\label{lem:relation-of-net-profit-in-decomposition-and-the-LP-cost}
Let $(\vec{x}, \vec{f}, \vec{l}) \in  \FeasibleLP$ denote a \emph{feasible} solution to the linear relaxation of Integer Program~\ref{alg:SCEP-IP_without} having a cost of $\hat{C}$ and let $\PotEmbeddings$ denote the respective decompositions of this linear solution computed by Algorithm~\ref{alg:decompositionAlgorithm} for requests $\req \in \requests$, then the following holds:
\begin{align}
\sum_{\req \in \requests} \sum_{\decomp \in \PotEmbeddings} \prob \cdot c(\mapping)~\leq \hat{C}~.
\end{align}
Additionally, equality holds, if the solution $(\vec{x}, \vec{f}, \vec{l}) \in  \FeasibleLP$, respectively the objective $\hat{C}$, is optimal.
\begin{proof}
We consider a single request $\req \in \requests$ and show that $\sum_{\decomp \in \PotEmbeddings} \prob \cdot c(\mapping)~\leq  \sum \limits_{(x,y) \in   \SR} \Scost(x,y)~\cdot l_{\req,x,y}$ holds.
The Integer Program~\ref{alg:SCEP-IP_without} computes the loads on resources $(x,y) \in  \SR$ in Constraints~\ref{alg:SCEP:nodeLoad} and \ref{alg:SCEP:edgeLoad} based on the flow variables, which then drive the costs inside the objective (cf. Constraint~\ref{alg:SCEPObj_without}). Within the decomposition algorithm, only paths $P \in \VGext$ are selected such that $f_{\req, e} > 0$ holds for all $e \in P$. Hence, the resulting mapping $\map$, obtained by extracting the mapping information from $P$, uses only resources previously accounted for in the Integer Program~\ref{alg:SCEP-IP_without}. Since the computation of costs within the IP agrees with the definition of costs applied to the costs of a single mapping (cf. Equation~\ref{eq:cost-definition}), the reduction of flow variables along $P$ by $\prob$ (and the corresponding reduction of the loads)~reduces the cost component of the objective by exactly $\prob \cdot c(\map)$. Thus, the costs accumulated inside the decompositions $\decomp \in \PotEmbeddings$ are covered by the respective costs of the Integer Program~\ref{alg:SCEP-IP_without}.
This proves the inequality. To prove equality, given an optimal solution, consider the following. According to the above argumentation, all costs accumulated within the resulting decomposition $\PotEmbeddings$ are (at least)~accounted for in the IP~\ref{alg:SCEP-IP_without}. Thus, the only possibility that the costs accounted for in the linear programming solution $(\vec{x}, \vec{f}, \vec{l} )$ are greater than the costs accounted for in the decomposition $\PotEmbeddings$ is that the linear programming solution still contains (cyclic)~flows \emph{after} having fully decomposed the request $\req$. As these (cyclic)~flows can be removed without violating any of the constraints while reducing the costs, the given solution cannot have been optimal.
\end{proof}
\end{lemma}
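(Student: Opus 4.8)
The plan is to establish the bound \emph{request by request} and then sum over $\requests$, since by the objective~(\ref{alg:SCEPObj_without}) the total cost satisfies $\hat{C} = \sum_{\req \in \requests} \sum_{(x,y) \in \SR} \Scost(x,y) \cdot l_{\req,x,y}$. Fixing a request $\req \in \requests$, it therefore suffices to show
\begin{align}
\sum_{\decomp \in \PotEmbeddings} \prob \cdot c(\mapping) \;\leq\; \sum_{(x,y) \in \SR} \Scost(x,y) \cdot l_{\req,x,y}.
\end{align}

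First I would re-express the cost of a single decomposed mapping through the load vector that Algorithm~\ref{alg:decompositionAlgorithm} builds alongside it. By Lines~\ref{alg:decomp:setting-node-load} and~\ref{alg:decomp:setting-edge-load}, the load $\load$ accumulates exactly $\Vcap(i)$ on the resource $(\Vtype(i),\mapV(i))$ for every $i \in \VV$ and $\Vcap(i,j)$ on each substrate edge of $\mapE(i,j)$. Matching these two contributions term by term with the two sums of Equation~\ref{eq:cost-definition} gives $c(\mapping) = \sum_{(x,y) \in \SR} \Scost(x,y) \cdot \load(x,y)$ for every $\decomp \in \PotEmbeddings$. Substituting this identity into the left-hand side and exchanging the order of summation reduces the goal to proving, resource by resource, the \emph{load} inequality $\sum_{\decomp \in \PotEmbeddings} \prob \cdot \load(x,y) \le l_{\req,x,y}$, which can then be weighted by $\Scost(x,y) \ge 0$ and summed.

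The core step is to track how much load each iteration ``consumes''. The LP ties flow to load through Constraints~\ref{alg:SCEP:nodeLoad} and~\ref{alg:SCEP:edgeLoad}: a unit of flow on a vertical edge of $\VEextVertical$ carries $\Vcap(i)$ onto the node resource it represents, and a unit on a horizontal edge of $\VEextHorizontal$ carries $\Vcap(i,j)$ onto the corresponding substrate edge. When the iteration peels off the path $P$ and decreases $f_{\req,e}$ by $\prob = \min_{e \in P} f_{\req,e}$ along $P$, recomputing these two constraints shows that the value $l_{\req,x,y}$ drops by exactly $\prob \cdot \load(x,y)$ --- precisely the per-decomposition term above. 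Because the subtracted amount is the bottleneck value along $P$, the flow variables stay nonnegative throughout, so the loads $l_{\req,x,y}$ never fall below zero; hence the cumulative amount removed over all iterations cannot exceed the initial value $l_{\req,x,y}$. This yields the load inequality and, after weighting and summing, the per-request bound.

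The step I expect to be the genuine obstacle is pinning down the slack and using it for the equality claim. The inequality is strict exactly when flow still circulates in $\VGextFlow$ after the while-loop stops: the loop terminates once the source outflow is exhausted, but since earlier layers of the extended graph are unreachable from later ones, any leftover flow must sit on cycles \emph{within a single layer}. Such cyclic flow still feeds Constraints~\ref{alg:SCEP:nodeLoad} and~\ref{alg:SCEP:edgeLoad}, inflating $l_{\req,x,y}$, yet is never decomposed, which is the source of the gap. For the equality statement I would argue by contradiction from optimality: given residual cyclic flow, subtract its minimum value around the cycle. This preserves flow conservation~(\ref{alg:SCEP:FlowPreservation}) and, as the cycle touches neither $\Vsource$ nor $\Vsink$, leaves every $x_\req$ and hence feasibility untouched~(\ref{alg:SCEP:FlowInduction}), while strictly lowering at least one load and therefore the objective~(\ref{alg:SCEPObj_without}) --- contradicting optimality. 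Thus an optimal solution carries no residual cycles, the load inequality holds with equality for every resource, and the per-request bound is tight.
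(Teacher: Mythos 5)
Your proof is correct and follows essentially the same route as the paper's: both reduce to the per-request comparison $\sum_{\decomp \in \PotEmbeddings} \prob \cdot c(\mapping) \leq \sum_{(x,y) \in \SR} \Scost(x,y) \cdot l_{\req,x,y}$, observe that peeling off the path $P$ decreases the LP load variables by exactly $\prob \cdot \load(x,y)$ (hence the cost component by $\prob \cdot c(\mapping)$) while flows and loads stay nonnegative, and establish equality by noting that any residual flow must be cyclic within a single layer and could be removed from an optimal solution while lowering the cost --- a contradiction. Your write-up is merely a bit more explicit than the paper's on two points: the identity $c(\mapping) = \sum_{(x,y) \in \SR} \Scost(x,y) \cdot \load(x,y)$ linking Lines~\ref{alg:decomp:setting-node-load} and~\ref{alg:decomp:setting-edge-load} to Equation~\ref{eq:cost-definition}, and the fact that cycle removal leaves Constraints~(\ref{alg:SCEP:FlowInduction}) and~(\ref{alg:SCEP:FlowPreservation}) intact because the cycle touches neither $\Vsource$ nor $\Vsink$.
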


By the above argumentation, it is easy to check that the (fractionally)~accounted resources of the returned decompositions $\PotEmbeddings$ are upper bounded by the resources allocations of the relaxations of Integer Programs~\ref{alg:SCEP-IP} and \ref{alg:SCEP-IP_without}, and hence are upper bounded by the respective capacities (cf. Constraint~\ref{alg:SCEP:capacities}). 

\begin{lemma}
\label{lem:allocations}
The cumulative load induced by the fractional mappings obtained by Algorithm~\ref{alg:decompositionAlgorithm} is less than the cumulative computed in the respective integer program and hence less than the offered capacity, i.e. for all resources $(x,y) \in  \SR$ holds
\begin{align}
\sum_{\req \in \requests} \sum_{D^k_{\req} \in \mathcal{D}_\req} f^k_\req \cdot l^k_{\req}(x,y)~\leq \sum_{\req \in \requests} l_{\req, x,y} \leq  \Scap(x,y)\,,
\label{eq:lem:allocations}
\end{align}
where $l_{\req,x,y}$ refers to the respective variables of the respective Integer Program.
\end{lemma}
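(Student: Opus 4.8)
The plan is to prove the two inequalities of~(\ref{eq:lem:allocations}) separately and to reduce everything to a single request. The right inequality $\sum_{\req \in \requests} l_{\req,x,y} \le \Scap(x,y)$ is nothing but the capacity Constraint~(\ref{alg:SCEP:capacities}) of the respective linear relaxation, which $(\vec{x},\vec{f},\vec{l})$ satisfies by assumption; hence it requires no further work. For the left inequality it suffices, after summing over $\req \in \requests$, to establish for every fixed request $\req$ and every resource $(x,y) \in \SR$ that $\sum_{k} f^k_\req \cdot l^k_\req(x,y) \le l_{\req,x,y}$.

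The technical core is the following flow-decomposition invariant of Algorithm~\ref{alg:decompositionAlgorithm}: for every edge $e \in \VEext$ of the extended graph,
\[
\sum_{k \,:\, e \in P_k} f^k_\req \;\le\; f_{\req,e},
\]
where $P_k$ denotes the path chosen in the $k$-th iteration and $f_{\req,e}$ is the flow value of the input LP solution. I would establish this through the loop invariant that the algorithm never drives a flow variable negative: in iteration $k$ the value $f^k_\req = \min_{e \in P_k} f_{\req,e}$ is subtracted uniformly along $P_k$, so the residual flow on each edge stays nonnegative and decreases by exactly $f^k_\req$ whenever $e \in P_k$. Telescoping these decrements over all iterations shows that the total weight routed through $e$ by the selected paths never exceeds its initial value $f_{\req,e}$; the slack is precisely the residual (intra-layer cyclic)~flow that may survive once the source has been exhausted, which is consistent with Lemma~\ref{lem:sum-of-f-mu} accounting only for the $\Vsource$-$\Vsink$ flow.

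Given this invariant, I would treat node and edge resources in parallel. For a node resource $(\type,u) \in \SRV$, the decomposition adds $\Vcap(i)$ to $l^k_\req(\type,u)$ exactly for those virtual functions $i$ whose unique inter-layer edge $(e,i) \in \VEextVertical$ is contained in $P_k$; hence $l^k_\req(\type,u) = \sum_{(e,i) \in \VEextVertical} \Vcap(i)\,[\,e \in P_k\,]$. Substituting, swapping the order of summation, and applying the invariant together with the load Constraint~(\ref{alg:SCEP:nodeLoad}) yields
\[
\sum_{k} f^k_\req \, l^k_\req(\type,u) = \sum_{(e,i) \in \VEextVertical} \Vcap(i) \sum_{k\,:\,e\in P_k} f^k_\req \le \sum_{(e,i) \in \VEextVertical} \Vcap(i)\, f_{\req,e} = l_{\req,\type,u}.
\]
The same computation, now using the horizontal edges $\VEextHorizontal$ (Line~\ref{alg:decomp:setting-edge-load}) and the edge-load Constraint~(\ref{alg:SCEP:edgeLoad}), gives $\sum_k f^k_\req \, l^k_\req(u,v) \le l_{\req,u,v}$ for every edge resource. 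Summing these per-request bounds over $\req \in \requests$ and chaining with Constraint~(\ref{alg:SCEP:capacities}) proves~(\ref{eq:lem:allocations}).

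I expect the per-edge flow-decomposition invariant to be the main obstacle: one must argue carefully that the accumulated per-path weights never overshoot the LP flow despite the possible presence of residual cyclic flow inside a layer (which only helps, producing ``$\le$'' rather than ``$=$''), and that each decomposition's load function corresponds, edge for edge, to the inter- and intra-layer edges traversed by its path. A minor point to dispatch is the load booked for the start and end nodes in Line~\ref{alg:decomp:setting-node-load}: these map to abstract location types, whose resources carry infinite capacity, so the corresponding terms are bounded trivially and do not affect the finite-capacity resources handled above.
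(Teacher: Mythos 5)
Your proof is correct and follows essentially the same route as the paper: the paper states this lemma without a standalone proof, appealing to the argumentation of Lemma~\ref{lem:relation-of-net-profit-in-decomposition-and-the-LP-cost} that each iteration reduces the LP flow variables (and hence the load variables) along the chosen path by exactly $f^k_\req$, which is precisely the telescoping invariant you make explicit. Your per-edge invariant $\sum_{k : e \in P_k} f^k_\req \le f_{\req,e}$, the summation swap against Constraints~(\ref{alg:SCEP:nodeLoad}) and (\ref{alg:SCEP:edgeLoad}), and the dispatching of the start/end-node loads onto infinite-capacity location resources are all sound and simply formalize what the paper leaves as ``easy to check.''
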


\section{Approximating SCEP-P}\label{sec:randround}

This section presents our approximation algorithm for SCEP-P which is based on the randomized rounding of the decomposed fractional solutions of Integer Program~\ref{alg:SCEP-IP}. 
In particular, our algorithm 
provides a tri-criteria approximation with high probability,
that is, 
it computes approximate solutions with performance guarantees for the 
profit and for the maximal violation of capacities of both network functions and
and edges with an arbitrarily high probability. 
We discuss the algorithm in Section~\ref{sec:synopsis-admission-control} and then derive probabilistic bounds for the profit (see Section~\ref{sec:performance-guarantee-obj-admission-control})~and the violation of capacities (see Section~\ref{sec:performance-guarantee-cap-violation-admission-control}). In Section~\ref{sec:main-results-admission-control} the results are condensed using a simple union-bound argument to prove our main theorem, namely that the presented algorithm is indeed a tri-criteria approximation for SCEP-P.

\begin{figure}


\begingroup
\removelatexerror

\begin{algorithm*}[H]

\let\oldnl\nl
\newcommand{\nonl}{\renewcommand{\nl}{\let\nl\oldnl}}

\SetKwInOut{Input}{Input}\SetKwInOut{Output}{Output}
\SetKwFunction{LP}{LP}
\SetKwFunction{LP}{LP}
\SetKwFunction{BFS}{BFS}{}{}

\newcommand{\SET}{\textbf{set~}}
\newcommand{\DEFINE}{\textbf{define~}}
\newcommand{\AND}{\textbf{and~}}
\newcommand{\ST}{\textbf{s.t.~}}
\newcommand{\LET}{\textbf{let~}}
\newcommand{\WITH}{\textbf{with~}}
\newcommand{\COMPUTE}{\textbf{compute~}}
\newcommand{\CHOOSE}{\textbf{choose~}}
\newcommand{\SAMPLE}{\textbf{sample~}}
\newcommand{\SELECT}{\textbf{select~}}
\newcommand{\DECOMPOSE}{\textbf{decompose~}}
\newcommand{\FORALL}{\textbf{for all~}}
\newcommand{\OBTAIN}{\textbf{obtain~}}
\newcommand{\WITHPROBABILITY}{\textbf{with probability~}}

\Input{~Substrate~$\SG=(\SV,\SE)$, set of requests~$\requests$, \\
\,~approximation factors~$\alpha, \beta,\gamma \geq 0$,\\
\,~maximal number of rounding tries~$Q$}
\Output{~Approximate solution for SCEP-P}

\COMPUTE solution~$(\vec{x}, \vec{f}, \vec{l})$ of Linear Program~\ref{alg:SCEP-IP}\\
\COMPUTE $\{\PotEmbeddings  | \req \in \requests \}$  using Algorithm~\ref{alg:decompositionAlgorithm}\\

\SET $q \gets 1$\\
\While{$q \leq Q$}{
	\SET $R' \gets \emptyset$ \AND $\hat{m}_{\req} \gets \emptyset$  \FORALL $\req \in \requests$ \label{alg:approx-admission:start-while}\\
	\SET $B \gets 0$\\
	\SET $L[x,y] = 0$ \FORALL $(x,y) \in  \SR$\\
	\For{$\req \in R$}
	{
		\CHOOSE $p \in [0,1]$ uniformly at random \label{alg:approx-admission:choose-p}\\
		\If{$p \leq \sum_{D^k_{\req} \in \PotEmbeddings} \prob$ }{
			\SET $\hat{k} \gets \min \{ k \in \{1, \dots, |\PotEmbeddings|\}| \sum^k_{l=1} \prob[\req][l] \geq p\}$\\
			\SET $\hat{m}_{\req} \gets \mapping[\req][\hat{k}]$ \AND $\hat{l}_{\req} \gets \load[\req][\hat{k}]$\\
			\SET $R' \gets R' \cup \{\req\}$\\
			\SET $B \gets B + \Vprofit $\\
			\For{$(x,y) \in  \SR$}{
				\SET $L[x,y] \gets L[x,y] + \hat{l}_{\req}(x,y)$ \label{alg:approx-admission:end-while}\\
			}
			
		}
	

	}
		
	\If{\scalebox{0.89}{$\left(
		\begin{array}{l}
			B \geq~ \alpha \cdot \optLP  \\
			\textnormal{\AND}L[\type,u] \leq (1+\beta)~\cdot \Scap(\type,u)~~\textnormal{\textbf{for~}} (\type,u) \in  \SRV \\
			\textnormal{\AND} 	L[u,v] \leq (1+\gamma)~\cdot \Scap(u,v)~~\textnormal{\textbf{for~}}(u,v) \in  \SE
			\end{array}
		\right)$}}{
		\KwRet{$(R',\{\hat{m}_\req | r \in R'\})$}\\
	}
	$q \gets q +1$\\
}
\KwRet{\NULL}
\caption{Approximation Algorithm for SCEP-P}
\label{alg:approxAdmissionControl}
\end{algorithm*}

\endgroup
\end{figure}

\subsection{Synopsis of the Approximation Algorithm~\ref{alg:approxAdmissionControl}}
\label{sec:synopsis-admission-control}
The approximation scheme for SCEP-P is given as Algorithm~\ref{alg:approxAdmissionControl}. Besides the problem specification, the approximation algorithm is handed four additional parameters: the parameters $\alpha$, $\beta$, and $\gamma$ will bound the quality of the found solution with respect to the optimal solution in terms of profit achieved ($0 \leq \alpha \leq 1$), the maximal violation of network function ($0 \leq \beta$)~and edge capacities ($0 \leq \gamma$). As Algorithm~\ref{alg:approxAdmissionControl} is randomized and as we will only show that the algorithm has a constant success probability, the parameter $Q$ controls the number of rounding tries to obtain a solution within the approximation factors $\alpha$, $\beta$, and $\gamma$.

Algorithm~\ref{alg:approxAdmissionControl} first 
uses the relaxation of Integer Program~\ref{alg:SCEP-IP} to compute a 
fractional solution~$(\vec{x}, \vec{f}, \vec{l})$. This solution is then 
decomposed according to Algorithm~\ref{alg:decompositionAlgorithm}, obtaining decompositions~$\PotEmbeddings$ for requests $r \in \requests$. 
The while-loop~(see Lines~4-19)~attempts to construct 
a solution~$(R', \{m_\req | r \in R'\})$ for SCEP-P (cf. Definition~\ref{def:scep-with-admission-control})~according to the following scheme.
Essentially, for every request~$\req \in \requests$ a dice with $|\PotEmbeddings|+1$ many faces is cast, such that $D^k_{\req} \in \PotEmbeddings$ is chosen with probability $\prob$ and none of the embeddings is selected with probability $1-\sum_{D^k_{\req} \in \PotEmbeddings} \prob$. Within the algorithm, casting the dice is done by uniformly selecting a value $p$
 in the range of $[0,1]$ such that the $\hat{k}$-th
decomposition~$D^{\hat{k}}_{\req} = (\prob[\req][\hat{k}], \mapping[\req][{\hat{k}}],  \load[\req][{\hat{k}}]) \in  \PotEmbeddings$ is chosen iff. $\sum^{\hat{k}}_{l=1} \prob[\req][l] \leq p < \sum^{{\hat{k}}+1}_{l=1} \prob[\req][l] $ holds. In case that a mapping was selected, the corresponding mapping and load informations are stored in the (globally visible)~variables $\hat{m}_{\req}$ and $\hat{l}_{\req}$. In Lines~13-16 the request $\req \in \requests$ is added to the set of embedded requests $R'$, the currently achieved objective ($B$)~and the cumulative loads on the physical network functions and edges are adapted accordingly. Note that the load information $\hat{l}_{\req}: \SR \to \preals$ of the decomposition stores the total allocations for each network resource of mapping $\hat{m}_{\req}$.

After having iterated over all requests $r \in \requests$, the obtained solution is returned only if 
the constructed solution achieves at least an $\alpha$-fraction of the objective of the linear program and 
violates node and edge capacities by factors less than~$1+\beta$ and~$1+\gamma$ respectively 
(see Lines~17 and 18). If after~$Q$ iterations no solution within the respective approximation bounds was found, the algorithm returns~$\NULL$. 

In the upcoming sections, the probabilities for finding solutions subject to the parameters $\alpha$, $\beta$, $\gamma$, and $Q$ will be analyzed. Concretely, the analysis of the performance with respect to the objective is contained in Section~\ref{sec:performance-guarantee-obj-admission-control}, while Section~\ref{sec:performance-guarantee-cap-violation-admission-control} proves bounds for capacity violations and  Section~\ref{sec:main-results-admission-control} consolidates the results.

\subsection{Probabilistic Guarantee for the Profit} 
\label{sec:performance-guarantee-obj-admission-control}

To analyze the performance of Algorithm~\ref{alg:approxAdmissionControl} with respect to the achieved profit, we recast the algorithm in terms of random variables. For bounding the profit achieved by the algorithm we introduce the \emph{discrete} random variable $\randVarY \in \{0,\Vprofit\}$, which models the profit achieved by (potentially)~embedding request $\req \in \requests$.
According to Algorithm~\ref{alg:approxAdmissionControl}, request $\req \in \requests$ is embedded as long as the random variable $p$ in Line~\ref{alg:approx-admission:choose-p} was less than $\sum_{\decomp \in \PotEmbeddings} \prob$. Hence, we have that $\ProbVarY[0] = 1 - \sum_{\decomp \in \PotEmbeddings} \prob$ holds, i.e. that the probability to achieve no profit for request $\req \in \requests$ is $1-\sum_{\decomp \in \PotEmbeddings} \prob$. On the other hand, the probability to embed request $\req \in \requests$ equals $\sum_{\decomp \in \PotEmbeddings} \prob$ as in this case some decomposition will be chosen. Hence, we obtain $\ProbVarY[\Vprofit] = \sum_{\decomp \in \PotEmbeddings} \prob$. Given these random variables, we can model the achieved net profit of Algorithm~\ref{alg:decompositionAlgorithm} as: 
\begin{align}
\label{eq:modeling-B-as-random-variables}
	B = \sum_{\req \in \requests} \randVarY ~.
\end{align}

The expectation of the random variable $B$ computes to $\sum_{\req \in \requests} \sum_{\decomp \in \PotEmbeddings} \prob \cdot \Vprofit$ and by Lemma~\ref{lem:relation-of-net-profit-in-decomposition-and-the-LP-profit} we obtain the following corollary:

\begin{corollary}
\label{cor:RelatingObjectiveLPAndExpectationOfRounding}
Given an \emph{optimal} solution~$(\vec{x}, \vec{f}, \vec{l}) \in  \FeasibleLP$ for the Linear Program~\ref{alg:SCEP-IP} and denoting the objective value of this solution as~$\optLP$, we have:
\[
	\optLP = \sum_{\req \in \requests}\sum_{\decomp \in \PotEmbeddings} \prob \cdot \Vprofit = \Exp(B)~~,
\]
where~$\PotEmbeddings$ denotes the decomposition of~$(\vec{x}, \vec{f}, \vec{l})$ obtained by Algorithm~\ref{alg:decompositionAlgorithm} for requests $\req \in \requests$.
\end{corollary}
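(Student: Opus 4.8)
The plan is to establish the two equalities in turn, since the corollary merely chains together the decomposition bookkeeping of Lemma~\ref{lem:relation-of-net-profit-in-decomposition-and-the-LP-profit} with the first-moment computation already set up via the random variables $\randVarY$. Neither ingredient requires new machinery, so the proof is short.

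For the first equality, I would observe that the optimal solution $(\vec{x}, \vec{f}, \vec{l})$ is in particular a \emph{feasible} solution of the linear relaxation of Integer Program~\ref{alg:SCEP-IP}, so the hypothesis of Lemma~\ref{lem:relation-of-net-profit-in-decomposition-and-the-LP-profit} is met with $\hat{B} = \optLP$. The lemma then applies verbatim and yields $\optLP = \sum_{\req \in \requests} \sum_{\decomp \in \PotEmbeddings} \prob \cdot \Vprofit$, where $\PotEmbeddings$ are precisely the decompositions returned by Algorithm~\ref{alg:decompositionAlgorithm} on this solution. Note that optimality is not genuinely exploited here beyond fixing the objective value to $\optLP$; any feasible LP solution would satisfy the same identity with its own net profit in place of $\optLP$.

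For the second equality, I would substitute the definition $B = \sum_{\req \in \requests} \randVarY$ and apply linearity of expectation, which holds regardless of any dependence among the $\randVarY$. Using the per-request distribution recorded immediately before the corollary, namely $\ProbVarY[\Vprofit] = \sum_{\decomp \in \PotEmbeddings} \prob$ and $\ProbVarY[0] = 1 - \sum_{\decomp \in \PotEmbeddings} \prob$, the single-request expectation is $\Exp(\randVarY) = \Vprofit \cdot \sum_{\decomp \in \PotEmbeddings} \prob$. Summing over $\req \in \requests$ gives $\Exp(B) = \sum_{\req \in \requests} \sum_{\decomp \in \PotEmbeddings} \prob \cdot \Vprofit$, which is exactly the middle term, completing the chain.

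I expect no real obstacle: the result is a direct consequence of an already-proven lemma together with a routine expectation calculation. The only two points warranting a moment's care are (i) confirming that the optimal LP solution qualifies as the ``feasible solution'' required by Lemma~\ref{lem:relation-of-net-profit-in-decomposition-and-the-LP-profit}, and (ii) making explicit that only linearity of expectation is invoked, so that no independence among the rounding choices for distinct requests is assumed at this stage.
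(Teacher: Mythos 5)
Your proposal is correct and matches the paper's own reasoning: the paper likewise obtains $\Exp(B)=\sum_{\req \in \requests}\sum_{\decomp \in \PotEmbeddings} \prob \cdot \Vprofit$ directly from the distribution of the $\randVarY$ via linearity of expectation, and then identifies this quantity with $\optLP$ by invoking Lemma~\ref{lem:relation-of-net-profit-in-decomposition-and-the-LP-profit} (with the optimal solution playing the role of the feasible solution, so $\hat{B}=\optLP$). Your two cautionary remarks---that optimality is only used to name the objective value, and that no independence among requests is needed---are accurate and consistent with the paper.
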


To bound the probability of achieving a fraction of the profit of the optimal solution, we will make use of the following Chernoff-Bound over \emph{continuous} random variables.

\begin{theorem}[Chernoff-Bound \cite{dubhashi2009concentration}]
\label{thm:chernoff}
Let $X = \sum_{i = 1}^n X_i$ be a sum of $n$ independent random variables $X_i \in [0,1]$, $1 \leq i \leq n$. Then the following holds for any $0 < \varepsilon < 1$:
\begin{align}
\mathbb{P} \big( X \leq (1-\varepsilon)\cdot \mathbb{E}(X)~\big)~\leq exp(-\varepsilon^2\cdot \mathbb{E}(X)/2)
\end{align}
\end{theorem}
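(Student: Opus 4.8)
The plan is to prove the bound by the standard exponential-moment (Bernstein--Chernoff) method, applied to the \emph{lower} tail. First I would fix a parameter $t > 0$ and rewrite the event $\{X \leq (1-\varepsilon)\mathbb{E}(X)\}$ as $\{e^{-tX} \geq e^{-t(1-\varepsilon)\mathbb{E}(X)}\}$, using that $x \mapsto e^{-tx}$ is decreasing. Markov's inequality then yields
\[
\mathbb{P}\big(X \leq (1-\varepsilon)\mathbb{E}(X)\big) \leq e^{t(1-\varepsilon)\mathbb{E}(X)}\cdot \mathbb{E}\big(e^{-tX}\big).
\]
Because the $X_i$ are independent, the moment generating function factorizes, $\mathbb{E}(e^{-tX}) = \prod_{i=1}^n \mathbb{E}(e^{-tX_i})$, which reduces the problem to bounding each single-variable factor.

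For the per-variable bound I would exploit $X_i \in [0,1]$ together with the convexity of $x \mapsto e^{-tx}$: on the segment $[0,1]$ the function lies below its chord, so $e^{-tx} \leq 1 - (1-e^{-t})\,x$ for all $x \in [0,1]$. Taking expectations and writing $\mu_i = \mathbb{E}(X_i)$ gives $\mathbb{E}(e^{-tX_i}) \leq 1 - (1-e^{-t})\mu_i \leq \exp\big(-(1-e^{-t})\mu_i\big)$, where the last step uses $1+y \leq e^y$. Multiplying over $i$ and setting $\mu = \mathbb{E}(X) = \sum_i \mu_i$ collapses everything to
\[
\mathbb{P}\big(X \leq (1-\varepsilon)\mu\big) \leq \exp\Big(\mu\big(t(1-\varepsilon) - (1-e^{-t})\big)\Big).
\]

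Next I would optimize the free parameter $t$. Differentiating the exponent in $t$ gives the stationarity condition $e^{-t} = 1-\varepsilon$, i.e. $t = \ln\frac{1}{1-\varepsilon} > 0$ (valid precisely because $0 < \varepsilon < 1$), and substituting back yields the classical form $\mathbb{P}(X \leq (1-\varepsilon)\mu) \leq \exp\big(-\mu\,g(\varepsilon)\big)$ with $g(\varepsilon) = \varepsilon + (1-\varepsilon)\ln(1-\varepsilon)$. The remaining --- and genuinely the only delicate --- step is the scalar inequality $g(\varepsilon) \geq \varepsilon^2/2$ for $0 < \varepsilon < 1$, which upgrades the sharp but unwieldy bound to the stated clean exponent. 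I expect this to be the main obstacle, and I would handle it by setting $h(\varepsilon) = g(\varepsilon) - \varepsilon^2/2$ and differentiating twice: one computes $h'(\varepsilon) = -\ln(1-\varepsilon) - \varepsilon$ and $h''(\varepsilon) = \varepsilon/(1-\varepsilon) \geq 0$ on $[0,1)$. Since $h(0) = h'(0) = 0$ and $h'' \geq 0$, it follows that $h'$ and then $h$ are nonnegative, giving $g(\varepsilon) \geq \varepsilon^2/2$ and hence the theorem.
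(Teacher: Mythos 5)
Your proof is correct. Note, however, that the paper does not prove this statement at all: it is imported as a known result, cited to the Dubhashi--Panconesi concentration-of-measure text, so there is no paper-internal argument to compare yours against. What you have written is the standard exponential-moment derivation, and every step checks out: the Markov inequality applied to $e^{-tX}$, the factorization of the moment generating function by independence, the chord bound $e^{-tx} \leq 1-(1-e^{-t})x$ on $[0,1]$ by convexity, the relaxation $1+y \leq e^y$, the optimal choice $e^{-t}=1-\varepsilon$ (valid for $0<\varepsilon<1$), and the scalar inequality $\varepsilon+(1-\varepsilon)\ln(1-\varepsilon) \geq \varepsilon^2/2$ via the second-derivative argument. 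One point worth emphasizing as a strength: the chord step is exactly what extends the bound from Bernoulli variables to arbitrary $[0,1]$-valued variables, and that generality is what the paper actually needs, since the rescaled profit variables $Y_r/b_{\max}$ to which Theorem~\ref{thm:chernoff} is later applied (in Theorem~\ref{thm:probability-of-not-succeeding-in-objective}) are not indicator variables. So your proof not only fills in the omitted argument but does so at the level of generality the paper's application requires.
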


Note that the above theorem only considers sums of random variables which are contained within the interval $[0,1]$. In the following, we lay the foundation for appropriately rescaling the random variables $\randVarY$ such that these are contained in the interval $[0,1]$, while still allowing to bound the expected profit. To this end, we first show that we may assume that all requests can be \emph{fully} fractionally embedded \emph{in the absence of other requests} as otherwise the respective requests cannot be embedded given additional requests. Then we will effectively rescale the random variables $\randVarY$ by dividing by the maximal net profit that can be obtained by embedding a \emph{single} request in the \emph{absence of other requests}.

\begin{lemma}
\label{lem:all-requests-can-be-embedded}
We may assume without loss of generality that all requests can be fully fractionally embedded in the absence of other requests.
\begin{proof}
To show the claim we argue that we can extend the approximation scheme by a simple preprocessing step that filters out any request that cannot be fractionally embedded \emph{alone}, i.e. having all the substrate's available resources at its availability.

Practically, we can compute for each request $\req \in \requests$ the linear relaxation of Integer Program~\ref{alg:SCEP-IP} in the absence of any other requests, i.e. for $\requests' = \{\req\}$. As the profit $\Vprofit$ of request $\req$ is positive, the variable $x_{\req}$ is maximized effectively. If for the optimal (linear)~solution $x_{\req} < 1$ holds, then the substrate capacities are not sufficient to (fully)~fractionally embed the request. Hence, in this case the request $\req$ cannot be embedded (fully)~under any circumstances in the original SCEP-P instance as any valid and feasible mapping $\map$ \emph{for the original problem} would induce a feasible solution with $x_{\req} = 1$ when request $\req$ is embedded alone. Note that this preprocessing step can be implemented in polynomial time, as the relaxation of IP~\ref{alg:SCEP-IP_without} can be computed in polynomial time.
\end{proof}
\end{lemma}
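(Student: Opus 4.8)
The plan is to establish the claim constructively, by exhibiting a polynomial-time preprocessing filter that discards exactly those requests which can never be embedded, and then arguing that removing them leaves the SCEP-P optimum unchanged. First I would, for each request $\req \in \requests$ \emph{in isolation}, compute the linear relaxation of Integer Program~\ref{alg:SCEP-IP} on the singleton instance $\requestsP = \{\req\}$. Since $\Vprofit \geq 0$ and the objective~(\ref{alg:SCEPObj}) rewards the embedding variable, the relaxation drives $x_\req$ to the largest value admissible under Constraints~(\ref{alg:SCEP:FlowInduction})--(\ref{alg:SCEP:capacities}); in particular the optimum attains $x_\req = 1$ if and only if $\req$ admits a \emph{full} fractional embedding when all substrate resources are available to it alone.

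The core step is then a restriction (monotonicity) argument. I would show that if the singleton relaxation returns $x_\req < 1$, then $\req$ cannot be fully embedded in \emph{any} feasible solution of the complete SCEP-P instance either. The key observation is that the capacity constraints~(\ref{alg:SCEP:capacities}) have the form $\sum_{\req \in \requests} l_{\req,x,y} \leq \Scap(x,y)$ with nonnegative loads, so projecting any feasible solution onto the variables of the single request $\req$ (and zeroing all others) can only relax these constraints and hence stays feasible for the singleton instance. Consequently, any valid and feasible mapping $\map$ of $\req$ arising in the full instance would certify $x_\req = 1$ in the singleton relaxation, contradicting $x_\req < 1$.

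With this equivalence in hand, I would let $\requests^\star \subseteq \requests$ collect the requests surviving the filter (those whose singleton optimum reaches $x_\req = 1$) and argue that the reduction preserves optimality: every discarded request contributes $0$ to the net profit of each feasible solution, so the optimal value over $\requests$ equals that over $\requests^\star$, and any (approximate) solution on $\requests^\star$ lifts verbatim to a solution on $\requests$. By construction every request in $\requests^\star$ is fully fractionally embeddable in isolation, which is exactly the assumption asserted. Finally, I would note the filter runs in polynomial time, as it solves $|\requests|$ linear relaxations, each of polynomial size and solvable in polynomial time, so that the preprocessing does not affect the overall complexity of the scheme.

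I expect the only genuinely delicate point to be the restriction argument of the second paragraph: making precise that feasibility with respect to~(\ref{alg:SCEP:capacities}) is monotone under deletion of requests, and that ``full fractional embeddability in isolation'' coincides exactly with the singleton relaxation reaching $x_\req = 1$. The remaining ingredients---nonnegativity of the profits forcing $x_\req$ upward, the profit-preservation of the reduction, and the polynomial runtime---are routine once this correspondence is pinned down.
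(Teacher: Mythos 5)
Your proposal is correct and follows essentially the same route as the paper's proof: a per-request singleton LP relaxation used as a polynomial-time filter, combined with the observation that any feasible mapping of a request in the full instance projects to a feasible singleton solution with $x_{\req}=1$, so filtered requests can never be embedded. Your explicit monotonicity argument for Constraint~(\ref{alg:SCEP:capacities}) and the profit-preservation of the reduction merely spell out details the paper leaves implicit.
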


According to the above lemma we can assume that all requests can be fully fractionally embedded.
Let $\VprofitMax = \max_{\req \in \requests} \Vprofit$ denote the maximal profit of any single request. The following lemma shows that any fractional solution to the IP~\ref{alg:SCEP-IP} will achieve at least a profit of $\VprofitMax$.

\begin{lemma}
\label{lem:expected-profit-larger-max}
$\optLP \geq \VprofitMax$ holds, where $\optLP$ denotes the optimal profit of the relaxation of IP~\ref{alg:SCEP-IP}.
\begin{proof}
Let $r' \in \requests$ denote any of the requests having the maximal profit $\VprofitMax$ and let $(\vec{x}_{r'}, \vec{f}_{r'}, \vec{l}_{r'})$ denote the linear solution obtained by embedding the request $r'$ according to IP~\ref{alg:SCEP-IP} in the absence of other requests. Considering the set of original requests we now construct a linear solution $(\vec{x}_{\requests}, \vec{f}_{\requests}, \vec{l}_{\requests})$ over the variables corresponding to the original set of requests $\requests$. Concretely, for $(\vec{x}_{\requests}, \vec{f}_{\requests}, \vec{l}_{\requests})$ we set all variables related to request $r'$  according to the solution $(\vec{x}_{r'}, \vec{f}_{r'}, \vec{l}_{r'})$ and set all other variables to $0$. This is a feasible solution, i.e. $(\vec{x}_{\requests}, \vec{f}_{\requests}, \vec{l}_{\requests}) \in  \FeasibleLP$ holds, which achieves the same profit as the solution $(\vec{x}_{r'}, \vec{f}_{r'}, \vec{l}_{r'})$, namely $\VprofitMax$ by Lemma~\ref{lem:all-requests-can-be-embedded}. Hence, the profit of the optimal linear solution is lower bounded by this particular solution's objective and the claim follows.
\end{proof}
\end{lemma}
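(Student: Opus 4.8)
The plan is to exhibit a single feasible point of the linear relaxation of Integer Program~\ref{alg:SCEP-IP} whose objective value equals exactly $\VprofitMax$. Since $\optLP$ is by definition the maximum of the objective~(\ref{alg:SCEPObj}) over all of $\FeasibleLP$, producing one such witness immediately yields $\optLP \geq \VprofitMax$, which is the claim.

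First I would fix a request $r' \in \requests$ attaining the maximal profit, i.e.\ $\Vprofit[r'] = \VprofitMax$. By Lemma~\ref{lem:all-requests-can-be-embedded} I may assume that $r'$ can be fully fractionally embedded in the absence of all other requests; concretely, the relaxation restricted to $\requests' = \{r'\}$ admits a feasible solution $(\vec{x}_{r'}, \vec{f}_{r'}, \vec{l}_{r'})$ with $x_{r'} = 1$. This is precisely the place where the preprocessing lemma is indispensable: without it one could only guarantee some value $x_{r'} \leq 1$, and the witness would deliver merely $x_{r'}\cdot \VprofitMax$ rather than the full $\VprofitMax$.

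Next I would lift this isolated solution into the variable space of the full instance over $\requests$: assign every variable associated with $r'$ its value from $(\vec{x}_{r'}, \vec{f}_{r'}, \vec{l}_{r'})$ and set all flow, embedding, and load variables of every other request to $0$. The only point requiring care is feasibility, i.e.\ that this lifted vector lies in $\FeasibleLP$. The per-request constraints~(\ref{alg:SCEP:FlowInduction})--(\ref{alg:SCEP:edgeLoad}) hold for $r'$ because they are copied verbatim from the isolated solution, and they hold trivially for every other request, since a zero flow satisfies flow induction and preservation and induces zero load on all resources. The one constraint coupling distinct requests is the capacity constraint~(\ref{alg:SCEP:capacities}): its left-hand side $\sum_{\req \in \requests} l_{\req,x,y}$ collapses to $l_{r',x,y}$ once the vanishing summands are dropped, and this already respects $\Scap(x,y)$ by feasibility of the isolated solution.

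Finally, since $r'$ is the only request with $x_{\req}=1$ in the lifted solution, its objective value~(\ref{alg:SCEPObj}) equals $\Vprofit[r'] = \VprofitMax$, and hence $\optLP \geq \VprofitMax$. I expect no genuine obstacle here; the entire argument hinges on the observation that Integer Program~\ref{alg:SCEP-IP} couples requests \emph{only} through the shared capacity constraint, so that zeroing out all requests other than $r'$ cannot violate any constraint while exactly preserving $r'$'s contribution to the profit.
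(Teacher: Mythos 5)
Your proposal is correct and takes essentially the same route as the paper's own proof: pick a request $r'$ attaining $\VprofitMax$, use Lemma~\ref{lem:all-requests-can-be-embedded} to embed it fully fractionally in isolation, lift that solution into the full variable space by zeroing out all other requests, and observe that the resulting point lies in $\FeasibleLP$ with objective value $\VprofitMax$. Your explicit verification that the capacity constraint~(\ref{alg:SCEP:capacities}) is the only inter-request coupling, and your remark on why $x_{r'}=1$ (rather than $x_{r'}\leq 1$) is needed, merely spell out details the paper leaves implicit.
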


The above lemma is instrumental in proving the following probabilistic bound on the profit achieved by Algorithm~\ref{alg:approxAdmissionControl}:

\begin{theorem}
\label{thm:probability-of-not-succeeding-in-objective}
The probability of achieving less than $1/3$ of the profit of an optimal solution is upper bounded by $exp(-2/9)$.
\begin{proof}

Instead of considering $B = \sum_{\req \in \requests} \randVarY$, we consider the sum $B' = \sum_{\req \in \requests} \randVarY'$ over the rescaled variables $\randVarY' = \randVarY / \VprofitMax \in [0,1]$. Obviously $\randVarY' \in [0,1]$ holds.
Choosing $\varepsilon = 2/3$ and applying Theorem~\ref{thm:chernoff} on $B'$ we obtain:
\begin{align}
\mathbb{P} \big( B' \leq (1/3)\cdot \mathbb{E}(B')~\big)~\leq exp(-2\cdot \mathbb{E}(B')/9)\,.
\end{align}
By Lemma~\ref{lem:expected-profit-larger-max} and as $B' \cdot \VprofitMax = B$ holds, we obtain that $\Exp(B')~= \Exp(B)~/ \VprofitMax \geq 1$ holds. Plugging in the \emph{minimal} value of $\Exp(B')$, i.e. $1$, into the equation, we maximize the term $exp(-2\cdot \mathbb{E}(B')/9)$ and hence get \begin{align}
\mathbb{P} \big( B' \leq (1/3)\cdot \mathbb{E}(B')~\big)~\leq exp(-2/9)\,.
\end{align}
By using $B' \cdot \VprofitMax = B$, we obtain
\begin{align}
\mathbb{P} \big( B \leq (1/3)\cdot \mathbb{E}(B)~\big)~\leq exp(-2/9)\,.
\label{eq:probability-of-one-thirds-of-the-profit}
\end{align}
Denoting the optimal profit of Integer Program~\ref{alg:SCEP-IP} by $\optIP$ and the optimal profit of the relaxation by $\optLP$, we note that $\Exp(B)~= \optLP$ holds by Lemma~\ref{lem:relation-of-net-profit-in-decomposition-and-the-LP-profit}. Furthermore, $\optIP \leq \optLP$ follows from Fact~\ref{obs:IP-solutions-are-a-subset-of-LP-solutions}. Thus, 
\begin{align}
\frac{\optIP}{3} \leq 
\frac{\optLP}{3} = 
\frac{\Exp(B)}{3 }
\end{align}
holds, completing the proof together with Equation~\ref{eq:probability-of-one-thirds-of-the-profit}.
\end{proof}
\end{theorem}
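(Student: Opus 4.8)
The plan is to apply the Chernoff bound of Theorem~\ref{thm:chernoff} to the profit random variable $B = \sum_{\req \in \requests} \randVarY$, whose expectation equals $\optLP$ by Corollary~\ref{cor:RelatingObjectiveLPAndExpectationOfRounding}. The immediate obstacle is that each $\randVarY$ is supported on $\{0, \Vprofit\}$, so the summands are generally \emph{not} contained in $[0,1]$ as the theorem demands. To remedy this I would rescale by the largest profit $\VprofitMax = \max_{\req \in \requests} \Vprofit$: the variables $\randVarY' = \randVarY / \VprofitMax$ lie in $[0,1]$ (since $\Vprofit \leq \VprofitMax$), remain independent, and satisfy $B' = \sum_{\req \in \requests} \randVarY' = B / \VprofitMax$ by linearity. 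Because $\VprofitMax > 0$, the scaling preserves the ordering of the events, so any tail bound proved for $B'$ transfers directly to $B$.

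Next I would instantiate the bound with $\varepsilon = 2/3$, so that $1 - \varepsilon = 1/3$ and $\varepsilon^2/2 = 2/9$, giving $\mathbb{P}\bigl(B' \leq \tfrac{1}{3}\Exp(B')\bigr) \leq \exp\bigl(-\tfrac{2}{9}\Exp(B')\bigr)$. The right-hand side still depends on $\Exp(B')$, which a priori could be small and render the estimate vacuous; controlling this expectation from below is the crux of the argument. Here I would invoke Lemma~\ref{lem:expected-profit-larger-max}, which yields $\optLP \geq \VprofitMax$, and combine it with $\Exp(B) = \optLP$ to obtain $\Exp(B') = \Exp(B)/\VprofitMax \geq 1$. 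Since $\exp(-\tfrac{2}{9}t)$ is decreasing in $t$, substituting the worst case $\Exp(B') = 1$ maximizes the bound and collapses it to the clean constant $\exp(-2/9)$.

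Finally I would undo the rescaling: multiplying through by $\VprofitMax$ turns $\mathbb{P}\bigl(B' \leq \tfrac{1}{3}\Exp(B')\bigr)$ into $\mathbb{P}\bigl(B \leq \tfrac{1}{3}\Exp(B)\bigr)$. Using $\Exp(B) = \optLP$ and $\optIP \leq \optLP$ from Fact~\ref{obs:IP-solutions-are-a-subset-of-LP-solutions}, I get $\optIP/3 \leq \Exp(B)/3$, so the event of obtaining less than a third of the optimal integral profit is contained in the event $B \leq \tfrac{1}{3}\Exp(B)$ already bounded. I expect the genuinely delicate point to be the lower bound $\Exp(B') \geq 1$: it is what prevents the Chernoff exponent from degrading when the fractional optimum is small, and it ultimately rests on Lemma~\ref{lem:expected-profit-larger-max} together with the no-loss-of-generality reduction of Lemma~\ref{lem:all-requests-can-be-embedded} ensuring every request is individually embeddable.
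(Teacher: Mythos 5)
Your proposal is correct and follows essentially the same route as the paper's proof: rescaling the profit variables by $\VprofitMax$, applying the Chernoff bound with $\varepsilon = 2/3$, lower-bounding $\Exp(B') \geq 1$ via Lemma~\ref{lem:expected-profit-larger-max}, and transferring the bound back through $\optIP \leq \optLP = \Exp(B)$. You also correctly identify the expectation lower bound (resting on Lemmas~\ref{lem:all-requests-can-be-embedded} and \ref{lem:expected-profit-larger-max}) as the step that keeps the exponent from degrading, which is exactly the role it plays in the paper.
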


\subsection{Probabilistic Guarantees for Capacity Violations}
\label{sec:performance-guarantee-cap-violation-admission-control}
The input to the Algorithm~\ref{alg:approxAdmissionControl} 
encompasses the factors~$\beta, \gamma \geq 0$, such that accepted solutions 
must satisfy~$L[\type,u] \leq (1+\beta)~\cdot \Scap(\type, u)$ for~$(\type,u) \in  \SRV$ and~$L[u,v] \leq (1+\gamma)\cdot \Scap(u,v)$ for $(u,v) \in  \SE$. In the following, we will analyze with which probability the above will hold. Our probabilistic bounds rely on Hoeffding's inequality:
\begin{fact}[Hoeffding's Inequality]
Let~$\{X_i\}$ be independent random variables, such that~$X_i \in [a_i,b_i]$, then the following holds:
\[
\mathbb{P} \Big(\sum_i X_i - \mathbb{E}(\sum_i X_i))~\geq t\Big)~\leq exp \big(-2t^2 /~(\sum_i~(b_i - a_i)^2)\big)
\]
\end{fact}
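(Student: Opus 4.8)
The plan is to follow the classical Chernoff-bounding (exponential Markov) argument, as Hoeffding's inequality is a standard concentration result. Write $S = \sum_i X_i$ and $Y_i = X_i - \mathbb{E}(X_i)$, so that each $Y_i$ has mean zero and lies in an interval of length $b_i - a_i$. For any parameter $s > 0$, I would apply Markov's inequality to the nonnegative random variable $e^{s(S-\mathbb{E}(S))}$, obtaining
\[
\mathbb{P}\big(S - \mathbb{E}(S) \geq t\big) \leq e^{-st}\,\mathbb{E}\big(e^{s\sum_i Y_i}\big).
\]
By independence of the $X_i$ (hence of the $Y_i$), the expectation on the right factorizes as $\prod_i \mathbb{E}(e^{sY_i})$, which reduces the problem to bounding the moment generating function of a single bounded, mean-zero variable.

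The key technical ingredient, and the main obstacle, is Hoeffding's Lemma: for a mean-zero random variable $Y \in [a,b]$ one has $\mathbb{E}(e^{sY}) \leq e^{s^2(b-a)^2/8}$. I would prove this by convexity of $y \mapsto e^{sy}$: writing each $y \in [a,b]$ as a convex combination of the endpoints gives the pointwise bound $e^{sy} \leq \frac{b-y}{b-a}e^{sa} + \frac{y-a}{b-a}e^{sb}$, and taking expectations while using $\mathbb{E}(Y) = 0$ eliminates the linear term. Setting $\phi(s) = \log\big(\frac{b}{b-a}e^{sa} - \frac{a}{b-a}e^{sb}\big)$, the remaining work is a second-order Taylor estimate establishing $\phi(s) \leq s^2(b-a)^2/8$; this is precisely where the constant $1/8$ is pinned down, by checking $\phi(0) = \phi'(0) = 0$ and bounding $\phi''(s) \leq (b-a)^2/4$ (the latter being the variance of a Bernoulli-type quantity, maximized at $1/4$).

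Combining these ingredients, the product bound yields
\[
\mathbb{P}\big(S - \mathbb{E}(S) \geq t\big) \leq \exp\Big(-st + \tfrac{s^2}{8}\textstyle\sum_i (b_i-a_i)^2\Big).
\]
Finally I would optimize the free parameter $s > 0$: the exponent is a quadratic in $s$ minimized at $s^\star = 4t / \sum_i (b_i - a_i)^2$, and substituting $s^\star$ collapses the right-hand side to $\exp\big(-2t^2 / \sum_i (b_i - a_i)^2\big)$, which is exactly the claimed bound. The only genuinely delicate step is the Taylor estimate for $\phi''$ inside Hoeffding's Lemma; everything else is routine manipulation of the exponential Markov inequality together with the final quadratic optimization.
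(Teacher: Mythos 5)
Your proof is correct: it is the standard Chernoff--Hoeffding argument (exponential Markov inequality, factorization via independence, Hoeffding's lemma proved by convexity plus the second-order Taylor bound $\phi''(s) \leq (b-a)^2/4$, and optimization at $s^\star = 4t/\sum_i (b_i-a_i)^2$), and the final algebra indeed collapses the exponent to $-2t^2/\sum_i (b_i-a_i)^2$ as claimed. Note that the paper gives no proof to compare against -- it states Hoeffding's inequality as a \emph{Fact}, importing it as a known classical result -- so reproducing the textbook proof, as you did, is precisely what a self-contained argument for this statement requires.
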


For our analysis to work, assumptions on the maximal allocation of a single request on substrate 
nodes and edges must be made. We define the maximal load as follows:
\begin{definition}[Maximal Load]
\label{def:maximal-loads}
We define the maximal load per network function and edge for each request as follows:
\begin{alignat}{5}
\maxLoadV && = & \max_{i \in \VV: \Vtype(i)~= \type} \Vcap(i)~& & \quad \textnormal{for}~ \req \in \requests, (\type,u) \in  \SRV & \\
\maxLoadE && = & \max_{~(i,j) \in  \VE} \Vcap(i,j)~& &\quad  \textnormal{for}~ \req \in \requests,~(u,v) \in  \SE & 
\end{alignat}
Additionally, we define the maximal load that a whole request may induce on a network function of type $\type$ on substrate node $u \in \SVTypes$ as the following sum:
\begin{alignat}{3}
\maxLoadVSum && = & \sum_{i \in \VV: \Vtype(i)~= \type} \Vcap(i)~& \quad \textnormal{for}~ \req \in \requests, \type \in \types, u \in \SV
\end{alignat}
\end{definition}

The following lemma shows that we may assume~$\maxLoadV[\req][x][y] \leq \Scap(x,y)$ for all network resources $(x,y) \in  \SR$.

\begin{lemma}
\label{lem:assumptions}
We may assume the following without loss of generality.
\begin{alignat}{3}
\maxLoadV[\req][x][y] && \leq & \Scap(x,y)~& \quad \forall (x,y) \in  \SR
\end{alignat}
\begin{proof}
Assume that $\maxLoadV > \Scap(u)$ holds for some~$\req \in \requests, (\type,u) \in  \SRV$. If this is the case, then there exists a virtual network function~$i \in \VV$ with~$\Vtype(i)~= \type$, such that~$\Vcap(i)~> \Scap(\type,u)$ holds. However, if this is the case, there cannot exist a feasible solution in which~$i$ is mapped onto~$u$, as this mapping would clearly exceed the capacity. Hence, we can remove the edges that indicate the mapping of the respective virtual function $i$ on substrate node $u$ in the extended graph construction a priori before computing the linear relaxation. The same argument holds true if~$\maxLoadE > \Scap(u,v)$ holds for some~$(u,v) \in  \SE$. 
\end{proof}

\end{lemma}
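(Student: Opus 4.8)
The plan is to establish this as a preprocessing (``without loss of generality'') statement: I would argue that whenever the claimed inequality is violated for some resource, the instance can be transformed — \emph{before} the linear relaxation is ever solved — into an equivalent one in which the offending placements are simply unavailable, so that the bound holds for every mapping that can actually be realized. The key observation is that $\maxLoadV[\req][\type][u]$ is the demand of the \emph{heaviest} virtual function of type $\type$ in $\req$, and a mapping placing such a function on $u$ would on its own exceed $\Scap(\type,u)$, hence violate the capacity constraint of Definition~\ref{def:feasible-embedding}.

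First I would treat the node-resource case. Suppose $\maxLoadV[\req][\type][u] > \Scap(\type,u)$ for some $\req \in \requests$ and $(\type,u) \in \SRV$. By the definition of the maximal load there is a virtual function $i \in \VV$ with $\Vtype(i) = \type$ and $\Vcap(i) > \Scap(\type,u)$. No valid and feasible embedding can map $i$ onto $u$, since doing so already forces a load of $\Vcap(i) > \Scap(\type,u)$ on that resource. Consequently I would delete from the extended graph $\VGext$ exactly the inter-layer edges of $\VEextVertical$ that encode the placement of $i$ on $u$. Because every deleted edge belonged only to infeasible mappings, this removal leaves the set of feasible embeddings — and therefore the optimal profit (resp.\ cost) — unchanged, while guaranteeing that no decomposition returned by Algorithm~\ref{alg:decompositionAlgorithm} ever loads $u$ with $\Vcap(i)$. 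After the deletion, only functions of demand at most $\Scap(\type,u)$ can contribute to the load on $(\type,u)$, so the desired inequality holds for the effective maximal load.

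Next I would handle the edge-resource case by the symmetric argument, with nodes replaced by edges: if $\maxLoadE[\req][u][v] > \Scap(u,v)$ for some $(u,v)\in\SE$, then some virtual edge $(i,j)\in\VE$ has $\Vcap(i,j) > \Scap(u,v)$, its substrate copy $(u^{i,j}_{\req},v^{i,j}_{\req})$ cannot lie on any feasible mapping, and I would delete the corresponding horizontal edges of $\VEextHorizontal$ a priori. Iterating over all pairs $(\req,(\type,u))$ and $(\req,(u,v))$, this cleanup inspects each extended-graph edge at most once and is therefore polynomial, matching the complexity budget of the overall scheme.

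The only point that needs care — and the one I expect to be the main (if mild) obstacle — is justifying that the deletion is genuinely loss-free: I must be sure an edge is removed \emph{only} when every embedding using it is infeasible, so that neither the optimum nor any feasible solution is destroyed. This is immediate here, because a single over-demand on one resource breaks the capacity constraint of Definition~\ref{def:feasible-embedding} independently of the rest of the mapping; hence the removed edges carry no feasible solution and the transformation is indeed without loss of generality.
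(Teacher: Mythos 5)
Your proposal is correct and follows essentially the same route as the paper's own proof: identify the over-demanding virtual function (resp.\ virtual edge), observe that any mapping placing it on the offending resource already violates the capacity constraint of Definition~\ref{def:feasible-embedding}, and therefore delete the corresponding edges of the extended graph a priori, before the relaxation is solved. Your additional remarks on which edge sets ($\VEextVertical$, $\VEextHorizontal$) are pruned, on the transformation being loss-free, and on its polynomial cost merely make explicit what the paper leaves implicit.
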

%

We now model the load on the different network functions $(\type,u) \in  \SRV$ and the substrate edges $(u,v) \in  \SE$ induced by each of the requests $\req \in \requests$ as random variables $L_{\req,\type,u} \in [0,\maxLoadVSum]$ and $L_{\req,u,v} \in [0,\maxLoadE \cdot |\VE|]$ respectively. To this end, we note that Algorithm~\ref{alg:approxAdmissionControl} chooses the $k$-th decomposition $\decomp = (\prob, \load,\mapping)$ with probability $\prob$. Hence, with probability $\prob$ the load $\load$ is induced for request $\req \in \requests$. The respective variables can therefore be defined as $
\mathbb{P}(L_{\req,x,y} = \load(x,y))= \prob $ (assuming pairwise different loads)~and $\mathbb{P}(L_{\req,x,y} = 0)=  1 - \sum_{\decomp \in \PotEmbeddings } \prob $ for $(x,y) \in  \SRV$.
Additionally, we denote by $L_{x,y} = \sum_{\req \in \requests} L_{\req,x,y}$ the overall load induced on function resource $(x,y) \in  \SR$.
By definition, the expected load on the network nodes and the substrate edges $(x,y) \in  \SRV$ compute to
\begin{alignat}{3}
\mathbb{E}(L_{x,y})~&& = & \sum_{\req \in \requests} \sum_{\decomp \in \PotEmbeddings} \prob \cdot \load(x,y)~ \,.
\end{alignat}
Together with Lemma~\ref{lem:allocations} these equations yield:
\begin{alignat}{5}
\mathbb{E}(L_{x,y})~&& \leq & \Scap(x,y)~& \quad & \textnormal{for}~ (x,y) \in  \SRV \label{eq:load-vs-capacity}
\end{alignat}
Using the above, we can apply Hoeffding's Inequality:
\begin{lemma}
\label{lem:approximation-single-node}
Let~$\DeltaV = \sum_{\req \in \requests}~(\maxLoadVSum / \maxLoadV)^2$. The probability that the capacity of a single network function~$\type \in \types$ on node~$u \in \SVTypes$ is exceeded by more than a factor~$(1+\sqrt{2\cdot \log~(|\SV|\cdot|\types|)~\cdot \DeltaV})$ is upper bounded by~$(|\SV| \cdot |\types|)^{-4}$.
\begin{proof}
Each variable~$\randVarLNode$ is clearly contained in the interval~$[0, \maxLoadVSum]$ and hence $\sum_{\req \in \requests} (\maxLoadVSum)^2$ will be the denominator in Hoeffding's Inequality. We choose~$t = \sqrt{2\cdot \log~(|\SV|\cdot|\types|)~\cdot \DeltaV  } \cdot \Scap(\type,u)$ and obtain:
\noindent\begin{alignat*}{2}
& \mathbb{P} \Big(L_{\type,u} - \mathbb{E}(L_{\type,u})~\geq t \Big)~&&  \notag \\
& ~~~ \leq exp \Bigg(\frac{-2\cdot t^2}{\sum_{\req \in \requests}~(\maxLoadVSum)^2} \Bigg)~&& \\
& ~~~ \leq exp \Bigg(\frac{-4 \log~(|\SV|\cdot|\types|)~\cdot \DeltaV \cdot  \Scap(\type,u)^2}{\sum_{\req \in \requests}~(\Scap(\type,u)\cdot \maxLoadVSum / \maxLoadV)^2} \Bigg)~&& \numberthis \label{formula:proof:singleNodeViolation:upperCapacities}
\end{alignat*}
\noindent
\begin{alignat*}{2}
& ~~~ = exp \Bigg(\frac{-4 \log~(|\SV|\cdot|\types|)~\sum_{\req \in \requests}~(\maxLoadVSum / \maxLoadV)^2 }{\sum_{\req \in \requests}~( \maxLoadVSum / \maxLoadV)^2} \Bigg)~&& \\
& ~~~ = exp \Bigg(\frac{-4 \log~(|\SV|\cdot|\types|)~\sum_{\req \in \requests}~(\maxLoadVSum / \maxLoadV)^2 }{\sum_{\req \in \requests}~( \maxLoadVSum / \maxLoadV)^2} \Bigg)~&& \\
& ~~~ =~(|\SV|\cdot |\types|)^{-4} && 
\end{alignat*}

In Line~(\ref{formula:proof:singleNodeViolation:upperCapacities})~we have used 
\begin{align}
\maxLoadVSum \leq \Scap(\type,u)~\cdot \maxLoadVSum / \maxLoadV \label{formula:proof:singleNodeViolation:estimation}
\end{align}
and accordingly increased the denominator to increase the probability. It is easy to check, that Equation~(\ref{formula:proof:singleNodeViolation:estimation})~holds, as we may assume that~$\maxLoadV \leq \Scap(\type,u)$ holds~(see Lemma~\ref{lem:assumptions}).
By using Equation~(\ref{eq:load-vs-capacity})~and plugging in $t$ we obtain

{
\smaller[1]
\begin{align*}
\mathbb{P} \Big(L_{\type,u} \geq (1+\sqrt{2\cdot \log~(|\SV|\cdot|\types|)~\cdot \DeltaV})~\cdot \Scap(\type,u)~\Big)~\leq (|\SV|\cdot |\types|)^{-4}
\end{align*}
}
for all $(\type,u) \in  \SRV$, proving our claim.
\end{proof}
\end{lemma}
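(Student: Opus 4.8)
The plan is to apply Hoeffding's inequality to the family of independent loads $\{\randVarLNode\}_{\req \in \requests}$, each supported on $[0,\maxLoadVSum]$, whose sum is $L_{\type,u}$. First I would record the two facts already in hand: the variables are independent across requests (the rounding in Algorithm~\ref{alg:approxAdmissionControl} draws one decomposition per request independently), and $\Exp(L_{\type,u}) \leq \Scap(\type,u)$ by Equation~(\ref{eq:load-vs-capacity}). Since the $\req$-th interval has width $b_\req - a_\req = \maxLoadVSum$, Hoeffding yields
\[
\mathbb{P}\big(L_{\type,u} - \Exp(L_{\type,u}) \geq t\big) \leq \exp\big(-2t^2 / {\textstyle\sum_{\req \in \requests}(\maxLoadVSum)^2}\big).
\]

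Next I would match the deviation to the target factor by setting $t = \sqrt{2\log(|\SV|\cdot|\types|)\cdot\DeltaV}\cdot\Scap(\type,u)$, so that the numerator of the exponent becomes $4\log(|\SV|\cdot|\types|)\cdot\DeltaV\cdot\Scap(\type,u)^2$. The heart of the argument is then to show that this exponent is at most $-4\log(|\SV|\cdot|\types|)$, equivalently that $\DeltaV\cdot\Scap(\type,u)^2 \geq \sum_{\req\in\requests}(\maxLoadVSum)^2$. Unfolding $\DeltaV = \sum_{\req}(\maxLoadVSum/\maxLoadV)^2$, this inequality follows term by term from $\maxLoadVSum \leq \Scap(\type,u)\cdot\maxLoadVSum/\maxLoadV$, which in turn holds precisely because $\maxLoadV \leq \Scap(\type,u)$ --- the single-request feasibility normalization established in Lemma~\ref{lem:assumptions}. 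I expect this to be the only genuinely delicate step: without $\maxLoadV \leq \Scap(\type,u)$ the per-request interval widths $\maxLoadVSum$ appearing in Hoeffding's denominator cannot be absorbed into $\DeltaV$, and the clean $(|\SV|\cdot|\types|)^{-4}$ tail would not emerge.

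Finally I would convert the additive tail into the multiplicative statement claimed. Because $\Exp(L_{\type,u}) \leq \Scap(\type,u)$, the event $\{L_{\type,u} \geq (1+\sqrt{2\log(|\SV|\cdot|\types|)\cdot\DeltaV})\cdot\Scap(\type,u)\}$ is contained in $\{L_{\type,u} - \Exp(L_{\type,u}) \geq t\}$ for the $t$ chosen above, so it inherits the bound $(|\SV|\cdot|\types|)^{-4}$, which is exactly the assertion. The only remaining work is the routine algebraic substitution of $t$ into the exponent and verifying this monotone containment of events.
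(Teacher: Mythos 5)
Your proposal is correct and takes essentially the same route as the paper's proof: Hoeffding's inequality over the per-request loads with interval widths $\maxLoadVSum$, the identical choice $t = \sqrt{2\log(|\SV|\cdot|\types|)\cdot\DeltaV}\cdot\Scap(\type,u)$, the same key absorption step $\maxLoadVSum \leq \Scap(\type,u)\cdot\maxLoadVSum/\maxLoadV$ justified by Lemma~\ref{lem:assumptions}, and the same conversion to the multiplicative bound via Equation~(\ref{eq:load-vs-capacity}). Your explicit remark on the independence of the per-request variables and the monotone containment of events only makes explicit what the paper leaves implicit.
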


It should be noted that if network functions are unique within a request, then~$\DeltaV$ equals the number of requests~$|\requests|$, since in this case~$\maxLoadVSum = \maxLoadV$ holds. Next, we consider a very similar result on the capacity violations of substrate edges. However, as in the worst case each substrate edge $(u,v) \in  \SE$ is used $|\VE|$ many times, we have to choose a slightly differently defined $\Delta_E$.

\begin{lemma}
\label{lem:approximation-single-edge}
Let~$\DeltaE = \sum_{\req \in \requests} |\VE|^2$. The probability that the capacity of a single substrate edge~$(u,v) \in  \SE$ is exceeded by more than a factor~$(1+\sqrt{2\cdot \log~(|\SV|)~\cdot \DeltaE})$ is bounded by~$|\SV|^{-4}$.
\begin{proof}
Each variable~$\randVarLEdge$ is clearly contained in the interval~$[0, \maxLoadE \cdot |\VE|]$. We choose~$t = \sqrt{2\cdot \log |\SV| \cdot \DeltaE  } \cdot \Scap(u,v)$ and apply Hoeffdings Inequality:
{
\begin{alignat}{2}
& \mathbb{P} \Big(L_{u,v} - \mathbb{E}(L_{u,v})~\geq t \Big)~&&  \notag \\
& ~~~ \leq exp \Bigg(\frac{-2\cdot t^2}{\sum_{\req \in \requests}~(\maxLoadE \cdot |\VE|)^2} \Bigg)~&& \\
& ~~~ \leq exp \Bigg(\frac{-4 \log |\SV| \cdot \DeltaE \cdot  \Scap(u,v)^2}{\sum_{\req \in \requests} \Scap(u,v)^2\cdot|\VE|^2} \Bigg)~&& \label{formula:proof:singleEdgeViolation:upperCapacities}\\
& ~~~ \leq exp \Bigg(\frac{-4 \log |\SV| \cdot \sum_{\req \in \requests} |\VE|^2 \cdot  \Scap(u,v)^2}{\Scap(u,v)^2 \cdot \sum_{\req \in \requests} |\VE|^2} \Bigg)~&& \label{formula:proof:singleEdgeViolation:upperCapacities_2}\\
& ~~~ = exp \Bigg(\frac{-4 \log |\SV| \sum_{\req \in \requests} |\VE|^2 }{\sum_{\req \in \requests} |\VE|^2} \Bigg)~&& \\
& ~~~ = |\SV|^{-4} && 
\end{alignat}
}
The rest of the proof is analogous to the one of Lemma~\ref{lem:approximation-single-node}:
In Equation~(\ref{formula:proof:singleEdgeViolation:upperCapacities})~we have used again the fact that~$\maxLoadE \leq \Scap(u,v)$ holds~(see Lemma~\ref{lem:assumptions}). In the numerator of Equation~\ref{formula:proof:singleEdgeViolation:upperCapacities_2} we have replaced $\DeltaE$ by its definition and placed $\Scap(u,v)^2$ outside the sum in the denominator. Analogously to the proof of Lemma~\ref{lem:approximation-single-node}, the remaining part of the proof follows from Equation~(\ref{eq:load-vs-capacity}).
\end{proof}

\end{lemma}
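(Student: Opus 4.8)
The plan is to mirror the node-capacity argument of Lemma~\ref{lem:approximation-single-node}, since the total edge load $L_{u,v} = \sum_{\req \in \requests} \randVarLEdge$ is again a sum of independent, bounded, per-request random variables, which is exactly the form Hoeffding's Inequality handles. The first and only genuinely new step is to pin down the range of each summand. The crucial difference from the node case is this range: a single request may route several of its $|\VE|$ virtual edges over the \emph{same} substrate edge $(u,v)$, and since each such virtual edge contributes at most $\maxLoadE$, the per-request variable satisfies $\randVarLEdge \in [0, \maxLoadE \cdot |\VE|]$. This extra factor of $|\VE|$ is precisely why the edge version of $\Delta$ is defined as $\DeltaE = \sum_{\req \in \requests} |\VE|^2$, playing the role that $\sum_\req (\maxLoadVSum/\maxLoadV)^2$ played for nodes.

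Next I would instantiate Hoeffding's Inequality with the deviation $t = \sqrt{2 \cdot \log |\SV| \cdot \DeltaE} \cdot \Scap(u,v)$, so that the denominator $\sum_{\req \in \requests} (\maxLoadE \cdot |\VE|)^2$ appears under the exponent. Invoking Lemma~\ref{lem:assumptions} to assume $\maxLoadE \leq \Scap(u,v)$, I would bound each summand of the denominator by $\Scap(u,v)^2 \cdot |\VE|^2$ and pull $\Scap(u,v)^2$ out of the sum, obtaining the upper bound $\Scap(u,v)^2 \cdot \DeltaE$ on the denominator. Substituting $t^2 = 2 \log |\SV| \cdot \DeltaE \cdot \Scap(u,v)^2$, the factors $\Scap(u,v)^2$ and $\DeltaE$ cancel and the exponent collapses to $-4 \log |\SV|$, yielding the tail bound $|\SV|^{-4}$ on the event $L_{u,v} - \Exp(L_{u,v}) \geq t$.

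Finally, I would turn this additive deviation bound into the claimed multiplicative capacity-violation statement. Since Lemma~\ref{lem:allocations} gives $\Exp(L_{u,v}) \leq \Scap(u,v)$ (the edge analogue of Equation~(\ref{eq:load-vs-capacity})), we get $\Exp(L_{u,v}) + t \leq (1 + \sqrt{2 \cdot \log |\SV| \cdot \DeltaE}) \cdot \Scap(u,v)$. Hence the event that $L_{u,v}$ exceeds this right-hand side is contained in the event $L_{u,v} - \Exp(L_{u,v}) \geq t$, so the $|\SV|^{-4}$ bound transfers directly.

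Since the computation is almost entirely a transcription of the node case, the step I would be most careful about is the range estimate $\randVarLEdge \in [0, \maxLoadE \cdot |\VE|]$, i.e. correctly accounting for a request using one substrate edge up to $|\VE|$ times. Getting this power of $|\VE|$ right is what makes the $|\VE|^2$ inside $\DeltaE$ cancel cleanly against $t^2$; any other exponent would leave a residual factor in the exponent and break the clean $|\SV|^{-4}$ conclusion.
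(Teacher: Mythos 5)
Your proposal is correct and takes essentially the same route as the paper's own proof: the same per-request range $\randVarLEdge \in [0, \maxLoadE \cdot |\VE|]$, the same choice $t = \sqrt{2\cdot \log |\SV| \cdot \DeltaE}\cdot \Scap(u,v)$, the same use of Lemma~\ref{lem:assumptions} to bound the Hoeffding denominator by $\Scap(u,v)^2 \cdot \DeltaE$, and the same final conversion of the additive tail bound into the multiplicative violation bound via $\Exp(L_{u,v}) \leq \Scap(u,v)$ from Equation~(\ref{eq:load-vs-capacity}). The only difference is presentational: you spell out the event-containment step at the end explicitly, which the paper leaves as ``analogous to Lemma~\ref{lem:approximation-single-node}.''
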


We state the following corollaries  without proof, showing that the above shown bounds work nicely if we assume more strict bounds on the maximal loads.

\begin{corollary}
Assume that~$\maxLoadV \leq \varepsilon \cdot \Scap(\type,u)$ holds for~$0 < \varepsilon < 1$ and all~$(\type,u) \in  \SRV$. With~$\DeltaV$ as defined in Lemma~\ref{lem:approximation-single-node}, we obtain:
The probability that in Algorithm~\ref{alg:approxAdmissionControl} the capacity of a single network function~$\type \in \types$ on node~$u \in \SVTypes$ is exceeded by more than a factor~$(1+ \varepsilon \cdot \sqrt{2\cdot \log~(|\SV|\cdot|\types|)~\cdot \DeltaV})$ is upper bounded by~$(|\SV| \cdot |\types|)^{-4}$.
\end{corollary}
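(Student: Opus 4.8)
The plan is to mirror the proof of Lemma~\ref{lem:approximation-single-node} almost verbatim, the only change being that the sharper hypothesis $\maxLoadV \leq \varepsilon \cdot \Scap(\type,u)$ is exploited in exactly two places, which together introduce an additional factor $\varepsilon$ in the violation bound while leaving the failure probability unchanged. As before, I would fix a node resource $(\type,u) \in \SRV$ and regard the total load $L_{\type,u} = \sum_{\req \in \requests} \randVarLNode$ as a sum of independent random variables with $\randVarLNode \in [0,\maxLoadVSum]$, to which Hoeffding's inequality applies with denominator $\sum_{\req \in \requests} (\maxLoadVSum)^2$.

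First I would sharpen the deviation parameter to
\[
t = \varepsilon \cdot \sqrt{2 \cdot \log(|\SV|\cdot|\types|) \cdot \DeltaV} \cdot \Scap(\type,u),
\]
i.e. the choice from Lemma~\ref{lem:approximation-single-node} scaled by $\varepsilon$. This is precisely what produces the claimed violation factor: combined with $\mathbb{E}(L_{\type,u}) \leq \Scap(\type,u)$ from Equation~(\ref{eq:load-vs-capacity}) one has $\Scap(\type,u) + t = (1 + \varepsilon\sqrt{2\log(|\SV|\cdot|\types|)\DeltaV}) \cdot \Scap(\type,u)$, so the Hoeffding event $\{L_{\type,u} - \mathbb{E}(L_{\type,u}) \geq t\}$ contains the event $\{L_{\type,u} \geq (1+\varepsilon\sqrt{\cdots})\Scap(\type,u)\}$ whose probability we wish to bound.

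The crux is then re-deriving the denominator estimate. Whereas Lemma~\ref{lem:approximation-single-node} used $\maxLoadVSum \leq \Scap(\type,u) \cdot \maxLoadVSum / \maxLoadV$ (valid since $\maxLoadV \leq \Scap(\type,u)$ by Lemma~\ref{lem:assumptions}), the present hypothesis $\maxLoadV \leq \varepsilon \cdot \Scap(\type,u)$ gives the stronger per-request estimate $\maxLoadVSum \leq \varepsilon \cdot \Scap(\type,u) \cdot \maxLoadVSum / \maxLoadV$. Squaring, summing over $\req \in \requests$, and recalling $\DeltaV = \sum_{\req \in \requests} (\maxLoadVSum / \maxLoadV)^2$ yields $\sum_{\req \in \requests} (\maxLoadVSum)^2 \leq \varepsilon^2 \cdot \Scap(\type,u)^2 \cdot \DeltaV$. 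Enlarging the Hoeffding denominator to this upper bound only weakens the exponent, which is the conservative direction for an upper bound on the probability; substituting $t^2 = \varepsilon^2 \cdot 2\log(|\SV|\cdot|\types|) \cdot \DeltaV \cdot \Scap(\type,u)^2$, the two factors $\varepsilon^2$ cancel against one another and the $\DeltaV \cdot \Scap(\type,u)^2$ terms cancel as well, leaving the exponent $-4\log(|\SV|\cdot|\types|)$ and hence the bound $(|\SV| \cdot |\types|)^{-4}$.

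I do not expect a genuine obstacle: the argument is structurally identical to Lemma~\ref{lem:approximation-single-node}, and the only new ingredient is the hypothesis feeding the sharper per-request estimate. The single point demanding care is the bookkeeping of the $\varepsilon$ factors, namely ensuring that the $\varepsilon$ introduced into $t$ and the $\varepsilon$ surfacing in the denominator bound enter with matching powers so that they cancel exactly, together with the routine check that replacing $\sum_{\req \in \requests} (\maxLoadVSum)^2$ by its larger upper bound moves the probability estimate in the correct direction.
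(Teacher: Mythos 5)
Your proposal is correct and is exactly the argument the paper intends: the corollary is stated without proof as the analogue of Lemma~\ref{lem:approximation-single-node}, and your adaptation — scaling $t$ by $\varepsilon$, sharpening the denominator estimate to $\maxLoadVSum \leq \varepsilon \cdot \Scap(\type,u) \cdot \maxLoadVSum / \maxLoadV$ so that $\sum_{\req \in \requests}(\maxLoadVSum)^2 \leq \varepsilon^2 \Scap(\type,u)^2 \DeltaV$, and letting the two $\varepsilon^2$ factors cancel in the Hoeffding exponent — is precisely that adaptation, with the direction-of-inequality check (enlarging the denominator only weakens the bound) handled correctly.
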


\begin{corollary}
Assume that~$\maxLoadE \leq \varepsilon \cdot \Scap(u,v)$ holds for~$0 < \varepsilon < 1$ and all~$(u,v) \in  \SE$. With~$\DeltaE$ as defined in Lemma~\ref{lem:approximation-single-edge}, we obtain:
The probability that in Algorithm~\ref{alg:approxAdmissionControl} the capacity of a single substrate node~$(u,v) \in  \SE$ is exceeded by more than a factor~$(1+\varepsilon \cdot \sqrt{2\cdot \log~(|\SV|)~\cdot \DeltaE})$ is upper bounded by~$|\SV|^{-4}$.
\end{corollary}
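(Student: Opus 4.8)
The plan is to replay the proof of Lemma~\ref{lem:approximation-single-edge} essentially verbatim, exploiting the stronger hypothesis $\maxLoadE \leq \varepsilon \cdot \Scap(u,v)$ at the two places where the weaker bound $\maxLoadE \leq \Scap(u,v)$ was previously invoked. The stronger hypothesis enters the argument twice: once to shrink the support of each random variable $\randVarLEdge$, and once to permit a correspondingly smaller deviation threshold $t$. The key observation is that both appearances of $\varepsilon$ contribute a factor $\varepsilon^2$ to the exponent in Hoeffding's inequality and therefore cancel exactly, leaving the same $|\SV|^{-4}$ tail bound while scaling the resulting violation factor down by $\varepsilon$.

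Concretely, I would first note that under the hypothesis each variable $\randVarLEdge$ is contained in the tighter interval $[0, \varepsilon \cdot \Scap(u,v) \cdot |\VE|]$, so that the denominator $\sum_{\req \in \requests} (\maxLoadE \cdot |\VE|)^2$ appearing in Hoeffding's inequality is bounded by $\varepsilon^2 \cdot \Scap(u,v)^2 \cdot \DeltaE$, recalling that $\DeltaE = \sum_{\req \in \requests} |\VE|^2$. I would then choose the threshold $t = \varepsilon \cdot \sqrt{2 \cdot \log |\SV| \cdot \DeltaE} \cdot \Scap(u,v)$, i.e.\ exactly $\varepsilon$ times the threshold used in Lemma~\ref{lem:approximation-single-edge}, so that $t^2 = \varepsilon^2 \cdot 2 \log |\SV| \cdot \DeltaE \cdot \Scap(u,v)^2$. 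Plugging both quantities into Hoeffding's inequality, the factor $\varepsilon^2$ cancels between numerator and denominator and the exponent collapses to $-4 \log |\SV|$, yielding $\exp(-4 \log |\SV|) = |\SV|^{-4}$. Finally, invoking the expectation bound $\mathbb{E}(L_{u,v}) \leq \Scap(u,v)$ from Equation~(\ref{eq:load-vs-capacity}) and substituting the chosen $t$, the tail estimate rewrites as the claimed statement that $L_{u,v}$ exceeds $(1 + \varepsilon \cdot \sqrt{2 \cdot \log |\SV| \cdot \DeltaE}) \cdot \Scap(u,v)$ with probability at most $|\SV|^{-4}$.

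I do not anticipate any genuine obstacle: the argument is a direct reparametrization of Lemma~\ref{lem:approximation-single-edge} and is structurally identical to the preceding node-capacity corollary. The only point requiring care is verifying that the two occurrences of $\varepsilon$ really do cancel, which holds because $t$ scales linearly in $\varepsilon$ while the support of each $\randVarLEdge$ also scales linearly in $\varepsilon$, so both contribute a factor $\varepsilon^2$ once squared inside Hoeffding's exponent. Since $0 < \varepsilon < 1$, the tighter support is indeed a valid bound, and the resulting violation factor $(1 + \varepsilon \cdot \sqrt{2 \cdot \log |\SV| \cdot \DeltaE})$ is strictly smaller than the one of Lemma~\ref{lem:approximation-single-edge}, exactly as the stronger assumption should buy us.
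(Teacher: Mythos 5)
Your proposal is correct and matches what the paper intends: the paper states this corollary without proof precisely because it follows from the proof of Lemma~\ref{lem:approximation-single-edge} by the reparametrization you describe, shrinking the support of each $\randVarLEdge$ to $[0,\varepsilon\cdot\Scap(u,v)\cdot|\VE|]$ and choosing $t = \varepsilon\cdot\sqrt{2\cdot\log|\SV|\cdot\DeltaE}\cdot\Scap(u,v)$, so that the two factors of $\varepsilon^2$ cancel in Hoeffding's exponent and the tail bound $|\SV|^{-4}$ is preserved. Your verification that the cancellation is exact, and the final step via $\mathbb{E}(L_{u,v}) \leq \Scap(u,v)$, are both sound.
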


\subsection{Main Results}
\label{sec:main-results-admission-control}

We can now state the main tri-criteria approximation results obtained for SCEP-P. First, note that Algorithms~\ref{alg:decompositionAlgorithm} and \ref{alg:approxAdmissionControl} run in polynomial time. The runtime of Algorithm~\ref{alg:decompositionAlgorithm} is dominated by the search for paths and as in each iteration at least the flow of a single edge in the extended graph is set to~$0$, only~$\mathcal{O}(\sum_{\req \in \requests} |\VE| \cdot |\SE|)$ many graph searches are necessary. The runtime of the approximation itself is clearly dominated by the runtime to solve the Linear Program which has~$\mathcal{O}(\sum_{\req \in \requests} |\VE| \cdot |\SE|)$ many variables and constraints and can therefore be solved in polynomial time using e.g. the Ellipsoid algorithm~\cite{matousek2007understanding}.

The following lemma shows that Algorithm~\ref{alg:approxAdmissionControl} can produce solutions of high quality \emph{with high probability}:

\begin{lemma}
Let~$0 < \varepsilon \leq 1$ be chosen minimally, such that~$\maxLoadV[\req][x][y] \leq \varepsilon \cdot \Scap(\type,u)$ holds for all~$(x,y) \in  \SR$.  Setting $\alpha = 1/3$, $\beta = \varepsilon \cdot \sqrt{2\cdot \log~(|\SV|\cdot|\types|)~\cdot \DeltaV  }$ and~$\gamma = \varepsilon \cdot \sqrt{2\cdot \log |\SV|)~\cdot \DeltaE  }$ with~$\DeltaV,\DeltaE$ as defined in Lemmas~\ref{lem:approximation-single-node} and \ref{lem:approximation-single-edge}, the probability that a solution is found within~$Q \in \mathbb{N}$ rounds is lower bounded by~$1 -(19/20)^Q$ for $|\SV| \geq 3$.
\begin{proof}
We apply a union bound argument. By Lemma~\ref{lem:approximation-single-node} the probability that for a single network function of type~$\type \in \types$ on node~$u \in \SVTypes$ the allocations exceed the capacity by more than a factor $(1+\beta)$ is less than~$(|\SV| \cdot |\types|)^{-4}$. Given that there are maximally~$|\SV| \cdot |\types|$ many of network functions overall, the probability that any of these exceeds the capacity by a factor above $(1+\beta)$ is less than~$(|\SV| \cdot |\types|)^{-3} \leq |\SV|^{-3}$. Similarly, by Lemma~\ref{lem:approximation-single-edge} the probability that the edge capacity of a single edge is violated by more than a factor~$(1+\gamma)$ is less than~$|\SV|^{-4}$. As there are at most~$|\SV|^2$ edges, the union bound gives us that the probability that the capacity of any of the edges is violated by a factor larger than~$(1+\gamma)$ is upper bounded by~$|\SV|^{-2}$.  Lastly, by Theorem~\ref{thm:probability-of-not-succeeding-in-objective} the probability of \emph{not} finding a solution having an $\alpha$-fraction of the optimal objective is less or equal to~$exp(-2/9)~\approx 0.8074$. The probability to not find a suitable solution, satisfying the objective and the capacity criteria, within a single round is therefore upper bounded by~$exp(-2/9)~- 1/9 - 1/27 \leq 19/20$ if~$|\SV| \geq 3$ holds. The probability find a suitable solution within~$Q \in \mathbb{N}$ many rounds hence is $1-(19/20)^Q$ for $|\SV| \geq 3$.
\end{proof}
\end{lemma}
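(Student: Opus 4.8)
The plan is to treat a single iteration of the while-loop as one random experiment, to bound the probability of each way in which the acceptance test can fail, to combine these by a union bound, and finally to exploit the independence of the $Q$ attempts. Concretely, I would fix one round and call it \emph{successful} if the produced candidate passes the profit test $B \geq \alpha \cdot \optLP$ together with the two capacity tests. The round fails precisely when at least one of the following bad events occurs: $A^c$, that $B < \alpha \cdot \optLP$; $E_V$, that $L_{\type,u} > (1+\beta)\cdot\Scap(\type,u)$ for some $(\type,u) \in \SRV$; or $E_E$, that $L_{u,v} > (1+\gamma)\cdot\Scap(u,v)$ for some $(u,v) \in \SE$. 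The first task is thus to bound $\mathbb{P}(A^c)$, $\mathbb{P}(E_V)$ and $\mathbb{P}(E_E)$ individually; note that no independence \emph{among} these three events is required, only a union bound.

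For the profit, Theorem~\ref{thm:probability-of-not-succeeding-in-objective} with $\alpha = 1/3$ directly yields $\mathbb{P}(A^c) \leq \exp(-2/9)$. For the capacities I would invoke the $\varepsilon$-scaled corollaries of Lemmas~\ref{lem:approximation-single-node} and~\ref{lem:approximation-single-edge}: since $\varepsilon$ is chosen minimally so that $\maxLoadV[\req][x][y] \leq \varepsilon\cdot\Scap(x,y)$ holds for every resource $(x,y) \in \SR$, and since the stated $\beta$ and $\gamma$ are \emph{exactly} the violation factors appearing in those corollaries, each single network function exceeds its capacity beyond $(1+\beta)$ with probability at most $(|\SV|\cdot|\types|)^{-4}$ and each single edge exceeds its capacity beyond $(1+\gamma)$ with probability at most $|\SV|^{-4}$. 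A union bound over the at most $|\SV|\cdot|\types|$ network functions then gives $\mathbb{P}(E_V) \leq (|\SV|\cdot|\types|)^{-3} \leq |\SV|^{-3}$, and over the at most $|\SV|^2$ edges gives $\mathbb{P}(E_E) \leq |\SV|^{-2}$.

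A second union bound over the three events bounds the per-round failure probability by $\exp(-2/9) + |\SV|^{-3} + |\SV|^{-2}$, and the only genuine (if elementary) obstacle is the numerical check that this is at most $19/20$. This is precisely where the hypothesis $|\SV| \geq 3$ is needed: it forces $|\SV|^{-3} \leq 1/27$ and $|\SV|^{-2} \leq 1/9$, after which one verifies $\exp(-2/9) + 1/9 + 1/27 \leq 19/20$ using $\exp(-2/9) \approx 0.80$; the margin here is small, so the bound $|\SV| \geq 3$ cannot be relaxed by this argument. Finally, I would observe that each pass of the loop resamples a fresh family of independent uniform variables $p$ in Line~\ref{alg:approx-admission:choose-p}, so the $Q$ rounds are mutually independent; hence the probability that every round fails is at most $(19/20)^Q$, and the probability that some round returns an admissible solution is at least $1 - (19/20)^Q$, which is exactly the claim.
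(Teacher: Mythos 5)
Your proof is correct and takes essentially the same route as the paper's: the same per-resource bounds (via the $\varepsilon$-scaled corollaries of Lemmas~\ref{lem:approximation-single-node} and~\ref{lem:approximation-single-edge}), the same union bounds over the $|\SV|\cdot|\types|$ functions, the $|\SV|^2$ edges, and the profit event from Theorem~\ref{thm:probability-of-not-succeeding-in-objective}, the same numerical check $\exp(-2/9)+1/9+1/27 \leq 19/20$ under $|\SV|\geq 3$, and independence across the $Q$ rounds. If anything, you are slightly more careful than the paper, which cites the unscaled lemmas where the $\varepsilon$-corollaries are meant, has a sign typo ($-$ instead of $+$) in the final per-round estimate, and leaves the independence of successive rounding attempts implicit.
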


\begin{theorem}
\label{thm:result-for-admission-control}
Assuming that $|\SV| \geq 3$ holds, and that $\maxLoadV[\req][x][y] \leq \varepsilon \cdot \Scap(x,y)$ holds for all resources~$(x,y) \in  \SR$ with $0 < \varepsilon \leq 1$ and by setting~$\beta = \varepsilon \cdot \sqrt{2\cdot \log~(|\SV|\cdot|\types|)~\cdot \DeltaV  }$ and~$\gamma = \varepsilon \cdot \sqrt{2\cdot \log |\SV| \cdot \DeltaE  }$ with~$\DeltaV,\DeltaE$ as defined in Lemmas~\ref{lem:approximation-single-node} and \ref{lem:approximation-single-edge}, Algorithm~\ref{alg:approxAdmissionControl} is a~$(\alpha,1+\beta,1+\gamma)$ tri-criteria approximation algorithm for  SCEP-P, such that it finds a solution \emph{with high probability}, that achieves at least an~$\alpha = 1/3$ fraction of the optimal profit and violates network function and edge capacities only within the factors~$1+\beta$ and~$1+\gamma$ respectively.
\end{theorem}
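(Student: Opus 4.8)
The plan is to obtain the theorem as a direct synthesis of the probabilistic guarantees proved in the preceding subsections, the correctness of the acceptance test embedded in Algorithm~\ref{alg:approxAdmissionControl}, and the runtime discussion. First I would note that the theorem is, up to terminology, exactly the content of the lemma immediately preceding it: with $\alpha = 1/3$ and the stated $\beta$ and $\gamma$, that lemma shows via a union bound over the per-node bound (Lemma~\ref{lem:approximation-single-node}, in its $\varepsilon$-scaled form), the per-edge bound (Lemma~\ref{lem:approximation-single-edge}), and the profit bound (Theorem~\ref{thm:probability-of-not-succeeding-in-objective}) that each individual rounding round fails with probability at most $19/20$ when $|\SV| \geq 3$, so that a solution is found within $Q$ rounds with probability at least $1 - (19/20)^Q$.

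Then the key correctness observation is that the acceptance test in Lines~17--18 guarantees, by construction, that any non-\NULL{} output satisfies $B \geq \alpha\cdot\optLP$, $L[\type,u]\leq(1+\beta)\Scap(\type,u)$ for all $(\type,u)\in\SRV$, and $L[u,v]\leq(1+\gamma)\Scap(u,v)$ for all $(u,v)\in\SE$; hence the returned embedding respects precisely the two claimed capacity-violation factors. To upgrade the profit inequality into a guarantee against the true optimum I would invoke Fact~\ref{obs:IP-solutions-are-a-subset-of-LP-solutions} to get $\optIP\leq\optLP$, whence $B\geq\alpha\cdot\optLP\geq\alpha\cdot\optIP=\optIP/3$, so the output attains at least a third of the optimal SCEP-P profit.

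It remains to make precise the ``with high probability'' qualifier and to confirm polynomial runtime. For the probability, I would take $Q=\Theta(\log|\SV|)$ (or any polynomial bound), which renders the failure probability $(19/20)^Q$ inverse-polynomial and hence establishes success with high probability while keeping the number of rounds polynomial. For the runtime, the paragraph opening this subsection already records that the relaxation of Integer Program~\ref{alg:SCEP-IP} has $\mathcal{O}(\sum_{\req\in\requests}|\VE|\cdot|\SE|)$ variables and constraints and is solvable in polynomial time, that Algorithm~\ref{alg:decompositionAlgorithm} needs only $\mathcal{O}(\sum_{\req\in\requests}|\VE|\cdot|\SE|)$ graph searches, and that each rounding round is plainly polynomial; multiplying by the polynomial $Q$ keeps the whole procedure polynomial.

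Because the substantive work---the Chernoff bound for the profit and the two Hoeffding bounds for capacity violation, together with their union-bound consolidation---is already complete, I do not anticipate a real obstacle: the theorem is a packaging step. The only point meriting attention is consistency of hypotheses, namely that the per-resource assumption $\maxLoadV[\req][x][y]\leq\varepsilon\cdot\Scap(x,y)$ uses the same $\varepsilon$ that scales $\beta$ and $\gamma$, so that the factors checked in the acceptance test are exactly those for which the $\varepsilon$-scaled Hoeffding bounds were established and for which the $19/20$ per-round failure bound (valid for $|\SV|\geq 3$) holds.
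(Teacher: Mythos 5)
Your proposal is correct and matches the paper's own treatment: the theorem is indeed just a packaging of the immediately preceding lemma (union bound over the $\varepsilon$-scaled node bound, the edge bound, and the profit bound, giving per-round failure probability at most $19/20$ for $|\SV|\geq 3$), combined with the acceptance test of Algorithm~\ref{alg:approxAdmissionControl}, Fact~\ref{obs:IP-solutions-are-a-subset-of-LP-solutions} to pass from $\optLP$ to $\optIP$, and the polynomial-runtime discussion opening Section~\ref{sec:main-results-admission-control}. No gaps.
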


\section{Approximating SCEP-C}

\label{sec:approximation-without-admission-control}
In the previous section we have derived a tri-criteria approximation for the SCEP-P variant that maximizes the profit of embedding requests while only exceeding capacities within certain bounds. We show in this section that the approximation scheme for SCEP-P can be adapted for the cost minimization variant SCEP-C by introducing an additional preprocessing step. 

Recall that the cost variant SCEP-C (see Definition~\ref{def:scep-without-admission-control})~asks for finding a feasible embedding $\map$ for all given requests $\req \in \requests$, such that the sum of costs $\sum_{\req \in \requests} c(\map)$ is minimized. We propose Algorithm~\ref{alg:approxWithoutAdmissionControl} to approximate SCEP-P. After shortly discussing the adaptions necessitated with respect to Algorithm~\ref{alg:approxAdmissionControl}, we proceed to prove the respective probabilistic guarantees analogously to the previous section with the main results contained in Section~\ref{sec:main-results-without-admission-control}.

\subsection{Synopsis of the Approximation Algorithm~\ref{alg:approxWithoutAdmissionControl}}

The approximation for SCEP-C given in Algorithm~\ref{alg:approxWithoutAdmissionControl} is based on Algorithm~\ref{alg:approxAdmissionControl}. Algorithm~\ref{alg:approxWithoutAdmissionControl} first computes a solution to the linear relaxation of Integer Program~\ref{alg:SCEP-IP_without} which only differs from the previously used Integer Program~\ref{alg:SCEP-IP} by requiring to embed all requests and adopting the objective to minimize the overall induced costs (cf. Section~\ref{sec:integer-linear-program}). While for the relaxation of Integer Program~\ref{alg:SCEP-IP} a feasible solution always exists -- namely, not embedding any requests -- this is not the case for the relaxation of Integer Program~\ref{alg:SCEP-IP_without}, i.e. $\FeasibleLP = \emptyset$ might hold. Hence, if solving the formulation was determined to be infeasible, the algorithm returns that no solution exists. This is valid, as $\FeasibleIP \subseteq \FeasibleLP$ holds (cf. Fact~\ref{obs:IP-solutions-are-a-subset-of-LP-solutions})~and, if $\FeasibleLP = \emptyset$ holds, $\FeasibleIP = \emptyset$ must follow. 

Having found a linear programming solution $(\vec{x}, \vec{f}, \vec{l}) \in  \FeasibleLP$, we apply the decomposition algorithm presented in Section~\ref{sec:decomp-algorithm-synopsis} to obtain the set of decomposed embeddings $\PotEmbeddings = \{(\prob,\mapping,\load)\}_k$ for all requests $\req \in \requests$. As the Integer Program~\ref{alg:SCEP-IP_without} enforces that $x_{\req} = 1$ holds for all requests $\req \in \requests$, we derive  the following corollary  from Lemma~\ref{lem:sum-of-f-mu}, stating that the sum of fractional embedding values is one for all requests.

\begin{corollary}
\label{cor:full-fractional-embeddings-cost}
$\sum_{\decomp \in \PotEmbeddings} \prob = 1$ holds for all requests.
\end{corollary}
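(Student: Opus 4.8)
The plan is to derive this corollary as an immediate specialization of the completeness Lemma~\ref{lem:sum-of-f-mu}. That lemma already establishes, for \emph{any} feasible linear solution fed into Algorithm~\ref{alg:decompositionAlgorithm}, the identity $\sum_{\decomp \in \PotEmbeddings} \prob = x_\req$ for each request $\req \in \requests$, where the fractional embedding values $\prob = f^k_\req$ sum to the induced flow out of the super source, which in turn equals the embedding variable $x_\req$ by Constraint~(\ref{alg:SCEP:FlowInduction}). So the only additional ingredient I need is to pin down the value of $x_\req$ in the solution under consideration.

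First I would recall that the solution $(\vec{x}, \vec{f}, \vec{l})$ being decomposed here is a feasible solution to the linear relaxation of Integer Program~\ref{alg:SCEP-IP_without} (SCEP-C), not of IP~\ref{alg:SCEP-IP}. The distinguishing feature of formulation~\ref{alg:SCEP-IP_without} is Constraint~(\ref{alg:SCEP:embedding_variable_without}), which explicitly fixes $x_\req = 1$ for all $\req \in \requests$. Crucially, relaxing the integrality of $\vec{x}$ and $\vec{f}$ (replacing $\{0,1\}$ by $[0,1]$) does not touch this equality constraint: it is retained verbatim in the relaxation, so every feasible point of $\FeasibleLP$ for SCEP-C satisfies $x_\req = 1$. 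Substituting $x_\req = 1$ into the conclusion of Lemma~\ref{lem:sum-of-f-mu} then yields $\sum_{\decomp \in \PotEmbeddings} \prob = x_\req = 1$ for every request, which is exactly the claim.

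There is essentially no obstacle here; the content is entirely carried by Lemma~\ref{lem:sum-of-f-mu}. The only point deserving a word of care is the observation that the forced-embedding constraint survives the relaxation unchanged (it is an equality on the continuous variables $x_\req$, not an integrality requirement), so that the convex-combination reading of the decomposition is legitimate. I would therefore phrase the proof as a single sentence invoking Lemma~\ref{lem:sum-of-f-mu} together with Constraint~(\ref{alg:SCEP:embedding_variable_without}), noting that this is precisely what upgrades the conical combination used for SCEP-P into a genuine \emph{convex} combination of valid mappings for each request in SCEP-C, which is what the subsequent rounding analysis for the cost variant will rely on.
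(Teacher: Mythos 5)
Your proof is correct and follows exactly the paper's own reasoning: the corollary is obtained by combining Lemma~\ref{lem:sum-of-f-mu} with the fact that Constraint~(\ref{alg:SCEP:embedding_variable_without}) of Integer Program~\ref{alg:SCEP-IP_without} forces $x_\req = 1$ for every request, a constraint that indeed persists in the linear relaxation. Your added remark that the equality constraint is untouched by relaxing integrality is a nice clarification but does not change the argument.
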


Lines~5-14 is the core addition of Algorithm~\ref{alg:approxWithoutAdmissionControl} when compared to Algorithm~\ref{alg:approxAdmissionControl}. This preprocessing step effectively removes fractional mappings that are too costly from the set of decompositions $\PotEmbeddings$ by setting their fractional embedding values to zero and rescaling the remaining ones. Concretely, given a request $\req \in \requests$, first the \emph{weighted (averaged)~cost} $\WAC = \sum_{\decomp \in \PotEmbeddings} \prob \cdot c(\mapping)$ is computed. In the next step, the fractional embedding values of those decompositions costing less than two times the weighted cost $\WAC$ is computed and assigned to $\lambda_{\req}$. Then, for each decomposition $\decomp = (\prob,\mapping,\load) \in  \PotEmbeddings$ a new fractional embedding value $\probHat$ is defined: either $\probHat$ is set to zero or is set to $\prob$ rescaled by dividing by $\lambda_{\req}$. The rescaling guarantees, that also for  $\PotEmbeddingsHat$ the sum of newly defined fractional embedding values $\probHat$ equals one, i.e. $\sum_{k=1}^{|\PotEmbeddingsHat|} \probHat = 1$ holds. As there will exist at least a single decomposition $\mapping$ such that $c(\mapping)~\leq 2\cdot \WAC$ holds, the set of decompositions will not be empty and the rest of the algorithm is well-defined. We formally prove this in Lemma~\ref{lem:wac-pruning} in the next section.

\begin{figure}[tbhp]

\scalebox{0.93}{
\begin{minipage}{1.07\columnwidth}

\begingroup
\removelatexerror

\begin{algorithm*}[H]

\let\oldnl\nl
\newcommand{\nonl}{\renewcommand{\nl}{\let\nl\oldnl}}
 
\SetKwInOut{Input}{Input}\SetKwInOut{Output}{Output}
\SetKwFunction{LP}{LP}
\SetKwFunction{LP}{LP}

\newcommand{\SET}{\textbf{set~}}
\newcommand{\DEFINE}{\textbf{define~}}
\newcommand{\AND}{\textbf{and~}}
\newcommand{\LET}{\textbf{let~}}
\newcommand{\WITH}{\textbf{with~}}
\newcommand{\COMPUTE}{\textbf{compute~}}
\newcommand{\CHOOSE}{\textbf{choose~}}
\newcommand{\DECOMPOSE}{\textbf{decompose~}}
\newcommand{\FORALL}{\textbf{for all~}}
\newcommand{\OBTAIN}{\textbf{obtain~}}
\newcommand{\ORDER}{\textbf{order~}}
\newcommand{\EXECUTE}{\textbf{execute~}}
\newcommand{\BREAK}{\textbf{break~}}
\newcommand{\WITHPROBABILITY}{\textbf{with probability~}}

\Input{~Substrate~$\SG=(\SV,\SE)$, set of requests~$\requests$, \\
\,~approximation factors~$\beta,\gamma \geq 0$,\\
\,~maximal number of rounding tries~$Q$}
\Output{~Approximate solution for SCEP-C}

\COMPUTE solution~$(\vec{x}, \vec{f}, \vec{l})$ of Linear Program~\ref{alg:SCEP-IP_without} \\
\If{\textnormal{Linear Program~\ref{alg:SCEP-IP_without} was infeasible}}{
	\KwRet{\textnormal{``no solution exists''}}\\
}
\COMPUTE $\{\PotEmbeddings | \req \in \requests \}$  using Algorithm~\ref{alg:decompositionAlgorithm}\\
\For{$\req \in \requests$}{
	\SET $\WAC \gets \sum_{\decomp \in \PotEmbeddings} \prob \cdot c(\mapping)$\\
	\SET $\lambda_{\req} \gets \sum_{\decomp \in \PotEmbeddings : c(\mapping)~\leq 2 \cdot \WAC } \prob $\\
	\SET $\PotEmbeddingsHat \gets \emptyset$\\
	\For{$(\prob,\mapping,\load) \in  \PotEmbeddings$}
	{
		\eIf{$c(\mapping)~\leq 2\cdot \WAC$}{
			\SET $\probHat \gets \prob / \lambda_{\req}$\\
		}{
			\SET $\probHat \gets 0$ \\
		}
		\SET $\PotEmbeddingsHat \gets \PotEmbeddingsHat \cup \{ (\probHat,\mapping,\load)\}$ \\	
	}
}

\SET $q \gets 1$\\
\While{$q \leq Q$}{
	\SET  $\hat{m}_{\req} \gets \emptyset$  \FORALL $\req \in \requests$ \\
	\SET $L[x,y] = 0$ \FORALL $(x,y) \in  \SR$\\
	\For{$\req \in R$}
	{
		\CHOOSE $p \in [0,1]$ uniformly at random \\
		\SET $\hat{k} \gets \min \{ k \in \{1, \dots, |\PotEmbeddingsHat|\}| \sum^k_{l=1} \probHat[\req][l] \geq p\}$\\
		\SET $\hat{m}_{\req} \gets \mapping[\req][\hat{k}]$ \AND $\hat{l}_{\req} \gets \load[\req][\hat{k}]$\\
		\For{$(x,y) \in  \SR$}{
			\SET $L[x,y] \gets L[x,y] + \hat{l}_{\req}(x,y)$ \\
		}
	}
	\If{\scalebox{0.97}{$\left(
	\begin{array}{l}
		 \hspace{20pt} L[\type,u] \leq (2+\beta)~\cdot \Scap(\type,u)~~\textnormal{\textbf{for~}} (\type,u) \in  \SRV \\
		 \textnormal{\AND} 	L[u,v] \leq (2+\gamma)~\cdot \Scap(u,v)~~\textnormal{\textbf{for~}}(u,v) \in  \SE
			\end{array}
	\right)$}}{
		\KwRet{$\{\hat{m}_\req | \req \in \requests \}$}\\					
	}
	$q \gets q +1$\\
}
\KwRet{\NULL}

\caption{Approximation Algorithm for Embeddings without Admission Control}
\label{alg:approxWithoutAdmissionControl}
\end{algorithm*}
\endgroup
\end{minipage}
 }
\end{figure}

After having preprocessed the decomposed solution of the linear relaxation of Integer Program~\ref{alg:SCEP-IP_without}, the randomized rounding scheme already presented in Section~\ref{sec:randround} is employed. For each request one of the decompositions $\decompHat \in \PotEmbeddingsHat$ is chosen according to the fractional embedding values $\probHat$ (see Lines~20-24). Note that as $\sum_{\decompHat \in \PotEmbeddingsHat} \probHat = 1$ holds, the index $\hat{k}$ will always be well-defined and hence for each request $\req \in \requests$ exactly one mapping $\hat{m}_{\req}$ will be selected. Analogously to Algorithm~\ref{alg:approxAdmissionControl}, the variables $L[x,y]$ store the induced loads by the current solution on substrate resource $(x,y) \in  \SR$ and a solution is only returned if neither any of the  node or any of the edge capacities are violated by more than a factor of $(2+\beta)$ and $(2+\gamma)$ respectively, where $\beta, \gamma > 0$ are again the respective approximation guarantee parameters (cf. Section~\ref{sec:randround}), which are an input to the algorithm. In the following sections, we prove that Algorithm~\ref{alg:approxWithoutAdmissionControl} yields solutions having at most two times the optimal costs and which exceed node and edge capacities  by no more than a factor of $(2+\beta)$ and $(2+\gamma)$ respectively with high probability.

\subsection{Deterministic Guarantee for the Cost}

In the following we show that Algorithm~\ref{alg:approxAdmissionControl} will only produce solutions whose costs are upper bounded by two times the optimal costs. The result follows from restricting the set of potential embeddings $\PotEmbeddingsHat$ to only contain decompositions having less than two times the weighted cost $\WAC$ for each request $\req \in \requests$. To show this, we first reformulate Lemma~\ref{lem:relation-of-net-profit-in-decomposition-and-the-LP-cost} in terms of the weighted cost as follows.

\begin{corollary}
\label{cor:relation-of-net-profit-in-decomposition-and-the-LP-cost-new}
Let $(\vec{x}, \vec{f}, \vec{l}) \in  \FeasibleLP$ denote an \emph{optimal} solution to the linear relaxation of Integer Program~\ref{alg:SCEP-IP_without} having a cost of $\optLP$ and let $\PotEmbeddings$ denote the decomposition of this linear solution computed by Algorithm~\ref{alg:decompositionAlgorithm}, then the following holds:
\begin{align}
\sum_{\req \in \requests} \WAC = \optLP~.
\end{align}
\end{corollary}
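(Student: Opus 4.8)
The plan is to reduce this statement directly to the equality case of Lemma~\ref{lem:relation-of-net-profit-in-decomposition-and-the-LP-cost}. First I would recall the definition of the weighted averaged cost introduced in Algorithm~\ref{alg:approxWithoutAdmissionControl} (Line~6), namely $\WAC = \sum_{\decomp \in \PotEmbeddings} \prob \cdot c(\mapping)$, and substitute it into the left-hand side of the claim. This yields
\[
\sum_{\req \in \requests} \WAC = \sum_{\req \in \requests} \sum_{\decomp \in \PotEmbeddings} \prob \cdot c(\mapping),
\]
which is exactly the double sum appearing on the left-hand side of Lemma~\ref{lem:relation-of-net-profit-in-decomposition-and-the-LP-cost}.

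Next I would invoke that lemma. Since the present hypothesis assumes that the linear solution $(\vec{x}, \vec{f}, \vec{l}) \in \FeasibleLP$, and hence its cost, is \emph{optimal}, the equality part of Lemma~\ref{lem:relation-of-net-profit-in-decomposition-and-the-LP-cost} applies rather than merely its inequality. It therefore gives
\[
\sum_{\req \in \requests} \sum_{\decomp \in \PotEmbeddings} \prob \cdot c(\mapping) = \optLP,
\]
and chaining the two equalities completes the argument.

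The only point worth verifying is a matter of notation rather than substance: Lemma~\ref{lem:relation-of-net-profit-in-decomposition-and-the-LP-cost} phrases its equality in terms of the cost $\hat{C}$ of the given feasible solution, whereas the corollary denotes the optimal cost by $\optLP$. Because the corollary explicitly assumes optimality, these two quantities coincide, and the substitution is legitimate. I expect no genuine obstacle here—the corollary is essentially a bookkeeping restatement of the earlier lemma, re-expressed in the vocabulary of the weighted cost $\WAC$ that Algorithm~\ref{alg:approxWithoutAdmissionControl} operates on, and it is this reformulation that will later let us bound the cost of the rounded solution by twice the optimum.
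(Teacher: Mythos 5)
Your proof is correct and takes exactly the same route as the paper: the paper also obtains this corollary directly from the equality case of Lemma~\ref{lem:relation-of-net-profit-in-decomposition-and-the-LP-cost} combined with the definition of $\WAC$ from Line~6 of Algorithm~\ref{alg:approxWithoutAdmissionControl}. Your observation that optimality of the solution is what licenses using the equality (rather than the inequality) part of that lemma is precisely the point the paper relies on.
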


The above corollary follows directly from Lemma~\ref{lem:relation-of-net-profit-in-decomposition-and-the-LP-cost} and the definition of $\WAC$ as computed in Line~6. As a next step towards proving the bound on the cost, we show that Algorithm~\ref{alg:approxWithoutAdmissionControl} is indeed well-defined, as $\PotEmbeddingsHat \neq \emptyset$. We even show a stronger result, namely that $\lambda_{\req} \geq 1/2$ holds and hence the cumulative embedding weights $\prob$ of the decompositions $(\prob,\mapping,\load)$ \emph{not set to 0} makes up half of the original embedding weights.

\begin{lemma}
\label{lem:wac-pruning}
The sum of fractional embedding values of decompositions whose cost is upper bounded by two times $\WAC$ is at least $1/2$. Formally,
$\lambda_{\req} \geq 1/2$  holds for all requests $\req \in \requests$.
\begin{proof}
For the sake of contradiction, assume that $\lambda_{\req} < 1/2$ holds for any request $\req \in \requests$. By the definition of $\WAC$ and the assumption on $\lambda_{\req}$, we obtain the following contradiction:
\begin{alignat}{5}
& \WAC ~& = & ~\sum_{\decomp \in \PotEmbeddings} \prob \cdot c(\mapping)~\label{eq:a}\\
&      & \leq & ~\sum_{\decomp \in \PotEmbeddings: c(\mapping)~> 2 \cdot \WAC} \prob \cdot c(\mapping)~\label{eq:b}\\
&      & \leq  & ~\sum_{\decomp \in \PotEmbeddings: c(\mapping)~> 2 \cdot \WAC} \prob \cdot 2\cdot \WAC \label{eq:c}\\
&      & \leq  & ~(1-\lambda_{\req})~\cdot 2 \cdot \WAC \label{eq:d}\\
&      & < & ~\WAC \label{eq:e}
\end{alignat}
For Equation~\ref{eq:a} the value of $\WAC$ as computed in Algorithm~\ref{alg:approxWithoutAdmissionControl} was used. Equation~\ref{eq:b} holds as only a subset of decompositions, namely the ones with mappings of costs higher than two times $\WAC$, are considered and $\prob \geq 0$ holds by definition. The validity of Equation~\ref{eq:c} follows as all the considered decompositions have a cost of at least two times $\WAC$ and Equation~\ref{eq:d} follows as $(1-\lambda_{\req})~> 1/2$ holds by assumption. Lastly, Equation~\ref{eq:e} yields the contradiction, showing that indeed $\lambda_{\req} \geq 1/2$ holds for all requests $\req \in \requests$.
\end{proof}
\end{lemma}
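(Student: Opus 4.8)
The plan is to recognize this as a \emph{Markov-inequality} statement. By Corollary~\ref{cor:full-fractional-embeddings-cost}, the fractional embedding values $\prob$ of the decompositions $\decomp \in \PotEmbeddings$ sum to one, so they constitute a probability distribution over the decompositions, under which $c(\mapping)$ is a nonnegative quantity whose weighted mean is exactly $\WAC$ by definition of the weighted cost. The value $\lambda_{\req}$ is precisely the probability mass placed on the ``cheap'' decompositions, those with $c(\mapping) \leq 2\WAC$, and the claim $\lambda_{\req} \geq 1/2$ is just the assertion that at most half the mass can sit strictly above twice the mean.

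First I would partition $\PotEmbeddings$ into the cheap decompositions (with $c(\mapping) \leq 2\WAC$) and the expensive ones (with $c(\mapping) > 2\WAC$); since the total mass is one, the cheap set carries mass $\lambda_{\req}$ and the expensive set therefore carries mass $1-\lambda_{\req}$. Next I would lower-bound the mean by discarding the nonnegative contribution of the cheap decompositions and bounding each expensive term below by $\prob \cdot 2\WAC$, giving
\begin{align*}
\WAC = \sum_{\decomp \in \PotEmbeddings} \prob \cdot c(\mapping) \geq \sum_{\decomp \in \PotEmbeddings : c(\mapping) > 2\WAC} \prob \cdot 2\WAC = (1-\lambda_{\req}) \cdot 2\WAC.
\end{align*}
Assuming $\WAC > 0$, dividing by $\WAC$ yields $1 \geq 2(1-\lambda_{\req})$ and hence $\lambda_{\req} \geq 1/2$, as desired.

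The only real subtlety — and the step I would treat most carefully — is the degenerate case $\WAC = 0$, where the final division is not licensed. Here the conclusion is in fact immediate: if $\WAC = 0$ then, since every $c(\mapping) \geq 0$ and the weighted sum vanishes, every decomposition with $\prob > 0$ must satisfy $c(\mapping) = 0 \leq 2\WAC$, so the cheap set already carries the full mass and $\lambda_{\req} = 1 \geq 1/2$. An equivalent way to sidestep the division entirely is to argue by contradiction: suppose $\lambda_{\req} < 1/2$; then $1-\lambda_{\req} > 1/2$, and the displayed chain gives $\WAC \geq (1-\lambda_{\req}) \cdot 2\WAC > \WAC$, a contradiction. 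This contradiction route is the form I would ultimately write down, since it never forms a quotient and so handles the $\WAC = 0$ case uniformly with the generic one.
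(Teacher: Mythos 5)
Your core argument is the paper's own: a Markov-type bound on the mass of decompositions costing more than twice the weighted average, using Corollary~\ref{cor:full-fractional-embeddings-cost} to identify the expensive mass with $1-\lambda_{\req}$. In fact your direct chain
\[
\WAC \;=\; \sum_{\decomp \in \PotEmbeddings} \prob \cdot c(\mapping) \;\geq\; \sum_{\decomp \in \PotEmbeddings :\, c(\mapping) > 2 \WAC} \prob \cdot c(\mapping) \;\geq\; (1-\lambda_{\req}) \cdot 2 \WAC
\]
is the correct rendering of the paper's displayed chain~(\ref{eq:a})--(\ref{eq:e}): the paper prints the relations in~(\ref{eq:b}), (\ref{eq:c}) and~(\ref{eq:e}) as ``$\leq$''/``$<$'', although the justifications it gives support exactly your ``$\geq$'' directions, and the contradiction actually reached is $\WAC > \WAC$. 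So the main line of your proposal is sound, and your explicit treatment of the degenerate case $\WAC = 0$ (every decomposition with $\prob > 0$ then has cost $0$, hence $\lambda_{\req} = 1$) is more careful than the paper, which silently ignores it.

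The flaw is in your final sentence, i.e.\ in the version you say you would ultimately write down. The contradiction route does \emph{not} handle $\WAC = 0$ uniformly: from $2(1-\lambda_{\req}) > 1$ you may conclude $(1-\lambda_{\req}) \cdot 2\WAC > \WAC$ only by multiplying through by $\WAC$, and when $\WAC = 0$ both sides are $0$, so that strict inequality is simply false; the chain then yields only $0 \geq 0$ and no contradiction arises. Avoiding a quotient does not avoid the issue --- strictness needs $\WAC > 0$ exactly as the division did. If you submit only the contradiction form, you reintroduce the degenerate-case gap (the same one present in the paper's proof). The proof to write down is your first one: the displayed inequality, division when $\WAC > 0$, and the separate observation that $\WAC = 0$ forces $\lambda_{\req} = 1$.
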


By the above lemma, the set $\PotEmbeddingsHat$ will indeed not be empty. As per rescaling $\sum_{\decompHat \in \PotEmbeddingsHat} \probHat = 1$ holds, for each request exactly one decomposition will be selected. Finally, we derive the following lemma.

\begin{lemma}
\label{lem:deterministic-objective-guarantee-without}
The cost~$\sum_{\req \in \requests} c(\hat{m}_{\req})$ of any solution returned by Algorithm~\ref{alg:approxWithoutAdmissionControl} is upper bounded by two times the optimal cost.
\begin{proof}
Let $\optLP$ denote the cost of the optimal linear solution to the linear relaxation of Integer Program~\ref{alg:SCEP-IP_without} and let $\optIP$ denote the cost of the respective optimal integer solution. By allowing to select only decompositions $\map$ for which $c(\map)~\leq 2\cdot \WAC$ holds in  Algorithm~\ref{alg:approxWithoutAdmissionControl} and as for each request a single mapping is chosen, we have $c(\hat{m}_{\req})~\leq 2\cdot \WAC$, where $\hat{m}_{\req}$ refers to the actually selected mapping. Hence $\sum_{\req \in \requests} c(\hat{m}_{\req})~\leq 2\cdot \sum_{\req \in \requests} \WAC$ holds. By Corollary~\ref{cor:relation-of-net-profit-in-decomposition-and-the-LP-cost-new} and the fact that $\optLP \leq \optIP$ holds (cf. Fact~\ref{obs:IP-solutions-are-a-subset-of-LP-solutions}), we can conclude that $\sum_{\req \in \requests} c(\hat{m}_{\req})~\leq 2 \cdot \optIP$ holds, proving the lemma.
\end{proof}
\end{lemma}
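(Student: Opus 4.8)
The plan is to prove the bound \emph{deterministically}, one request at a time, by exploiting the pruning step in Lines~5--14 of Algorithm~\ref{alg:approxWithoutAdmissionControl}. First I would observe that the rounding phase can only commit to a decomposition $\decompHat \in \PotEmbeddingsHat$ whose rescaled weight $\probHat$ is strictly positive, and by construction these are exactly the decompositions satisfying $c(\mapping) \leq 2 \cdot \WAC$: any decomposition with $c(\mapping) > 2 \cdot \WAC$ was assigned $\probHat = 0$ and is therefore never selected. Consequently, for every request $\req \in \requests$ the chosen mapping obeys the pointwise bound $c(\hat{m}_{\req}) \leq 2 \cdot \WAC$, and this holds for \emph{any} run of the algorithm, independently of the random draws in the rounding loop. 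To ensure the selection is even well-defined I would invoke Lemma~\ref{lem:wac-pruning}, which guarantees $\lambda_{\req} \geq 1/2 > 0$ and hence a nonempty, correctly normalized $\PotEmbeddingsHat$.

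Summing the pointwise bound over all requests immediately gives $\sum_{\req \in \requests} c(\hat{m}_{\req}) \leq 2 \sum_{\req \in \requests} \WAC$. The next step is to identify the right-hand side with the relaxation optimum: Corollary~\ref{cor:relation-of-net-profit-in-decomposition-and-the-LP-cost-new} yields $\sum_{\req \in \requests} \WAC = \optLP$, where $\optLP$ is the cost of an optimal solution to the relaxation of Integer Program~\ref{alg:SCEP-IP_without}. Finally, since $\FeasibleIP \subseteq \FeasibleLP$ (Fact~\ref{obs:IP-solutions-are-a-subset-of-LP-solutions}) and we are minimizing, the relaxation lower-bounds the integral optimum, so $\optLP \leq \optIP$. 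Chaining these estimates gives $\sum_{\req \in \requests} c(\hat{m}_{\req}) \leq 2 \optLP \leq 2 \optIP$, which is precisely the claimed factor-two guarantee, as $\optIP$ is the optimal cost of SCEP-C.

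The conceptual crux does not lie in this final chaining but in the two ingredients it rests on. The pointwise estimate $c(\hat{m}_{\req}) \leq 2 \cdot \WAC$ is trivial once pruning is in place, yet pruning is only legitimate because Lemma~\ref{lem:wac-pruning} certifies that the cheap decompositions retain at least half of the fractional embedding mass; this is a Markov-type averaging fact (if decompositions of cost above twice the weighted average carried more than half the mass, the weighted average would exceed itself), and it is what guarantees that a feasible rounding survives after discarding the expensive half. I therefore expect the main obstacle to have been discharged already in Lemma~\ref{lem:wac-pruning} and Corollary~\ref{cor:relation-of-net-profit-in-decomposition-and-the-LP-cost-new}; given those, the present lemma reduces to a two-line deduction. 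In the write-up I would stress that, unlike the capacity guarantees of Section~\ref{sec:randround}, this cost bound is \emph{deterministic}: it is insensitive to the random choices, so no union bound or concentration inequality enters here.
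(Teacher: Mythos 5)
Your proof is correct and follows essentially the same route as the paper's: the pointwise bound $c(\hat{m}_{\req}) \leq 2 \cdot \WAC$ forced by the pruning step, summation over requests, the identification $\sum_{\req \in \requests} \WAC = \optLP$ via Corollary~\ref{cor:relation-of-net-profit-in-decomposition-and-the-LP-cost-new}, and finally $\optLP \leq \optIP$ from Fact~\ref{obs:IP-solutions-are-a-subset-of-LP-solutions}. Your additional remarks on well-definedness via Lemma~\ref{lem:wac-pruning} and on the deterministic nature of the bound match the paper's surrounding discussion rather than diverging from it.
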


Note that the above upper bound on the cost of any produced solution is deterministic and does not depend on the random choices made in Algorithm~\ref{alg:approxWithoutAdmissionControl}.

\subsection{Probabilistic Guarantees for Capacity Violations}

Algorithm~\ref{alg:approxWithoutAdmissionControl} employs the same rounding procedure as Algorithm~\ref{alg:approxAdmissionControl} presented in Section~\ref{sec:randround} for computing approximations for SCEP-P. Indeed, the only major change with respect to Algorithm~\ref{alg:approxAdmissionControl} is the preprocessing that replaces the fractional embedding values $\prob$ with $\probHat$. As the changed values $\probHat$ are used for probabilistically selecting the mapping for each of the requests, the analysis of the capacity violations needs to reflect these changes. To this end, Lemma~\ref{lem:wac-pruning} will be instrumental as it shows that each of the decompositions is scaled by at most a factor of two. As the following lemma shows, this implies that the decompositions contained in $\PotEmbeddingsHat$ are using at most two times the original capacities of nodes and functions.

\begin{lemma}
\label{lem:relation-of-loads-to-capacities-without-admission}
The fractional decompositions of all requests violate node function and edge capacities by at most a factor of two, i.e. 
\begin{alignat}{3}
\sum_{\req \in \requests} \sum_{\decompHat \in \PotEmbeddingsHat} \probHat \cdot \load(x,y)~\leq 2\cdot \Scap(x,y)
\label{eq:lem:scaling-allocations}
\end{alignat}
holds for all resources $(x,y) \in  \SR$.
\begin{proof}
By Lemma~\ref{lem:allocations}, the fractional allocations of the original decompositions $\PotEmbeddings$ do not violate capacities. Hence, by multiplying Equation~\ref{eq:lem:allocations} by two, we obtain that 
\begin{align}
2\cdot \sum_{\req \in \requests} \sum_{D^k_{\req} \in \mathcal{D}_\req} f^k_\req \leq 2\cdot \Scap(x,y)
\label{eq:lem:scaling-allocations-2}
\end{align}
holds for all resources $(x,y) \in  \SRV$. By Lemma~\ref{lem:wac-pruning}, $\lambda_{\req} \geq 1/2$ holds for all requests $\req \in \requests$. Hence, the probabilities of the decompositions having costs less than $\WAC$ are scaled by at most a factor of two, i.e. $\probHat \leq 2\cdot \prob$ holds for all requests $\req \in \requests$ and each decomposition. As the mappings and the respective loads of $\PotEmbeddings$ and $\PotEmbeddingsHat$ are the same, we  obtain that
\begin{align*}
\sum_{\req \in \requests} \sum_{\decompHat \in \PotEmbeddingsHat} \probHat \cdot \load(x,y)~\leq 2\cdot \sum_{\req \in \requests} \sum_{\decomp \in \PotEmbeddings} \prob \cdot \load(x,y)
\end{align*} 
holds for all resources $(x,y) \in  \SRV$. Together with Equation~\ref{eq:lem:scaling-allocations-2} this proves the lemma.
\end{proof}
\end{lemma}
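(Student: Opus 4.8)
The plan is to combine the two facts already in hand: the feasibility of the \emph{original} (pre-preprocessing) fractional decomposition established in Lemma~\ref{lem:allocations}, and the lower bound $\lambda_{\req} \geq 1/2$ on the retained probability mass established in Lemma~\ref{lem:wac-pruning}. The crucial structural observation is that the preprocessing in Lines~5--14 of Algorithm~\ref{alg:approxWithoutAdmissionControl} leaves both the mappings $\mapping$ and the induced loads $\load$ untouched; it only reweights, setting $\probHat = \prob / \lambda_{\req}$ for decompositions with $c(\mapping)~\leq 2\cdot \WAC$ and $\probHat = 0$ otherwise. Thus the left-hand side of the claim is merely a reweighting of the same loads, and the entire argument reduces to controlling the ratio between the new and the old weights.

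First I would establish the pointwise inequality $\probHat \leq 2\cdot \prob$ for every request $\req \in \requests$ and every decomposition $\decomp \in \PotEmbeddings$. For a retained decomposition this is immediate from $\probHat = \prob/\lambda_{\req}$ together with $\lambda_{\req} \geq 1/2$, which gives $1/\lambda_{\req} \leq 2$; for a dropped decomposition it is trivial since $\probHat = 0 \leq 2\cdot \prob$. Because the loads $\load(x,y)$ are nonnegative and identical across $\PotEmbeddings$ and $\PotEmbeddingsHat$, weighting this pointwise inequality by the loads and summing over all requests and decompositions yields
\begin{align*}
\sum_{\req \in \requests} \sum_{\decompHat \in \PotEmbeddingsHat} \probHat \cdot \load(x,y) \leq 2\cdot \sum_{\req \in \requests} \sum_{\decomp \in \PotEmbeddings} \prob \cdot \load(x,y)
\end{align*}
for every resource $(x,y) \in \SR$. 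Invoking Lemma~\ref{lem:allocations} to bound the right-hand inner sum by $\Scap(x,y)$ and multiplying by two then closes the argument.

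I do not expect a genuine obstacle here, since both ingredients are already proven; the only subtlety is the verification that the rescaling preserves $\load$ while inflating each weight by at most a factor of two, so that feasibility of the decomposition transfers verbatim into a factor-two capacity guarantee. In effect, the whole difficulty of the cost-minimization variant has been concentrated in Lemma~\ref{lem:wac-pruning}: once $\lambda_{\req} \geq 1/2$ is known, the capacity blow-up follows mechanically. I would also remark that, although the statement is phrased for all of $\SR$, both supporting lemmas hold uniformly for node-function resources and edges, so no case split between $\SRV$ and $\SE$ is required.
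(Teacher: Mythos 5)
Your proof is correct and follows essentially the same route as the paper's: both arguments reduce to the pointwise bound $\probHat \leq 2\cdot \prob$ obtained from $\lambda_{\req} \geq 1/2$ (Lemma~\ref{lem:wac-pruning}), combined with the observation that the loads are unchanged and the capacity bound of Lemma~\ref{lem:allocations}. Your treatment is if anything slightly more careful, since you explicitly check the dropped decompositions ($\probHat = 0$) and note that no case split between $\SRV$ and $\SE$ is needed.
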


Given the above lemma, we restate most of the lemmas already contained in Section~\ref{sec:performance-guarantee-cap-violation-admission-control} with only minor changes. We note that the definition of the maximal loads (cf. Definition~\ref{def:maximal-loads})~are independent of the decompositions found and that the corresponding assumptions made in Lemma~\ref{lem:assumptions} are still valid when using the relaxation of Integer Program~\ref{alg:SCEP-IP_without} and are independent of the scaling. Concretely, forbidding mappings of network functions or edges onto elements that can never support them is still feasible. Furthermore note that while we effectively scale the probabilities of choosing specific decomposition, this does not change the corresponding loads of the (identical)~mappings. 

We again model the load on the different network functions $(\type,u) \in  \SRV$ and the substrate edges $(u,v) \in  \SE$ induced by each of the requests $\req \in \requests$ as random variables $L_{\req,\type,u} \in [0,\maxLoadVSum]$ and $L_{\req,u,v} \in [0,\maxLoadE \cdot |\VE|]$ respectively. To this end, we note that Algorithm~\ref{alg:approxWithoutAdmissionControl} chooses the $k$-th decomposition $\decompHat = (\probHat, \load,\mapping)$ with probability $\probHat$. Hence, with probability $\probHat$ the load $\load$ is induced for request $\req \in \requests$. The respective variables can therefore be defined as $
\mathbb{P}(L_{\req,x,y} = \load(x,y))= \probHat $ (assuming pairwise different loads)~and $\mathbb{P}(L_{\req,x,y} = 0)=  1 - \sum_{\decompHat \in \PotEmbeddingsHat } \probHat $ for $(x,y) \in  \SRV$.
Again, we denote by $L_{x,y} = \sum_{\req \in \requests} L_{\req,x,y}$ the overall load induced on resource $(x,y) \in  \SR$.

By definition of the expectation, the expected load on the network resource $(x,y) \in  \SRV$ computes to
\begin{alignat}{3}
\mathbb{E}(L_{x,y})~&& = & \sum_{\req \in \requests} \sum_{\decompHat \in \PotEmbeddingsHat} \probHat \cdot \load(x,y)~ \,.
\end{alignat}
Together with Lemma~\ref{lem:relation-of-loads-to-capacities-without-admission} we obtain:
\begin{alignat}{5}
\mathbb{E}(L_{x,y})~&& \leq & 2\cdot \Scap(x,y)~& \quad & \textnormal{for}~ (x,y) \in  \SRV \label{eq:load-vs-capacity-without}
\end{alignat}
Using the above, we again apply Hoeffding's Inequality, while slightly adapting the probabilistic bounds compared with  Lemma~\ref{lem:approximation-single-node}.
\begin{lemma}
\label{lem:approximation-single-node-without}
Let~$\DeltaV = \sum_{\req \in \requests}~(\maxLoadVSum / \maxLoadV)^2$. The probability that the capacity of a single network function~$\type \in \types$ on node~$u \in \SVTypes$ is exceeded by more than a factor~$(2+\sqrt{\log~(|\SV|\cdot|\types|)~\cdot \DeltaV})$ is upper bounded by~$(|\SV| \cdot |\types|)^{-2}$.
\begin{proof}
Each variable~$\randVarLNode$ is clearly contained in the interval~$[0, \maxLoadVSum]$. We choose~$t = \sqrt{\log~(|\SV|\cdot|\types|)~\cdot \DeltaV  } \cdot \Scap(\type,u)$ and obtain:
{
\noindent
\begin{alignat*}{2}
& \mathbb{P} \Big(L_{\type,u} - \mathbb{E}(L_{\type,u})~\geq t \Big)~&&  \notag \\
& ~~~ \leq exp \Bigg(\frac{-2\cdot t^2}{\sum_{\req \in \requests}~(\maxLoadVSum)^2} \Bigg)~&& \\
& ~~~ \leq exp \Bigg(\frac{-2 \log~(|\SV|\cdot|\types|)~\cdot \DeltaV \cdot  \Scap(\type,u)^2}{\sum_{\req \in \requests}~(\Scap(\type,u)\cdot \maxLoadVSum / \maxLoadV)^2} \Bigg)~&& \numberthis \label{formula:proof:singleNodeViolation:upperCapacities-without}\\
& ~~~ = exp \Bigg(\frac{-2 \log~(|\SV|\cdot|\types|)~\sum_{\req \in \requests}~(\maxLoadVSum / \maxLoadV)^2 }{\sum_{\req \in \requests}~( \maxLoadVSum / \maxLoadV)^2} \Bigg)~&& \\
& ~~~ = exp \Bigg(\frac{-2 \log~(|\SV|\cdot|\types|)~\sum_{\req \in \requests}~(\maxLoadVSum / \maxLoadV)^2 }{\sum_{\req \in \requests}~( \maxLoadVSum / \maxLoadV)^2} \Bigg)~&& \\
& ~~~ =~(|\SV|\cdot |\types|)^{-2} && 
\end{alignat*}
}
Analogously to Lemma~\ref{lem:approximation-single-node}, we have used again
\begin{align}
\maxLoadVSum \leq \Scap(\type,u)~\cdot \maxLoadVSum / \maxLoadV \label{formula:proof:singleNodeViolation:estimation-without}
\end{align}
in Equation~\ref{formula:proof:singleNodeViolation:upperCapacities-without}. Analogously to the proof of Lemma~\ref{lem:approximation-single-node}, we have $\mathbb{E}(L_{\type,u})~\leq 2 \cdot \Scap(\type, u)$ by Equation~\ref{eq:load-vs-capacity-without} and the lemma follows.
\end{proof}
\end{lemma}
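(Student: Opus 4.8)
The plan is to mirror the proof of Lemma~\ref{lem:approximation-single-node}, adapting it to the two structural changes introduced by the cost-minimization setting: the loads are now governed by the rescaled embedding values $\probHat$ instead of $\prob$, and as a consequence the expected load is bounded only by $2 \cdot \Scap(\type,u)$ rather than by $\Scap(\type,u)$. First I would fix a single network function $\type \in \types$ on a node $u \in \SVTypes$ and observe that the total induced load is the sum $L_{\type,u} = \sum_{\req \in \requests} \randVarLNode$ of random variables that are \emph{independent across requests}, since in Algorithm~\ref{alg:approxWithoutAdmissionControl} each request draws its decomposition independently and contributes the (fixed) load $\load(\type,u)$ of the drawn decomposition. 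Each $\randVarLNode$ is supported on $[0, \maxLoadVSum]$, and this independence is exactly what licenses an application of Hoeffding's inequality.

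Next I would apply Hoeffding's inequality with the deviation threshold $t = \sqrt{\log(|\SV| \cdot |\types|) \cdot \DeltaV} \cdot \Scap(\type,u)$; the decisive difference from Lemma~\ref{lem:approximation-single-node} is that the expression inside the square root now carries \emph{no} leading factor $2$. Substituting $t$ produces an exponent of $-2 t^2 / \sum_{\req \in \requests} (\maxLoadVSum)^2$. To control the denominator I would use the elementary estimate $\maxLoadVSum \leq \Scap(\type,u) \cdot \maxLoadVSum / \maxLoadV$, which is valid because Lemma~\ref{lem:assumptions} permits us to assume $\maxLoadV \leq \Scap(\type,u)$. Plugging this in bounds the denominator by $\Scap(\type,u)^2 \cdot \DeltaV$, so the $\Scap(\type,u)^2$ factors cancel against $\DeltaV = \sum_{\req \in \requests} (\maxLoadVSum / \maxLoadV)^2$ and the exponent collapses to exactly $-2 \log(|\SV| \cdot |\types|)$. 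This yields the tail bound $(|\SV| \cdot |\types|)^{-2}$.

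Finally I would translate this additive deviation bound on $L_{\type,u} - \mathbb{E}(L_{\type,u})$ into the claimed multiplicative capacity-violation statement. Here the expectation bound $\mathbb{E}(L_{\type,u}) \leq 2 \cdot \Scap(\type,u)$ from Equation~\ref{eq:load-vs-capacity-without}, which itself rests on Lemma~\ref{lem:relation-of-loads-to-capacities-without-admission} (where $\lambda_{\req} \geq 1/2$ forces $\probHat \leq 2\prob$), is what contributes the additive constant $2$. Adding $t$ to this expectation bound and dividing through by $\Scap(\type,u)$ gives precisely the factor $(2 + \sqrt{\log(|\SV| \cdot |\types|) \cdot \DeltaV})$, completing the argument.

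The part that needs the most care is the bookkeeping of the two distinct appearances of the factor $2$: one enters \emph{additively}, through the doubled expectation inherited from the preprocessing rescaling (the $2+\cdots$), while the multiplicative deviation term must be kept separate and its square root deliberately stripped of the leading $2$ present in the admission-control version. Getting the chosen $t$ right is exactly what makes the $\Scap(\type,u)^2$ and $\DeltaV$ factors cancel cleanly; this reparametrization is what trades the stronger $(|\SV| \cdot |\types|)^{-4}$ tail of Lemma~\ref{lem:approximation-single-node} for the weaker but still polynomially small $(|\SV| \cdot |\types|)^{-2}$, which remains amply sufficient for the subsequent union bound over the at most $|\SV| \cdot |\types|$ network functions.
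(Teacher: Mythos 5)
Your proposal is correct and follows essentially the same route as the paper's own proof: the same choice of $t = \sqrt{\log(|\SV|\cdot|\types|)\cdot \DeltaV}\cdot \Scap(\type,u)$ in Hoeffding's inequality, the same use of Lemma~\ref{lem:assumptions} to bound the denominator via $\maxLoadVSum \leq \Scap(\type,u)\cdot \maxLoadVSum/\maxLoadV$, and the same final step converting the additive tail bound into the multiplicative guarantee via $\mathbb{E}(L_{\type,u}) \leq 2\cdot\Scap(\type,u)$ from Equation~\ref{eq:load-vs-capacity-without}. Your explicit bookkeeping of where the two factors of $2$ enter (additively from the rescaled expectation, and the removed factor inside the square root trading the $(|\SV|\cdot|\types|)^{-4}$ tail for $(|\SV|\cdot|\types|)^{-2}$) is a faithful account of exactly what the paper does implicitly.
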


We restate Lemma~\ref{lem:approximation-single-edge} without proof as the only change is replacing the factor of  $\sqrt{2}$ with the factor $\sqrt{3/2}$ and
observing that by using Equation~\ref{eq:load-vs-capacity-without} we obtain a $2+\gamma$ approximation for the load instead of a $1+\gamma$ one.

\begin{lemma}
\label{lem:approximation-single-edge-without}
Let~$\DeltaE = \sum_{\req \in \requests} |\VE|^2$. The probability that the capacity of a single substrate edge~$(u,v) \in  \SE$ is exceeds the capacity by more than a factor~$(2+\sqrt{3/2\cdot\log~(|\SV|)~\cdot \DeltaE})$ is bounded by~$|\SV|^{-2}$.
\end{lemma}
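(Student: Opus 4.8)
The plan is to replay the proof of Lemma~\ref{lem:approximation-single-edge} almost verbatim, changing only the two quantities that Section~\ref{sec:approximation-without-admission-control} has altered: the expected load is now bounded by $2\cdot\Scap(u,v)$ rather than $\Scap(u,v)$ (Equation~\ref{eq:load-vs-capacity-without}), and the coefficient under the square root is recalibrated from $\sqrt{2}$ to $\sqrt{3/2}$. Concretely, I would first model the per-request edge load exactly as in Section~\ref{sec:performance-guarantee-cap-violation-admission-control}, treating each $\randVarLEdge$ as an independent random variable supported on $[0,\maxLoadE\cdot|\VE|]$ and writing $L_{u,v}=\sum_{\req\in\requests}\randVarLEdge$. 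I would then set the deviation threshold to $t=\sqrt{3/2\cdot\log(|\SV|)\cdot\DeltaE}\cdot\Scap(u,v)$, which is the sole numerical departure from Lemma~\ref{lem:approximation-single-edge}.

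With this setup, applying Hoeffding's inequality to $L_{u,v}$ yields a tail bound of the form $\exp\!\bigl(-2t^2/\sum_{\req\in\requests}(\maxLoadE\cdot|\VE|)^2\bigr)$. The decisive simplification is identical to the one in Lemma~\ref{lem:approximation-single-node-without}: using $\maxLoadE\leq\Scap(u,v)$ from Lemma~\ref{lem:assumptions}, I would pull $\Scap(u,v)^2$ out of the denominator's sum and replace $\sum_{\req}|\VE|^2$ by $\DeltaE$. Substituting the chosen $t$, the factors $\Scap(u,v)^2$ and $\DeltaE$ cancel between numerator and denominator, so the exponent collapses to a fixed multiple of $\log(|\SV|)$ and the probability reduces to the stated power of $|\SV|$. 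The coefficient $3/2$ is exactly what tunes this multiple, just as the coefficient $2$ did in the admission-control case.

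The last step is to turn the additive deviation bound into the multiplicative $(2+\gamma)$ statement of the lemma. Because Lemma~\ref{lem:relation-of-loads-to-capacities-without-admission} only guarantees $\Exp(L_{u,v})\leq 2\cdot\Scap(u,v)$, I would observe that the event $\{L_{u,v}\geq(2+\gamma)\Scap(u,v)\}$ is contained in $\{L_{u,v}-\Exp(L_{u,v})\geq\gamma\cdot\Scap(u,v)\}$ for $\gamma=\sqrt{3/2\cdot\log(|\SV|)\cdot\DeltaE}=t/\Scap(u,v)$, so the Hoeffding bound applies directly. I expect the only real obstacle to be the bookkeeping of this factor of two: it must be absorbed into the leading constant of the $(2+\gamma)$ factor rather than into the tail, and it is precisely this relocation that frees the square-root coefficient to be reset from $\sqrt{2}$ to $\sqrt{3/2}$. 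Once that accounting is carried out carefully, no ideas beyond those already used in Lemmas~\ref{lem:approximation-single-edge} and \ref{lem:approximation-single-node-without} are needed.
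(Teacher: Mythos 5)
Your proposal is correct and follows essentially the same route as the paper, which states this lemma without proof precisely because it is Lemma~\ref{lem:approximation-single-edge} rerun with the two changes you identify: the expectation bound $\Exp(L_{u,v}) \leq 2\cdot\Scap(u,v)$ from Equation~\ref{eq:load-vs-capacity-without} (yielding the leading $2$ in the factor $2+\gamma$) and the coefficient $\sqrt{3/2}$ in the threshold $t$. One cosmetic remark: carrying your computation through gives $\exp(-3\log|\SV|) = |\SV|^{-3}$, which is even stronger than the stated bound $|\SV|^{-2}$ and is in fact the value the paper later uses in its union-bound argument for SCEP-C.
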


The following corollaries show that the above bounds play out nicely if 
we assume more strict bounds on the maximal loads.

\begin{corollary}
Assume that~$\maxLoadV \leq \varepsilon \cdot \Scap(\type, u)$ holds for~$0 < \varepsilon < 1$ and all~$\type \in \types, u \in \SVTypes$. With~$\DeltaV$ as defined in Lemma~\ref{lem:approximation-single-node}, we obtain:
The probability that in Algorithm~\ref{alg:approxAdmissionControl} the capacity of a single network function~$\type \in \types$ on node~$u \in \SVTypes$ is exceeded by more than a factor~$(2+ \varepsilon \cdot \sqrt{\log~(|\SV|\cdot|\types|)~\cdot \DeltaV})$ is upper bounded by~$(|\SV| \cdot |\types|)^{-2}$.
\end{corollary}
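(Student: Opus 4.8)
This is a corollary of Lemma~\ref{lem:approximation-single-node-without}, obtained by specializing the assumption on the maximal node load. The plan is to simply re-run the Hoeffding argument from the proof of Lemma~\ref{lem:approximation-single-node-without}, but to exploit the stronger hypothesis $\maxLoadV \leq \varepsilon \cdot \Scap(\type, u)$ at exactly the point where the weaker bound $\maxLoadV \leq \Scap(\type,u)$ (from Lemma~\ref{lem:assumptions}) was previously invoked. Concretely, I would model the per-request load on the network function as the random variable $\randVarLNode \in [0, \maxLoadVSum]$ exactly as before, set $t = \varepsilon \cdot \sqrt{\log~(|\SV|\cdot|\types|)~\cdot \DeltaV} \cdot \Scap(\type,u)$, and feed this into Hoeffding's Inequality with denominator $\sum_{\req \in \requests}(\maxLoadVSum)^2$.

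\textbf{The key substitution.} In the original Lemma~\ref{lem:approximation-single-node-without}, the crucial estimate~\eqref{formula:proof:singleNodeViolation:estimation-without} was $\maxLoadVSum \leq \Scap(\type,u) \cdot \maxLoadVSum / \maxLoadV$, which holds because $\maxLoadV \leq \Scap(\type,u)$. Under the stronger assumption $\maxLoadV \leq \varepsilon \cdot \Scap(\type,u)$, the sharper bound
\[
\maxLoadVSum \leq \varepsilon \cdot \Scap(\type,u) \cdot \maxLoadVSum / \maxLoadV
\]
holds instead. Squaring and substituting into the Hoeffding denominator produces an extra factor of $\varepsilon^2$ there, which cancels against the $\varepsilon^2$ appearing in $t^2$ in the numerator. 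After cancellation the exponent reduces to exactly $-2\log(|\SV|\cdot|\types|) \cdot \DeltaV / \DeltaV = -2\log(|\SV|\cdot|\types|)$, yielding the bound $(|\SV|\cdot|\types|)^{-2}$, precisely as in the non-$\varepsilon$ version.

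\textbf{Translating $t$ into the capacity-violation form.} The final step is to rewrite the tail event $L_{\type,u} - \Exp(L_{\type,u}) \geq t$ as a multiplicative violation of the capacity. Using $\Exp(L_{\type,u}) \leq 2 \cdot \Scap(\type,u)$ from Equation~\eqref{eq:load-vs-capacity-without} and the choice of $t$, I would conclude that the event $L_{\type,u} \geq (2 + \varepsilon \cdot \sqrt{\log~(|\SV|\cdot|\types|)~\cdot \DeltaV}) \cdot \Scap(\type,u)$ is contained in the Hoeffding tail event, hence occurs with probability at most $(|\SV|\cdot|\types|)^{-2}$. Since the entire derivation is a verbatim copy of the proof of Lemma~\ref{lem:approximation-single-node-without} with the single substitution of the $\varepsilon$-scaled estimate, there is no genuine obstacle; the only thing to verify carefully is that the $\varepsilon^2$ factors cancel cleanly and that the additive offset of $2$ (rather than $1$) carries through correctly because of the weaker expectation bound~\eqref{eq:load-vs-capacity-without}. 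This bookkeeping around the factor $2$ versus $1$ is the one place where a careless copy from the admission-control corollary would introduce an error.
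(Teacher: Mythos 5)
Your proof is correct and is exactly the argument the paper intends: the corollary is stated there without proof as an immediate specialization of Lemma~\ref{lem:approximation-single-node-without}, and your derivation supplies precisely the intended details. In particular, replacing the Lemma~\ref{lem:assumptions} estimate by $\maxLoadVSum \leq \varepsilon \cdot \Scap(\type,u)\cdot \maxLoadVSum/\maxLoadV$ so that the $\varepsilon^2$ in the Hoeffding denominator cancels the $\varepsilon^2$ in $t^2$, and then using $\Exp(L_{\type,u}) \leq 2\cdot\Scap(\type,u)$ from Equation~\ref{eq:load-vs-capacity-without} to obtain the additive offset $2$ (rather than $1$), is exactly the right bookkeeping.
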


\begin{corollary}
Assume that~$\maxLoadE \leq \varepsilon \cdot \Scap(u,v)$ holds for~$0 < \varepsilon < 1$ and all~$(u,v) \in  \SE$. With~$\DeltaE$ as defined in Lemma~\ref{lem:approximation-single-edge}, we obtain:
The probability that in Algorithm~\ref{alg:approxAdmissionControl} the capacity of a single substrate node~$(u,v) \in  \SE$ is exceeded by more than a factor~$(2+\varepsilon \cdot \sqrt{3/2\cdot \log~(|\SV|)~\cdot \DeltaE})$ is upper bounded by~$|\SV|^{-2}$.
\end{corollary}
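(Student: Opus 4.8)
The plan is to re-run the argument underlying Lemma~\ref{lem:approximation-single-edge-without} (equivalently, the proof of Lemma~\ref{lem:approximation-single-edge} adapted to the SCEP-C setting), changing only the single inequality that is supplied by Lemma~\ref{lem:assumptions}. Recall that there each per-request load variable $\randVarLEdge$ ranges over the interval $[0,\maxLoadE\cdot|\VE|]$, so that Hoeffding's inequality produces the denominator $\sum_{\req\in\requests}(\maxLoadE\cdot|\VE|)^2$. The only place a bound on $\maxLoadE$ is used is when this denominator is enlarged to a multiple of $\Scap(u,v)^2\cdot\DeltaE$; everything else — the expectation estimate and the translation of an additive deviation into a multiplicative capacity-violation factor — is untouched.

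First I would replace the default estimate $\maxLoadE\leq\Scap(u,v)$ of Lemma~\ref{lem:assumptions} by the stronger hypothesis $\maxLoadE\leq\varepsilon\cdot\Scap(u,v)$, obtaining
\[
\sum_{\req\in\requests}(\maxLoadE\cdot|\VE|)^2\;\leq\;\varepsilon^2\cdot\Scap(u,v)^2\sum_{\req\in\requests}|\VE|^2\;=\;\varepsilon^2\cdot\Scap(u,v)^2\cdot\DeltaE,
\]
so that an extra factor $\varepsilon^2$ is pulled into Hoeffding's denominator. Since enlarging a denominator can only enlarge the exponential bound, this substitution is legitimate.

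Next I would choose the deviation $t=\varepsilon\cdot\sqrt{3/2\cdot\log(|\SV|)\cdot\DeltaE}\cdot\Scap(u,v)$, i.e. the same $t$ as in Lemma~\ref{lem:approximation-single-edge-without} but multiplied by $\varepsilon$. Because $t^2$ now carries a factor $\varepsilon^2$ and the denominator carries the matching $\varepsilon^2$, the two cancel and the exponent $-2t^2/(\varepsilon^2\cdot\Scap(u,v)^2\cdot\DeltaE)$ is exactly the one obtained in the unparameterized proof. Hence the probability that $L_{u,v}-\Exp(L_{u,v})\geq t$ stays bounded by $|\SV|^{-2}$. Finally, invoking $\Exp(L_{u,v})\leq 2\cdot\Scap(u,v)$ (the edge analogue of Equation~\ref{eq:load-vs-capacity-without}, which rests on Lemma~\ref{lem:relation-of-loads-to-capacities-without-admission}) converts this tail bound into the statement that $L_{u,v}\geq(2+\varepsilon\cdot\sqrt{3/2\cdot\log(|\SV|)\cdot\DeltaE})\cdot\Scap(u,v)$ holds with probability at most $|\SV|^{-2}$, which is the claim.

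The argument is essentially mechanical, so there is no deep obstacle; the one point requiring care is the book-keeping of the $\varepsilon$ factors. Concretely, I must ensure that $\varepsilon$ enters $t$ linearly — hence $t^2$ quadratically — so as to cancel precisely the $\varepsilon^2$ that the tighter interval width contributes to the denominator, leaving the exponent (and thus the tail probability $|\SV|^{-2}$) unchanged while the additive slack $t$ genuinely shrinks by the factor $\varepsilon$. A secondary subtlety is that the additive-to-multiplicative conversion relies on $\Exp(L_{u,v})\leq 2\cdot\Scap(u,v)$ rather than $\leq\Scap(u,v)$, which is exactly what yields the $2+\gamma$ form distinguishing this SCEP-C corollary from its SCEP-P counterpart.
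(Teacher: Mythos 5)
Your proposal is correct and is exactly the argument the paper intends: the corollary is stated without proof as the $\varepsilon$-scaled variant of Lemma~\ref{lem:approximation-single-edge-without}, and your bookkeeping — tightening the Hoeffding denominator to $\varepsilon^2\cdot\Scap(u,v)^2\cdot\DeltaE$ via the hypothesis $\maxLoadE\leq\varepsilon\cdot\Scap(u,v)$, choosing $t$ with a matching linear factor of $\varepsilon$ so the exponent is unchanged, and then converting to the multiplicative form via $\Exp(L_{u,v})\leq 2\cdot\Scap(u,v)$ from Lemma~\ref{lem:relation-of-loads-to-capacities-without-admission} — is precisely the intended chain of steps. (Your exponent actually yields $|\SV|^{-3}$, which is stronger than and hence consistent with the claimed $|\SV|^{-2}$, mirroring the same slack already present in the paper's Lemma~\ref{lem:approximation-single-edge-without}.)
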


\subsection{Main Results}
\label{sec:main-results-without-admission-control}

We can now proceed to state the main tri-criteria approximation results obtained for SCEP-C. The argument for Algorithm~\ref{alg:approxWithoutAdmissionControl} having a polynomial runtime is the same as for Algorithm~\ref{alg:approxAdmissionControl}, since the preprocessing can be implemented in polynomial time. 

The following lemma shows that Algorithm~\ref{alg:approxAdmissionControl} can produce solutions of high quality \emph{with high probability}.

\begin{lemma}
Let~$0 < \varepsilon \leq 1$ be chosen minimally, such that~$\maxLoadV[\req][x][y] \leq \varepsilon \cdot \Scap(\type,u)$ holds for all network resources~$(x,y) \in  \SR$.  Setting $\beta = \varepsilon \cdot \sqrt{\log~(|\SV|\cdot|\types|)~\cdot \DeltaV  }$ and~$\gamma = \varepsilon \cdot \sqrt{3/2\cdot \log |\SV|)~\cdot \DeltaE  }$ with~$\DeltaV,\DeltaE$ as defined in Lemmas~\ref{lem:approximation-single-node-without} and \ref{lem:approximation-single-edge-without}, the probability that a solution is found within~$Q \in \mathbb{N}$ rounds is lower bounded by~$1 -(2/3)^Q$ if $|\SV| \geq 3$ holds.
\begin{proof}
We again apply a union bound argument. By Lemma~\ref{lem:approximation-single-node-without} the probability that for a single network function of type~$\type \in \types$ on node~$u \in \SVTypes$ the allocations exceed $(2+\beta)~\cdot \Scap(\type,u)$ is less than~$(|\SV| \cdot |\types|)^{-2}$. Given that there are maximally~$|\SV| \cdot |\types|$ many network functions overall, the probability that on any network function more than~$(2+\beta)\cdot \Scap(\type,u)$ resources will be used is less than~$(|\SV| \cdot |\types|)^{-1} \leq |\SV|^{-1}$. Similarly, by Lemma~\ref{lem:approximation-single-edge-without} the probability that the allocation on a specific edge is larger than $(2+\beta)\cdot \Scap(u,v)$ is less than~$|\SV|^{-3}$. As there are at most~$|\SV|^2$ edges, the union bound gives us that the probability that any of these edges' allocations will be higher than $(2+\gamma)\cdot \Scap(u,v)$ is less than $|\SV|^{-1}$. As the cost of any solution found is deterministically always smaller than two times the optimal cost (cf. Lemma~\ref{lem:deterministic-objective-guarantee-without}), the probability of \emph{not} finding an appropriate solution within a single round is upper bounded by $2/|\SV|$. For $|\SV| \geq 3$ the probability of finding a feasible solution within $Q \in \mathbb{N}$ rounds is therefore $1-(2/3)^Q$.
\end{proof}
\end{lemma}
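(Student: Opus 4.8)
The plan is to bound the failure probability of a single round of the while-loop in Algorithm~\ref{alg:approxWithoutAdmissionControl} by a constant strictly below one, and then exploit the independence of the $Q$ rounds to amplify the success probability to $1 - (2/3)^Q$.

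The first step is to observe that the cost test contributes nothing to the failure probability. By Lemma~\ref{lem:deterministic-objective-guarantee-without}, \emph{every} solution the algorithm could return already has cost at most $2 \cdot \optIP$, so a round can be rejected \emph{only} because some capacity constraint is violated. It therefore suffices to bound the probability that, in a given round, either a network-function load exceeds $(2+\beta)\Scap(\type,u)$ or an edge load exceeds $(2+\gamma)\Scap(u,v)$.

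The second step is a union bound over all resources. Lemma~\ref{lem:approximation-single-node-without} bounds the overload probability of a fixed network function $(\type,u)$ by $(|\SV|\cdot|\types|)^{-2}$; since there are at most $|\SV|\cdot|\types|$ network functions, summing gives at most $(|\SV|\cdot|\types|)^{-1} \leq |\SV|^{-1}$ for the event that \emph{some} function is overloaded. Analogously, Lemma~\ref{lem:approximation-single-edge-without} bounds the overload probability of a fixed edge by $|\SV|^{-3}$ (the value produced by its Hoeffding estimate for the factor $\sqrt{3/2}$), and since there are at most $|\SV|^2$ edges the total edge-overload probability is at most $|\SV|^{-1}$. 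Adding the two contributions, a single round is rejected with probability at most $2/|\SV|$, which is at most $2/3$ precisely when $|\SV| \geq 3$.

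The final step is amplification. Because the random value $p$ drawn for each request is sampled afresh and independently in each of the $Q$ iterations, the rounds are mutually independent, so the probability that all $Q$ of them fail is at most $(2/3)^Q$; this yields the claimed lower bound $1 - (2/3)^Q$. I anticipate no serious obstacle. The only subtlety worth flagging is the interaction between the deterministic cost guarantee and the probabilistic capacity guarantees: one must verify that the cost bound need not be ``paid for'' inside the union bound, which it need not, since Lemma~\ref{lem:deterministic-objective-guarantee-without} holds uniformly for every output and hence the entire randomness budget can be spent on the two capacity events.
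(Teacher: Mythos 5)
Your proposal is correct and follows essentially the same route as the paper's own proof: a union bound giving at most $|\SV|^{-1}$ failure probability for the node resources and $|\SV|^{-1}$ for the edges, the observation that the cost bound of Lemma~\ref{lem:deterministic-objective-guarantee-without} is deterministic and hence costs no probability, and independent repetition over the $Q$ rounds. You even handle the same wrinkle identically to the paper, namely using the per-edge bound $|\SV|^{-3}$ (the value the Hoeffding computation with factor $\sqrt{3/2}$ actually yields) rather than the weaker $|\SV|^{-2}$ stated in Lemma~\ref{lem:approximation-single-edge-without}, without which the union bound over $|\SV|^2$ edges would be vacuous.
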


Finally, we can state the main theorem showing that Algorithm~\ref{alg:approxWithoutAdmissionControl} is indeed a tri-criteria approximation algorithm.

\begin{theorem}
\label{thm:result-for-admission-control-without}
Assuming that $|\SV| \geq 3$ holds and that $\maxLoadV[\req][x][y] \leq \varepsilon \cdot \Scap(\type,u)$ holds for all network resources~$(x,y) \in  \SR$ with $0 < \varepsilon \leq 1$ and by setting~$\beta = \varepsilon \cdot \sqrt{\log~(|\SV|\cdot|\types|)~\cdot \DeltaV  }$ and~$\gamma = \varepsilon \cdot \sqrt{3/2\cdot \log |\SV| \cdot \DeltaE  }$ with~$\DeltaV,\DeltaE$ as defined in Lemmas~\ref{lem:approximation-single-node} and \ref{lem:approximation-single-edge}, Algorithm~\ref{alg:approxAdmissionControl} is a~$(\alpha,2+\beta,2+\gamma)$ tri-criteria approximation algorithm for SCEP-C, such that it finds a solution \emph{with high probability}, with costs less than~$\alpha = 2$ times higher than the optimal cost and violates network function and edge capacities only within the factors~$2+\beta$ and~$2+\gamma$ respectively.
\end{theorem}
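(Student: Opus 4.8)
The plan is to assemble the theorem directly from the ingredients already established, since for the cost variant the three criteria decouple cleanly. The cost guarantee holds \emph{deterministically} for \emph{every} solution the algorithm can possibly output, whereas the two capacity guarantees are \emph{enforced by the acceptance test} inside the while-loop (the $(2+\beta)$ and $(2+\gamma)$ conditions). Hence any returned, non-\NULL{} solution automatically satisfies all three bounds, and the only failure mode is the algorithm exhausting its $Q$ rounds and returning \NULL. I would therefore structure the proof as: (i) dispatch the cost criterion unconditionally, (ii) control the per-round failure probability of the capacity test via a union bound, and (iii) amplify over $Q$ rounds.

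First I would settle the cost criterion. By Lemma~\ref{lem:deterministic-objective-guarantee-without}, the preprocessing restricting each $\PotEmbeddingsHat$ to decompositions with $c(\mapping)\leq 2\cdot\WAC$ guarantees that whichever single mapping $\hat{m}_\req$ is sampled, $c(\hat{m}_\req)\leq 2\cdot\WAC$; summing over requests and invoking Corollary~\ref{cor:relation-of-net-profit-in-decomposition-and-the-LP-cost-new} together with $\optLP\leq\optIP$ yields $\sum_{\req\in\requests} c(\hat{m}_\req)\leq 2\cdot\optIP$. This bound is independent of the random outcome, so $\alpha=2$ holds for any output whatsoever.

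Next I would handle the capacity criteria and the success probability jointly, mirroring the proof of Theorem~\ref{thm:result-for-admission-control}. By Lemma~\ref{lem:approximation-single-node-without} each of the at most $|\SV|\cdot|\types|$ network functions is overloaded beyond $(2+\beta)\cdot\Scap(\type,u)$ with probability at most $(|\SV|\cdot|\types|)^{-2}$, and by Lemma~\ref{lem:approximation-single-edge-without} each of the at most $|\SV|^2$ edges is overloaded beyond $(2+\gamma)\cdot\Scap(u,v)$ with probability at most $|\SV|^{-2}$; the union bound over all resources leaves the per-round failure probability at a constant strictly below $1$, which the preceding lemma pins down as at most $2/3$ for $|\SV|\geq 3$. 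Since rounds are independent, the probability that no admissible solution is produced in $Q$ rounds is at most $(2/3)^Q$, so taking $Q=\Theta(\log|\SV|)$ drives this to $1/\mathrm{poly}(|\SV|)$, giving the claimed \emph{with high probability} guarantee. Polynomial runtime is immediate, as the linear program has $\mathcal{O}(\sum_{\req\in\requests}|\VE|\cdot|\SE|)$ variables and constraints, the decomposition and the preprocessing are polynomial, and $Q$ is polynomially bounded.

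The genuinely delicate work lies in the lemmas already in hand rather than in this final assembly; the main point I would verify carefully is the \emph{consistency} of the two roles played by Lemma~\ref{lem:wac-pruning}. The bound $\lambda_\req\geq 1/2$ is used twice: once to keep each selectable mapping within a factor $2$ of $\WAC$ (the cost side), and once to bound the probability inflation $\probHat\leq 2\prob$ that feeds the expected-load estimate in Equation~\ref{eq:load-vs-capacity-without} (the capacity side). I would check that this shared factor-$2$ slack is precisely what becomes the additive ``$2$'' in the $(2+\beta)$ and $(2+\gamma)$ factors, so that the cost-pruning and the probability-rescaling are not inadvertently charged twice. Beyond that, the only obstacle is bookkeeping: ensuring the constants in the per-round union bound stay below $1$ for $|\SV|\geq 3$ so that $Q$ rounds amplify the success probability as claimed.
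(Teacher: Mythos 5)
Your overall plan coincides with the paper's own argument: the cost criterion is settled deterministically via Lemma~\ref{lem:deterministic-objective-guarantee-without} together with Corollary~\ref{cor:relation-of-net-profit-in-decomposition-and-the-LP-cost-new} and $\optLP \leq \optIP$, so the only failure mode is the capacity acceptance test failing in all $Q$ independent rounds; the capacity criteria are then controlled by a union bound over all substrate resources and amplified over rounds. The cost part and the amplification part of your write-up are correct, and your side-remark about Lemma~\ref{lem:wac-pruning} is also resolved the way you hope: the pruning threshold $2\cdot\WAC$ feeds only the cost bound, while the rescaling $\probHat \leq 2\cdot\prob$ feeds only the expected-load bound $\mathbb{E}(L_{x,y}) \leq 2\cdot\Scap(x,y)$, so the two factors of $2$ are charged to different criteria and never compound.

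The genuine gap is in your edge-capacity union bound. You take the per-edge failure probability $|\SV|^{-2}$ verbatim from the statement of Lemma~\ref{lem:approximation-single-edge-without} and union-bound over the up to $|\SV|^2$ substrate edges; this yields $|\SV|^2 \cdot |\SV|^{-2} = 1$, which is vacuous, and together with the node term $(|\SV|\cdot|\types|)^{-1}$ your per-round failure estimate exceeds $1$. Your assertion that the union bound ``leaves the per-round failure probability at a constant strictly below $1$, which the preceding lemma pins down as at most $2/3$'' is thus not delivered by your own numbers -- it holds only by deferring to the very lemma that constitutes the paper's proof of this theorem, which is circular in a blind derivation. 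The paper's proof instead uses a per-edge failure probability of $|\SV|^{-3}$, and that is what the Hoeffding computation actually gives: choosing $t=\sqrt{3/2\cdot\log|\SV|\cdot\DeltaE}\cdot\Scap(u,v)$ and using $\maxLoadE \leq \Scap(u,v)$ (Lemma~\ref{lem:assumptions}) together with the definition of $\DeltaE$,
\[
\mathbb{P}\big(L_{u,v}-\mathbb{E}(L_{u,v})\geq t\big)\;\leq\; \exp\!\Big(\tfrac{-2t^2}{\sum_{\req\in\requests}(\maxLoadE\cdot|\VE|)^2}\Big)\;\leq\; \exp\big(-3\log|\SV|\big)\;=\;|\SV|^{-3}\,,
\]
so the exponent $-2$ in the statement of Lemma~\ref{lem:approximation-single-edge-without} is evidently a typo. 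With $|\SV|^{-3}$ per edge, the union bound over at most $|\SV|^2$ edges gives edge-failure probability at most $|\SV|^{-1}$, hence total per-round failure at most $2/|\SV|\leq 2/3$ for $|\SV|\geq 3$, and your $(2/3)^Q$ amplification then goes through. To make your proof sound you must re-derive this per-edge bound (or flag and correct the lemma) rather than cite its statement verbatim.
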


\section{Approximate Cactus Graph Embeddings}\label{sec:cactus}

Having discussed approximations for \emph{linear} service chains, we now turn towards more general service \emph{graphs} (essentially a \emph{virtual network}), i.e. service specifications that may contain cycles or branch separate sub-chains. 
Concretely, we propose a novel linear programming formulation 
in conjunction with a novel decomposition algorithm for service 
graphs whose undirected interpretation is a cactus graph. 
Given the ability to decompose fractional solutions, 
we show that we can still apply the results of Sections~\ref{sec:randround} and \ref{sec:approximation-without-admission-control} for this case. Our results show that our approximation scheme can be applied \emph{as long as linear solutions can be appropriately decomposed}. We highlight the advantage of our novel formulation by showing that the standard multi-commodity flow approach employed in the Virtual Network Embedding Problem
(VNEP)~literature \emph{cannot} be decomposed and hence cannot be used in our approximation framework.

This section is structured as follows. In Section~\ref{sec:cactus-motivation} we motivate why considering more complex service \emph{graphs} is of importance. Section~\ref{sec:cactus:service-cactus-graph-embedding-problems} introduces the notion of \emph{service cactus graphs} and introduces the respective generalizations of SCEP-P and SCEP-C. Section~\ref{sec:cactus:decomposing-service-graphs} shows how these particular service graphs can be decomposed into subgraphs. Building on this a priori decomposition of the service graphs, we introduce extended graphs for cactus graphs and the respective integer programming formulation in Sections~\ref{sec:cactus:extended-graphs} and \ref{sec:cactus-linear-programming}. In Section~\ref{sec:cactus:decomposing-service-graphs} we show how the linear solutions can be decomposed into a set of fractional embeddings analogously to the decompositions computed in Section~\ref{sec:decompo}. Section~\ref{sec:cactus:approximation-service-cacti-embeddings} shows that the approximation results for SCEP-P and SCEP-C still hold for this general graph class. Lastly, Section~\ref{sec:cactus:non-decomposability} shows that the standard approach of using multi-commodity flow formulations yields non-decomposable solutions, i.e. our approximation framework cannot be applied when using the standard approach. This also sheds light on the question why no approximations are known for the Virtual Network Embedding Problem, which considers the embedding of general graphs.

\subsection{Motivation \& Use Cases}
\label{sec:cactus-motivation}

While service chains were originally understood as \emph{linear sequences
of service functions} (cf.~\cite{merlin}), 
we witness a trend toward more 
complex chaining models, where a single network function may spawn multiple flows towards other functions, or merge multiple incoming flows. We will discuss one of these use cases in detail and refer the reader to 
\cite{etsi2013network, gember2014opennf,  ietf-draft-sc-use-cases-mobile-networks,ietf-draft-sc-use-cases-dc,mehraghdam2014specifying} for an overview on more complex chaining models.

\begin{figure}[tbhp]
 \includegraphics[width=1\columnwidth]{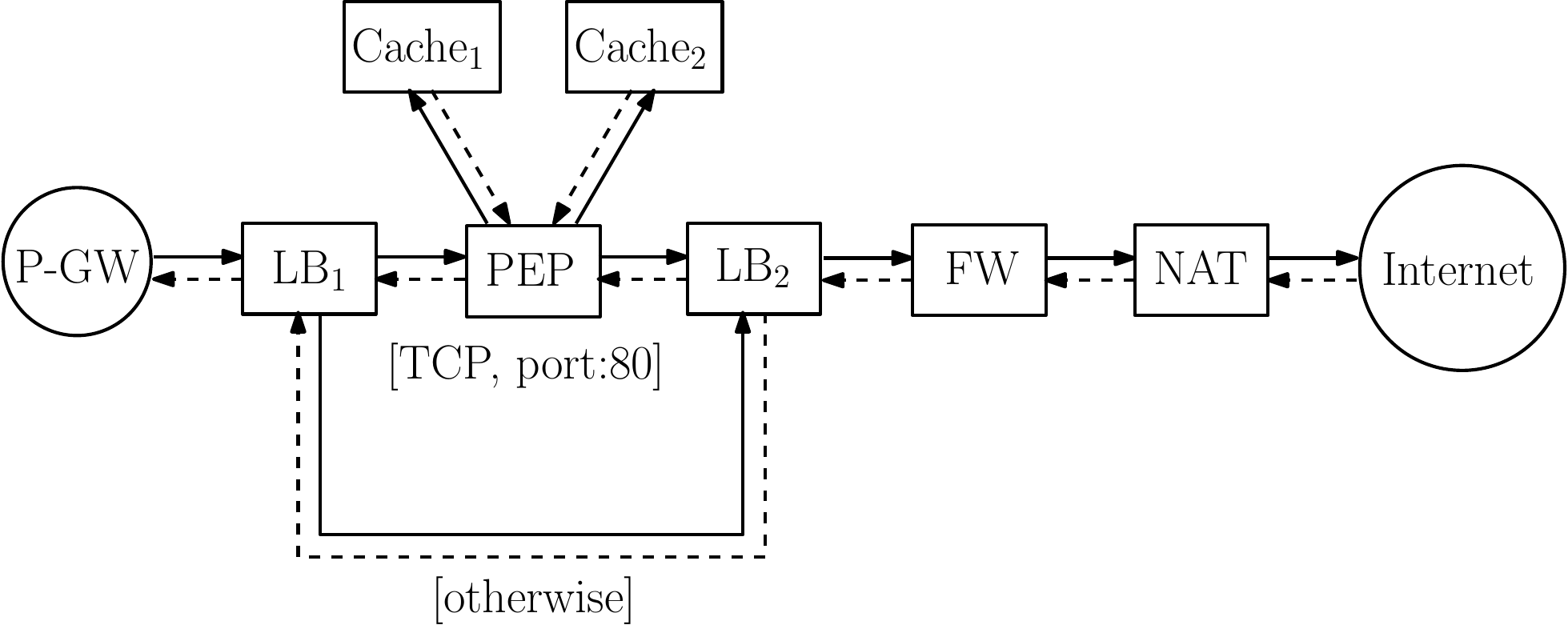}
 \caption{Actual service chain example for HTTP  optimization taken from~\cite{ietf-draft-sc-use-cases-mobile-networks} with up- (\emph{solid})~and downstream communications (\emph{dashed}). The packet gateway (P-GW)~terminates the mobile (3GPP)~network and forwards all traffic to a load balancer (LB)~which splits the traffic flows: TCP traffic on port 80 is forwarded to the performance enhancement proxy (PEP)~which connects to two (load balanced)~caches. The load balancer $\textnormal{LB}_2$ merges the outgoing traffic flows and forwards them through a firewall (FW)~and a network address translator (NAT)~and finally to the Internet.
 Depending on the ratio on the amount of web-traffic, the different up- and downstream connections will have different bandwidth requirements. Furthermore, e.g., video streams received by the PEP from one of the caches may be transcoded on-the-fly, and hence the outgoing bandwidth of the PEP towards $\textnormal{LB}_1$ might be less than the traffic received.
 }
 \label{fig:service-chain-example}
 \end{figure}
 
 Let us give an example which includes  
functionality for load balancing, flow splitting and merging.
It also makes the case for ``cyclic'' service chains. 
The use case is situated in the context of 
LTE networks and Quality-of-Experience (QoE)~
for mobile users. The service chain is depicted in Figure~\ref{fig:service-chain-example}
and is discussed in-depth in the IETF draft~\cite{ietf-draft-sc-use-cases-mobile-networks}.
Depending on whether the incoming traffic from the packet gateway (P-GW)~at the first load balancer $\textnormal{LB}_1$ is destined for port 80 and has type TCP,
it is forwarded through a performance enhancement proxy (PEP). Otherwise, $\textnormal{LB}_1$ forwards the flow directly to a second load balancer $\textnormal{LB}_2$ which merges all outgoing traffic destined to the Internet. Depending on the deep-packet inspection 
performed by the PEP (not depicted in Figure~\ref{fig:service-chain-example}, cf.~\cite{ietf-draft-sc-use-cases-mobile-networks})~and whether the content is cached, the PEP may redirect the traffic towards one of the caches. Receiving the content either from the caches or the Internet, the PEP may -- depending on user agent information -- additionally perform transcoding if the content is a video. This allows to offer both a higher quality of experience for the end-user and reduces network utilization.

It must be noted that all depicted network functions (depicted as \emph{rectangles})~are stateful. The load balancer $\textnormal{LB}_2$ e.g., needs to know whether the traffic received from the firewall needs to be passed through the PEP (e.g., for caching)~or whether it can be forwarded directly towards $\textnormal{LB}_1$. Similarly, the firewall and the network-address translation need to keep state on incoming and outgoing flows. Note that the PEP spawns sub-chains towards the different caches and that the example also contains multiple types of cycles. Based on the statefulness of the network functions, all connections between network functions are bi-directed. Furthermore, there exists a ``cycle'' $\textnormal{LB}_1, \textnormal{PEP}, \textnormal{LB}_2, \textnormal{LB}_1$ (in the undirected interpretation). Our approach presented henceforth allows to embed either the outbound or the inbound connections of the example in Figure~\ref{fig:service-chain-example} (depicted as dashed or solid), but can be extended to also take bi-directed connections into account.

\subsection{Service Cactus Graphs Embedding Problems}
 
\label{sec:cactus:service-cactus-graph-embedding-problems}
 
Given the above motivation, we will now extend the service chain definition of Section~\ref{sec:model} to more complex service \emph{graphs},
like the one in Figure~\ref{fig:service-chain-example}. While for service chains request graphs~$\VG =~(\VV,\VE)$ were constrained to be lines (cf.~Section~\ref{sec:model}), we relax this constraint in the following definition.
\begin{definition}[Service Cactus Graphs]
\label{def:service-cacti-graphs}
Let~$\VG =~(\VV, \VE)$ denote the service graph of request~$\req \in \requests$. We call~$\VG$ a service cactus graph if the following two conditions hold:
\begin{enumerate}
\item$\VE$ does not contain opposite edges, i.e. for all~$(v,u) \in  \VE$ the opposite edge~$(u,v)$ is not contained in~$\VE$.
\item The undirected interpretation~$\VGbar =~(\VVbar,\VEbar)$, with~$\VVbar = \VV$ and~$\VEbar = \{\{u,v\} |~(u,v) \in  \VE\}$, is a cactus graph, i.e. any two simple cycles share at most a single node.
\end{enumerate} 
\end{definition}

Moreover, we do not assume that the service cactus graphs have unique start or end nodes. Specifically, all virtual nodes~$i \in \VV$ can have arbitrary types. The definitions of the capacity function~$\Vcap : \VV \cup \VE \to \preals$ and the cost functions~$\Scost: \SR \to \preals$ are not changed. As the definitions of valid and feasible embeddings as well as the definition of SCEP-P and SCEP-C~(see Definitions~\ref{def:valid-mapping}, \ref{def:feasible-embedding}, \ref{def:scep-with-admission-control}, and \ref{def:scep-without-admission-control} respectively)~do not depend on the underlying graph model, these are still valid, when considering service cactus graphs as input. To avoid ambiguities, we refer to the respective embedding 
problems on cactus graphs as SCGEP-P and SCGEP-C.

\subsection{Decomposing Service Cactus Graphs}
\label{sec:cactus:decomposing-service-graphs}

As the Integer Programming formulation~(see IP~\ref{alg:SCGEP-IP} in Section~\ref{sec:cactus-linear-programming})~for service cactus graphs is much more complex, we first introduce a \emph{graph decomposition} for cactus graphs to enable the concise descriptions of the subsequent algorithms.

Concretely, we apply a breadth-first search to re-orient edges as follows. For a service cactus graph $\VG =~(\VV,\VE)$, we choose \emph{any} node as the root node and denote it by~$\VVroot$. Now, we perform a simple breadth-first search in the undirected service graph~$\VGbar$~(see Definition~\ref{def:service-cacti-graphs})~and denote by~$\VVpred : \VV \to \VV \cup \{\NULL\}$ the predecessor function, such that if~$\VVpred(i)~= j$  then virtual node~$i\in \VV$ was explored first from virtual node~$j \in \VV$. Let~$(\VVbfs, \VEbfs)$ denote the graph~$\VGbfs$ with~$\VVbfs = \VV$ and~$\VEbfs = \{(i,j)~| j \in \VV,  \VVpred(j)~= i \wedge \VVpred(j)~\neq \NULL \}$. Clearly,~$\VGbfs$ is a directed acyclic graph. We note the following based on the cactus graph nature of~$\VG$:
\begin{lemma}~\\
\vspace{-12pt}
\label{lem:observations-bfs-graphs}
\begin{enumerate}
\item The in-degree of any node is bounded by 2, i.e. $|\delta_{\VEbfs}^-(i)| \leq 2$ holds for all $i \in \VVbfs$.
\item Furthermore, if~$|\delta_{\VEbfs}^-(i)| = 2$ holds for some~$i \in \VVbfs$, then there exists a node~$j \in \VVbfs$, such that there exist exactly two paths~$P_\req,P_l$ in~$\VVbfs$ from~$j$ to~$i$ which only coincide in~$i$ and~$j$.
\end{enumerate}
\begin{proof}
With respect to Statement 1)~note that node~$i$ must be reached from the (arbitrarily)~chosen root~$\VVroot \in \VVbfs$. For the sake of contradiction, if~$|\delta_{\VEbfs}^-(i)| > 2$ then there exist at least three~(pair-wise different)~paths~$P_1,P_2,P_3$ from~$\VVroot$ to~$i$. Hence, in the undirected representation~$\VGbar$ there must exist at least two cycles overlapping in more than one node, namely in~$i$ as well as some common predecessor of~$i$. This is not allowed by the cactus nature of~$\VGbar$ (cf. Definition~\ref{def:service-cacti-graphs})~and hence~$|\delta_{\VEbfs}^-(i)| \leq 2$ must hold for any~$i \in \VV$.

Statement 2)~holds due to the following observations. As all nodes are reachable from the root~$\VVroot$, we are guaranteed to find a common predecessor~$j \in \VVbfs$ while backtracking along the reverse directions of edges in~$\VEbfs$. If this was not the case, then there would exist two different sources, which is -- by construction -- not possible.
\end{proof}
\end{lemma}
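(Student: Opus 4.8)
The plan is to reduce both statements to the single defining property of a cactus---that any two distinct simple cycles of $\VGbar$ share at most one vertex---by reading each extra in-edge of a node in $\VGbfs$ as a \emph{fundamental cycle} of the BFS tree. First I would fix the BFS tree $T$ induced by $\VVpred$ and read $\VGbfs$ as the BFS-reorientation of \emph{all} edges of $\VGbar$, i.e.\ the tree edges $(\VVpred(j),j)$ together with every remaining edge of $\VGbar$ oriented from its earlier- to its later-discovered endpoint (this is the reading under which the lemma is non-vacuous). The crucial structural fact I would record up front is that every non-root node $i$ has a \emph{unique} tree-parent $a:=\VVpred(i)$; consequently, among the edges entering $i$ in $\VGbfs$, exactly one is the tree edge $(a,i)$ and every further in-edge is a re-oriented non-tree edge $\{b,i\}$ of $\VGbar$.

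For Statement~1 I would argue by contradiction. Suppose $|\delta_{\VEbfs}^-(i)|\ge 3$; then besides $(a,i)$ there are at least two non-tree edges $\{b,i\}$ and $\{c,i\}$ entering $i$. Each non-tree edge $e$ closes exactly one fundamental cycle $C_e$, namely $e$ together with the unique tree path joining its endpoints. The key observation is that the tree path from $b$ (resp.\ $c$) down to $i$ must enter $i$ through its unique parent $a$, so \emph{both} $C_{\{b,i\}}$ and $C_{\{c,i\}}$ contain the edge $(a,i)$, hence the two distinct vertices $a$ and $i$. As these are distinct simple cycles (they are determined by their distinct non-tree edges), they share at least two vertices, contradicting the cactus property of $\VGbar$; therefore $|\delta_{\VEbfs}^-(i)|\le 2$.

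For Statement~2 I would make this cycle explicit. If $|\delta_{\VEbfs}^-(i)|=2$, its in-neighbours are $a=\VVpred(i)$ and a node reached by a re-oriented non-tree edge $b\to i$; let $j$ be the least common ancestor of $b$ and $i$ in $T$ (note $j\neq i$, since $b\to i$ forces $b$ to be discovered before $i$ and hence $b$ is not a descendant of $i$). I would then exhibit the two directed $j$–$i$ paths in $\VGbfs$: $P_1$ is the downward tree path from $j$ to $i$ (ending with $a\to i$), and $P_2$ is the downward tree path from $j$ to $b$ followed by the edge $b\to i$. Because $j$ is the least common ancestor, the two tree branches leave $j$ into different child-subtrees, so $P_1$ and $P_2$ meet only in $j$ and $i$; together they are exactly the fundamental cycle of $\{b,i\}$. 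Finally, the cactus property forbids any further cycle through $i$ sharing the edge $(a,i)$, which is precisely why no third internally disjoint $j$–$i$ path can exist, giving the claimed ``exactly two''.

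The step I expect to require the most care is the bookkeeping that ties the \emph{combinatorial} in-degree in $\VGbfs$ to the \emph{topological} fundamental cycles of $\VGbar$: one must verify that distinct non-tree in-edges of $i$ produce genuinely distinct simple cycles and that each of them passes through the single parent edge $(a,i)$, since this shared edge is exactly what activates the cactus hypothesis. A secondary subtlety is pinning down the orientation convention for non-tree edges under BFS; once that is fixed, $i$ is the unique sink and $j$ the unique source of the fundamental cycle, which makes the two internally disjoint paths of Statement~2 fall out immediately.
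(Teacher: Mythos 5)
Your proposal is correct and follows essentially the same route as the paper's proof: both statements are reduced to the cactus property of $\VGbar$, Statement~1 by exhibiting two simple cycles through $i$ and a common predecessor (your two fundamental cycles sharing the parent edge $(a,i)$), and Statement~2 by taking $j$ to be the common ancestor reached by backtracking (your least common ancestor of $b$ and $i$). Your write-up is in fact tighter than the paper's sketch---it fixes the reading of $\VEbfs$ as the reorientation of \emph{all} edges of $\VGbar$ (under the paper's literal definition the lemma would be vacuous) and makes explicit why the two cycles must share two vertices---but the underlying argument is the same.
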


Based on the above lemma, we will now present a specific graph decomposition for service cactus graphs. Concretely, we show that service cactus graphs can be decomposed into a set of cyclic subgraphs together with a set of line subgraphs with unique source and sink nodes. For each of these subgraphs, we will define an extended graph construction that will be used in our Integer Programming formulation. Concretely, the graph decomposition given below, will enable a structured induction of flow in the extended graphs: the flow reaching a sink in one of the subgraphs will induce flow in all subgraphs having this node as source. This will also enable the efficient decomposition of the computed flows (cf. Section~\ref{sec:cactus:decomposition-algorithm-for-service-cacti-and-correctness}).

\begin{definition}[Service Cactus Graph Decomposition]
\label{def:service-cactus-graph-decomposition}
Any graph~$\VGbfs$ of a service cactus graph~$\VG$ can be decomposed into a set of cycles~$\Cycles = \{C_1, C_2, \dots \}$ and a set of simple paths~$\Paths = \{P_1, P_2, \dots\}$ with corresponding subgraphs $\VGcycle =~(\VVcycle, \VEcycle)$ and~$\VGpath =~(\VVpath, \VEpath)$ for $C_k \in \Cycles$ and $P_k \in \Paths$, such that:
\begin{enumerate}
\item The graphs $\VGcycle$ and~$\VGpath$ are \emph{connected} subgraphs of~$\VGbfs$ for $C_k \in \Cycles$ and $P_k \in \Paths$ respectively.
\item Sources~$\VVcycleSource, \VVcycleTarget \in \VVcycle$ and sinks~$\VVpathSource, \VVpathTarget \in \VVpath$ are given for~$C_k \in \Cycles$ and~$P_k \in \Paths$ respectively, such that~$\delta_{\VEcycle}^-(\VVcycleSource)~=  \delta_{\VEcycle}^+(\VVcycleTarget)~= \delta_{\VEpath}^-(\VVpathSource)~= \delta_{\VEpath}^+(\VVpathTarget)~= \emptyset$ holds within~$\VGcycle$ and~$\VGpath$ respectively.
\item $\{\VEcycle | C_k \in \Cycles\} \cup \{\VEpath | P_k \in \Paths\}$ is a~(pair-wise disjoint)~partition of~$\VEbfs$.
\item Each~$C_k \in \Cycles$ consists of exactly two branches~$\VEcycleBranchL, \VEcycleBranchR \subset \VEcycle$, such that both these branches start at~$\VVcycleSource$ and terminate at~$\VVcycleTarget$ and cover all edges of~$\VGcycle$.
\item For~$P_k \in \Paths$ the graph~$\VGpath$is a simple path from~$\VVpathSource$ to~$\VVpathTarget$.
\item Paths may only overlap at sources and sinks:~$\forall P_k, P_{k'} \in \Paths, P_k \neq P_{k'}: \VVpath \cap \VVpath[\req][k'] \subset \{\VVpathSource, \VVpathTarget, \VVpathSource[\req][k'], \VVpathTarget[\req][k']\}$.
\end{enumerate}
We introduce the following sets:
{
\abovedisplayskip=0pt
\vskip0cm
\belowdisplayskip=8pt
\begin{alignat*}{6}
\VVcycleSources &=&&~ \{\VVcycleSource| C_k \in \Cycles\} &\quad&&~
\VVpathSources &=&&~ \{\VVpathSource | P_k \in \Paths\} \\
\VVcycleTargets &=&&~ \{\VVcycleTarget | C_k \in \Cycles\} 
&\quad&&~
\VVpathTargets &=&&~ \{\VVpathTarget | P_k \in \Paths\} \\
\VVcycleSourcesTargets &=&&~ \{\VVcycleSource, \VVcycleTarget | C_k \in \Cycles\} 
&\quad&&~
 \VVpathSourcesTargets &=&&~ \{\VVpathSource, \VVpathTarget | P_k \in \Paths\}
\end{alignat*}
\begin{alignat*}{3}
\VVSourcesTargets &= &&~ \VVcycleSourcesTargets \cup \VVpathSourcesTargets
\end{alignat*}
\begin{alignat*}{6}
\VEcycles &=&&~ \{ e \in \VEcycle | C_k \in \Cycles\} 
&\quad&&~
\VEpaths  &=&&~ \{e \in \VEpath | P_k \in \Paths\} \\
\VEcycleSame &=&&~ \VE \cap \VEcycle
&\quad&&~
\VEcycleDiff &=&&~ \VEcycle \setminus \VEcycleSame \\
\VEpathSame &=&&~ \VE \cap \VEpath
&\quad&&~
\VEpathDiff &=&&~ \VEpath \setminus \VEpathSame
\end{alignat*}\begin{alignat*}{3}
\VEDiff &=&&~ \{\VEcycleDiff | C_k \in \Cycles\} \cup \{\VEpathDiff | P_k \in \Paths\} 
\end{alignat*}
\begin{alignat*}{3}
\VVbranchingcycle &=&&~ \{i \in \VVcycle | |\delta_{\VEbfs}^+(i)| > 1, i \notin \VVcycleSourcesTargets \}
\end{alignat*}
}
Note that the node sets $\VVcycleSources$, $\VVpathSources$, $\VVcycleTargets$, $\VVpathTargets$, $\VVcycleSourcesTargets$, and $\VVpathSourcesTargets$ only contain virtual nodes of paths or cycles that are either the source, or the target or any of both.
Similarly, $\VEcycles$ and $\VEpaths$ contain all virtual edges that are covered by any of the paths or any of the cycles. With respect to $\VEcycleSame$ and $\VEpathSame$ we note that these edges' orientation agrees with the original specification $\VE$ and the edges in $\VEcycleDiff$ and $\VEpathDiff$ are reversed. $\VEDiff$ contains all edges whose orientation was reversed in $\VGbfs$. $\VVbranchingcycle$ denotes the set of \emph{branching nodes} of cycle $C_k \in \Cycles$, i.e. nodes which have an out-degree of larger than one in the graph $\VGbfs$ and which are not a source or a target of any of the cycles.

Lastly, we abbreviate $\SVTypes[\Vtype(\VVcycleTarget)]$ by $\SVTypesCycle$, i.e. the substrate nodes onto which the target $\VVcycleTarget$ of cycle $C_k$ can be mapped.
\end{definition}

The constructive existence of the above decomposition follows from Lemma~\ref{lem:observations-bfs-graphs} as proven in the following lemma.

\begin{lemma}
Given a service cactus graph $\VG = (\VV,\VE)$, the graph $\VGbfs = (\VVbfs,\VEbfs)$ and its decomposition can be constructed in polynomial time.
\begin{proof}
Note that the graph $\VGbfs$ is constructed in polynomial time by choosing an arbitrary node as root and then performing a breadth-first search. Having computed the graph $\VGbfs$, the decomposition of $\VGbfs$ (according to Definition~\ref{def:service-cactus-graph-decomposition})~can also be computed in polynomial time. First identify all `cycles' in~$\VGbfs$ by performing backtracking from nodes having an in-degree of 2. Afterwards, no cycles exist anymore and for each remaining edge~$e =~(i,j) \in  \VEbfs \setminus \{\VEcycle | C_k \in \Cycles \}$ a path~$\VGpath =~(\VVpath, \VEpath)$ with~$\VVpath = \{i,j\}$ and~$\VEpath = \{(i,j)\}$ can be introduced.
\end{proof}
\end{lemma}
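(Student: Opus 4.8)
The plan is to give an explicit two-phase construction and then verify that its output conforms to every condition of Definition~\ref{def:service-cactus-graph-decomposition} while running in polynomial time.

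First I would construct $\VGbfs$. Fixing an arbitrary root $\VVroot \in \VV$ and running a single breadth-first search on the undirected interpretation $\VGbar$ computes the predecessor function $\VVpred$ in time $\mathcal{O}(|\VV| + |\VE|)$; orienting the edges consistently with the resulting BFS layering yields $\VGbfs = (\VVbfs, \VEbfs)$. This is clearly polynomial, and by Lemma~\ref{lem:observations-bfs-graphs} the graph is acyclic with every in-degree bounded by two.

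The substantive phase is building the decomposition, and the key tool is Statement~2 of Lemma~\ref{lem:observations-bfs-graphs}: each node $i$ of in-degree exactly two is the meeting point of precisely two internally disjoint directed paths from a common ancestor $j$. Algorithmically I would scan $\VVbfs$ for its in-degree-two nodes and, for each such $i$, backtrack simultaneously along the two incoming edges until reaching $j$; the two traced paths become the branches $\VEcycleBranchL, \VEcycleBranchR$ of a cycle $C_k$ with source $\VVcycleSource = j$ and target $\VVcycleTarget = i$. Since every edge is traversed at most once across all these backtrackings, the total cost is $\mathcal{O}(|\VEbfs|)$. After deleting all cycle edges, I would turn each surviving edge $(i,j)$ into a single-edge path subgraph $\VGpath$ with $\VVpath = \{i,j\}$, $\VVpathSource = i$, and $\VVpathTarget = j$.

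The main obstacle is checking that this output really satisfies all six properties of Definition~\ref{def:service-cactus-graph-decomposition}, and here the cactus hypothesis does the work. I would argue as follows. Edge-disjointness of the cycles and the partition property hold because two cycles sharing an edge would share both its endpoints, contradicting that any two simple cycles of $\VGbar$ meet in at most one node; the same observation guarantees that backtrackings launched from distinct in-degree-two nodes never interfere, so the cycles are in bijection with the in-degree-two nodes and there are at most $|\VV|$ of them. Properties~(4) and~(5) then hold by construction: the two traced branches cover exactly $\VEcycle$ and meet only at $j$ and $i$ by Statement~2, while each single-edge subgraph is trivially a simple path and each subgraph is connected, giving property~(1). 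The source and sink degree conditions of property~(2) are immediate, since a single-edge path's source has no incoming and its sink no outgoing edge, and by the way backtracking terminates the cycle source has no incoming and the cycle target no outgoing \emph{cycle} edge. Finally property~(6) is automatic because every path carries a single edge, so two distinct paths can intersect only in a shared endpoint, which is a source or sink of each. Assembling these checks establishes correctness, and together with the linear-time BFS and the edge-bounded backtracking it shows the entire construction is polynomial.
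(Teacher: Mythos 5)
Your construction is exactly the paper's: build $\VGbfs$ by a single BFS, recover each cycle by backtracking from the in-degree-two nodes (justified by Lemma~\ref{lem:observations-bfs-graphs}), and turn every leftover edge into a single-edge path. The paper states this only as a brief sketch, whereas you additionally verify the six conditions of Definition~\ref{def:service-cactus-graph-decomposition}; that verification is correct, so your proof is a fleshed-out version of the same argument.
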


An example of a decomposition is shown in Figure~\ref{fig:service-cactus-request}. The request graph $\VG$ is first reoriented by a breadth-first search to obtain $\VGbfs$. As the virtual network function $l$ is the only node with an in-degree of $2$, first the ``cycle'' along $j,k,l$ and $j,m,l$ is decomposed to obtain cycle $C_1$. Afterwards, all remaining edges are decomposed into single paths $\Paths = \{P_1,P_2,P_3,P_4\}$, consisting only of one edge.
With respect to the notation introduced in Definition~\ref{def:service-cactus-graph-decomposition}, we have e.g. ~$\VVpathSource[\req][1] = j$,~$\VVcycleSource[\req][1] = j$,~$\VVpathTarget[\req][1] = i$, and~$\VVcycleTarget[\req][1] = l$. Furthmerore, according to the original edge orientation we have~$\VEpathSame[\req][1]=\emptyset$,~$\VEpathDiff[\req][1] = \{(j,i)\}$,~$\VEcycleSame[\req][1] = \{(j,k),~(k,l)\}$, and~$\VEcycleDiff[\req][1] = \{(j,m),~(m,l)\}$. We note that node $m$ is a branching node, i.e. $\VVbranchingcycle[\req][1] = \{m\}$, as it is contained in the subgraph of $C_1$ and has a degree larger than 1 and is neither a source nor the target of a cycle.

\begin{figure}[t!]
\centering
\includegraphics[width=0.95\columnwidth]{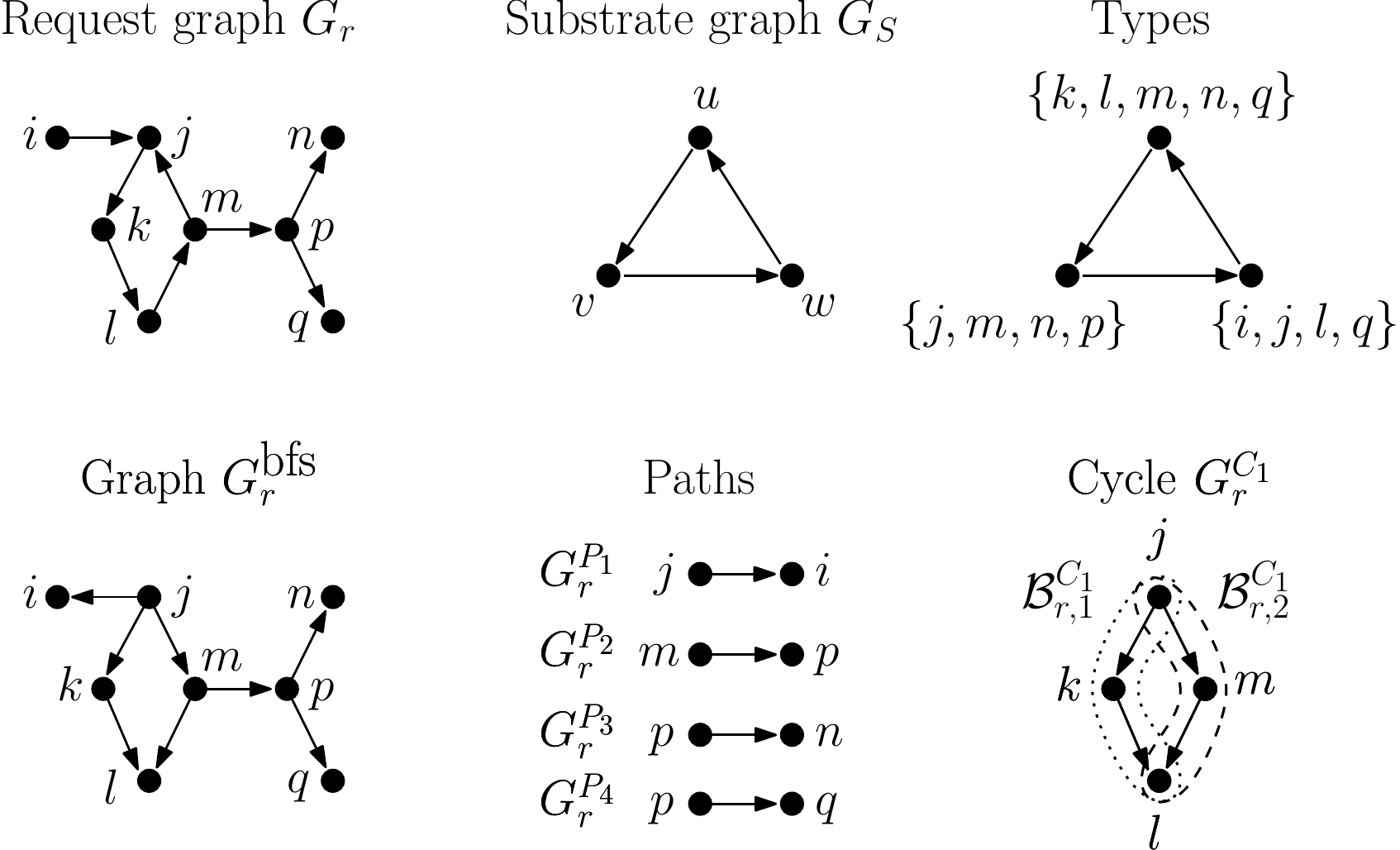}
\caption{Service cactus graph as well as the decomposition of it according to Definition~\ref{def:service-cactus-graph-decomposition}. We visualize the network types by annotating the substrate nodes with the virtual nodes that can be mapped to them and we have e.g. $\SVTypes[\Vtype(j)] = \{v,w\}$.
The decomposition graph $\VGbfs$ is rooted at the -- arbitrarily chosen -- substrate node~$j=\VVroot \in \VVbfs$. The graph $\VGbfs$ is decomposed into four paths and a single cycle. Note that some of the edges in $\VGbfs$ are reversed with respect to the original graph $\VG$ and that these re-orientations also reflect in the edge orientations within the decompositions.
}
\label{fig:service-cactus-request}
\end{figure}

\subsection{Extended Graphs for Service Cacti Graphs}
\label{sec:cactus:extended-graphs}

Based on the above decomposition scheme for service cactus graphs, we now introduce extended graphs for each path and cycle respectively. Effectively, the extended graphs will be used in the Integer Programming formulation for SCGEP as well as for the decomposition algorithm. In contrast to the Definition~\ref{def:extended-graph} in Section~\ref{sec:integer-linear-program}, these extended graphs will not be directly connected by edges, but thoughtfully stitched using additional variables (see Section~\ref{sec:cactus-linear-programming}).

\begin{figure}[t!]
\centering
\includegraphics[width=0.9\columnwidth]{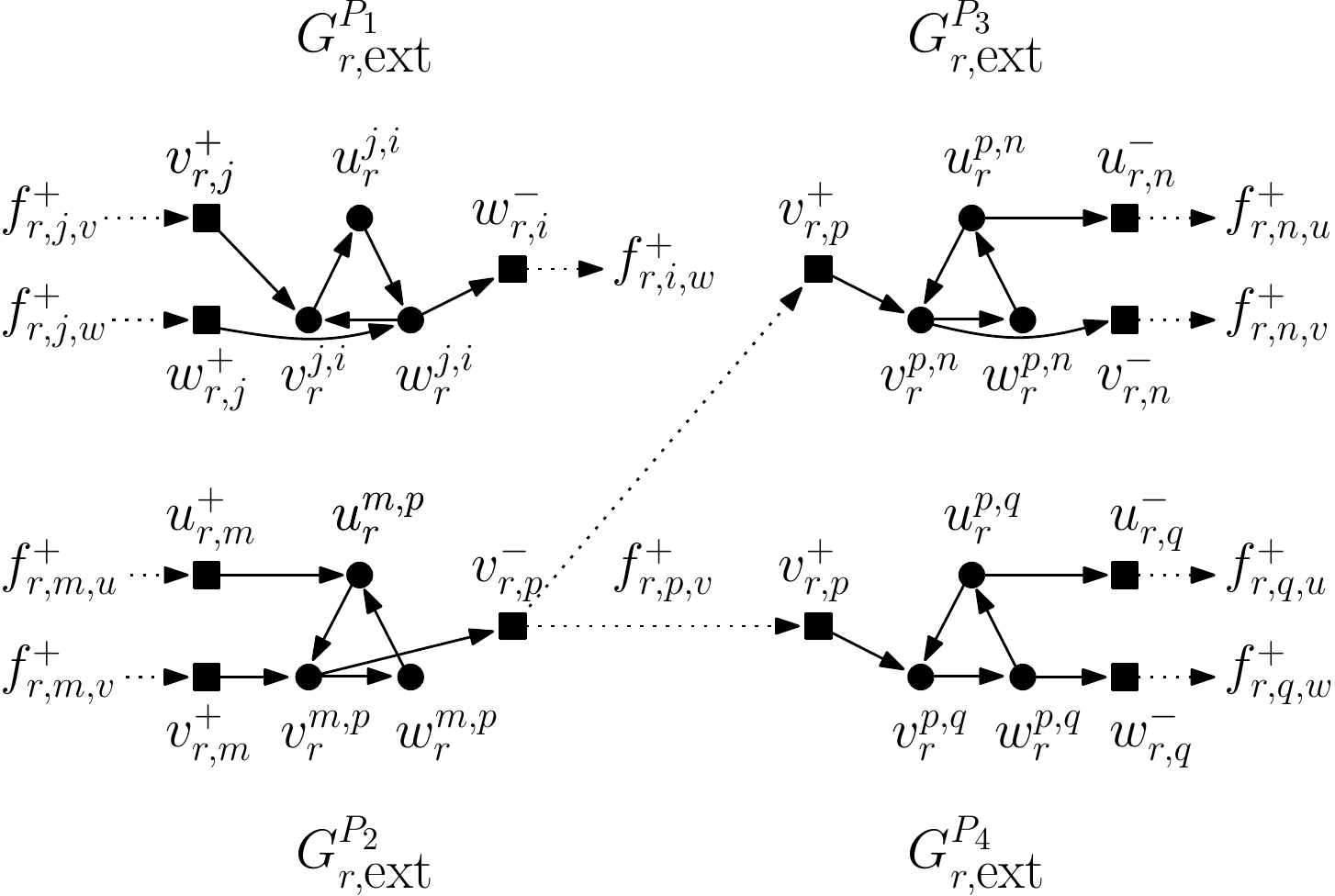}
\caption{Extended graphs~$\VGextpath[\req][i]$ of the paths of service cactus request~$r$ of Figure~\ref{fig:service-cactus-request}. Note that the orientation of edges is reversed, if the virtual edge was reversed in the breadth-first search. This is e.g. the case for the edge $(i,j) \in  \VG$ respectively the path $P_1$.
The flow values~$f^+_{r,\cdot,\cdot}$ depicted here are used in the Integer Program~\ref{alg:SCGEP-IP} to induce flows in the extended graphs.}
\label{fig:service-cactus-construction-path}
\end{figure}

The definition of the extended graphs for paths~$P_k \in \Paths$ generally follows the Definition~\ref{def:extended-graph} for linear service chains: for each path $P_k \in \Paths$ and each virtual edge $(i,j) \in  \VEpath$ a copy of the substrate graph is generated~(cf.~Figure~\ref{fig:service-cactus-construction-path}). If an edge's orientation was reversed when constructing $\VGbfs$, the edges in the substrate copies are reversed. Additionally, each extended graph~$\VGpath$ contains a set of source and sink nodes with respect to the types of~$\VVpathSource$ and~$\VVpathTarget$. Concretely, the extended graph $\VGextpath$ contains two super sources, as the virtual node $j$ can be mapped onto the substrate nodes $v$ and $w$. The concise definition of the extended graph construction for the extended graphs of paths  is given below.

\pagebreak

\begin{definition}[Extended Graph for Paths]
\label{def:service-cactus-extended-graph-path}
The extended graph~$\VGextpath =~(\VVextpath, \VEextpath)$ for path~$P_k \in \Paths$ is defined as follows:
$\VVextpath = \VVextpathSources \cup \VVextpathTargets \cup \VVextpathSubstrate$, with
{
\small
\begin{alignat*}{3}
\VVextpathSources & = &&~ \{ u^+_{r,i} | i = \VVpathSource, u \in \SVTypes[\Vtype(i)]\}\\
\VVextpathTargets & = &&~  \{ u^-_{r,j} | j = \VVpathTarget, u \in \SVTypes[\Vtype(j)]\} \\
\VVextpathSubstrate & = &&~ \{ u^{i,j}_{r} |~(i,j) \in  \VEpath, u \in \SV \}
\end{alignat*}
}
We set~$\VEextpath = \VEextpathSources \cup \VEextpathTargets \cup \VEextpathSubstrate \cup \VEextpathF$, with:
{
\small
\begin{alignat*}{3}
\VEextpathSubstrate && = & \{~(u^{i,j}_{\req}, v^{i,j}_{\req})~|~(i,j) \in  \VEpathSame,~(u,v) \in  \SE \} \cup \\
 && & \{~(v^{i,j}_{\req}, u^{i,j}_{\req})~|~(i,j) \in  \VEpathDiff,~(u,v) \in  \SE \} \\
\VEextpathSources && = & \{~(u^+_{\req,i}, u^{i,j}_{\req})~| i = \VVpathSource,~(i,j) \in  \VEpath, u \in \SVTypes[\Vtype(i)]\} \\
\VEextpathTargets &&= &\{~(u^{i,j}_{\req}, u^-_{\req,j})~| j = \VVpathTarget,~(i,j) \in  \VEpath, u \in \SVTypes[\Vtype(j)]\}  \\
\VEextpathF && = & \{~(u^{i,j}_{\req}, u^{j,l}_{\req})~|~(i,j),~(j,l) \in  \VEpath, u \in \SVTypes[\Vtype(j)] \}
\end{alignat*}
}
\end{definition}

For cycles we employ a more complex graph construction (cf. Definition~\ref{def:service-cactus-extended-graph-cycle}), that is exemplarily depicted in Figure~\ref{fig:service-cactus-construction-cycle}. As noted in Definition~\ref{def:service-cactus-graph-decomposition}, the edges $\VVextcycle$ of cycle $C_k \in \Cycles$ are partitioned into two sets, namely the branches $\VEcycleBranchR$ and $\VEcycleBranchL$. Within the extended graph construction, these branches are transformed into \emph{a set of parallel paths}, namely one for each potential substrate node that can host the function of target function $\VVcycleTarget$. Concretely, in Figure~\ref{fig:service-cactus-construction-cycle}, two parallel \emph{path constructions} are employed for both realizing the branches $\VEcycleBranchR$ and $\VEcycleBranchL$, such that for the left construction the network function $l\in \VV$ will be mapped to $v \in \SV$ and in the right construction the function $l \in \VV$ will be hosted on the substrate node $w \in \SV$. This construction will effectively allow the decomposition of flows, as the amount 
of flow that will be sent into the left path construction of branch $\VEcycleBranchR$ will be required to be equal to the amount of flow that is sent into the left path construction of branch $\VEcycleBranchL$. Furthermore, for each substrate node $u \in \SVTypes[\Vtype(\VVcycleSource)]$ that may host the source network function $\VVcycleSource$ of cycle $C_k \in \Cycles$, there exists a single super source $u^+_{\req,\VVcycleSource}$. Together with the parallel paths construction, the flow along the edges from the super sources to the respective first layers will effectively determine how much flow is forwarded from each substrate node $u \in \SVTypes[\Vtype(\VVcycleSource)]$ hosting the source functionality towards each substrate node $v \in \SVTypes[\Vtype(\VVcycleTarget)]$ hosting the target functionality $\VVcycleTarget$. The amount of flow along edge $(v^+_{\req,j}, v^{j,k}_{r,w})$ in Figure~\ref{fig:service-cactus-construction-cycle} will e.g. indicate to which extent the mapping of virtual node $j \in \VV$ to substrate node $v \in \SV$  will coincide with the mapping of virtual function $l \in \VV$ to substrate node $w \in \SV$. In fact, to be able to decompose the linear solutions later on, we will enforce the equality of flow along edges  $(v^+_{\req,j}, v^{j,k}_{r,w})$ and $(v^+_{\req,j}, v^{j,m}_{r,w})$. We lastly note that the flow variables $f^+_{\req,\cdot,\cdot}$ depicted in Figure~\ref{fig:service-cactus-construction-cycle} will be used to induce flows inside the respective constructions or will be used to propagate flows respectively.

\begin{definition}[Extended Graph for Cycles]
\label{def:service-cactus-extended-graph-cycle}
The extended graph~$\VGextcycle =~(\VVextcycle, \VEextcycle)$ for cycle~$C_k \in \Cycles$ is defined as follows:

$\VVextcycle = \VVextcycleSources \cup \VVextcycleTargets \cup \VVextcycleSubstrate$, with
\begin{alignat*}{3}
\VVextcycleSources && = &  \{ u^+_{r,i} | i = \VVcycleSource, u \in \SVTypes[\Vtype(i)]\}\\
\VVextcycleTargets && = &  \{ u^-_{r,j} | j = \VVcycleTarget, u \in \SVTypes[\Vtype(j)]\} \\
\VVextcycleSubstrate && = &  \{ u^{i,j}_{r,w} |~(i,j) \in  \VEcycle, u \in \SV, w \in \SVTypesCycle \}
\end{alignat*}
We set~$\VEextcycle = \VEextcycleSources \cup \VEextcycleTargets \cup \VEextcycleSubstrate \cup \VEextcycleF$, with:
{
\begin{alignat*}{3}
\VEextcycleSubstrate && = & \{(u^{i,j}_{\req,w}, v^{i,j}_{\req,w})|(i,j)\in \VEcycleSame,(u,v)\in \SE, w \in \SVTypesCycle \} \cup \\
 && & \{(v^{i,j}_{\req,w}, u^{i,j}_{\req,w})|(i,j) \in \VEcycleDiff,(u,v)\in \SE, w \in \SVTypesCycle \} \\
\VEextcycleSources && = & \{(u^+_{\req,i}, u^{i,j}_{\req,w})| i = \VVcycleSource,(i,j)\in \VEcycle, u \in \SVTypes[\Vtype(i)]\hspace{-5pt},  w \in \SVTypesCycle \} \\
\VEextcycleTargets &&= &\{(w^{i,j}_{\req,w}, w^-_{\req,j})| j = \VVcycleTarget,(i,j) \in \VEcycle, w \in \SVTypesCycle \}  \\
\VEextcycleF && = &\{(u^{i,j}_{\req,w}, u^{j,k}_{\req,w})|(i,j),(j,k)\in \VEcycle, u \in \SVTypes[\Vtype(j)], w \in \SVTypesCycle \}
\end{alignat*}

Note that we employ the abbreviation $\SVTypesCycle$ to denote the substrate nodes that may host the cycle's target network function, i.e. $\SVTypes[\Vtype(\VVcycleTarget)]$ (cf.~Definition~\ref{def:service-cactus-extended-graph-cycle}).

}
\end{definition}

\begin{figure}[t!]
\centering
\includegraphics[width=1\columnwidth]{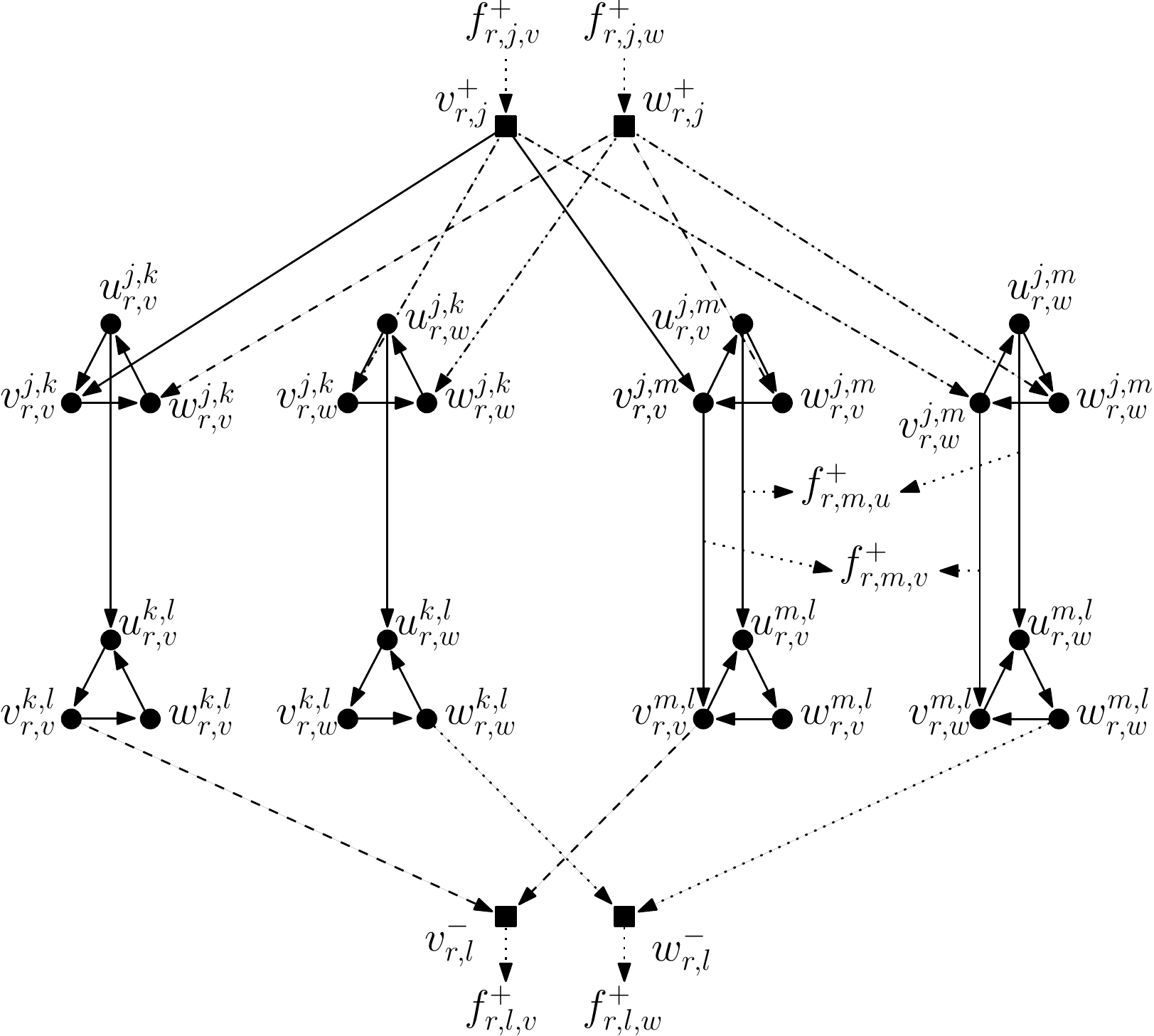}
\caption{Extended graph~$\VGextcycle[\req][1]$ of service cactus request~$r$ and cycle~$C_1$ depicted in  Figure~\ref{fig:service-cactus-request}. Similar styles of dashing the edges incident to the super sources~$v^+_{r,j}$,~$w^+_{r,j}$ and sinks~$v^-_{r,l}$ and~$v^+_{r,j}$ indicates that the same amount of flow will be sent along them.}
\label{fig:service-cactus-construction-cycle}
\end{figure}

\subsection{Linear Programming Formulation}
\label{sec:cactus-linear-programming}
We propose Integer Programs~\ref{alg:SCGEP-IP} and \ref{alg:SCGEP-IP-C} to solve SCGEP-P and SCGEP-C respectively. The IPs are based on the individual service cactus graph decompositions and the corresponding extended graph constructions discussed above. As IP~\ref{alg:SCGEP-IP-C} reuses all of the core Constraints, we restrict our presentation to IP~\ref{alg:SCGEP-IP}.
We use variables~$\varFlowInput \in \{0,1\}$ to indicate the amount of flow that will be induced at the super sources of the different
extended graphs~(paths or cycles)~for virtual node $i \in \VVSourcesTargets$ and substrate node~$u \in \SVTypes[\Vtype(i)]$. Note that these flow variables are included in the Figures~\ref{fig:service-cactus-construction-path} and \ref{fig:service-cactus-construction-cycle}.
Additionally flow variables~$f_{r,e} \in \{0,1\}$ are defined for all edges contained in any extended graph construction, i.e. for~$\req \in \requests$  and~$e \in \VEextCGFlowEdges$, with~$\VEextCGFlowEdges =~(\bigcup_{P_k \in \Paths} \VEextpath)~\cup~(\bigcup_{C_k \in \Cycles} \VVextcycle)$. Furthermore, the already introduced variables~(see Section~\ref{sec:integer-linear-program})~$x_\req \in \{0,1\}$ and~$l_{\req, x,y} \geq 0$ are used to model the embedding decision of request $\req \in \requests$ and the allocated loads on resource $(x,y) \in  \SR$ for request $\req \in \requests$ respectively.

\begin{figure*}[tbhp]
{
 \LinesNotNumbered
 \renewcommand{\arraystretch}{1.4}

\removelatexerror

 \begin{IPFormulationStar}{H}

 \SetAlgorithmName{Integer Program}{}{{}}

 \scalebox{0.98}{

 \newcommand{\spaceIt}{\qquad\quad\quad}
 \newcommand{\miniSpace}{\hspace{1.5pt}}

 \centering

\hspace{-30pt} 
 \begin{tabular}{FRLQB}
    \multicolumn{3}{C}{~~~~~~~~~~~~~~~~~~~~~\textnormal{max~}  \sum \limits_{\req \in \requests} \Vprofit \cdot x_{\req} }& & 
  \tagIt{alg:SCGEPObj}\\[10pt]
  
  \sum \limits_{u \in \SVTypes[\Vtype(\VVroot)]} \varFlowInput[\req][\VVroot] & = & x_{\req} & \forall \req \in \requests & \tagIt{alg:SCGEP:FlowInduction-Initial} \\
  
      \sum \limits_{w \in \SVTypesCycle} f_{\req,~(u^+_{\req,i}, u^{i,j}_{\req,w})} & = & \varFlowInput[\req][i] & \forall \req \in \requests, C_k \in \Cycles, ~(i,j) \in  \VEcycleBranchR, i = \VVcycleSource, u \in \SVTypes[\Vtype(i)] & \tagIt{alg:SCGEP:FlowInduction-cycle-Right} \\
  
  f_{\req,~(u^+_{\req,i}, u^{i,j}_{\req,w})} -  f_{\req,~(u^+_{\req,i}, u^{i,j'}_{\req,w})}& = & 0 & \forall \req \in \requests, C_k \in \Cycles,~(i,j) \in  \VEcycleBranchR,~(i,j') \in  \VEcycleBranchL, i = \VVcycleSource, w \in  \SVTypesCycle & \tagIt{alg:SCGEP:FlowInduction-cycle-Left} \\
  
  f_{\req,~(u^+_{\req,i}, u^{i,j}_{\req})} & = & \varFlowInput[\req][i] & \forall \req \in \requests, P_k \in \Paths, ~(i,j) \in  \VEpath, i = \VVpathSource, u \in \SVTypes[\Vtype(i)] & \tagIt{alg:SCGEP:FlowInduction-path} \\

	\sum \limits_{e \in \delta^+(u)} f_{\req, e}  -  \sum \limits_{e \in \delta^-(u)}  f_{\req, e}& = & 0   &  \forall \req \in \requests, u \in \VVextCGFlowNodes & \tagIt{alg:SCGEP:FlowPreservation} \\
	
   f_{\req,~(w^{i,j}_{\req,w}, w^-_{\req,j})}  & = &  \varFlowInput[\req][j][w]   &  \forall \req \in \requests, C_k \in \Cycles,~(i,j) \in  \VEcycleBranchR, j = \VVcycleTarget, w \in \SVTypes[\Vtype(j)]\  & \tagIt{alg:SCGEP:setting-next-input-cycle-end} \\
   
   f_{\req,~(u^{i,j}_{\req}, u^-_{\req,j})}  & = &  \varFlowInput[\req][j][u]   &  \forall \req \in \requests, P_k \in \Paths,~(i,j) \in  \VEpath, j = \VVpathTarget, u \in \SVTypes[\Vtype(j)]\  & \tagIt{alg:SCGEP:setting-next-input-path-end} \\
   
   \sum \limits_{w\in \SVTypesCycle}  f_{\req,~(u^{i,j}_{\req,w}, u^{j,l}_{\req,w})}  & = &  \varFlowInput[\req][j][u]   &  \forall \req \in \requests, C_k \in \Cycles, j \in \VVbranchingcycle, (i,j),(j,l) \in  \VEcycle, u \in \SVTypes[\Vtype(j)]\  & \tagIt{alg:SCGEP:setting-next-input-cycle-middle} \\

	 \sum \limits_{\scriptsize
	 
	(e,i) \in  \VEextCGVertical[\req][\type]
	}  \hspace{-12pt} \Vcap(i)~\cdot f_{\req,e} + \hspace{-12pt}
	 \sum \limits_{\scriptsize
	 \begin{array}{c}
	 i \in  \VVSourcesTargets \setminus \VVbranchingcycle\\
	 \Vtype(i)~= \type
	 \end{array}
	 } \hspace{-16pt} \Vcap(i)~\cdot \varFlowInput[\req][i][u]   & = & l_{\req,\type,u} & \forall \req \in \requests, (\type,u) \in  \SRV & \tagIt{alg:SCGEP:nodeLoad} \\

	\sum \limits_{	(e,i,j) \in  \VEextCGHorizontal} \Vcap(i,j)~\cdot f_{\req,e} & = & l_{\req,u,v} & \forall \req \in \requests, (u,v) \in  \SE & \tagIt{alg:SCGEP:edgeLoad} \\

\sum \limits_{\req \in \requests } l_{\req,x,y}& \leq & \Scap(x,y)~& \forall (x,y) \in  \SR & \tagIt{alg:SCGEP:capacities} \\

  x_{\req} & \in & \{0,1\} &  \forall \req \in \requests & \tagIt{alg:SCGEP:embedding_variable} \\
  
  \varFlowInput & \in & \{0,1\} & \forall i \in \VVSourcesTargets & \tagIt{alg:SCGEP:flow-input-variable} \\
  
  f_{\req,e} & \in & \{0,1\} &  \forall \req \in \requests, e \in \VEextCGFlowEdges     & \tagIt{alg:SCGEP:fow-variables} \\

  l_{\req,x,y} & \geq & 0 &  \forall \req \in \requests, (x,y) \in  \SR     & \tagIt{alg:SCGEP:VariableLoad} \\
 
 \end{tabular}
 }
 \caption{SCGEP-P}
 \label{alg:SCGEP-IP}
 \end{IPFormulationStar}
 }
\end{figure*}

Embeddings are realized as follows. If,~$x_\req=1$ holds, by Constraint~\ref{alg:SCGEP:FlowInduction-Initial} exactly one of the flow values~$\varFlowInput[\req][\VVroot][u]$ is set to~$1$ for~$u \in \SVTypes[\Vtype(\VVroot)]$ and hence a flow will be induced in all paths and cycles having the arbitrarily chosen root~$\VVroot\in \VV$ as source. Specifically, for all paths~$P_k \in \Paths$, Constraint~\ref{alg:SCGEP:FlowInduction-path} sets the flow values of edges incident to the respective super source nodes in the extended graphs~$\VGpath$ according to the values to~$\varFlowInput$. For cycles, Constraint~\ref{alg:SCGEP:FlowInduction-cycle-Right} induces a flow analogously, albeit deciding which target substrate node~$w \in \SVTypesCycle$ will host the function~$\VVcycleTarget$. Importantly, Constraint~\ref{alg:SCGEP:FlowInduction-cycle-Right} only induces flow in the branch~$\VEcycleBranchR$ of cycle~$C_k \in \Cycles$. Indeed, Constraint~\ref{alg:SCGEP:FlowInduction-cycle-Left} forces the flow in the branch $\VEcycleBranchL$ to reflect the flow decisions of the branch~$\VEcycleBranchR$. Having induced flow in the extended networks for paths~$P_k \in \Paths$ and cycles~$C_k \in \Cycles$, Constraint~\ref{alg:SCGEP:FlowPreservation} enforces flow preservation in all of the extended networks. Here we define ~$\VVextCGFlowNodes =~(\bigcup_{P_k \in \Paths} \VVextpathSubstrate)~\cup~(\bigcup_{C_k \in \Cycles} \VVextcycleSubstrate)$ to be all nodes that are neither a super source nor a super sink in the respective extended graphs. As flow preservation is enforced within all extended graphs, the flow of path and cycle graphs must eventually terminate in the respective super sinks. By Constraints~\ref{alg:SCGEP:setting-next-input-cycle-end} and \ref{alg:SCGEP:setting-next-input-path-end} the variables~$\varFlowInput[\req][j][\cdot]$ of the target of the cycle~$C_k \in \Cycles$ -- i.e.~$j=\VVcycleTarget$ -- or the target of the path~$P_k \in \Paths$ -- i.e.~$j=\VVpathTarget$ -- are set according to the amount of flow that reaches the respective sink in the respective extended graph. This effectively induces flows in all the extended graphs of cycle~$C_k \in \Cycles$ or path~$P_k \in \Paths$ for which~$\VVcycleSource \in \VVcycleTargets \cup \VVpathTargets$ or~$\VVpathSource \in \VVcycleTargets \cup \VVpathTargets$ holds, respectively.
However, paths~$P_k \in \Paths$ or cycles~$C_k \in \Cycles$ may be spawned also at nodes lying inside another cycle~$C_{k'}$. This is e.g. the case for path $P_2$ in Figure~\ref{fig:service-cactus-request}, as the virtual node $m \in \VV$ is an inner node of the branch $\VEcycleBranchL$ of cycle $C_1$. To nevertheless, induce the right amount of flow at the different super sources in the graph $\VGextpath[\req][2]$, the sum of flows along processing edges inside the \emph{parallel path constructions} have to be computed, as visualized in Figure~\ref{fig:service-cactus-construction-cycle}: the flow variable $f^+_{\req,m,v}$ needs to equal the sum of the flow along edges $(v^{j,m}_{\req,v},v^{m,l}_{\req,v})$ and $(v^{j,m}_{\req,w},v^{m,l}_{\req,w})$ and the flow variable $f^+_{\req,m,u}$ needs to equal the sum of the flow along edges $(u^{j,m}_{\req,v},u^{m,l}_{\req,v})$ and $(u^{j,m}_{\req,w},u^{m,l}_{\req,w})$ as the respective edges denote the processing of function $m \in \VV$ on node $v \in \SV$ and $w \in \SV$ respectively. Constraint~\ref{alg:SCGEP:setting-next-input-cycle-middle} realizes this functionality, i.e. the inter-layer edges that denote processing are summed up to set the respective flow inducing variables $f^+_{\req,\cdot,\cdot}$. As paths may only overlap at sources and sinks (cf. Definition~\ref{def:service-cactus-graph-decomposition}), Constraints~\ref{alg:SCGEP:setting-next-input-cycle-end}-\ref{alg:SCGEP:setting-next-input-cycle-middle} propagate flow induction across extended graphs.

Constraints~\ref{alg:SCGEP:nodeLoad} and \ref{alg:SCGEP:edgeLoad} set the load variables $l_{\req,x,y}$ induced by the embeding of request $\req \in \requests$ substrate resources $(x,y) \in  \SR$. Within the Integer Program~\ref{alg:SCGEP-IP}, we again make use of specific index sets to simplify notation (cf. Definition of $\VEextHorizontal$ and~$\VEextVertical$ in Section~\ref{sec:integer-linear-program} and Integer~Program~\ref{alg:SCEP-IP}). Concretely, we introduce index sets~$\VEextCGHorizontal$ and~$\VEextCGVertical$ to contain the edges that indicate the processing on substrate resource $(\type,u) \in  \SRV$  and the edge $(u,v) \in  \SE$  respectively as follows:
\begin{align*}
& \VEextCGVertical = \\
& ~ \{((u^{i,j}_{\req}, u^{j,l}_{\req}),j) | P_k \in \Paths, (i,j),(j,l) \in \VEpath, \Vtype(j) = \type \} \cup \\
&~ \{((u^{i,j}_{\req,w}, u^{j,l}_{\req,w}),j)| C_k \in \hspace{-2pt} \Cycles, \hspace{-2pt}(i,j),\hspace{-2pt}(j,l) \hspace{-2pt}\in \hspace{-2pt}\VEcycle \hspace{-2pt}, \hspace{-2pt} \Vtype(j)\hspace{-2pt} = \hspace{-2pt}\type \hspace{-2pt}, w \hspace{-2pt}\in \hspace{-2pt} \SVTypesCycle  \} \\
& \VEextCGHorizontal  = \\
&~\{((u^{i,j}_{\req}, v^{i,j}_{\req}), i,j) | P_k \in \Paths,(i,j)\in \VEpathSame\}  \cup\\
&~ \{((v^{i,j}_{\req}, u^{i,j}_{\req}), i,j)| P_k \in \Paths,(i,j)\in \VEpathDiff \}  \cup\\
&~ \{((u^{i,j}_{\req,w}, v^{i,j}_{\req,w}), i,j)| C_k \in \Cycles,(i,j)\in \VEcycleSame, w\in \SVTypesCycle  \} \cup\\
&~ \{((v^{i,j}_{\req,w}, u^{i,j}_{\req,w}), i,j)| C_k \in \Cycles,(i,j)\in \VEcycleDiff, w\in \SVTypesCycle  \}
\end{align*}

In contrast to the definition of~$\VEextVertical$ in Section~\ref{sec:integer-linear-program}, the set~$\VEextCGVertical$ only collects the internal edges of extended graphs representing the processing at node $u$ having type $\type$. Hence, processing at nodes that are the start or the target of any of the decomposed paths or cycles must be separately accounted for. Concretely, in Constraint~\ref{alg:SCGEP:nodeLoad}, for all nodes $i \in \VVSourcesTargets \setminus \bigcup_{C_k \in \Cycles} \VVbranchingcycle$, the respective flow induction variables~$\varFlowInput[\req][i][u]$ are added. 

With respect to accounting for the load of substrate edges~$(u,v) \in  \SE$, we note that the only difference to the construction in Section~\ref{sec:integer-linear-program} is the re-reorientation of edge directions and the consideration of both paths and cycles. 

Lastly, the objective~\ref{alg:SCGEPObj} as well as the Constraint~\ref{alg:SCGEP:capacities} have not changed with respect to~(Integer)~Linear Program~\ref{alg:SCEP-IP}.

Summarizing the workings of~Integer Program~\ref{alg:SCGEP-IP}, we note the following:
\begin{itemize}
\item For each request $\req \in \requests$, a breadth-first search from an arbitrarily root $\VVroot \in \VV$ is performed to obtain graph $\VGbfs$. This graph is decomposed according to Definition~\ref{def:service-cactus-graph-decomposition}.
\item Within IP~\ref{alg:SCGEP-IP}, for each request~$\req \in \requests$ flow is induced in the respective extended graphs of cycles and paths via the variables~$\varFlowInput$ for all nodes~$i \in \VVSourcesTargets~$ and substrate nodes~$u \in \SVTypes[\Vtype(i)]$~(see Constraints~\ref{alg:SCGEP:FlowInduction-cycle-Right}, \ref{alg:SCGEP:FlowInduction-cycle-Left}, and \ref{alg:SCGEP:FlowInduction-path}).
\item Within each extended graph flow preservation holds (excluding super sources and sinks).
\item For cycles~$C_k \in \Cycles$, it is additionally enforced that the amount of flow sent towards destination~$w \in  \SVTypesCycle$ must be equal for both branches~(see Constraints~\ref{alg:SCGEP:FlowInduction-cycle-Right} and \ref{alg:SCGEP:FlowInduction-cycle-Left}).
\item As flow must terminate at the sinks of the extended graphs, the variables~$\varFlowInput[\req][\VVpathTarget][u]$ of path~$P_k \in \Paths$ and the variables~$\varFlowInput[\req][\VVcycleTarget][u]$ are eventually set~(see Constraints~\ref{alg:SCGEP:setting-next-input-cycle-end} and \ref{alg:SCGEP:setting-next-input-path-end}), thereby potentially inducing flows in other extended graphs. Additionally, if another cycle or path is spawned at a node internal to a branch of a cycle the respective flow input variables $f^+_{\req,\cdot,\cdot}$ are set in  Constraint~\ref{alg:SCGEP:setting-next-input-cycle-middle}.
\item Hence, if~$x_\req=1$ holds for request~$\req \in \requests$, then by Constraint~\ref{alg:SCGEP:FlowInduction-Initial} one of the variables~$\varFlowInput[\req][\VVroot][u]$ must be set to one. By the above observations, this will induce flows in \emph{all} extended graphs. 
\end{itemize}

{

 \newcolumntype{F}{>{$\displaystyle\,}r<{$}@{\hspace{0.0em}}}
 \newcolumntype{C}{>{$\displaystyle\,}c<{$}@{\hspace{0.0em}}}
 \newcolumntype{B}{>{$\displaystyle\,}r<{$}@{\hspace{0.0em}}}
 \newcolumntype{R}{>{$\displaystyle}r<{$}@{\hspace{0.2em}}}
 \newcolumntype{S}{>{$\displaystyle}r<{$}@{\hspace{0.2em}}}
 \newcolumntype{L}{>{$\displaystyle}l<{$}@{\hspace{0.2em}}}
 \newcolumntype{Q}{>{$\displaystyle}l<{$}@{\hspace{0.3em}}}

\LinesNotNumbered
\renewcommand{\arraystretch}{1.5}

\begin{IPFormulation}{tb}

\SetAlgorithmName{Integer Program}{}{{}}

\scalebox{0.94}{

\newcommand{\spaceIt}{\qquad\quad\quad}
\newcommand{\miniSpace}{\hspace{1.5pt}}

\centering
\hspace{-32pt}
\begin{tabular}{FFCLLBB}
\multicolumn{5}{C}{\textnormal{min~}  \sum_{\req \in \requests} \sum \limits_{(x,y) \in   \SR} \Scost(x,y)~\cdot l_{\req,x,y} } & ~ & \tagIt{alg:SCGEPObj-C}\\[10pt]
                  
\qquad \qquad \qquad \qquad \qquad &  \textnormal{\ref{alg:SCGEP:FlowInduction-Initial} - \ref{alg:SCGEP:capacities}} & ~\textnormal{and}~  & \textnormal{\ref{alg:SCGEP:flow-input-variable} - \ref{alg:SCGEP:VariableLoad}} & & & \\
&  x_{\req}  &=&  1 & \qquad \forall \req \in \requests & & \tagIt{alg:SCGEP:embedding_variable_without} \\
\end{tabular}
}
\caption{SCGEP-C}
\label{alg:SCGEP-IP-C}
\end{IPFormulation}
}

\subsection{Decomposition Algorithm / Correctness}
\label{sec:cactus:decomposition-algorithm-for-service-cacti-and-correctness}

In the following we present Algorithm~\ref{alg:decompositionAlgorithmSCGEP} to decompose a given solution to the~(relaxation of)~Integer Program~\ref{alg:SCGEP-IP}. As in the decomposition algorithm for linear service chains (cf. Algorithm~\ref{alg:decompositionAlgorithm}), the output of the algorithm is a set~$\mathcal{D}_{\req}$ for each request~$\req \in \requests$ consisting of triples~$(\prob,\mapping,\load)$, where~$\prob \in [0,1]$ is the fractional embedding value of the~(frational)~embedding defined by the mapping~$\mapping$. We again note that the load function $\load: \SR \to \preals$ indicates the load on substrate resource $(x,y) \in  \SR$, if the mapping $\mapping$ is \emph{fully} embedded.

\begin{figure}

\scalebox{0.90}{
\begin{minipage}{1.07\columnwidth}

\begingroup
\removelatexerror

\begin{algorithm*}[H]

\let\oldnl\nl
\newcommand{\nonl}{\renewcommand{\nl}{\let\nl\oldnl}}

\SetKwInOut{Input}{Input}\SetKwInOut{Output}{Output}
\SetKwFunction{ProcessPath}{ProcessPath}{}{}
\SetKwFunction{LP}{LP}
\SetKwFunction{LP}{LP}

\newcommand{\SET}{\textbf{set~}}
\newcommand{\DEFINE}{\textbf{define~}}
\newcommand{\AND}{\textbf{and~}}
\newcommand{\LET}{\textbf{let~}}
\newcommand{\WITH}{\textbf{with~}}
\newcommand{\COMPUTE}{\textbf{compute~}}
\newcommand{\CHOOSE}{\textbf{choose~}}
\newcommand{\DECOMPOSE}{\textbf{decompose~}}
\newcommand{\FORALL}{\textbf{for all~}}
\newcommand{\OBTAIN}{\textbf{obtain~}}
\newcommand{\WITHPROBABILITY}{\textbf{with probability~}}

\Input{Substrate~$\SG=(\SV,\SE)$, set of requests~$\requests$, solution~$(\vec{x},\vec{f^+},\vec{f},\vec{l}) \in  \FeasibleLP$}
\Output{Set of fractional mappings~$\PotEmbeddings = \{(\prob,\mapping,\load)\}_k$ for each request~$\req \in \requests$}

\For{$\req \in \requests$}
{
	\SET $\PotEmbeddings  \gets \emptyset$ \AND $k \gets 1$\\
	\While{$x_\req > 0$ }
	{
		
		\SET $\mapping = (\mapV,\mapE)~\gets (\emptyset,\emptyset)$ \label{alg:sc-decomposition:init-maps}\\
		\SET $l^k_{\req}(x,y)~\gets 0$ \FORALL $(x,y) \in  \SR $\\
		
		\SET $\Queue = \{\VVroot \}$\\
		
		\CHOOSE $u \in \SVTypes[\Vtype(\VVroot)]$ \WITH $\varFlowInput[\req][\VVroot][u] > 0$\\
		\SET $\mapV(\VVroot)~\gets u$\\
		\SET $\Variables \gets \{x_\req, f^+_{\req,\VVroot,u}\}$\\
		
		\While{$|\Queue| > 0$}{
			\CHOOSE $i \in \Queue$ \AND \SET$\Queue \gets \Queue \setminus \{i\}$\\
			\ForEach{$C_k \in \Cycles$ \textnormal{\WITH} $\VVcycleSource = i$}{
				\CHOOSE $P_1 = \langle s, \dots, t \rangle \in \VGextcycleFlowBranchR$ \label{alg:sc-decomposition:chooseP_1} \\ 
\nonl				\quad \WITH $s =(\mapV(i))^+_{\req,\VVcycleSource}$ \AND $t \in \VVextcycleTargets$  \\
				\CHOOSE $P_2 = \langle s, \dots, t \rangle \in \VGextcycleFlowBranchL$  \label{alg:sc-decomposition:chooseP_2}\\
				\ProcessPath{$C_k,\VEcycleBranchR, P_1, \mapV, \mapE, \Variables, \Queue$}\\
				\ProcessPath{$C_k,\VEcycleBranchL, P_2, \mapV, \mapE, \Variables, \Queue$}\\
			}
			\ForEach{$P_k \in \Paths$ \textnormal{\WITH} $\VVpathSource = i$}{
				\CHOOSE $P = \langle s, \dots, t \rangle \in \VGextpathFlow$ \\ 
				\quad \WITH $s =(\mapV(i))^+_{\req,\VVpathSource}$ \AND $t \in \VVextpathTargets$  \\
				\ProcessPath{$P_k,\VEpath P, \mapV, \mapE, \Variables, \Queue$}\\
			}
		}
					
		\For{$i \in \VV$}{
			\SET $\load(\Vtype(i),\mapV(i))~\gets \load(\Vtype(i),\mapV(i))~+ \Vcap(i)$ \\
		}
		
		\For{$(i,j) \in  \VE$}{	
			\For{$(u,v) \in  \mapE(i,j)$}{
				\SET $\load(u,v)~\gets \load(u,v)~ + \Vcap(i,j)$ \\ \label{alg:sc-decomposition:endMappingCreation}
			} 
		}
		
		\SET $\prob \gets \min \Variables$ \\
		\SET $v \gets v - \prob$ \FORALL $v \in \Variables$ \\	
		\SET $l_{\req,x,y} \gets l_{\req,x,y} - \prob \cdot \load(x,y)$ \FORALL $(x,y) \in  \SR$\\
		\SET $\PotEmbeddings \gets \PotEmbeddings \cup \{D^k_{\req}\}$ \WITH $D^k_{\req} = (f^k_{\req},m^k_{\req},l^k_{\req})$\\
		\SET $k \gets k + 1$\\
	}
}
\KwRet{$\{\PotEmbeddings | \req \in \requests\}$}
\BlankLine
\BlankLine
\ProcessPath{$K,P_V,P_{\textnormal{ext}}, \mapV, \mapE, \Variables, \Queue$}\\
\Indp
\SetInd{2em}{0em}
\ForEach{$(i,j) \in  P_V$}{
	\SET $P_S \gets \langle (u,v)~| (u^{i,j}_{\req,\cdot},v^{i,j}_{\req,\cdot}) \in  P_{\textnormal{ext}}, u\neq v \rangle $\\
	\LET $P_S = \langle(u^1,u^2), (u^2,u^3), \dots,(u^{n-1},u^n)~\rangle$\\
	\eIf{$(i,j) \in  \VEDiff $}{
		
		\SET $\mapE(j,i)~\gets \langle u^{n}, u^{n-1}, \dots, u^1 \rangle$ \\
	}{
		\SET $\mapE(j,i)~\gets \langle u^{1}, u^{2}, \dots, u^n \rangle$ \\
	}
	
	\uIf{$j \neq \VVKTarget$ \textnormal{\AND} $(u^{i,j}_{\req,\cdot},u^{j,\cdot}_{\req,\cdot}) \in  P_{\textnormal{ext}}$}{
		\SET $\mapV(j)~\gets u$\\
	}
	\uElseIf{$j = \VVKTarget$ \textnormal{\AND} $(u^{i,j}_{\req,\cdot},u^-_{\req,j}) \in  P_{\textnormal{ext}}$}{
		\SET $\mapV(j)~\gets u$\\
	}
	\If{$j \in \VVSourcesTargets$}{
		\SET $\Variables \gets \Variables \cup \{f^+_{\req, j, \mapV(j)} \}$\\
		\SET $\Queue \gets \Queue \cup \{j\}$\\
	}

}
\SET $\Variables \gets \Variables \cup \{f_{\req, e} | e \in P_{\textnormal{ext}} \}$
\caption{Decomposition Algorithm Service Cacti}
\label{alg:decompositionAlgorithmSCGEP}
\end{algorithm*}
\endgroup
\end{minipage}}
\end{figure}

For each request $\req \in \requests$, the algorithm decomposes the flows in the extended graphs iteratively as long as~$x_\req > 0$ holds. This is done by placing Initially, only the root~$\VVroot \in \VV$ is placed in the queue~$\mathcal{Q}$ and one of the substrate nodes~$u \in \SVTypes[\Vtype(\VVroot)]$ with~$f^+_{r, \VVroot, u} >0$ is chosen as the node to host $\VVroot$. Such a node must always exist by Constraint~\ref{alg:SCGEP:FlowInduction-Initial} of Integer Program~\ref{alg:SCGEP-IP} and the node mapping is set accordingly. The queue $\mathcal{Q}$ will only contain nodes that are sources of paths or cycles in the decomposition, i.e. which are contained in $\VVSourcesTargets$. Furthermore, by construction, each extracted node from the queue will already be mapped to some specific substrate node via the function $\mapV$. The set $\mathcal{V}$ is used to keep track of all variables of the Integer Linear Program whose value used in the decomposition process.

For each node in the queue $i \in \mathcal{Q}$, all cycles~$C_k \in \Cycles$ and paths~$P_k \in \Paths$ starting at~$i$, i.e. that $\VVcycleSource = i$ or $\VVpathSource = i$ holds,  are handled one after another using the function \texttt{ProcessPath}. We start by discussing how paths~$P_k$ are handled~(see Lines 17-20). Note that by construction, the source of~$P_k$ is already mapped to the substrate node $\mapV(i)$. Within the graph~$\VGextpathFlow =~(\VVextpathFlow, \VEextpathFlow)$ with~$\VVextpathFlow=\VVextpath$ and~$\VEextpathFlow = \{e \in \VEextpath | f_{\req, e} > 0 \}$ a path~$P$ is chosen from the respective super source~$(\mapV(i))^+_{r,P_k} \in \VVextpath$ towards any of the sinks~$\VVextpathTargets$. By definition of~$\VEextpathFlow$ the flow along~$P$ is greater~$0$. The Function \texttt{ProcessPath} is handed the path identifier~$K = P_k$, the virtual path~$P_V = \VEpath \subseteq \VE$, the path in the extended network~$P_{\textnormal{ext}} = P \subseteq \VEextpath$, as well as the mappings and the list of variables~$\mathcal{V}$ and the queue~$\mathcal{Q}$. 

The function \texttt{ProcessPath} proceeds as follows. For all virtual edges~$(i,j) \in  P_V$ the edge mapping is set by projecting the edges used in the extended graph to the substrate path~$P_S \in \SE$  in the substrate network (see Line~34). Then, depending on whether the virtual edge's direction was reversed, the edge mapping $\mapE$ is set by either reversing the node order of path $P_S$ or keeping it the same. Note that by Definition~\ref{def:valid-mapping} the mapping $\mapE$ must be node path. 
After having taken care of the edge mappings in Lines~34 to 39, the node mappings are set in Lines~40 to 43. The node mapping of node $j$ -- for each virtual edge $(i,j) \in  P_V$ -- is set by considering the inter-layer edges or the edge towards the super source in $P_{\textnormal{ext}}$. Concretely, if $e =~(u^{i,j}_{\req,\cdot}, u^{j,\cdot}_{\req,\cdot}) \in  \VEextpathF$ exists and if~$j$ is not the target of the respective path or cycle, then $j$ is mapped onto node $u$. On the other hand, if $j$ is the target of the respective path or cycle, and $e =~(u^{i,j}_{\req,\cdot}, u^-_{\req, j}) \in  \VEextpathTargets$ holds, then $j$ is mapped onto substrate node $u$. Then, if~$j$ is the source of other paths or cycles or if~$j$ is the target of the given path, the respective flow input variable of the node mapping~$f^+_{\req, j, \mapV(j)}$ is added to~$\mathcal{V}$ and the node is added to the queue of nodes~$\mathcal{Q}$ to process.
Lastly, also the edge variables used along the path~$P_{\textnormal{ext}}$ are added to~$\mathcal{V}$. Note that the \texttt{ProcessPath} function indeed maps all edges and nodes of $P_V$. By construction, the start node of $P_V$ will always be mapped. Hence, it is sufficient to only map the tail $j$ of the virtual edges $(i,j) \in  P_V$. 

The processing of cycles is handled similarly. First the path~$P_1 \in \VGextcycleFlowBranchR$ is chosen where the index~$f$ again implies that the flow along edges in the path is greater~$0$. While for~$P_1$ only the source is fixed according to the mapping of the source of the cycle, the path~$P_2 \in \VGextcycleFlowBranchL$ is constrained to have the same source and target as $P_1$. Finding appropriate paths is always possible by Constraints~\ref{alg:SCGEP:FlowInduction-cycle-Right} and \ref{alg:SCGEP:FlowInduction-cycle-Left}. Concretely, Constraint~\ref{alg:SCGEP:FlowInduction-cycle-Right} enforces that there is a path towards any of the sinks given that~$f^+_{\req, s, (\mapV(s))^+_{\req, s}} > 0~$ holds for~$s = \VVcycleSource$. By Constraint~\ref{alg:SCGEP:FlowInduction-cycle-Left}, the amount of flow sent from~$s$ towards the destination~$t$ in one branch must equal the amount of flow sent from~$s$ towards destination~$t$ in the other.

The remaining code in Lines~21-30, first the load allocations $\load$ for resources are computed in Lines~21-25 (cf. Lines~). Afterwards, the minimum (fractional)~embedding value~$f^k_{\req}$ over all variable contained in $\mathcal{V}$ (always being greater than~$0$ by construction)~is computed. Accordingly, all variables' values are decremented by $\prob$ in Line~27 together with the respectively scaled load variables of the integer program in Line~28. Lastly, the triple consisting of the~(fractional)~embedding value, the mapping and the loads is added to the set of decompositions~$\mathcal{D}_{\req}$ and $k$ is incremented.

Summarizing the above, we highlight that by Line~21 the mapping~$\mapping=(\mapV,\mapE)$ is valid, i.e. all virtual nodes are mapped to a node supporting the respective network function and edges between virtual nodes are realized using paths in the substrate~(cf.~Definition~\ref{def:valid-mapping}).  Without further proof, we state the analogue to Lemma~\ref{lem:validity-of-kth-decomposition}:

\begin{lemma}
\label{lem:validity-of-kth-decomposition-scgep}
Each mapping~$\mapping=(\mapV,\mapE)$ found by Algorithm~\ref{alg:decompositionAlgorithmSCGEP} for request~$\req \in \requests$ in the~$k$-th iteration is \emph{valid}.
\begin{proof}[Proof sketch]
The proof works by induction and we consider only a single request $\req \in \requests$. We argue that as long as the constraints specified in Integer Program~\ref{alg:SCGEP-IP} hold, always a valid mapping can be constructed. It is clear, that initially the Constraints  of Integer Program~\ref{alg:SCGEP-IP} are all satisfied.  

In the above synopsis of Algorithm~\ref{alg:decompositionAlgorithmSCGEP} we have already argued that the ``choose''-operations in Lines 7, 13, 14, and 18 are well-defined, given that all constraints hold. Similarly, we have argued that the \texttt{ProcessPath} function correctly maps nodes and edges. Within this proof we therefore only show (i)~that indeed all nodes and edges are mapped and that (ii)~all constraints of Integer Program~\ref{alg:SCGEP-IP} hold after decreasing the respective variables.

With respect to (i), we note that by the definition of~$\VGbfs$ all virtual nodes are reachable from~$\VVroot$. As the cycles and paths cover all edges~(and hence all nodes)~and sources and targets are added by the function \texttt{ProcessPath} to the queue~$\mathcal{Q}$, eventually all virtual edges~(and hence all virtual nodes)~will be processed. As the mapping of~$\VVroot$ is fixed initially and the targets of edges in~$\VEbfs$ are always mapped, the node mapping will be complete and valid by construction. Furthermore, by definition of the extended graphs, the path assignment of Lines~34-39 connects the respective substrate node locations and therefore is also valid.

Considering (ii), i.e. the preservation of the integer programs constraints' validity, we note that all variables denoting the usage of resources are collected in the set $\mathcal{V}$. Concretely, the flow variables $f_{\req,\cdot}$ along each path or cycle branch in the extended graph are added (cf. Line~46)~together with the flow induction variables $f^+_{\req,\cdot,\cdots}$ (cf. Lines~9 and 45)~and the embedding variable $x_r$ (cf. Line~9). Then in  Line~27 the values of the respective variables are reduced exactly by the amount $\prob > 0$. Furthermore, the load variables are reduced in Line~28. We claim that again all constraints of Integer Program~\ref{alg:SCGEP-IP} hold. We omit the proof that the Constraints~58 and 59, computing the loads $l_{\req,x,y}$ of substrate resource $(x,y) \in  \SR$ for request $\req \in \requests$, still hold as the ``choose''-operations only depend on the other Constraints. Validating the correctness of the other constraints after having reduced the values of the variables in Line~27 is quite easy. The function \emph{ProcessPath} includes all variables along the edges of $P_{\textnormal{ext}}$ in the set $\mathcal{V}$. Hence, flow preservation will hold inside each extended graph. Furthermore, starting with the initial flow induction variable $f^+_{\req,\VVroot,\cdot}$ all intermediate flow induction variables are added to $\mathcal{V}$. Concretely, if a node is placelem:validity-of-kth-decomposition-scgepd into $\mathcal{Q}$, then its respective flow induction variable will have already been placed in $\mathcal{V}$ (cf. 45). Also, the embedding variable $x_{\req}$ is added, such that Constraints~50-57 are still valid.
all flow inducing variables $f^+_{\req,\cdot,\cdot}$ together with flows in the respective extended networks are reduced together with the variable $x_\req$. 

Hence, as after reducing the Integer Program's variables, all constraints are still valid and the later on found mappings are also valid by induction.
\end{proof}
\end{lemma}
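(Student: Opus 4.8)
The plan is to prove validity by induction on the iteration index $k$, carrying as an invariant that, at the start of each iteration, the residual solution $(\vec{x},\vec{f^+},\vec{f},\vec{l})$ still satisfies all flow-related constraints of Integer Program~\ref{alg:SCGEP-IP} (namely Constraints~\ref{alg:SCGEP:FlowInduction-Initial}--\ref{alg:SCGEP:setting-next-input-cycle-middle}). This invariant is precisely what guarantees that every \textbf{choose}-operation in Algorithm~\ref{alg:decompositionAlgorithmSCGEP} is well-defined, which in turn makes the constructed mapping $\mapping = (\mapV,\mapE)$ meaningful. Granting the invariant, I would then split the argument into the two requirements of Definition~\ref{def:valid-mapping}: type-correct node placement and simple-path edge realization.

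First I would establish well-definedness of the choices. When $x_\req > 0$, Constraint~\ref{alg:SCGEP:FlowInduction-Initial} forces some $\varFlowInput[\req][\VVroot][u] > 0$, so the root can be placed (Line~7). Whenever a node $i$ with fixed image $\mapV(i)$ is dequeued, the induction Constraints~\ref{alg:SCGEP:FlowInduction-path} (for paths) and \ref{alg:SCGEP:FlowInduction-cycle-Right} (for the branch $\VEcycleBranchR$ of a cycle) inject positive flow at the super source $(\mapV(i))^+$, and flow preservation (Constraint~\ref{alg:SCGEP:FlowPreservation}) then guarantees a positive-flow path to a sink inside the corresponding extended graph; these are exactly the paths chosen in Lines~13, 14 and 18. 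For cycles the additional point is that $P_1 \in \VGextcycleFlowBranchR$ and $P_2 \in \VGextcycleFlowBranchL$ must reach the \emph{same} target copy $w \in \SVTypesCycle$; this is supplied by Constraint~\ref{alg:SCGEP:FlowInduction-cycle-Left}, which equalizes the source flow into the two branches per destination $w$, so a matching $P_2$ always exists.

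Next I would verify the two validity conditions. Type-correctness follows purely from the extended-graph definitions: in \texttt{ProcessPath} a node $j$ is assigned $\mapV(j)\gets u$ only when a witnessing edge $(u^{i,j}_{\req,\cdot}, u^{j,\cdot}_{\req,\cdot})$ or $(u^{i,j}_{\req,\cdot}, u^-_{\req,j})$ lies on $P_{\textnormal{ext}}$, and by Definitions~\ref{def:service-cactus-extended-graph-path} and \ref{def:service-cactus-extended-graph-cycle} such edges exist only for $u \in \SVTypes[\Vtype(j)]$. Completeness and single-valuedness of $\mapV$ follow from Lemma~\ref{lem:observations-bfs-graphs} and the covering property of the decomposition (Definition~\ref{def:service-cactus-graph-decomposition}): every virtual node is reachable from $\VVroot$ in $\VGbfs$, and since every source or target that is enqueued eventually has all its incident cycles and paths processed, the queue drains only after all edges of $\VEbfs$, and hence all nodes, have been handled. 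For the edge mapping, \texttt{ProcessPath} reads off the consecutive substrate-copy edges of $P_{\textnormal{ext}}$ as $P_S$, reversing the node order exactly when $(i,j) \in \VEDiff$; because $P_{\textnormal{ext}}$ is a path in a flow-restricted extended graph, its trace inside each substrate layer is a simple substrate path whose endpoints coincide with $\mapV(i)$ and $\mapV(j)$ by the layer structure, so $\mapE(i,j)$ is a valid simple path in the sense of Definition~\ref{def:valid-mapping}.

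The main obstacle, and the step I would treat most carefully, is the consistency of the node mapping at the in-degree-two targets of cycles and at interior branching nodes $\VVbranchingcycle$. A cycle target $\VVcycleTarget$ is written by \emph{both} calls to \texttt{ProcessPath}, so I must argue the two writes agree; this is exactly where the parallel-path construction indexed by $w \in \SVTypesCycle$ together with Constraint~\ref{alg:SCGEP:FlowInduction-cycle-Left} is essential, as it pins both branches to the same copy $w$ and hence forces $\mapV(\VVcycleTarget) = w$ in both calls. Symmetrically, for a path or cycle spawned at an interior branching node, I must show the flow feeding its super source is correctly accounted; this is where Constraint~\ref{alg:SCGEP:setting-next-input-cycle-middle}, summing the inter-layer processing edges over $w$, enters, ensuring the enqueued node's flow-induction variable $\varFlowInput$ is positive so the subsequent choice is again well-defined. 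Finally, to close the induction I would note that $\Variables$ collects exactly the variables touched along the chosen paths together with the traversed flow-induction variables and $x_\req$, so subtracting $\prob = \min \Variables > 0$ uniformly in Line~27 preserves flow conservation, flow induction and branch equality, re-establishing the invariant for iteration $k+1$ while strictly decreasing $x_\req$ and thereby guaranteeing termination.
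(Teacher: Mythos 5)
Your proposal is correct and follows essentially the same route as the paper's own proof sketch: induction over the iterations with the invariant that the residual solution satisfies the (flow-related) constraints of Integer Program~\ref{alg:SCGEP-IP}, which makes every \textbf{choose}-operation well-defined, plus completeness of the node/edge mappings via reachability from $\VVroot$ and the covering property, and preservation of the invariant after the uniform reduction by $\prob = \min\Variables$ in Line~27. If anything, you are more explicit than the paper on two delicate points --- the agreement of the two writes to $\mapV(\VVcycleTarget)$ via Constraint~\ref{alg:SCGEP:FlowInduction-cycle-Left} and the positivity of $\varFlowInput$ at interior branching nodes via Constraint~\ref{alg:SCGEP:setting-next-input-cycle-middle} --- though note that your closing termination remark would need the standard observation that each iteration zeroes at least one variable, strict decrease of $x_\req$ alone not sufficing (termination is, however, not part of the lemma).
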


\subsection{Approximating Service Cacti Embeddings}
\label{sec:cactus:approximation-service-cacti-embeddings}
Given Algorithm~\ref{alg:decompositionAlgorithmSCGEP} to decompose (relaxed)~solutions to Integer Programs~\ref{alg:SCGEP-IP} and \ref{alg:SCGEP-IP-C}, we can readily apply the approximation framework presented in Sections~\ref{sec:randround} and \ref{sec:approximation-without-admission-control} to obtain approximations for SCGEP-P and SCGEP-C. For completeness, we reiterate the important lemmas that link the Integer Programs and its decompositions to the respective approximation Algorithms~\ref{alg:approxAdmissionControl} and \ref{alg:approxWithoutAdmissionControl}.


The analogue of Lemma~\ref{lem:sum-of-f-mu} follows from Lemma~\ref{lem:validity-of-kth-decomposition-scgep}, as in each iteration the variable $x_{\req}$ is decremented by $\prob$ while adding a decomposition of value $\prob$:

\begin{lemma}
\label{lem:sum-of-f-mu-scgep}
The decomposition $\PotEmbeddings$ computed in Algorithm~\ref{alg:decompositionAlgorithmSCGEP} is complete, i.e. $\sum_{\decomp \in \PotEmbeddings} \prob = x_\req$ holds, for $\req \in \requests$.
\end{lemma}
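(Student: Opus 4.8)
The plan is to track the single variable $x_\req$ throughout the execution of Algorithm~\ref{alg:decompositionAlgorithmSCGEP} for a fixed request $\req \in \requests$, and to show that its total decrease over all iterations of the outer while-loop equals the sum of the fractional embedding values it produces. The essential observation is that $x_\req$ is always a member of the variable-tracking set $\Variables$: it is inserted in Line~9 at the start of every iteration, and the auxiliary function \texttt{ProcessPath} only ever \emph{adds} further elements to $\Variables$. Since $\prob = \min \Variables$ (Line~26) and Line~27 decrements \emph{every} element of $\Variables$ by exactly $\prob$, the value of $x_\req$ decreases by precisely $f^k_\req = \prob$ in the $k$-th iteration; moreover $\prob \le x_\req$, so $x_\req$ never becomes negative.

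Next I would invoke Lemma~\ref{lem:validity-of-kth-decomposition-scgep}, which guarantees that all constraints of Integer Program~\ref{alg:SCGEP-IP} remain satisfied after the variable updates in Lines~27--28. In particular, as long as $x_\req > 0$, Constraint~\ref{alg:SCGEP:FlowInduction-Initial} forces some $\varFlowInput[\req][\VVroot][u] > 0$, which in turn---via the flow-induction and flow-preservation Constraints~\ref{alg:SCGEP:FlowInduction-cycle-Right}--\ref{alg:SCGEP:setting-next-input-cycle-middle}---certifies that every ``choose''-operation inside the inner loops is well-defined and selects edges carrying strictly positive flow. Hence each iteration genuinely produces a decomposition with $\prob > 0$, and the accounting of the previous paragraph applies to every iteration.

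For termination I would argue as follows. Because $\prob = \min \Variables > 0$, at least one variable attaining the minimum is reduced to exactly $0$ in Line~27, and no variable is ever increased anywhere in the algorithm. Thus the number of strictly-positive flow and embedding variables strictly decreases with each iteration, bounding the total number of iterations by the (finite) number of variables of the integer program. The loop therefore terminates; and since it exits only when $x_\req \le 0$ while $x_\req$ is never driven below $0$, it must terminate precisely when $x_\req = 0$.

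Combining these three observations completes the proof: writing $x_\req$ for its value in the input solution $(\vec{x},\vec{f^+},\vec{f},\vec{l}) \in \FeasibleLP$, the total amount subtracted from $x_\req$ across all iterations equals its initial value, so $\sum_{\decomp \in \PotEmbeddings} \prob = \sum_k f^k_\req = x_\req$, as claimed. I expect the only genuinely nontrivial point to be the well-definedness/termination argument---specifically, ruling out that $x_\req$ gets ``stuck'' at a positive value because a required $\Vsource$-$\Vsink$ path can no longer be found in some extended graph. This is exactly where the invariant of Lemma~\ref{lem:validity-of-kth-decomposition-scgep} is indispensable, as it ensures that a positive residual $x_\req$ always certifies, through Constraint~\ref{alg:SCGEP:FlowInduction-Initial} and flow preservation, the existence of a decomposable flow in each of the extended graphs.
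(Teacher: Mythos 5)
Your proof is correct and follows essentially the same route as the paper, which justifies this lemma in one sentence by noting that each iteration decrements $x_\req$ by exactly $\prob$ while emitting a decomposition of value $\prob$, with Lemma~\ref{lem:validity-of-kth-decomposition-scgep} supplying the invariant that keeps every iteration (and in particular every ``choose''-operation) well-defined until $x_\req$ reaches $0$. Your write-up merely makes explicit the termination and non-negativity details that the paper leaves implicit.
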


We next prove the analogue of Lemma~\ref{lem:allocations}, stating that the cumulative load induced by the decompositions does not exceed the cumulative load in the respective Integer Programs~\ref{alg:SCGEP-IP} and \ref{alg:SCGEP-IP-C}.

\begin{lemma}
\label{lem:allocations-scgep}
The cumulative load induced by the fractional mappings obtained by Algorithm~\ref{alg:decompositionAlgorithmSCGEP} is less than the load computed in the respective integer program and hence less than the offered capacity, i.e. for all resources $(x,y) \in  \SR$ holds
\begin{align}
\sum_{\req \in \requests} \sum_{\decomp \in \PotEmbeddings} \prob \cdot \load(x,y)~\leq \sum_{\req \in \requests} l_{\req, x,y} \leq  \Scap(x,y)\,,
\label{eq:lem:allocations-scgep}
\end{align}
where $l_{\req,x,y}$ refers to the respective variables of the respective Integer Program.
\begin{proof}
It suffices to consider a single iteration of Algorithm~\ref{alg:decompositionAlgorithmSCGEP}. By Lemma~\ref{lem:validity-of-kth-decomposition-scgep} each virtual node $i \in \VV$ and each virtual edge $(i,j) \in  \VE$ is mapped to a substrate node and a substrate path respectively. By construction, each of these mappings was accounted for by the flow variables and the flow induction variables contained in $\mathcal{V}$. The load variables for resource $(x,y) \in  \SR$ are computed in Integer Programs~\ref{alg:SCGEP-IP} and \ref{alg:SCGEP-IP-C} via the flow values in Constraints~\ref{alg:SCGEP:nodeLoad} and \ref{alg:SCGEP:edgeLoad}. As the variables in $\mathcal{V}$ are uniformly decremented by the value $\prob$, and each node $i \in \VV$ or edge $(i,j) \in  \VE$ induces a load of $\Vcap(i)$ or $\Vcap(i,j)$ respectively and the computation of loads in Algorithm~\ref{alg:decompositionAlgorithmSCGEP} agrees with the computation in Constraints~\ref{alg:SCGEP:nodeLoad} and \ref{alg:SCGEP:edgeLoad}, the respective constraints pertaining to the loads, will be valid again after Line~28 of Algorithm~\ref{alg:decompositionAlgorithmSCGEP}. Hence, in each iteration the load accounted for in the decomposition $\decomp$ is upper bounded by the \emph{decrease} in the load variables in the respective iteration.
\end{proof}
\end{lemma}

Based on the above Lemmas, it is easy to obtain the analogues of Lemmas~\ref{lem:relation-of-net-profit-in-decomposition-and-the-LP-profit} and \ref{lem:relation-of-net-profit-in-decomposition-and-the-LP-cost}:

\begin{lemma}
\label{lem:relation-of-net-profit-in-decomposition-and-the-LP-profit-scgep}
Let $(\vec{x}, \vec{f^+}, \vec{f}, \vec{l}) \in  \FeasibleLP$ denote a \emph{feasible} solution to the linear relaxation of Integer Program~\ref{alg:SCGEP-IP} achieving a net profit of $\hat{B}$ and let $\PotEmbeddings$ denote the respective decompositions of this linear solution for requests $\req \in \requests$ computed by Algorithm~\ref{alg:decompositionAlgorithmSCGEP}, then the following holds:
\begin{align}
\sum_{\req \in \requests} \sum_{\decomp \in \PotEmbeddings} \prob \cdot \Vprofit = \hat{B}~.
\end{align}
\end{lemma}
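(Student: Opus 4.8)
The plan is to recognize that this lemma is a direct corollary of the completeness of the decomposition, exactly as Lemma~\ref{lem:relation-of-net-profit-in-decomposition-and-the-LP-profit} followed from Lemma~\ref{lem:sum-of-f-mu} in the linear service chain setting. The only ingredients needed are (i)~the definition of the net profit attained by the linear solution via objective~\ref{alg:SCGEPObj}, and (ii)~the statement of Lemma~\ref{lem:sum-of-f-mu-scgep}, which guarantees that the fractional embedding values of the decomposition of each request sum to the embedding variable $x_\req$.

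Concretely, I would first observe that by objective~\ref{alg:SCGEPObj} the net profit of the feasible solution $(\vec{x}, \vec{f^+}, \vec{f}, \vec{l})$ is $\hat{B} = \sum_{\req \in \requests} \Vprofit \cdot x_\req$. Then I would invoke Lemma~\ref{lem:sum-of-f-mu-scgep} to substitute $x_\req = \sum_{\decomp \in \PotEmbeddings} \prob$ for each request. This yields the chain
\begin{align*}
\hat{B} = \sum_{\req \in \requests} \Vprofit \cdot x_\req = \sum_{\req \in \requests} \Vprofit \sum_{\decomp \in \PotEmbeddings} \prob = \sum_{\req \in \requests} \sum_{\decomp \in \PotEmbeddings} \prob \cdot \Vprofit,
\end{align*}
where the middle equality is the substitution and the last step merely reorders the finite summation, pulling the constant factor $\Vprofit$ into the inner summand. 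This establishes the claim.

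I do not anticipate any genuine obstacle here: the substantive content has already been absorbed into Lemma~\ref{lem:sum-of-f-mu-scgep} (completeness of the decomposition), which itself rests on the fact that Algorithm~\ref{alg:decompositionAlgorithmSCGEP} decrements $x_\req$ by exactly $\prob$ in each iteration and terminates only once $x_\req$ has reached $0$. The one point worth stating explicitly is that $\Vprofit$ is constant with respect to the inner summation index $k$, so that moving it inside the sum over $\decomp \in \PotEmbeddings$ is valid; this is immediate, since the profit depends only on the request $\req$ and not on the chosen decomposition.
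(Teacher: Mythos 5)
Your proposal is correct and matches the paper's own reasoning: the paper states this lemma without a written proof precisely because it is a direct corollary of the completeness result (Lemma~\ref{lem:sum-of-f-mu-scgep}) combined with the objective~\ref{alg:SCGEPObj}, which is exactly the two-line substitution you carry out. Nothing is missing; your explicit remark that $\Vprofit$ is constant in the inner summation index is a harmless formality.
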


\begin{lemma}
\label{lem:relation-of-net-profit-in-decomposition-and-the-LP-cost-scgep}
Let $(\vec{x}, \vec{f^+}, \vec{f}, \vec{l}) \in  \FeasibleLP$ denote a \emph{feasible} solution to the linear relaxation of Integer Program~\ref{alg:SCGEP-IP-C} having a cost of $\hat{C}$ and let $\PotEmbeddings$ denote the respective decompositions of this linear solution computed by Algorithm~\ref{alg:decompositionAlgorithmSCGEP} for requests $\req \in \requests$, then the following holds:
\begin{align}
\sum_{\req \in \requests} \sum_{\decomp \in \PotEmbeddings} \prob \cdot c(\mapping)~\leq \hat{C}~.
\end{align}
Additionally, equality holds, if the solution $(\vec{x}, \vec{f}, \vec{l}) \in  \FeasibleLP$, respectively the objective $\hat{C}$, is optimal.
\end{lemma}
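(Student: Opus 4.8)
The plan is to follow the proof of Lemma~\ref{lem:relation-of-net-profit-in-decomposition-and-the-LP-cost} in structure, replacing its appeal to the linear load bound with the cactus load bound of Lemma~\ref{lem:allocations-scgep}. The one identity to record first is that the cost of any decomposed mapping equals the cost-weighted total of its induced loads. Inspecting the load-accumulation steps of Algorithm~\ref{alg:decompositionAlgorithmSCGEP}, where $\load$ accumulates $\Vcap(i)$ on $(\Vtype(i),\mapV(i))$ for every virtual function and $\Vcap(i,j)$ on every substrate link of each extracted edge path, shows that $c(\mapping) = \sum_{(x,y) \in \SR} \Scost(x,y) \cdot \load(x,y)$, which is merely Equation~\ref{eq:cost-definition} rewritten in terms of the aggregated loads $\load$.

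With this identity in hand the inequality is a one-line interchange of summation followed by Lemma~\ref{lem:allocations-scgep}:
\begin{align*}
\sum_{\req \in \requests} \sum_{\decomp \in \PotEmbeddings} \prob \cdot c(\mapping) &= \sum_{(x,y) \in \SR} \Scost(x,y) \sum_{\req \in \requests} \sum_{\decomp \in \PotEmbeddings} \prob \cdot \load(x,y) \\
&\leq \sum_{(x,y) \in \SR} \Scost(x,y) \sum_{\req \in \requests} l_{\req,x,y} ,
\end{align*}
where the inequality uses $\Scost \geq 0$ together with Equation~\ref{eq:lem:allocations-scgep}. The right-hand side is exactly the objective value $\hat{C}$ of the feasible solution (cf. objective~\ref{alg:SCGEPObj-C}), which proves the claimed bound.

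The equality for optimal solutions is where the real work lies and is the step I expect to be the main obstacle. As in the linear case, the only slack in Lemma~\ref{lem:allocations-scgep} can come from flow that remains in the extended graphs after the decomposition has driven $x_\req$ to zero (Lemma~\ref{lem:sum-of-f-mu-scgep}). I would argue that any such residual is a directed circulation confined to a single substrate copy: in both extended-graph constructions (Definitions~\ref{def:service-cactus-extended-graph-path} and \ref{def:service-cactus-extended-graph-cycle}) the inter-layer processing edges, the super-source edges and the super-sink edges are all strictly forward-directed within each parallel path construction, so no directed cycle can traverse them. Such a circulation therefore touches only horizontal substrate-copy edges, and removing it preserves flow conservation (a cycle is balanced at every node), leaves every flow-induction and branch-balance constraint~(\ref{alg:SCGEP:FlowInduction-Initial}--\ref{alg:SCGEP:setting-next-input-cycle-middle}) untouched since none of the relevant interface or processing edges are affected, and strictly lowers the edge-load cost on any positive-cost substrate link it uses. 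An optimal solution can thus carry no positive-cost residual, so Lemma~\ref{lem:allocations-scgep} is tight on every resource of positive cost and equality follows.

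The delicate point to nail down in the full proof is precisely this constraint-preservation check for the cactus-specific couplings --- especially the branch-balance constraint~\ref{alg:SCGEP:FlowInduction-cycle-Left} and the middle-branch induction~\ref{alg:SCGEP:setting-next-input-cycle-middle}, which link flows across distinct extended graphs. Establishing that residual circulations cannot reach the coupling (interface or processing) edges relies entirely on the layered, forward-processing structure of the extended graphs, and this structural observation is the crux of the argument; everything else is the same bookkeeping as in the linear-chain proof.
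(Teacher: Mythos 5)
Your proof is correct and takes essentially the same route as the paper: the paper states this lemma without an explicit proof, as the ``easy analogue'' of Lemma~\ref{lem:relation-of-net-profit-in-decomposition-and-the-LP-cost}, whose argument consists precisely of your two steps --- the inequality via the per-iteration load/cost accounting (your appeal to Lemma~\ref{lem:allocations-scgep} plus the identity $c(\mapping)=\sum_{(x,y)\in\SR}\Scost(x,y)\cdot\load(x,y)$), and equality for optimal solutions by removing residual cyclic flows, whose removal preserves feasibility while reducing cost. Your additional structural observation --- that residual circulations are confined to single substrate copies because the layered extended graphs admit no directed cycle through processing, super-source, or super-sink edges, so the coupling Constraints~\ref{alg:SCGEP:FlowInduction-cycle-Left} and \ref{alg:SCGEP:setting-next-input-cycle-middle} remain untouched --- is exactly the cactus-specific detail the paper leaves implicit, correctly filled in.
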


The proofs and lemmas contained in Sections~\ref{sec:randround} and \ref{sec:approximation-without-admission-control} to obtain the approximations to SCEP-P and SCEP-C are purely based on Lemmas~\ref{lem:validity-of-kth-decomposition} - \ref{lem:allocations}. The above presented analogues (Lemmas~\ref{lem:validity-of-kth-decomposition-scgep} - \ref{lem:relation-of-net-profit-in-decomposition-and-the-LP-cost-scgep})~hence give all the prerequisites to employ the approximation framework developed in Sections~\ref{sec:randround} and \ref{sec:approximation-without-admission-control}. 

The only changes to the respective approximations (cf. Algorithms~\ref{alg:approxAdmissionControl} and \ref{alg:approxWithoutAdmissionControl})~are to employ the novel Integer Programs~\ref{alg:SCGEP-IP} and \ref{alg:SCGEP-IP-C} in conjunction with the novel decomposition Algorithm~\ref{alg:decompositionAlgorithmSCGEP}. We lastly note that the number of variables and constraints of Integer Programs~\ref{alg:SCGEP-IP} and \ref{alg:SCGEP-IP-C} is still polynomial in the respective graph sizes. Concretely, the number of variables and constraints is bounded by $\mathcal{O}(\sum_{\req \in \requests} |\VE| \cdot |\SV| \cdot |\SE|)$, as for each edge $(i,j) \in  \VE$, which lies on a cycle $C_k \in \Cycles$, exactly $|\SVTypesCycle | \leq |\SV|$ many parallel path constructions with $\mathcal{O}(|\SE|)$ many edges is used. Hence, the runtime to compute the linear relaxations of the respective integer programs is still polynomial and the runtime of the novel decomposition algorithm increases at most by a factor $|\SV|$.

Without further proofs, we state the following two theorems.

\begin{theorem}
\label{thm:result-for-admission-control-scgep}
We adapt Algorithm~\ref{alg:approxAdmissionControl} by replacing Integer Program~\ref{alg:SCEP-IP} by Integer Program~\ref{alg:SCGEP-IP} and Algorithm~\ref{alg:decompositionAlgorithm} by Algorithm~\ref{alg:decompositionAlgorithmSCGEP}.
Assuming that $|\SV| \geq 3$ holds, and that $\maxLoadV[\req][x][y] \leq \varepsilon \cdot \Scap(x,y)$ holds for all resources~$(x,y) \in  \SR$ with $0 < \varepsilon \leq 1$ and by setting~$\beta = \varepsilon \cdot \sqrt{2\cdot \log~(|\SV|\cdot|\types|)~\cdot \DeltaV  }$ and~$\gamma = \varepsilon \cdot \sqrt{2\cdot \log |\SV| \cdot \DeltaE  }$ with~$\DeltaV,\DeltaE$ as defined in Lemmas~\ref{lem:approximation-single-node} and \ref{lem:approximation-single-edge}, we obtain a~$(\alpha,1+\beta,1+\gamma)$ tri-criteria approximation algorithm for  SCGEP-P, such that it finds a solution \emph{with high probability}, that achieves at least an~$\alpha = 1/3$ fraction of the optimal profit and violates network function and edge capacities only within the factors~$1+\beta$ and~$1+\gamma$ respectively.
\end{theorem}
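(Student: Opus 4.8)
The plan is to observe that Theorem~\ref{thm:result-for-admission-control-scgep} demands exactly the same guarantees as Theorem~\ref{thm:result-for-admission-control}, and that the proof of the latter never inspects the internal structure of the linear-chain formulation (Integer Program~\ref{alg:SCEP-IP}) nor of Algorithm~\ref{alg:decompositionAlgorithm}: it relies only on four structural properties of the decomposition, namely validity of each produced mapping (Lemma~\ref{lem:validity-of-kth-decomposition}), completeness of the decomposition (Lemma~\ref{lem:sum-of-f-mu}), the profit identity (Lemma~\ref{lem:relation-of-net-profit-in-decomposition-and-the-LP-profit}), and the cumulative load bound (Lemma~\ref{lem:allocations}). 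Since the cactus analogues of all four --- Lemmas~\ref{lem:validity-of-kth-decomposition-scgep}, \ref{lem:sum-of-f-mu-scgep}, \ref{lem:relation-of-net-profit-in-decomposition-and-the-LP-profit-scgep}, and \ref{lem:allocations-scgep} --- have already been established for the output of Algorithm~\ref{alg:decompositionAlgorithmSCGEP}, the whole chain of downstream arguments can be replayed with the analogue lemmas substituted for their originals. I would make this substitution explicit step by step rather than merely asserting it.

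For the profit guarantee I would first re-derive Corollary~\ref{cor:RelatingObjectiveLPAndExpectationOfRounding}, i.e. $\optLP = \sum_{\req \in \requests}\sum_{\decomp \in \PotEmbeddings} \prob \cdot \Vprofit = \Exp(B)$, now invoking Lemma~\ref{lem:relation-of-net-profit-in-decomposition-and-the-LP-profit-scgep} in place of its linear counterpart. The preprocessing Lemma~\ref{lem:all-requests-can-be-embedded} and the bound $\optLP \geq \VprofitMax$ of Lemma~\ref{lem:expected-profit-larger-max} carry over unchanged, since each only solves the relaxation in isolation and exhibits a feasible solution --- arguments that are agnostic to whether the formulation is Integer Program~\ref{alg:SCEP-IP} or \ref{alg:SCGEP-IP}. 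With these in hand, the Chernoff argument of Theorem~\ref{thm:probability-of-not-succeeding-in-objective} applies verbatim, yielding that the probability of collecting less than $\alpha = 1/3$ of the optimum profit is at most $\exp(-2/9)$.

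For the capacity guarantees I would reuse Definition~\ref{def:maximal-loads} and Lemma~\ref{lem:assumptions}, which are purely statements about request demands versus substrate capacities and do not reference the extended-graph construction. The one place deserving explicit care is the range of the load random variables: the Hoeffding argument of Lemmas~\ref{lem:approximation-single-node} and \ref{lem:approximation-single-edge} requires $L_{\req,\type,u} \in [0,\maxLoadVSum]$ and $L_{\req,u,v} \in [0,\maxLoadE \cdot |\VE|]$. These intervals still hold because, by the validity of each decomposed mapping (Lemma~\ref{lem:validity-of-kth-decomposition-scgep}), every virtual node is placed on a single substrate node and every virtual edge is routed along a \emph{simple} substrate path (cf. Definition~\ref{def:valid-mapping}); hence a fixed mapping loads any substrate edge at most once per virtual edge, and the worst-case per-request loads are the very same expressions as in the linear case, despite the parallel-path cycle gadgets of Integer Program~\ref{alg:SCGEP-IP}. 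The expectation bound $\Exp(L_{x,y}) \leq \Scap(x,y)$ then follows from Lemma~\ref{lem:allocations-scgep} exactly as Equation~(\ref{eq:load-vs-capacity}) followed from Lemma~\ref{lem:allocations}, and the Hoeffding tail bounds together with the concluding union bound go through with the stated $\DeltaV$, $\DeltaE$, $\beta$, and $\gamma$.

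The main obstacle in this development is not in the rounding analysis --- that part is a mechanical replay --- but in ensuring that the decomposition genuinely outputs valid mappings over the more intricate cactus extended graphs, which is precisely the content of Lemma~\ref{lem:validity-of-kth-decomposition-scgep} and its \texttt{ProcessPath} correctness argument. For the present theorem the only truly new check is the one highlighted above: that the worst-case per-resource load of a single cactus mapping coincides with the linear bound, so that the Hoeffding intervals --- and therefore the exponents $-4\log(|\SV|\cdot|\types|)$ and $-4\log|\SV|$ driving the $(|\SV|\cdot|\types|)^{-4}$ and $|\SV|^{-4}$ failure probabilities --- are unchanged. Once this is confirmed, assembling the pieces under the hypothesis $|\SV| \geq 3$ reproduces the $(\alpha, 1+\beta, 1+\gamma)$ guarantee claimed.
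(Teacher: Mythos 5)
Your proposal is correct and takes essentially the same route as the paper: the paper justifies Theorem~\ref{thm:result-for-admission-control-scgep} precisely by observing that the rounding analysis of Section~\ref{sec:randround} rests only on Lemmas~\ref{lem:validity-of-kth-decomposition}--\ref{lem:allocations}, that the cactus analogues (Lemmas~\ref{lem:validity-of-kth-decomposition-scgep}--\ref{lem:relation-of-net-profit-in-decomposition-and-the-LP-cost-scgep}) supply all prerequisites, and then states the theorem without further proof. Your one extra check --- that the Hoeffding intervals $[0,\maxLoadVSum]$ and $[0,\maxLoadE \cdot |\VE|]$ are unchanged for cactus mappings because each decomposed mapping is valid (single node placement, simple substrate paths) --- is a detail the paper leaves implicit, and you argue it correctly.
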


\begin{theorem}
\label{thm:result-for-admission-control-without-scgep}
We adapt Algorithm~\ref{alg:approxWithoutAdmissionControl} by replacing Integer Program~\ref{alg:SCEP-IP_without} by Integer Program~\ref{alg:SCGEP-IP-C} and Algorithm~\ref{alg:decompositionAlgorithm} by Algorithm~\ref{alg:decompositionAlgorithmSCGEP}.
Assuming that $|\SV| \geq 3$ holds and that $\maxLoadV[\req][x][y] \leq \varepsilon \cdot \Scap(\type,u)$ holds for all network resources~$(x,y) \in  \SR$ with $0 < \varepsilon \leq 1$ and by setting~$\beta = \varepsilon \cdot \sqrt{\log~(|\SV|\cdot|\types|)~\cdot \DeltaV  }$ and~$\gamma = \varepsilon \cdot \sqrt{3/2\cdot \log |\SV| \cdot \DeltaE  }$ with~$\DeltaV,\DeltaE$ as defined in Lemmas~\ref{lem:approximation-single-node} and \ref{lem:approximation-single-edge}, we obtain a~$(\alpha,2+\beta,2+\gamma)$ tri-criteria approximation algorithm for SCGEP-C, such that it finds a solution \emph{with high probability}, with costs less than~$\alpha = 2$ times higher than the optimal cost and violates network function and edge capacities only within the factors~$2+\beta$ and~$2+\gamma$ respectively.
\end{theorem}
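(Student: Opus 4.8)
The plan is to establish the theorem as a transfer result: the entire randomized-rounding analysis of Section~\ref{sec:approximation-without-admission-control} applies verbatim once the underlying decomposition machinery is swapped out. The key observation is that the proof of Theorem~\ref{thm:result-for-admission-control-without} never inspects the internal structure of a request; it interacts with the decomposition only through four abstract properties --- validity of each constructed mapping, completeness of the decomposition, the weighted-cost relation, and the cumulative load bound --- which are furnished in the cactus setting by Lemmas~\ref{lem:validity-of-kth-decomposition-scgep}, \ref{lem:sum-of-f-mu-scgep}, \ref{lem:relation-of-net-profit-in-decomposition-and-the-LP-cost-scgep}, and \ref{lem:allocations-scgep} respectively. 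First I would make this interface explicit, re-reading each lemma of Section~\ref{sec:approximation-without-admission-control} to confirm that its only hypotheses are one of these four statements together with the maximal-load assumption of Lemma~\ref{lem:assumptions}, which remains valid for IP~\ref{alg:SCGEP-IP-C} since forbidding infeasible placements a priori is independent of the graph model.

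Second, I would re-derive the deterministic cost guarantee. Since IP~\ref{alg:SCGEP-IP-C} enforces $x_\req = 1$, Lemma~\ref{lem:sum-of-f-mu-scgep} gives $\sum_{\decomp \in \PotEmbeddings} \prob = 1$, so the weighted-cost pruning argument of Corollary~\ref{cor:relation-of-net-profit-in-decomposition-and-the-LP-cost-new} and Lemma~\ref{lem:wac-pruning} --- which uses only the cost relation and completeness --- carries over unchanged, yielding $\lambda_\req \geq 1/2$, hence $\PotEmbeddingsHat \neq \emptyset$ and $\sum_{\req \in \requests} c(\hat m_\req) \leq 2\cdot\optIP$ exactly as in Lemma~\ref{lem:deterministic-objective-guarantee-without}. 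Nothing in this step is specific to linear chains.

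Third --- and this is the step demanding genuine care --- I would reconstruct the capacity-violation bounds. The scaling Lemma~\ref{lem:relation-of-loads-to-capacities-without-admission} rests on Lemma~\ref{lem:allocations} and Lemma~\ref{lem:wac-pruning}; substituting Lemma~\ref{lem:allocations-scgep} reproduces the factor-two load bound $\sum_{\req \in \requests} \sum_{\decompHat \in \PotEmbeddingsHat} \probHat \cdot \load(x,y) \leq 2\cdot\Scap(x,y)$ for cactus decompositions. To invoke Hoeffding's inequality I must re-verify the ranges of the per-request load variables, namely $L_{\req,\type,u} \in [0,\maxLoadVSum]$ and $L_{\req,u,v} \in [0,\maxLoadE \cdot |\VE|]$. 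This is the one place where the cactus geometry could in principle enlarge the worst-case allocation of a single mapping, so I would check directly that a valid cactus mapping loads a type-$\type$ function on $u$ by at most $\sum_{i \in \VV : \Vtype(i) = \type} \Vcap(i)$ and a substrate edge by at most $\maxLoadE \cdot |\VE|$, the latter because there are still only $|\VE|$ virtual edges and each is routed along a single simple path contributing at most $\maxLoadE$. As these ranges --- and therefore $\DeltaV$ and $\DeltaE$ --- are structural and do not depend on the request being a line, Lemmas~\ref{lem:approximation-single-node-without} and \ref{lem:approximation-single-edge-without} hold verbatim. I expect this range verification to be the main obstacle, being the only part of the rounding analysis where a chain-specific assumption could have hidden.

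Finally, I would assemble the failure probability by the same union bound as in Section~\ref{sec:main-results-without-admission-control}: over the at most $|\SV| \cdot |\types|$ network functions and at most $|\SV|^2$ edges, the per-function and per-edge bounds of Lemmas~\ref{lem:approximation-single-node-without} and \ref{lem:approximation-single-edge-without} together with the deterministic cost bound give a single-round failure probability of at most $2/|\SV| \leq 2/3$ for $|\SV| \geq 3$, and hence a success probability of $1-(2/3)^Q$ over $Q$ rounds. Polynomial runtime follows from the bound $\mathcal{O}(\sum_{\req \in \requests} |\VE| \cdot |\SV| \cdot |\SE|)$ on the number of variables and constraints of IP~\ref{alg:SCGEP-IP-C} noted above, together with the polynomial running time of Algorithm~\ref{alg:decompositionAlgorithmSCGEP}. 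Substituting $\alpha = 2$ and the stated $\beta,\gamma$ then gives the claimed $(\alpha, 2+\beta, 2+\gamma)$ tri-criteria guarantee.
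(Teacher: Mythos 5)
Your proposal is correct and follows essentially the same route as the paper: the paper states this theorem \emph{without further proof}, justifying it by exactly the transfer argument you spell out --- namely that the analysis of Section~\ref{sec:approximation-without-admission-control} interacts with the decomposition only through the abstract lemmas (validity, completeness, cost relation, load bound), whose cactus analogues are Lemmas~\ref{lem:validity-of-kth-decomposition-scgep}--\ref{lem:relation-of-net-profit-in-decomposition-and-the-LP-cost-scgep}, together with the polynomial bound $\mathcal{O}(\sum_{\req \in \requests} |\VE| \cdot |\SV| \cdot |\SE|)$ on the size of IP~\ref{alg:SCGEP-IP-C}. Your explicit re-verification of the Hoeffding ranges $[0,\maxLoadVSum]$ and $[0,\maxLoadE \cdot |\VE|]$ for cactus mappings is a welcome detail the paper leaves implicit, but it does not constitute a different approach.
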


\begin{figure*}[t!]

 {
  \LinesNotNumbered
  \renewcommand{\arraystretch}{1.5}
 
 \removelatexerror

  \begin{IPFormulationStar}{H}
 
  \SetAlgorithmName{Integer Program}{}{{}}

  \newcommand{\spaceIt}{\qquad\quad\quad}
  \newcommand{\miniSpace}{\hspace{1.5pt}}
 
  \centering
  \begin{tabular}{FRLQB}
   \multicolumn{4}{C}{\textnormal{max~}  \sum \limits_{\req \in \requests}  \Vprofit x_{\req}  ~~~~~~~~~~~~~~~~} & \tagIt{alg:VNEP-old:obj}\\
   \sum \limits_{u \in \SVTypes[\Vtype(i)]} y_{\req, i,u} & = & x_{\req} & \forall \req \in \requests, i \in \VV &  \tagIt{alg:VNEP-old:node-embedding} \\
   \sum \limits_{(u,v) \in  \delta^+(u)} z_{\req,i,j,u,v} - \sum \limits_{(v,u) \in  \delta^-(u)} z_{\req,i,j,v,u} & = & y_{\req, i,u} - y_{\req,j,u} \qquad \qquad &  \forall \req \in \requests, (i,j) \in  \VE &  \tagIt{alg:VNEP-old:edge-embedding}\\

 	\sum \limits_{\req \in \requests} \sum \limits_{i \in \VV, \Vtype(i)~= \type} \Vcap(i)~\cdot y_{\req,i,u}  & \leq  & \Scap(\type,u)~& (\type,u) \in  \SRV &  \tagIt{alg:VNEP-old:load-node}\\
 	
 	\sum \limits_{\req \in \requests} \sum \limits_{(i,j) \in  \VE } \Vcap(i,j)~\cdot z_{\req,i,j,u,v} & \leq &  \Scap(u,v)~& \forall \req \in \requests, (u,v) \in  \SE & \tagIt{alg:VNEP-old:load-edge}\\
 
   x_{\req} & \in & \{0,1\} &  \forall \req \in \requests & \tagIt{alg:foo1}\\
   y_{\req,i,u} & \in & \{0,1\} &  \forall \req \in \requests, i \in \VV, u \in \SVTypes[\Vtype(i)] &  \tagIt{alg:fo2}\\
   y_{\req,i,u} & = & 0 &  \forall \req \in \requests, i \in \VV, u \notin \SVTypes[\Vtype(i)] &  \tagIt{alg:foo3}\\
   z_{\req,i,j,u,v} & \in & \{0,1\} &  \forall \req \in \requests, (i,j) \in  \VE, (u,v) \in  \SE  &  \tagIt{alg:foo4}
  \end{tabular}
  \caption{Classic Multi-Commodity Flow Formulation for SCGEP-P}
  \label{alg:VNEP-IP-old}
  \end{IPFormulationStar}
  }
\end{figure*}

\subsection{Non-Decomposability of the Standard IP}
\label{sec:cactus:non-decomposability}

The a priori graph decompositions, the extended graph constructions, the respective integer program, and the corresponding decomposition algorithm to obtain approximations for SCGEP are very technical (cf. Sections~\ref{sec:cactus:decomposing-service-graphs} to~\ref{sec:cactus:decomposition-algorithm-for-service-cacti-and-correctness}). In the following we argue that standard multi-commodity flow integer programming formulations typically used in the virtual network embedding
literature, see e.g.~\cite{vnep,mehraghdam2014specifying,rostSchmidFeldmann2014}, cannot be employed for our randomized rounding approach. This result also suggests that the hope expressed by Chowdhury et al.~to obtain approximations using this standard formulation~\cite{vnep}, is unlikely to be true in general.

In particular, we consider the solutions of the relaxation of the archetypical
Integer Program~\ref{alg:VNEP-IP-old} for SCEP-C, which uses multi-commodity flows to connect virtual network functions in the substrate (cf.~\cite{vnep,mehraghdam2014specifying,rostSchmidFeldmann2014}), and show that this formulation allows for solutions which cannot be decomposed into valid embeddings. As we will show, this has ramifications beyond not being able to apply the randomized rounding approach, as the relaxations obtained from Integer Program~\ref{alg:VNEP-IP-old} are provably weaker than the ones obtained by Integer Program~\ref{alg:SCGEP-IP}.

Integer Program~\ref{alg:VNEP-IP-old} employs three classes of variables. The variable $x_{\req} \in \{0,1\}$ indicates whether request $\req \in \requests$ shall be embedded. Variables $y_{\req,i,u} \in \{0,1\}$ and $z_{\req, i,j, u,v}$ denote the embedding of virtual node $i\in \VV$ on substrate node $u \in \SVTypes[\Vtype(i)]$ and the embedding of virtual edge $(i,j) \in  \VE$ on substrate edge $(u,v) \in  \SE$ respectively.  By Constraint~\ref{alg:VNEP-old:node-embedding}, the virtual node $i \in \VV$ of request $\req \in \requests$ must be placed on any of the appropriate substrate nodes in $\SVTypes[\Vtype(i)]$ iff. $x_\req = 1$ holds. Constraint~\ref{alg:VNEP-old:edge-embedding} induces a unit flow for each virtual edge $(i,j) \in  \VE$: if virtual node $i \in \VV$ is mapped onto substrate node $u \in \SV$ and virtual node $j \in \VV$ is mapped onto $v \in \SV$, then the flow balance at node $u$ is $1$, and the flow balance at substrate node $v$ is $-1$ and flow preservation holds elsewhere. Note also that in the case that both the source and the target are mapped onto the same node no flow is induced. Lastly, Constraints~\ref{alg:VNEP-old:load-node} to \ref{alg:VNEP-old:load-edge} compute the effective node and edge allocations and bound these by the respective capacities.

By the above explanation, it is easy to check that any integral solution to Integer Program~\ref{alg:VNEP-IP-old} defines a valid mapping. Indeed, for each request $\req \in \requests$ with $x_\req = 1$ the node $i\in \VV$ is mapped onto substrate node $u$, i.e. $\mapV(i)~= u$ holds, iff. $y_{\req, i,u} = 1$ holds and the edge mapping of $(i,j) \in  \VE$ can be recovered from the flow variables $y_{\req,i,j,\cdot,\cdot}$ by performing a breadth-first search from $\mapV(i) \in  \SV $ to $\mapV(j) \in  \SV$ where an edge $(u,v)\in \SE$ is only considered if $y_{\req,i,j,u,v} = 1$ holds. We denote the set of feasible integral solutions of Integer Program~\ref{alg:VNEP-IP-old} by $\spaceIP$ and the set of linear solutions by $\spaceLP$.

Reversely, each solution to SCEP-P, i.e. each subset $R'\subseteq \requests$ of requests with mappings $(\mapV,\mapE)$ for $r \in R'$, induces a specific solution to the Integer Program~\ref{alg:VNEP-IP-old}. We denote by $\varphi$ the function that given a set $R'$ and corresponding mappings $\{\map | r \in R'\}$ yields the respective integer programming solution $(\vec{x}, \vec{y}, \vec{z}) \in  \spaceIP$ defined as follows:

\begin{itemize}
\item $x_{\req} = 1$ iff. $\req \in R'$ for all requests $\req \in \requests$,
\item $y_{\req, i, u} = 1$ iff. $\req \in R'$ and $\mapV(i)~= u $ for all requests $\req \in \requests$, virtual nodes $i \in \VV$, and substrate nodes $u \in \SVTypes[\Vtype(i)]$.
\item $z_{\req, i,j,u,v} = 1$ iff. $\req \in R'$ and $(u,v) \in  \mapE(i,j)$ for all requests $\req \in \requests$, virtual edges $(i,j) \in  \VE$ and substrate edges $(u,v) \in  \SE$.
\end{itemize}

By the above argumentation, we observe that Integer Program~\ref{alg:VNEP-IP-old} indeed is a valid formulation for SCEP-P. 

We will now investigate the linear relaxations of Integer Program~\ref{alg:VNEP-IP-old} and show that the formulation may produce solutions, which cannot be embedded. As an important precursor to this result, we first define the set of all linear programming solutions that may originate from convex combinations of singular embeddings as follows.

\begin{definition}[Decomposable LP Solution Spaces] ~\\
Let $\spaceSolReq$ denote the set of all \emph{valid} mappings for the request $\req \in \requests$. We denote by $\spaceLPDreq$ the set of all convex combinations of projections of valid mappings into the solutions space $\spaceLP$ of Integer Program~\ref{alg:VNEP-IP-old}, i.e. $\spaceLPDreq$ equals
\[
\{\sum^n_{k=1} \lambda_k \cdot \varphi(\{r\},M_k)~| n \in \mathbb{N}, M_k \in \spaceSolReq, \lambda_k \geq 0, \sum^n_{k=1} \lambda_k \leq 1 \}\,.
\]
Note that the above also allows for the possibility of the non-embedding, as $\sum_{k=1}^n \lambda_k = 0$ is allowed.
We denote the space of all decomposable linear programming solutions as $\spaceLPD = \{ (\vec{x},\vec{y}, \vec{z}) \in  \bigtimes \limits_{\req \in \requests} \spaceLPDreq | (\vec{x},\vec{y}, \vec{z})~\textnormal{satisfies Constraints}~\ref{alg:VNEP-old:node-embedding} - \ref{alg:VNEP-old:load-edge}\}$.
\end{definition}

As stated before, we denote by $\spaceLP = \{(\vec{x},\vec{y},\vec{z})| (\vec{x},\vec{y},\vec{z})~ \textnormal{satisfies Constraints}~\ref{alg:VNEP-old:node-embedding} - \ref{alg:VNEP-old:load-edge}\}$ denote the set of linear programming solutions feasible according to the constraints of Integer Program~\ref{alg:VNEP-IP-old}. Using the definitions of $\spaceLPD$ and $\spaceLP$ we can now formally prove, that the linear relaxation of Integer Program~\ref{alg:VNEP-IP-old} does contain solutions, which cannot be decomposed.

\begin{lemma}
The set of feasible LP solutions of Integer Program~\ref{alg:VNEP-IP-old} may contain non decomposable solutions, i.e. $\spaceLP \not \subseteq \spaceLPD$.
\label{lem:non-decomposability}
\end{lemma}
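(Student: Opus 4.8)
The plan is to prove the non-inclusion by exhibiting a single explicit SCGEP-P instance together with a feasible point of the relaxation of Integer Program~\ref{alg:VNEP-IP-old} that provably admits no convex decomposition into valid mappings. Since the earlier sections established that purely linear (chain) solutions are always decomposable, the counterexample must exploit a \emph{cycle}. Concretely, I would take a single request $\req$ whose service graph $\VG$ is the directed triangle on nodes $a,b,c$ with edges $(a,b),(b,c),(c,a)$ -- a legal service cactus graph -- and a substrate $\SG$ in which each virtual node has exactly two admissible locations, $\SVTypes[\Vtype(a)]=\{a_1,a_2\}$, $\SVTypes[\Vtype(b)]=\{b_1,b_2\}$, $\SVTypes[\Vtype(c)]=\{c_1,c_2\}$, connected by the six directed substrate edges $a_1b_1,\,a_2b_2,\,b_1c_1,\,b_2c_2,\,c_1a_2,\,c_2a_1$, with all node and edge capacities chosen large enough to be inactive.

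Next I would define the fractional point $(\vec{x},\vec{y},\vec{z})$: set $x_{\req}=1$, place every virtual node half-and-half, i.e.\ $y_{\req,a,a_1}=y_{\req,a,a_2}=\tfrac12$ and analogously for $b$ and $c$, and route each virtual edge's unit flow along its two parallel substrate edges with weight $\tfrac12$ each, but \emph{twisting} the last edge: $z_{\req,a,b,a_1,b_1}=z_{\req,a,b,a_2,b_2}=\tfrac12$, $z_{\req,b,c,b_1,c_1}=z_{\req,b,c,b_2,c_2}=\tfrac12$, and $z_{\req,c,a,c_1,a_2}=z_{\req,c,a,c_2,a_1}=\tfrac12$, with all remaining variables zero. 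A short verification of the flow-balance Constraint~\ref{alg:VNEP-old:edge-embedding} at each substrate node (the net out-flow at $a_1$ equals $\tfrac12 = y_{\req,a,a_1}-y_{\req,b,a_1}$, and symmetrically everywhere) together with the node-embedding Constraint~\ref{alg:VNEP-old:node-embedding} and the slack capacity Constraints~\ref{alg:VNEP-old:load-node}--\ref{alg:VNEP-old:load-edge} shows $(\vec{x},\vec{y},\vec{z})\in\spaceLP$.

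Finally I would prove $(\vec{x},\vec{y},\vec{z})\notin\spaceLPD$. Suppose for contradiction it equalled $\sum_k \lambda_k\,\varphi(\{\req\},M_k)$ for valid mappings $M_k\in\spaceSolReq$ and weights $\lambda_k\ge 0$ with $\sum_k\lambda_k=x_{\req}=1$. Because every $z$-variable outside the six twisted edges is $0$ and every $\lambda_k\ge 0$, each mapping $M_k$ with $\lambda_k>0$ must route all three virtual edges using only those six substrate edges. The substrate then forces any such $M_k$ to realize $(a,b)$ via $(a_1,b_1)$ or $(a_2,b_2)$, $(b,c)$ via $(b_1,c_1)$ or $(b_2,c_2)$, and $(c,a)$ via $(c_1,a_2)$ or $(c_2,a_1)$, as these are the only source/target pairs connected inside the support. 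Tracing these constraints around the cycle yields the contradiction: $\mapV(a)=a_1$ forces $\mapV(b)=b_1$, hence $\mapV(c)=c_1$, hence $\mapV(a)=a_2$, and symmetrically $\mapV(a)=a_2$ forces $\mapV(a)=a_1$. Thus no valid mapping routes all three edges within the support, so no decomposition exists and $\spaceLP\not\subseteq\spaceLPD$. The main obstacle is exactly this last step: one must argue rigorously that the support of a convex combination of integral mappings is contained in the union of the participating mappings' supports (so that the vanishing $z$-entries genuinely forbid every detour), and that it is the cyclic \emph{orientation} of the twist that makes a globally consistent node placement impossible -- the phenomenon that cannot occur for chains and that the purpose-built Integer Program~\ref{alg:SCGEP-IP} rules out by coupling the two branches of each cycle.
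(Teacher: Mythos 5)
Your proposal is correct and takes essentially the same approach as the paper: both exhibit an explicit LP-feasible half-integral solution on a cyclic request whose support admits no valid mapping, arguing that every mapping with positive weight must live inside the support of the $z$-variables and then tracing around the cycle to reach an inconsistent node placement. The only difference is the gadget -- you use a directed triangle on a six-node ``twisted'' substrate, while the paper uses a four-node diamond ($i\to j\to l$, $i\to k\to l$) on an eight-node substrate forcing $l$ onto two distinct nodes -- and the support-containment step you flag as the main obstacle is indeed immediate, since all $\lambda_k\ge 0$ and the coordinates of $\varphi(\{\req\},M_k)$ are in $\{0,1\}$.
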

\begin{proof}
We show $\spaceLP \setminus \spaceLPD \neq \emptyset$ by constructing an example in Figure~\ref{fig:non-decomp} using a single request and a 8-node substrate. We assume that $\SVTypes[\Vtype(i)] =  \{u_1,u_5\}$, $\SVTypes[\Vtype(j)] =  \{u_2,u_6\}$, $\SVTypes[\Vtype(k)] =  \{u_4,u_8\}$, $\SVTypes[\Vtype(l)] =  \{u_3,u_7\}$ holds. The depicted request shall be fully embedded, i.e. $x_{\req} = 1$ holds. The fractional embeding is represented indicating the node mapping variables, $\frac{1}{2}i$ means that e.g. the corresponding mapping value $y_{\req, i,u_1}$ is $1/2$. Edge mappings are represented according to the dash style of the request and always carry a flow value of $1/2$. Clearly, the depicted fractional embedding is feasible and therefore contained in $\spaceLP$.
\begin{itemize}
\item Constraint~\ref{alg:VNEP-old:node-embedding} holds as each virtual node is mapped with a cumulative value of 1 to substrate nodes supporting the respective function.
\item Constraint~\ref{alg:VNEP-old:edge-embedding} holds as the substrate nodes onto which the tails of virtual edges (i.e. $\{i,j,k\}$)~have been mapped on, have corresponding outgoing flows while the heads of virtual edges have corresponding incoming flows.
\item Constraints~\ref{alg:VNEP-old:load-node} and \ref{alg:VNEP-old:load-edge} hold trivially if we assume large enough capacities on the substrate.
\end{itemize}

Assume for the sake of contradiction that the depicted embedding is a linear combination of elementary solutions, i.e. there exist mappings $M_k$ for $k \in K$ such that the depicted solution is a linear combination $\sum_{k\in K} \lambda_k \cdot \varphi(\{r\}, M_k)$ with $\lambda_k \geq 0$ and $\sum_{k \in K} \lambda_k \leq 1$. As virtual node $i$ is mapped onto substrate node $u_1$, and $u_2$ and $u_8$ are the only neighboring nodes that host $j$ and $k$ respectively there must exist a mapping $(\mapV, \mapE)$ within $M_k$ with $\mapV(i)=u_1$, $\mapV(j)=u_2$, $\mapV(k)=u_8$ and $\mapE(i,j)~= (u_1,u_2)$ and $\mapE(i,k)~= (u_1,u_8)$. Similarly, as the flow of virtual edge $(j,l)$ at $u_2$ only leads to $u_3$ and the flow of virtual edge $(j,k)$ at $u_8$ only leads to $u_6$, the virtual node $l$ must be embedded both on $u_3$ and $u_7$. As the virtual node $l$ must be mapped onto exactly one substrate node, this partial decomposition cannot possible be extended to a valid embedding. Hence, the depicted solution cannot be decomposed into elementary mappings and $\spaceLP \not \subseteq \spaceLPD$ holds (in general).
\end{proof}

\begin{figure}[tbhp]
\centering
\includegraphics[width=1\columnwidth]{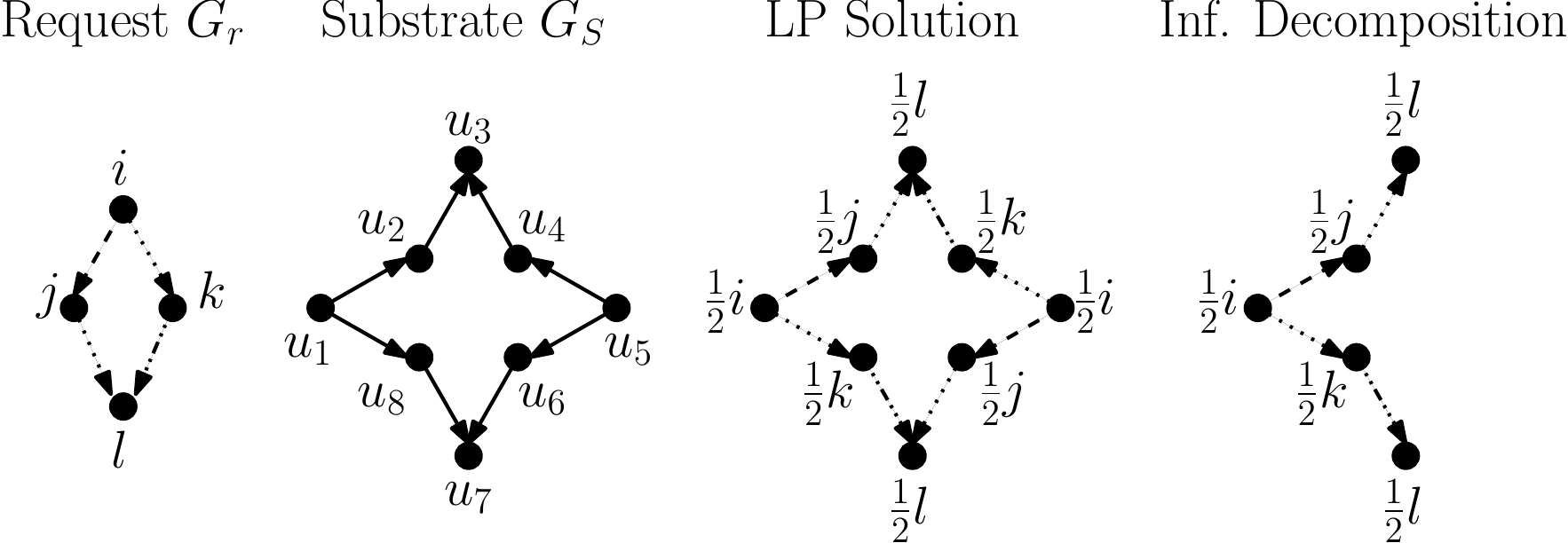}
\caption{Example showing that linear relaxations of Integer Program~\ref{alg:VNEP-IP-old} may not be decomposable. The LP solution with $x_\req=1$ is depicted as follows. Substrate nodes are annotated with the mapping of virtual nodes and $\frac{1}{2}i$ for substrate node $u_1$ stands for $y_{r,i,u_1} = 1/2$. Substrate edges are dashed accordingly to the dash style of the virtual links mapped onto it. All virtual links are mapped with values $1/2$. The dash style of substrate edge $(u_1,u_2)$ therefore implies that $z_{r,i,j,u_1,u_2} = 1/2$ holds.}
\label{fig:non-decomp}
\end{figure}

We lastly prove the following theorem, showing that the relaxations of the novel Integer Program~\ref{alg:SCGEP-IP} are provably stronger than the ones of the standard IP~\ref{alg:VNEP-IP-old}. To compare the \emph{strength} of formulations over different variable spaces, we project relaxed solutions of Integer Program~\ref{alg:SCGEP-IP} to the solution space $\spaceLP$ of Integer Program~\ref{alg:SCEP-IP} (see~\cite{balas2005projection} for an introduction on comparing formulations using projections).
\begin{theorem}
\label{thm:stronger-formulation}
The linear relaxations of Integer Program~\ref{alg:SCGEP-IP} are provably stronger than the linear relaxation of Integer Program~\ref{alg:VNEP-IP-old}.
\begin{proof}
We denote by $\spaceLPNew$ the solution space of the linear relaxations of Integer Program~\ref{alg:SCGEP-IP}. Let $\pi : \spaceLPNew \to \spaceLP$ denote the projection of the novel solution space to the solution space of the standard formulation. Concretely, given a solution $(\vec{x},\vec{f^+},\vec{f},\vec{l}) \in  \spaceLPNew$ the function $\pi$ first computes the according decomposition $\PotEmbeddings$ according to Algorithm~\ref{alg:decompositionAlgorithmSCGEP} and then returns the following solution contained in $\spaceLP$:
\begin{itemize}
\item $x_\req = \sum \limits_{\decomp \in \PotEmbeddings} \prob$ for all $\req \in \requests$. \\[2pt]
\item $y_{\req,i,u} = \sum \limits_{\decomp \in \PotEmbeddings: \mapping(i)=u} \prob$ for all requests $\req \in \requests$, virtual nodes $i\in \VV$ and substrate nodes $u \in \SVTypes[\Vtype(i)]$. \\[2pt]
\item $z_{\req,i,j,u,v} = \sum \limits_{\decomp \in \PotEmbeddings: (u,v) \in  \mapping(i,j)} \prob$ for all requests $\req \in \requests$, virtual edges $(i,j)\in \VE$ and substrate edges $(u,v) \in  \SE$.
\end{itemize}
As the Formulation~\ref{alg:SCGEP-IP} and the decomposition Algorithm~\ref{alg:decompositionAlgorithmSCGEP} are valid, it is easy to check that the above projection is valid, i.e. $\pi(\spaceLPNew)~\subseteq \spaceLP$ holds.
Furthermore, given any solution $(\vec{x},\vec{f^+},\vec{f},\vec{l}) \in  \spaceLPNew$, it is obvious that the projected solution $\pi(\vec{x},\vec{f^+},\vec{f},\vec{l})$ is decomposable, as the respective mappings were computed explicitly in the decomposition algorithm. Thus, $\pi(\spaceLPNew)~\subseteq \spaceLPD$ holds and under this projection all solutions are decomposable. Together with Lemma~\ref{lem:non-decomposability}, we obtain $\pi(\spaceLPNew)~\subseteq \spaceLPD \subsetneq \spaceLP$.
Lastly, as $\spaceLP \setminus \spaceLPD \neq \emptyset$ holds, the relaxations of Integer Program~\ref{alg:SCGEP-IP} are indeed provably stronger than the ones of Integer Program~\ref{alg:VNEP-IP-old}.
\end{proof}
\end{theorem}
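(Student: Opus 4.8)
The plan is to compare the two formulations over their differing variable spaces by means of a projection, following the standard technique for establishing relative formulation strength. Concretely, I would exhibit a map $\pi : \spaceLPNew \to \spaceLP$ that sends every feasible point of the relaxation of Integer Program~\ref{alg:SCGEP-IP} to a feasible point of the relaxation of Integer Program~\ref{alg:VNEP-IP-old}, and then argue that the image of $\pi$ lies entirely inside the decomposable subspace $\spaceLPD$. Since Lemma~\ref{lem:non-decomposability} already furnishes a point in $\spaceLP \setminus \spaceLPD$, this immediately yields the strict inclusion $\pi(\spaceLPNew) \subseteq \spaceLPD \subsetneq \spaceLP$, which is exactly the statement that the novel relaxation is provably stronger.

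To define $\pi$, I would run the decomposition Algorithm~\ref{alg:decompositionAlgorithmSCGEP} on a given solution $(\vec{x}, \vec{f^+}, \vec{f}, \vec{l}) \in \spaceLPNew$ to obtain, for each request $\req \in \requests$, the set of fractional mappings $\PotEmbeddings = \{(\prob, \mapping, \load)\}_k$, and then aggregate these into the standard variables by setting $x_\req$ to the total fractional embedding mass, $y_{\req,i,u}$ to the mass of those decompositions placing $i$ on $u$, and $z_{\req,i,j,u,v}$ to the mass of those routing $(i,j)$ across $(u,v)$. First I would check that this image lands in $\spaceLP$, i.e.\ that the aggregated values satisfy Constraints~\ref{alg:VNEP-old:node-embedding}--\ref{alg:VNEP-old:load-edge}. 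The node-assignment Constraint~\ref{alg:VNEP-old:node-embedding} follows from Lemma~\ref{lem:sum-of-f-mu-scgep} (completeness of the decomposition) together with the validity of each mapping (Lemma~\ref{lem:validity-of-kth-decomposition-scgep}), which guarantees every virtual node is placed exactly once per decomposition on a node of the correct type. The flow-conservation Constraint~\ref{alg:VNEP-old:edge-embedding} holds because each valid mapping routes $(i,j)$ along a genuine substrate path from $\mapping(i)$ to $\mapping(j)$, so the per-decomposition contribution already satisfies the unit-flow balance, and a convex combination preserves it. The capacity Constraints~\ref{alg:VNEP-old:load-node}--\ref{alg:VNEP-old:load-edge} follow from Lemma~\ref{lem:allocations-scgep}, which bounds the cumulative fractional load by the substrate capacities.

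The decisive step is then $\pi(\spaceLPNew) \subseteq \spaceLPD$, but this is immediate from the construction: by definition $\pi$ writes the projected solution as a convex combination $\sum_k \prob \cdot \varphi(\{\req\}, \mapping)$ of projections of valid mappings, which is precisely the form defining $\spaceLPDreq$. Hence under $\pi$ all solutions are decomposable, and combining this with $\spaceLP \setminus \spaceLPD \neq \emptyset$ from Lemma~\ref{lem:non-decomposability} closes the argument.

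I expect the main obstacle to be the careful verification that the re-aggregated flow variables $z$ obey Constraint~\ref{alg:VNEP-old:edge-embedding} exactly, because in the novel formulation the flow lives on the stitched, layered extended graphs of the cactus decomposition (with reversed edges in $\VEDiff$ and the parallel-path constructions for cycles), and one must confirm that extracting substrate paths via Algorithm~\ref{alg:decompositionAlgorithmSCGEP} and re-summing them reproduces the plain source-to-sink unit balance of the standard multi-commodity formulation; once that bookkeeping is in place, the strict-inclusion conclusion follows directly.
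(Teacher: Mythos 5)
Your proposal takes essentially the same route as the paper's proof: the identical projection $\pi$ obtained by running Algorithm~\ref{alg:decompositionAlgorithmSCGEP} and aggregating the fractional mappings into $(\vec{x},\vec{y},\vec{z})$, followed by the strict inclusion $\pi(\spaceLPNew) \subseteq \spaceLPD \subsetneq \spaceLP$ via Lemma~\ref{lem:non-decomposability}. Your explicit verification of Constraints~\ref{alg:VNEP-old:node-embedding}--\ref{alg:VNEP-old:load-edge} through Lemmas~\ref{lem:validity-of-kth-decomposition-scgep}, \ref{lem:sum-of-f-mu-scgep}, and \ref{lem:allocations-scgep} merely spells out what the paper leaves as ``easy to check,'' so the argument is correct and matches the paper's.
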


The above theorem is based on the structural deficits of the Integer Program~\ref{alg:VNEP-IP-old} and does neither depend on the particular way we have formulated IP~\ref{alg:VNEP-IP-old} nor does it depend on whether we consider SCGEP-P or SCGEP-C. Furthermore, we note that the above theorem can also have practical implications, when trying to obtain obtain \emph{good bounds} via linear relaxations, as the difference in the benefit (or cost)~can be unbounded.

\begin{lemma}
The objectives of the linear relaxations of Integer Programs~\ref{alg:SCGEP-IP} and \ref{alg:VNEP-IP-old} can diverge arbitrarily. Concretely, considering the embedding of a particular set of requests $\requests$ on substrate $\SG$ and denoting the benefit obtained by Integer Program~\ref{alg:SCGEP-IP} as $\optLPnew$ and the benefit obtained by Integer Program~\ref{alg:VNEP-IP-old} as $\optLPstd$, the absolute difference $\optLPstd - \optLPnew$ as well as the relative difference $(\optLPstd - \optLPnew)~/ \optLPnew$ may be unbounded. 
\begin{proof}
We first note that by Theorem~\ref{thm:stronger-formulation} we have $\optLPnew \leq \optLPstd$ as $\pi(\spaceLPNew)~\subsetneq \spaceLP$ holds. We reuse the example depicted in Figure~\ref{fig:non-decomp} and we assume that the depicted request is the only one. As discussed, the depicted fractional solution is a feasible relaxation of Integer Program~\ref{alg:VNEP-IP-old}. As $x_{\req} = 1$ holds, we have $\optLPstd = b_{\req}$.  

Considering the relaxations of Integer Program~\ref{alg:SCGEP-IP}, we claim that only $x_{\req} = 0$ is feasible. This is easy to check as there does not exist a potential substrate location for virtual node $l$, such that the branches $i \rightarrow j \rightarrow l$ and $i \rightarrow k \rightarrow l$ can end in the same substrate location while also emerging at the same location. Hence, the benefit obtained by the relaxation of Integer Program~\ref{alg:SCGEP-IP} is $\optLPnew = 0$.

Hence, the absolute difference is $\optLPstd - \optLPnew = b_{\req}$ and -- as $b_{\req}$ can be set arbitrarily -- the absolute as well as the relative difference are unbounded.
\end{proof}
\end{lemma}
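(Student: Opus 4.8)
The plan is to reuse the single-request witness already constructed in the proof of Lemma~\ref{lem:non-decomposability} (Figure~\ref{fig:non-decomp}) and to read off the two relaxation objectives directly on that instance, letting the request profit $\Vprofit$ be a free parameter. Since Theorem~\ref{thm:stronger-formulation} has established $\pi(\spaceLPNew) \subseteq \spaceLPD \subsetneq \spaceLP$ and the projection $\pi$ preserves the embedding variables $x_\req$, it immediately gives $\optLPnew \leq \optLPstd$; hence it remains only to exhibit the exact values of $\optLPnew$ and $\optLPstd$ on the constructed instance and to observe that their gap is controlled entirely by $\Vprofit$.

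First I would dispose of the standard formulation. The fractional assignment depicted in Figure~\ref{fig:non-decomp} (all node-mapping variables equal to $1/2$ and all edge-flows carrying $1/2$) was already verified to lie in $\spaceLP$ within the proof of Lemma~\ref{lem:non-decomposability}, and it sets $x_\req = 1$. With a single request the objective of Integer Program~\ref{alg:VNEP-IP-old} is $\Vprofit \cdot x_\req$, so this solution witnesses $\optLPstd = \Vprofit$. Next I would turn to Integer Program~\ref{alg:SCGEP-IP} and argue that on the same instance only $x_\req = 0$ is feasible, whence $\optLPnew = 0$.

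The crux --- the only step that genuinely exploits the structure of the new formulation --- is proving infeasibility of $x_\req > 0$ in Integer Program~\ref{alg:SCGEP-IP}. The request of Figure~\ref{fig:non-decomp} is a single undirected $4$-cycle on $i,j,k,l$, so after the breadth-first reorientation its decomposition consists of one cycle $C_k$ with target $\VVcycleTarget = l$, whose two branches realize $i \rightarrow j \rightarrow l$ (branch $\VEcycleBranchR$) and $i \rightarrow k \rightarrow l$ (branch $\VEcycleBranchL$). The decisive ingredient is the branch-coupling Constraint~\ref{alg:SCGEP:FlowInduction-cycle-Left}, which forces, \emph{for each target copy} $w \in \SVTypesCycle$, the flow entering the parallel path construction of $\VEcycleBranchR$ for $w$ to equal the flow entering the construction of $\VEcycleBranchL$ for $w$. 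I would then show, using the hosting set $\SVTypes[\Vtype(l)] = \{u_3,u_7\}$ together with the substrate adjacency of Figure~\ref{fig:non-decomp}, that $l$ is reachable only at $u_3$ along the first branch and only at $u_7$ along the second. Consequently, for $w = u_3$ the second branch carries no flow, forcing the first to carry none either, and symmetrically for $w = u_7$; hence every coupled pair vanishes. By the flow-induction Constraint~\ref{alg:SCGEP:FlowInduction-cycle-Right} this drives all root induction variables to zero, and by Constraint~\ref{alg:SCGEP:FlowInduction-Initial} therefore $x_\req = 0$, giving $\optLPnew = 0$.

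I expect this \emph{disjointness of reachable targets} to be the main obstacle, since it is precisely where the new formulation's per-location branch synchronization bites, whereas Integer Program~\ref{alg:VNEP-IP-old} --- enforcing only edge-by-edge flow balance --- cannot detect the global inconsistency of mapping $l$ onto two distinct nodes at once. Once infeasibility of $x_\req > 0$ is secured the rest is bookkeeping: combining $\optLPstd = \Vprofit$ with $\optLPnew = 0$ yields an absolute gap $\optLPstd - \optLPnew = \Vprofit$ and a relative gap $(\optLPstd - \optLPnew)/\optLPnew = \Vprofit / 0$, both of which grow without bound (the latter being formally infinite) as the free parameter $\Vprofit$ is increased, completing the argument.
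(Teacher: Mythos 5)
Your proposal is correct and follows essentially the same route as the paper's proof: reuse the Figure~\ref{fig:non-decomp} instance with a single request, read off $\optLPstd = \Vprofit$ from the depicted fractional solution of Integer Program~\ref{alg:VNEP-IP-old}, show that the branch-coupling constraints of Integer Program~\ref{alg:SCGEP-IP} force $x_\req = 0$ and hence $\optLPnew = 0$, and let $\Vprofit$ grow. Your treatment of the key infeasibility step is in fact more explicit than the paper's one-line assertion; just note that the reachable target of each branch swaps between $u_3$ and $u_7$ depending on whether $i$ is placed at $u_1$ or $u_5$, a case distinction your per-source, per-target-copy coupling argument (Constraint~\ref{alg:SCGEP:FlowInduction-cycle-Left} being indexed by both $u$ and $w$) already accommodates.
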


\section{Related Work}\label{sec:relwork}

Service chaining
has recently received much attention by
both researchers and practitioners~\cite{karl-chains,ewsdn14,stefano-sigc,merlin}. 
%
Soul\'{e} et al.~\cite{merlin} present Merlin,
a flexible framework which allows
to define and embed service chain policies.
The authors also present an integer program
to embed service chains, however, solving
this program requires an exponential runtime. 
Also
Hartert et al.~\cite{stefano-sigc} have studied
the service chain embedding problem,
and proposed a constraint optimization
approach. However also this approach
requires exponential runtime in the worst case.
Besides the optimal but exponential
solutions presented in~\cite{stefano-sigc,merlin}, 
there also exist heuristic solutions, e.g.,~\cite{karl-chains,ewsdn14}:
while heuristic approaches may be attractive for their
low runtime, they do not provide any
worst-case quality guarantees. 
We are the first to present
polynomial time algorithms for service chain embeddings
which comes with formal approximation guarantees.
Concretely, our results based on randomized rounding techniques
are structured into two papers:
In~\cite{swat-guy}, we only consider the admission control variant
and derive a constant approximation under assumptions on the relationship
between demands and capacities as well as on the optimal benefit.
In contrast, we have considered in this paper a more general
setting, in which the decomposition approach based on random walks~\cite{swat-guy}
 is not feasible, as we allow for arbitrary service cactus graphs.
Importantly, while our approach does not require specific assumptions on loads and benefits, we obtain
worse approximation ratios. 

From an algorithmic perspective, 
the service chain embedding problem
can be seen as a variant of  a
graph embedding problem, see~\cite{diaz2002survey} for a survey.
Minimal Linear Arrangement~(MLA)~is the archetypical
embedding footprint minimization problem,
where the substrate network has a most simple topology:
a linear chain.
It is known that MLA can be~$O(\sqrt{\log{n}}\log\log{n})$ approximated in
polynomial time~\cite{best-mla-1,mla-best-2}.
However, the approximation algorithms 
for VLSI layout problems~\cite{diaz2002survey,vlsi-layout},
cannot be adopted for embedding service chains in 
general substrate graphs.

Very general graph embedding problems have recently also 
been studied by the networking community
in the context of virtual network embeddings~(sometimes
also known as testbed mapping problems).
Due to the inherent computational hardness of the underlying
problems, the problem has mainly been approached using mixed integer
programming~\cite{vnep} and heuristics~\cite{vnep-rethink}.
For a survey on the more practical literature, we refer
the reader to~\cite{vnep-survey}.
Algorithmically interesting and computationally tractable embedding problems
arise in more specific contexts, e.g., 
in fat-tree like datacenter networks~\cite{oktopus,ccr15emb}.
In their interesting work, 
Bansal et al.~\cite{bansal2011minimum} 
give an
$n^{O(d)}$ time~$O(d^2 \log{(nd)})$-approximation algorithm 
for minimizing the load of embeddings in tree-like datacenter
networks, based on a strong LP relaxation
inspired by the Sherali-Adams hierarchy.
The problem of embedding star graphs 
has recently also been explored on various substrate topologies~(see also~\cite{ccr15emb}),
but to the best of our knowledge, no approximation algorithm
is known for embedding chain- and cactus-like virtual networks on arbitrary topologies.

Finally, our work is closely related to unsplittable
path problems, for which various approximation algorithms
exist,
also for admission control variants~\cite{Kleinberg-admission-control}.
In particular, our work leverages  
techniques introduced by Raghavan and Thompson~\cite{Raghavan-Thompson}:
in their seminal work, the authors 
develop provably
good algorithms based on relaxed, polynomial time versions of 
0-1 integer programs. 
However, our more general setting not only requires a novel
and more complex decomposition approach,   but also a novel and advanced
formulation of the mixed integer program itself: as we have shown, 
standard formulations cannot be decomposed at all.
Moreover, we are not aware of any extensions of the randomized rounding
approach to problems allowing for admission control and the objective of maximizing the profit.

\section{Conclusion}\label{sec:conclusion}

This paper initiated the study of 
polynomial time approximation algorithms
for the service chains embedding problem, and beyond. 
In particular, we have presented novel approximation algorithms,
which apply randomized rounding on the decomposition of linear programming solutions, and which also support admission control. 
We have shown that using this approach, a constant approximation of the objective
is possible in multi-criteria models with non-trivial augmentations, both for
service chains as well as for more complex virtual networks, particularly
service cactus graphs.
Besides our result and decomposition technique,
we believe that also our new integer program formulation may be of independent interest.

Our paper opens several interesting directions
for future research. In particular, we believe that our
algorithmic approach is of interest beyond the service chain 
and service cactus graph embedding problems considered in this paper, and can be used 
more generally. In particular, we have shown that our approach can be employed as long as the relaxations can be decomposed. Hence, we strongly believe that our approach can be applied to further graph classes as well.
Moreover, it will also be interesting 
to study the tightness of the novel bounds obtained in this work.

\textbf{Acknowledgement.}
This research was supported by the EU project UNIFY FP7-IP-619609.

{

\bibliographystyle{plain}
\bibliography{references}
}

\end{document}